\documentclass[a4paper,10pt]{fullarticle}
\usepackage[british]{babel}
\usepackage{csquotes}
\usepackage{sciencestuff}
\usepackage[most]{tcolorbox}
\usepackage{tikz}
\usetikzlibrary{decorations.pathmorphing,
                decorations.markings,
                arrows,
                arrows.meta,
                intersections}
\usepackage{pgfplots}

\pgfplotsset{compat=1.18}
\usepgfplotslibrary{fillbetween}
\usepackage[algoruled,algosection,vlined,shortend,linesnumbered]{algorithm2e}

\AddToHook{cmd/appendix/before}{%
  \crefalias{section}{appendix}%
  \crefalias{subsection}{appendix}%
}

\input{arxiv.def}
\title{%
    Data Field Theory \\ \relax
    {\Large Theory and Applications of the Functional Renormalization Group for Signal Detection}}

\author[1]{Riccardo Finotello\emailfoot{riccardo.finotello@cea.fr}}
\author[2]{Vincent Lahoche\emailfoot{vincent.lahoche@cea.fr}}
\author[3]{Dine Ousmane Samary\emailfoot{dine.ousmanesamary@uac.bj}}
\author[2]{Parham Radpay\emailfoot{parham.radpay@cea.fr}}

\affil[1]{%
  Université Paris Saclay, \textsc{Cea}, Service de Génie Logiciel et de Simulation (\textsc{Sgls}),
  \protect \\
  Gif-sur-Yvette, F-91191, France
}
\affil[2]{%
  Université Paris-Saclay, \textsc{Cea}, 
  \protect \\
  Palaiseau, F-91120, France
}
\affil[3]{%
    Faculté des Sciences et Techniques (ICMPA-UNESCO Chair)
    \protect \\
    Université d'Abomey-Calavi, 072 BP 50, Benin
}

\date{}

\hypersetup{%
  pdfauthor={Riccardo Finotello, Vincent Lahoche, Dine Ousmane Samary, Parham Radpay},
  pdftitle={Data Field Theory: Theory and Applications of the Functional Renormalization Group for Signal Detection},
  pdfkeywords={%
    functional renormalization group,
    theoretical physics,
    data science,
    signal analysis,
    signal detection,
    random matrix theory
    },
  pdfsubject={renormalization group for signal detection}
}

\colorlet{thmframe}{teal!75!black}
\colorlet{thmbody}{teal!5!white}
\colorlet{thmtitlebg}{teal!85!black}
\newtcbtheorem[number within=section]{theorem}{Theorem}{%
    enhanced,
    breakable,
    colback=thmbody,
    colframe=thmframe,
    colbacktitle=thmtitlebg,
    coltitle=white,
    fonttitle=\bfseries,
    boxed title style={%
        boxrule=0pt,
        colframe=thmtitlebg,
        sharp corners
    },
    grow to left by=-5mm,
    grow to right by=-5mm,
    arc=2mm,
    boxrule=0.7pt,
    left=5mm,
    right=5mm,
    top=4.5mm,
    bottom=4.5mm,
    drop fuzzy shadow=black!45,
}{thm}
\makeatletter
\newcommand{\tcb@cnt@theoremautorefname}{Theorem}
\makeatother
\crefname{tcb@cnt@theorem}{theorem}{theorems}
\Crefname{tcb@cnt@theorem}{Theorem}{Theorems}

\colorlet{defframe}{orange!75!black}
\colorlet{defbody}{orange!5!white}
\colorlet{deftitlebg}{orange!85!black}
\newtcbtheorem[number within=section]{definition}{Definition}{%
    enhanced,
    breakable,
    colback=defbody,
    colframe=defframe,
    colbacktitle=deftitlebg,
    coltitle=white,
    fonttitle=\bfseries,
    boxed title style={%
        boxrule=0pt,
        colframe=deftitlebg,
        sharp corners
    },
    grow to left by=-5mm,
    grow to right by=-5mm,
    arc=2mm,
    boxrule=0.7pt,
    left=5mm,
    right=5mm,
    top=4.5mm,
    bottom=4.5mm,
    drop fuzzy shadow=black!45,
}{def}
\makeatletter
\newcommand{\tcb@cnt@definitionautorefname}{Definition}
\makeatother
\crefname{tcb@cnt@definition}{definition}{definitions}
\Crefname{tcb@cnt@definition}{Definition}{Definitions}

\colorlet{propframe}{blue!75!black}
\colorlet{propbody}{blue!5!white}
\colorlet{proptitlebg}{blue!85!black}
\newtcbtheorem[number within=section]{proposition}{Proposition}{%
    enhanced,
    breakable,
    colback=propbody,
    colframe=propframe,
    colbacktitle=proptitlebg,
    coltitle=white,
    fonttitle=\bfseries,
    boxed title style={%
        boxrule=0pt,
        colframe=proptitlebg,
        sharp corners
    },
    grow to left by=-5mm,
    grow to right by=-5mm,
    arc=2mm,
    boxrule=0.7pt,
    left=5mm,
    right=5mm,
    top=4.5mm,
    bottom=4.5mm,
    drop fuzzy shadow=black!45,
}{prop}
\makeatletter
\newcommand{\tcb@cnt@propositionautorefname}{Proposition}
\makeatother
\crefname{tcb@cnt@proposition}{proposition}{propositions}
\Crefname{tcb@cnt@proposition}{Proposition}{Propositions}

\colorlet{remarkframe}{red!75!black}
\colorlet{remarkbody}{red!5!white}
\colorlet{remarktitlebg}{red!85!black}
\newtcbtheorem[number within=section]{remark}{Remark}{%
    enhanced,
    breakable,
    colback=remarkbody,
    colframe=remarkframe,
    colbacktitle=remarktitlebg,
    coltitle=white,
    fonttitle=\bfseries,
    boxed title style={%
        boxrule=0pt,
        colframe=remarktitlebg,
        sharp corners
    },
    grow to left by=-5mm,
    grow to right by=-5mm,
    arc=2mm,
    boxrule=0.7pt,
    left=5mm,
    right=5mm,
    top=4.5mm,
    bottom=4.5mm,
    drop fuzzy shadow=black!45,
}{rem}
\makeatletter
\newcommand{\tcb@cnt@remarkautorefname}{Remark}
\makeatother
\crefname{tcb@cnt@remark}{remark}{remarks}
\Crefname{tcb@cnt@remark}{Remark}{Remarks}

\tcolorboxenvironment{proof}{%
    enhanced,
    breakable,
    arc=2mm,
    grow to left by=-5mm,
    grow to right by=-5mm,
    left=5mm,
    right=5mm,
    top=2mm,
    bottom=2mm,
    colback=gray!5,
    colframe=gray!60,
    fontupper=\footnotesize,
    drop fuzzy shadow=black!45,
}

\crefname{algocf}{algorithm}{algorithms}
\Crefname{algocf}{Algorithm}{Algorithms}

\newcommand{\rg}{\textsc{rg}\xspace}  
\newcommand{\hsi}{\textsc{hsi}\xspace}  
\newcommand{\ecm}{\textsc{ecm}\xspace}  
\newcommand{\rmt}{\textsc{rmt}\xspace}  
\newcommand{\dft}{\textsc{dft}\xspace}  
\newcommand{\mpdistr}{\textsc{MP}\xspace}  
\newcommand{\onepi}{\textsc{1PI}\xspace}  
\newcommand{\lod}{\textsc{lod}\xspace}  
\newcommand{\eft}{\textsc{eft}\xspace}  
\newcommand{\ir}{\textsc{ir}\xspace}  
\newcommand{\uv}{\textsc{uv}\xspace}  
\newcommand{\lpa}{\textsc{lpa}\xspace}  
\newcommand{\eaa}{\textsc{eaa}\xspace}  
\newcommand{\kl}{\textsc{KL}\xspace}  
\newcommand{\kde}{\textsc{kde}\xspace}  
\newcommand{\iid}{i.i.d.\xspace}  
\newcommand{\frg}{\textsc{frg}\xspace}  
\newcommand{\bbp}{\textsc{BBP}\xspace}  
\newcommand{\rgb}{\textsc{rgb}\xspace}  
\newcommand{\mnf}{\textsc{mnf}\xspace}  
\newcommand{\gsa}{\textsc{gsa}\xspace}  
\newcommand{\wfr}{\textsc{wfr}\xspace}  
\newcommand{\ipr}{\textsc{ipr}\xspace}  
\newcommand{\wf}{\textsc{WF}\xspace}  
\newcommand{\clt}{\textsc{clt}\xspace}  
\newcommand{\ojk}{\textsc{OJK}\xspace}  
\newcommand{\goe}{\textsc{Goe}\xspace}  

\begin{document}

\maketitle

\begin{abstract}
    We review the renormalization group framework for signal detection in high-dimensional data, tailored to the regime where the signal may be of extensive rank and does not separate from the noise bulk as isolated spikes.
The framework provides a conceptually simple criterion for distinguishing signal from noise within a quasi-continuous spectral region near a random-matrix universality class.
This scenario lies beyond the reach of standard methods such as the Baik--Ben Arous--P\'{e}ch\'{e} threshold, which requires eigenvalues to be cleanly separated from the bulk.
The renormalization group approach, by contrast, directly tracks spectral deformations and consistently yields a lower limit of detection without relying on spike separation.
We review results that identify the presence of a signal by testing the stability of the Gaussian fixed point of an effective field theory for the collective behaviour of the degrees of freedom in the spectral tail, where the signal resides.
We also discuss how the scale dependence of the canonical dimension, induced by the signal, manifests as a dimensional phase transition.

\end{abstract}

\highlights{%
    We pedagogically review theory and concrete use cases of the renormalization group for signal detection in data field theory.
}

\keywords{%
    renormalization group,
    out-of-equilibrium field theory,
    random matrix theory,
    signal analysis,
    information theory,
}

\clearpage

\tableofcontents

\clearpage

\section{Introduction}\label{sec1}
As Artificial Intelligence (\ai) continues to advance, being able to analyse large quantities of data has become a necessary requirement to provide high-quality insights and build robust models for decision-making and scientific applications.
The detection of meaningful signals inside large datasets is a critical aspect of this process.
Today, large-scale data where information may be hidden in complex structures represents a major challenge, both theoretically and practically.
Understanding high-dimensional spaces and their specific characteristics, as well as implementing reliable and efficient algorithms, has indeed become an essential skill.
In this work, we review an original approach to this general problem, based on the Renormalization Group (\rg) and an analogous field theory model describing the effective behaviour of data.
This approach translates into the field of data analysis what physics, and in particular quantum field theory, was able to teach us.
The review aims at introducing the general aspects, as well as details, for both physicists and data analysts interested in techniques at the frontier of data analysis and theoretical physics.
\Cref{part1} introduces the necessary \rg concepts from first principles for readers unfamiliar with field-theoretic methods, while the application sections in \Cref{part3} are designed to be accessible independently of the technical developments in \Cref{part2}.

In what follows, data are represented as rectangular matrix $X$ of size $N \times P$, where $P$ denotes the dimensionality of the data and $N$ the number of samples.
We then have the following definition:
\begin{definition}{Empirical Correlation Matrix}{definitionECM}
    The \ecm $C$ is the $P \times P$ matrix with entries:
    \begin{equation}
        {C}_{ij} \defeq \frac{(C_0)_{ij}}{\sqrt{(C_0)_{ii} (C_0)_{jj}}},
        \label{corr}
    \end{equation}
    where $C_0$ is the covariance matrix:
    \begin{equation}
        C_0 \defeq \frac{X^{\mathsf{T}} X}{N}.
        \label{cov}
    \end{equation}
\end{definition}

Signal and anomaly detection deal with two main and difficult issues when trying to extract meaningful insights from the data:
\begin{enumerate}
    \item The study of the relevant prior governing the correlation structure in the dataset. Generally, the prior is the \textit{empirical correlation matrix} (\ecm).
    \item The identification of the distinction between \enquote{information} and \enquote{noise}, i.e., the challenging task of finding the relevant eigenvectors and eigenvalues of the \ecm.
\end{enumerate}
The definition \eqref{corr} discards high-variance contributions that could dominate the spectrum of the covariance matrix.
Several techniques in data analysis are able to further clean the signal from noise.
The well-known Principal Components Analysis (\pca) is an example~\cite{PCA1} and works by isolating the largest eigenvalues and the associated principal directions of the signal.
The interested reader can refer, for instance, to \Cref{AppD} for a review on one spike matrix model and the Baik-Ben Arous-P\'{e}ch\'{e} (\bbp) phase transition.
\Cref{fig1} illustrates the general structure of a typical experimental spectrum of eigenvalues, which is generally composed of two distinct parts:
\begin{enumerate}
    \item A quasi-continuous bulk, where the typical spacing $\delta$ between the eigenvalues satisfies $\min(N, P)^{-1} < \delta \ll \sqrt{\min(N, P)^{-1}}$ when the size $N$ of the correlation matrix is sufficiently large.
    \item Isolated spikes, which usually carry the most visible part of the information.
\end{enumerate}

\begin{figure}[t]
    \centering
    \begin{tikzpicture}
    \draw[->, thick] (0,0) -- (0,6.0) node[above] {$\mu(\lambda)$};
    \draw[->, thick] (0,0) -- (8.0,0) node[right] {$\lambda$};

    \draw[draw=black!80, fill=black!15, fill opacity=0.3]
    (0.49, 0.0)
    .. controls (0.92, 5.49) and (1.78, 5.02) .. (2.26, 4.41)
    .. controls (2.73, 3.80) and (2.81, 3.06) .. (3.05, 2.14)
    .. controls (3.29, 1.22) and (3.68, 0.13) .. (4.58, 0.0)
    -- (0.49, 0.0) -- cycle;
    \node[black, anchor=center] at (1.9, 2.00) {\textbf{Bulk}};

    \draw[red, thick, dashed]
    (0.49, 0.0)
    .. controls (0.94, 6.09) and (2.11, 6.71) .. (3.44, 0.0) node[text=red, below] {$\Lambda_{\text{cutoff}}$};
    \draw[->, red, thick] (4.00, 3.63) node[above, text=red, text width=3em] {\textbf{Noise model}} -- (4.00, 2.90) -- (2.80, 2.91);

    \draw[thin, densely dashed] (5.94, 0.24) ellipse (1.37 and 1.5);
    \node[black, anchor=center] at (5.94, 2.20) {\textbf{Spikes}};
    \draw[ultra thick, black!80] (5.23, 0.0) -- (5.23, 1.18);
    \draw[ultra thick, black!80] (5.41, 0.0) -- (5.41, 0.87);
    \draw[ultra thick, black!80] (5.86, 0.0) -- (5.86, 0.56);
    \draw[ultra thick, black!80] (6.29, 0.0) -- (6.29, 1.01);
\end{tikzpicture}
    \caption{%
        A typical experimental spectrum $\{\mu(\lambda)\}$, featuring a quasi-continuous bulk and a set of isolated spikes.
        The definition of \enquote{meaningful information}, and, specifically, the position of the cutoff $\Lambda_{\text{cutoff}}$, depends on the choice of the noise model.
    }
    \label{fig1}
\end{figure}

The general problem of isolating signal boils down to partitioning the spectrum into two parts: one that we call \enquote{noise}, which we seek to eliminate, and the other, \enquote{information}, which we need to identify and preserve.
However, there is no absolute definition of the latter.
There are, however, general principles: variables corresponding to information are generally correlated, whereas noise is essentially uncorrelated.
The definition of rank, that is establishing the number of redundancies in the data, is also a common criterion.
In simple terms, one could decide, for instance, to retain only the spikes, in which case the bulk of the spectrum would be treated entirely as noise.
Though effective, this approach has its practical limitations.
Most standard methods derived from \pca perform well when focusing on spikes but fail as soon as they encounter continuous spectra~\cite{kmean}.

A well-known example of this problem is provided by risk analysis in financial markets.
Famously, in the article~\cite{Bouchaud2}, the authors propose a method based on Random Matrix Theory (\rmt) to construct a decision boundary (i.e., choose a value of $\Lambda_{\text{cutoff}}$ in \Cref{fig1}) between relevant degrees-of-freedom (\dof), that is information, and irrelevant ones, or noise.
Their method is based on an \emph{analytical noise model} using the Marchenko-Pastur (\mpdistr) theorem:\footnote{%
    Historically, it was Wigner who first had the intuition that a large deterministic matrix could be replaced by a random matrix, subject only to the constraint of preserving the fundamental symmetries of the system~\cite{Wigner}.
    This departure from strict determinism revealed a profound phenomenon of universality: in highly complex systems, macroscopic statistical properties ultimately lose memory of the details of the original interaction.}
\begin{theorem}{Marchenko--Pastur Distribution}{thMP}
    Let $X$ be an $N \times P$ random matrix with \iid entries of variance $\sigma^2$.
    As $N, P \to \infty$ while keeping the ratio $q \defeq P/N$ fixed, the empirical eigenvalue distribution of the corresponding $P \times P$ random Wishart matrix $Z \defeq X^{\mathsf{T}} X / N$ converges weakly towards the \mpdistr distribution:
    \begin{equation}
        \mu_{\mpdistr}(\lambda)\defeq \frac{\sqrt{(\lambda_+-\lambda)(\lambda-\lambda_-)}}{2\pi \sigma^2 q \lambda},
        \label{MP}
    \end{equation}
    where $\lambda_\pm=\sigma^2 (1\pm \sqrt{q})^2$.
    Notice that we explicitly ignore the Dirac delta function at the origin in the case where $q > 1$.
    This contribution is irrelevant for our purposes since we focus our attention on the tail of the spectrum.
\end{theorem}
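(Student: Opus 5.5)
The plan is the standard Stieltjes-transform argument. We set $G_P(z) \defeq \frac{1}{P}\operatorname{tr}(Z - z)^{-1}$ for $z \in \mathbb{C}^+$. By the characterization of weak convergence through Stieltjes transforms, it suffices to prove that $G_P(z)$ converges almost surely, for each fixed $z \in \mathbb{C}^+$, to a deterministic $g(z)$. We then have to identify $g$ as the Stieltjes transform of \eqref{MP}, plus the atom at the origin when $q>1$.

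\emph{Reductions.} We first reduce to entries with bounded fourth moment, and in fact to bounded entries. This uses truncation and recentring, controlled by the rank inequality and the Hoffman–Wielandt inequality for $\frac1N X^{\mathsf T}X$. The only price is a small perturbation in Lévy distance.

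\emph{Concentration.} We write $G_P - \mathbb{E}G_P$ as a martingale difference sum over the $N$ rows of $X$. Removing one row is a rank-one update of $Z$, which changes $\operatorname{tr}(Z-z)^{-1}$ by at most $1/\operatorname{Im} z$. Burkholder's inequality then gives $\mathbb{E}|G_P-\mathbb{E}G_P|^4 = O(P^{-2})$, and Borel–Cantelli upgrades this to almost sure convergence. It therefore remains to study $\mathbb{E}G_P$.

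\emph{Self-consistent equation.} We write $Z = \frac1N\sum_{a=1}^N x_a x_a^{\mathsf T}$, where $x_a$ is the $a$-th row, and start from the resolvent identity
\begin{equation*}
    1 + z\,\tfrac1P\operatorname{tr}(Z-z)^{-1} = \tfrac1P\operatorname{tr}\bigl(Z (Z-z)^{-1}\bigr).
\end{equation*}
Each term $x_a^{\mathsf T}(Z-z)^{-1}x_a$ is handled with the Sherman–Morrison formula against the leave-one-out resolvent $R_a$. The quadratic form $\frac1N x_a^{\mathsf T}R_a x_a$ is then replaced by $\sigma^2\frac1N\operatorname{tr}R_a \approx \sigma^2 q\, G_P$. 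This is justified by the quadratic-form variance bound, whose error is $O(N^{-1})$ times bounded fourth moments. The result is
\begin{equation*}
    1 + z G_P \approx \frac{1}{q}\cdot\frac{\sigma^2 q G_P}{1+\sigma^2 q G_P}\cdot q,
\end{equation*}
which simplifies to the quadratic
\begin{equation*}
    \sigma^2 q z\, g^2 + (z - \sigma^2(1-q))\,g + 1 = 0 .
\end{equation*}

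\emph{Identifying the limit.} The sequence $\mathbb{E}G_P$ is tight: it has bounded mean trace $\sigma^2$. Any subsequential limit satisfies the quadratic and maps $\mathbb{C}^+$ into $\mathbb{C}^+$. This selects a unique root, so the whole sequence converges. Stieltjes inversion, $\mu(\lambda)=\lim_{\eta\to0}\frac1\pi\operatorname{Im} g(\lambda+i\eta)$, gives the square root of the discriminant $(\lambda-\lambda_+)(\lambda-\lambda_-)$ divided by $2\pi\sigma^2 q\lambda$, which is \eqref{MP} with $\lambda_\pm=\sigma^2(1\pm\sqrt q)^2$. For $q>1$, the pole at $z=0$ produces the mass $1-1/q$ at the origin, which the statement ignores.

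\emph{Main obstacle.} The delicate step is the error control in the self-consistent equation. We must show that the fluctuation terms and the leave-one-out discrepancies $\frac1N|\operatorname{tr}R_a-\operatorname{tr}(Z-z)^{-1}|\le 1/(N\operatorname{Im}z)$ vanish uniformly in $a$, and that the denominators $1+\frac1N x_a^{\mathsf T}R_a x_a$ stay away from zero. For the denominators we would first try the bound $|z|/\operatorname{Im} z$, obtained from the positivity of $\operatorname{Im}\bigl(z(1+\cdot)\bigr)$. Everything else is standard bookkeeping.
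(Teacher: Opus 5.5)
\Cref{thm:thMP} has no proof in the paper. It is quoted as a classical input, namely the noise model, with a pointer to Marchenko and Pastur's original article and to later relaxations of the independence assumption. Your proposal fills that in with the standard Stieltjes-transform proof: truncation and recentring, martingale concentration over rows, Sherman--Morrison leave-one-out, a self-consistent equation, then Stieltjes inversion. It is correct in outline, and your denominator bound $|z|/\operatorname{Im}z$ is the right one. It applies equally to $1+\sigma^2 q\,\mathds{E}G_P$.

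Compared with citing, your route gives a self-contained argument with almost sure convergence under finite variance alone. Your recentring step is genuinely needed, since the statement does not assume centred entries; a nonzero mean adds a perturbation of rank at most two to $Z$. Your route also matches what the paper uses later. With the opposite sign convention $\mathfrak{g}=-g$ of \Cref{AppD}, the root of your quadratic with $g(z)\sim -1/z$ is exactly the closed form $\mathfrak{g}_{\mathrm{MP}}$ used there for the spiked-Wishart threshold. Your density and your atom $1-1/q$ are also right.

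There are three points to tidy.

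First, the display before the quadratic carries a spurious factor $q$. Since $\frac1P\sum_{a=1}^N=\frac NP=\frac1q$, the relation is
\[
1+zG_P\approx\frac1q\cdot\frac{\sigma^2 qG_P}{1+\sigma^2 qG_P}=\frac{\sigma^2 G_P}{1+\sigma^2 qG_P}.
\]
This is what gives $\sigma^2 qzg^2+(z-\sigma^2(1-q))g+1=0$. With the trailing $\cdot q$ you would get $\sigma^2 qzg^2+zg+1=0$ instead. Your stated quadratic is the correct one.

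Second, root selection needs a line of justification when $q>1$. The two roots sum to something with imaginary part $(1-1/q)\operatorname{Im}z/|z|^2>0$. So the quick argument that works for $q\le 1$ does not rule out two roots in $\mathds{C}^+$. Uniqueness still holds, for this reason:
\begin{itemize}
\item The discriminant $(z-\lambda_+)(z-\lambda_-)$ does not vanish on $\mathds{C}^+$, so the two analytic branches never meet there.
\item By connectedness, a continuous limit must equal one branch on all of $\mathds{C}^+$.
\item The normalisation $zg(z)\to-1$ along $z=iy$, $y\to\infty$, then selects the branch.
\end{itemize}
Equivalently, for $q>1$ the other root has imaginary part $-\int \operatorname{Im}z\,\mu_{\mathrm{ac}}(\lambda)\,|\lambda-z|^{-2}\,\mathrm{d}\lambda<0$.

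Third, ``almost surely for each fixed $z$'' should become a single null set. Take a countable dense set of $z$ and use the Lipschitz bound $|G_P(z)-G_P(z')|\le|z-z'|/(\operatorname{Im}z\,\operatorname{Im}z')$.
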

\Cref{numplot} illustrates the theorem for a Wishart matrix of size $10^4$.
The noise model considered thus rests essentially on an \emph{a priori} assumption on the weak correlation of noisy \dof which automatically brings a large-scale covariance matrix into the validity conditions of the theorem.

\begin{figure}[t]
    \centering
    \includegraphics[width=0.7\textwidth]{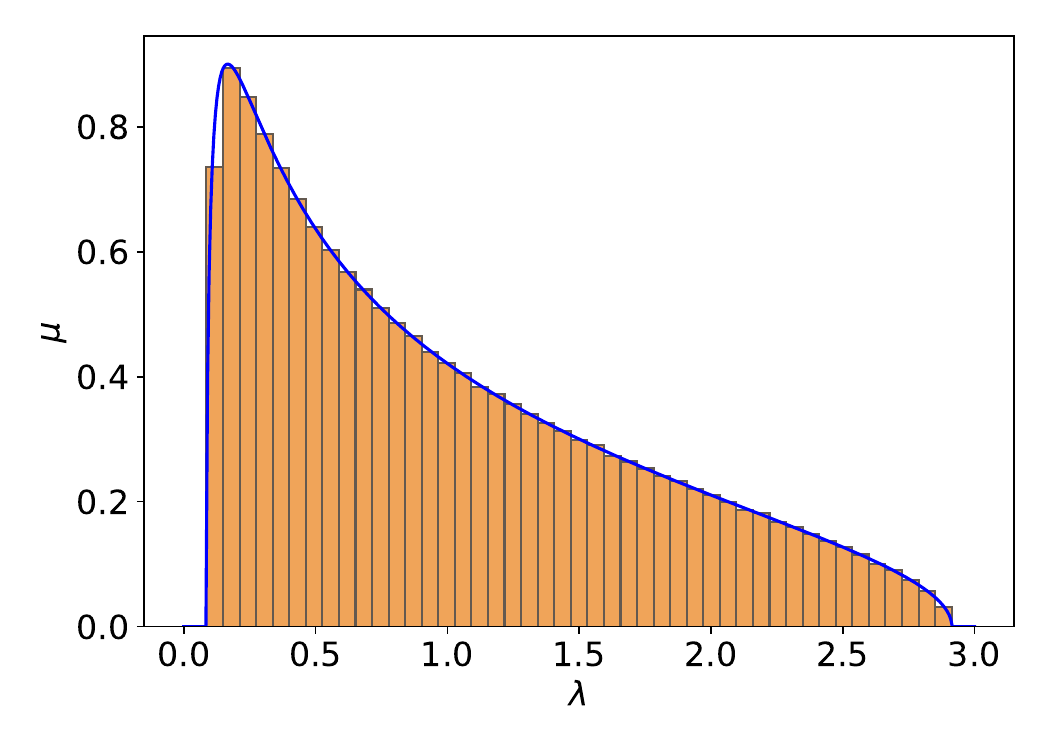}
    \caption{%
        Illustration of the convergence toward MP universal law.
        The histogram corresponds to the empirical spectrum of a white Wishart matrix with a sample size $4 \times 10^4$ and a ratio $q = 0.5$.
        The blue lines materialise the limit \mpdistr ($\mu_{\mpdistr}$) law.
    }
    \label{numplot}
\end{figure}

It should be noted, however, that while the original proof of \Cref{thm:thMP}~\cite{marvcenko1967distribution} assumes that the entries of the matrix $X$ are strictly \iid, other proofs have since relaxed this strict constraint and considered moderate, rapidly decaying correlations~\cite{bryson2021marchenko}.
In a sense, the universality class associated with the \mpdistr distribution covers a broader reality than that of the \iid hypothesis, which nonetheless seems so intuitive for defining noise.
This definition proves convenient, notably due to its quite broad universality, which one could say is quite helpful when dealing with data from different sources and of different nature.

The standard \bbp-based approach, while optimal for detecting isolated spikes that cleanly separate from the bulk, faces fundamental limitations when the signal rank is not negligible compared to the matrix dimensions.
As shown in the context of the extensive spike model\cite{Landau2023}, even moderate-rank signals (as low as $10$ for matrices of size $\sim 10^3$) deviate significantly from finite-rank predictions.
In this extensive regime, the signal does not manifest as a few isolated eigenvalues but as a collective deformation of the spectral tail, making the very notion of a \enquote{spike} ill-defined.
The \rg approach reviewed here circumvents this limitation.
Rather than searching for individual eigenvalues, it constructs an effective field theory whose propagator reproduces the full empirical spectral density.
In this framework, a pure-noise spectrum (the \mpdistr universality class) corresponds to a Gaussian fixed point with well-defined, scale-independent coupling dimensions.
The presence of a signal deforms the spectral tail, modifying the effective propagator and rendering the canonical dimensions scale-dependent.
We refer to this phenomenon as a dimensional phase transition, in analogy with the appearance of anomalous dimensions at critical points in statistical mechanics.
Tracking this scale dependence via the \rg flow equations reveals a sharp change in the relevance of the couplings at a characteristic scale, that is the point beyond which the spectrum can no longer be described by the pure-noise universality class.
This scale provides an objective cutoff $\Lambda_{\text{cutoff}}$ between signal and noise (see \Cref{fig1}), and the \rg approach consistently yields a lower limit of detection than the \bbp threshold, precisely because it does not require eigenvalues to be cleanly separated from the bulk.

This review follows a long line of research work on the use of \rg for signal and anomaly detection~\cite{RG1,RG2,RG3,RG4,RG5,RG6,RG7}.
The approach discussed here constitutes an interdisciplinary approach to data analysis using physics-inspired techniques, though it is certainly not an isolated case.
It is actually part of a lively bibliographic constellation of works that bring together various \rg-inspired methods for \ai and machine learning (\ml).
These approaches are well aligned with recent advancements in computational techniques based on physics-inspired arguments, such as the correspondence between Neural Networks and Quantum Field Theory (\textsc{nn--qft})~\cite{NNQFT1,NNQFT2,NNQFT3,NNQFT4,halverson2024tasi,demirtas2024neural,howard2025wilsonian}.
Various other approaches tackling different aspects of \ml (such as deep learning and diffusion models) using \rg-based techniques can also be mentioned, notably \cite{hou2023machine,beny2013deep,berman2023bayesian,berman2024inverse,masuki2025generative,cotler2023renormalizing}, though any possible list of works will never be exhaustive.
The link between \rg and data science is obviously no coincidence: the convergence of the general objectives of complex systems physics and those of data analysis, or, in other words, the necessity to extract large-scale regularities from data, naturally establishes the idea of a methodological contamination between these two fields (see, for instance,~\cite{DataPhy1}).
The \rg is probably the most notable example of physics-inspired technique for data analysis, as it provides an explanatory framework for the apparent simplicity of macroscopic physical laws, regardless of the complexity of the underlying microscopic reality of the phenomena.
The \rg thus offers the immense advantage of providing both a general mechanism for understanding the emergence of physical laws and their \enquote{simplification} to the core \dof needed to explain them, as well as a simple and unifying explanation for the many universalities encountered in physics.

Aware that, due to the interdisciplinary nature of this work, the reader may not be familiar with \rg, we shall provide a brief review in the technical sections to present the formalism.
As of today, it is the most efficient mechanism for explaining the apparent simplicity of effective laws and their mysterious insensitivity to the microscopic details of the theory (for an introduction, see~\cite{Zinn1}) as shown in \Cref{fig0} (notations such as $H_1, H_2, \dots, \Gamma, \dots$ will become clear in the following sections).
The simplest and most illuminating example of this idea lies in the fact that the three-dimensional Ising model at equilibrium, near its critical temperature (involving discrete \dof), is essentially indistinguishable from the $\phi^4_3$ field theory (see \Cref{part1}).\footnote{%
    The notation, conventional in field theory literature, gives the degree of interaction in the exponent and the dimension of space in the subscript.
}

\begin{figure}[t]
    \centering
    \begin{tikzpicture}
    \definecolor{spinblue}{RGB}{172,215,230}

    \draw[fill=spinblue!25, draw=black!75, very thick] (4.977, 2.721) ellipse (4.949 and 2.693);
    \node[anchor=west] at (2.0, 1.28) {\textbf{Full theory space}};

    \draw[fill=red!25, draw=red!70, very thick]
    (2.407, 3.686) circle (1.106);

    \fill[black] (2.133, 3.097) node[left] (P1) {$H_1$} circle (2pt);
    \fill[black] (1.567, 3.523) node[above] (P2) {$H_2$} circle (2pt);
    \fill[black] (2.714, 4.376) node[left] (P3) {$H_3$} circle (2pt);

    \filldraw (6.882, 1.545) node[right] (Gamma) {$\Gamma$} circle (2pt);

    \draw[dashed, -{Latex}]
    (2.133, 3.097) .. controls (3.436, 2.886) and (5.135, 1.879) ..
    (6.732, 1.567);
    \draw[dashed, -{Latex}]
    (1.567, 3.523) .. controls (3.244, 3.636) and (3.800, 3.116) ..
    (4.350, 2.734) .. controls (4.899, 2.352) and (5.442, 2.107) ..
    (6.764, 1.604);
    \draw[dashed, -{Latex}]
    (2.714, 4.376) .. controls (4.263, 3.338) and (4.985, 2.544) ..
    (6.787, 1.615);

    \node[anchor=north west] at (3.39, 4.53) {Microscopic theories};
    \node[anchor=center]  at (5.43, 3.09) {RG flow};
    \node[anchor=west]    at (6.65, 2.12) {Effective physics};
    \node[anchor=south west] at (7.44, 2.39) {\textbf{(IR)}};
    \node[anchor=south east] at (5.48, 4.28) {\textbf{(UV)}};
\end{tikzpicture}
    \caption{%
        Illustration of the \rg universality: many microscopic (ultraviolet, \uv) theories describe the same long-distance physics in the sense that they become indistinguishable in the low-energy (infrared, \ir) regime.
    }
    \label{fig0}
\end{figure}

The underlying philosophy of this article follows its predecessors.
We shall consider an analogous field theory in the vicinity of universal noise, built on the hypothesis that it faithfully reproduces the characteristics of maximum entropy estimators for the fundamental (generally unknown) \dof whose correlations are (partially) encoded in the matrix $C$.
This field theory enables power counting that follows the shape of the spectrum, which in turn makes it possible to distinguish relevant interactions from irrelevant ones.\footnote{%
    For the reader unfamiliar with these notions, they will be explained in \Cref{part1}.
}
This power counting notably determines a particular equivalence class for the \mpdistr spectrum, asymptotically analogous to a field theory in dimension~3, as will be discussed in the technical sections.
One can therefore consider determining an objective cutoff between information and noise based on a significant threshold of deviation from this universality class, while ensuring that it does not compete with the natural fluctuations of the statistical ensemble to which the data belongs.
Articles \cite{RG1,RG2,RG3,RG4,RG6} constitute the original works in this field, but the reader may consult the recently updated review \cite{RG5}.
The present work follows directly from \cite{RG7}, whose definitions and conventions it intentionally adopts in their entirety to facilitate access for the data scientists interested in this review.
All the numerical work was produced using the library \href{https://github.com/thesfinox/frg-signal-detection}{frg-signal-detection} freely available on GitHub (physical simulations also use the library \href{https://github.com/ParhamRadpay/Model-A}{ModelA}).

In what follows, we shall adopt the physicist's notation
\begin{equation}
    \left\langle f(X) \right\rangle \defeq \mathds{E}\left[ f(X) \right]
\end{equation}
to indicate the average of $f(X)$ over the distribution of $X \sim \mathrm{P}_X$.
In the case of a continuous variable, this is computed by:
\begin{equation}
    \left\langle f(X) \right\rangle = \int\mathrm{d}P_X\; f(X),
\end{equation}
where, most of the time, $\mathrm{d}P_X = p_X(x) \mathrm{d}x$ with $p_X(x) = \exp(-S(x))$ being the probability density function of $X$.

\paragraph{Outline}
The review is organised as follows:
\begin{itemize}
    \item \Cref{sec_invitation} motivates the \rg approach to signal detection through the concrete example of hyperspectral imaging, introduces the concept of the dimensional phase transition, and previews the key results of the review.

    \item \Cref{part1} introduces the \rg formalism from first principles using the Ising model as a laboratory:
          \begin{itemize}
              \item \Cref{sec13} reviews the Ising model and the mean-field approximation.
              \item \Cref{sec14} presents Kadanoff's block-spin construction and the Ginzburg--Landau $\phi^4$ field theory.
              \item \Cref{sec:wilson_rg} develops the Wilsonian \rg, the concepts of renormalizability and canonical dimension.
              \item \Cref{sec:lesson} summarises the lessons learned: the classification of couplings as relevant, marginal, or irrelevant, the upper critical dimension $D=4$, and the emergence of universality.
          \end{itemize}

    \item \Cref{part2} builds the field theory directly from the empirical spectrum and applies the functional \rg:
          \begin{itemize}
              \item \Cref{section1} constructs the field theory and establishes the continuum limit.
              \item \Cref{sec:frg_dft} presents the Wetterich--Morris formalism.
              \item \Cref{sec:generalized_scaling} analyses the scale dependence of the canonical dimensions.
          \end{itemize}

    \item \Cref{part3} develops and validates the operational detection framework:
          \begin{itemize}
              \item \Cref{sec:gsa} introduces the key concepts of the theoretical framework and describes the methodology.
              \item \Cref{sec:ising_critical_temp} benchmarks the method against the exact Onsager solution of the two-dimensional Ising model.
              \item \Cref{Formalization} formalises the framework through the distribution proxies and the direct and inverse Gaussian distances.
              \item \Cref{sec:dimensional_phase_transition} provides a numerical illustration of the dimensional phase transition.
          \end{itemize}

    \item \Cref{part4} extends the framework through four applications:
          \begin{itemize}
              \item \Cref{Sec4-1} analyses independent noise components and the multimodal structure they induce in the canonical dimension flow, providing an estimator for the number of distinct confounding sources.
              \item \Cref{sec:non_gaussian_noise} tests the method against non-Gaussian and structured noises, including convolutions (blurring) and periodic interference patterns.
              \item \Cref{sec:time_series} extends the method to time-dependent data, measuring the dynamic critical exponent of the two-dimensional Ising model through the spectral scaling of the cutoff index.
              \item \Cref{sec:hyperspectral} confronts the method with real-world sensor data from a hyperspectral image of the Martian surface, detecting residual spectral correlations.
          \end{itemize}

    \item The appendices provide supporting technical material:
          \begin{itemize}
              \item \Cref{App0} derives the maximum entropy estimator that underpins the field-theoretic construction.
              \item \Cref{App1} presents the locality and universality arguments justifying the local field theory at the tail of the spectrum.
              \item \Cref{sec:app1} gives the large-$N$ solution and Hartree approximation for the critical temperature.
              \item \Cref{AppOJK} reviews the Ohta--Jasnow--Kawasaki approximation for domain growth in the coarsening regime.
              \item \Cref{sec:app2} recovers the equilibrium Ising model from the non-equilibrium Model~A dynamics.
              \item \Cref{AppD} provides a short review of spiked matrix models and the Baik--Ben Arous--P\'{e}ch\'{e} phase transition.
          \end{itemize}
\end{itemize}

\section{An Interdisciplinary Invitation}\label{sec_invitation}
Signal detection in high-dimensional datasets is, in its most elementary formulation, an eigenvalue problem: we seek to identify the largest contribution to the variance of the data, as it possibly corresponds to the most significant directions containing the meaningful information.
\pca and its derivatives address this problem directly by isolating the largest eigenvalues and their associated eigenvectors.
When the signal manifests as low-rank, isolated eigenvalues, or \emph{spikes}, that separate cleanly from the quasi-continuous bulk, a cutoff threshold is almost always possible to define, and its validity is backed by the \bbp transition threshold~\cite{math3}, which provides a rigorous criterion for its detectability.
Once the spikes have been extracted, however, \pca has exhausted its diagnostic power.
The residual continuous spectrum is usually treated as noise and discarded.
Yet, in many realistic scenarios, the signal is not a low-rank injection in a continuous bulk: it is actually extensive, distributed across a large number of eigenvectors, and mostly submerged within the bulk.
It manifests not as an isolated eigenvalue but as a subtle, collective deformation of the spectral density.
In this \enquote{post-\pca} regime, where no isolated feature offers a natural place to draw the line, standard methods are no longer useful.
It is precisely here that the \rg provides a new analytical language.

A concrete and instructive illustration of this regime is offered by hyperspectral imaging (\hsi).
Whereas an ordinary \emph{Red-Green-Blue} (\rgb) image records three broadband colour channels, a hyperspectral image captures the complete reflectance spectrum of the scene at each pixel, typically across several hundred narrow and contiguous spectral bands.
Each pixel is therefore a high-dimensional vector whose components encode the interaction of light with matter at different wavelengths.
This becomes particularly useful as different materials leave distinct spectral fingerprints and can be distinguished based on their spectral signatures.
Standard spectral analysis, including \pca-based denoising pipelines such as the Minimum Noise Fraction (\mnf) transform~\cite{green1988MNF}, can isolate the dominant mineral signatures present in a scene.
However, when only traces of a given mineral trace drowned in the compounded noise of the instrument, photon shot noise, and atmospheric residuals, its spectral contribution no longer appears as an isolated eigenvalue.
The signal is, however, not absent and is submerged, distributed across a subpopulation of eigenvalues that, while individually indistinguishable from noise, collectively deform the shape of the bulk.
It is this kind of signal, invisible to any method that relies on a hard eigenvalue cutoff, that the \rg is designed to detect.

The essential difference between \pca and the \rg-based approach lies in how each method conceptualises the boundary between signal and noise.
\pca imposes a hard partition: eigenvalues on one side are retained, those on the other are discarded.
The \rg, by contrast, does not partition the spectrum but analyses its structure.
The central observation, developed in detail in the following parts of this review, is that the eigenvalue distribution can be mapped onto an effective field theory (\eft) whose properties are entirely determined by the shape of the spectrum.
Within this \eft, the degrees of freedom represents the principal directions of the correlation matrix, and their effective interactions encode the statistical dependencies among the original observables.
Small deviations from a universal noise model manifest as modifications of the effective interactions, rendering the \eft sensitive to the same subpopulations of eigenvalues that \pca discards.

The \eft constructed in this manner presents different characteristics, depending on the original data.
In particular, it possesses a remarkable property: its effective spacetime dimension is not fixed, but depends on the data themselves.
For pure noise belonging to the \mpdistr universality class, the asymptotic dimension is $D = 3$, meaning that the theory behaves like a $\phi^4$ \eft in three dimensions, where quartic interactions are relevant and the Gaussian fixed point is unstable.
The presence of a signal modifies the tail of the spectral distribution, and this deformation changes the canonical dimensions of the effective couplings.
As the signal strength increases, the effective dimension grows.
When $D$ crosses the critical value $4$, all interactions become irrelevant and the \rg flow is driven towards the Gaussian fixed point.
This \emph{dimensional phase transition}, which will be formalised in \Cref{part3}, provides an objective and universal definition of the signal-to-noise (\snr) boundary: it measures, in a precise and computable sense, a distance from pure noise.
The signal is thus not identified by where its eigenvalues sit, but by how it alters the dimensional character of the fluctuations that carry it.
In other words, by looking at how the underlying spacetime dimension changes, the \rg enables the detection of signal, recovering the phase transition highlighted recently in the analysis of extensive-rank random matrices~\cite{Landau2023}.

The practical power of this approach comes from its universality.
The \mpdistr class describes an extraordinarily broad family of noise sources: essentially, any large covariance matrix whose entries are sufficiently weakly correlated can be actively described by it.
The \eft, by construction, is blind to the microscopic nature of the data that produced the correlation matrix.
Whether the underlying variables are reflectance spectra from a Martian crater, gene expression levels, financial returns, or spin configurations from a Monte Carlo simulation, the \eft treats them through the same lens: only the eigenvalue distribution matters, as only the eigenvalue distribution is responsible for how the different \dof interact.
This is the same universality that allows the Ising model, originally conceived for ferromagnets, to describe systems as distant as binary alloys and neural networks.
The reader familiar with the \rg will recognise this as the modern understanding of \eft: the microscopic details are irrelevant in the \ir, and only a small number of universality classes survive.

The validity of this framework has been established through extensive numerical experiments, a selection of which is reviewed in the following parts of this article.
A particularly interesting test is provided by the two-dimensional Ising model, whose critical temperature is known exactly from Onsager's solution.
The \snr boundary identified by the dimensional criterion recovers the Onsager temperature to within approximately $2$--$3\%$ (with a central value of $2.2\%$), outperforming standard methods based on Kullback--Leibler (\kl) divergence minimisation (which yield errors of approximately $7\%$).
Additional results on real-world images, non-equilibrium magnetic systems, and the hyperspectral data discussed above confirm that the method is both robust and broadly applicable.
The reader interested in an interdisciplinary approach to data analysis, and in the unexpected dialogue between the renormalization group and practical signal detection, will find in the following pages a self-contained exposition of the formalism, its numerical implementation, and its validation across multiple domains.
The review is structured to be accessible to the data scientist with no prior exposure to field theory, while providing the physicist with a concrete and unfamiliar application of the \rg framework.

\part{The Renormalization Group: a Law of Nature}\label{part1}

Discovered in the second half of the 20th century, \rg is one of the key techniques in the theory of complex systems~\cite{Complex1}.
To put it simply, \rg is a procedure for extracting a simplified description from a more fundamental one, which involves a very large number of interacting \dof, retaining only a reduced number of parameters: the so-called renormalizable couplings.
These couplings play a role analogous to that of relevant features within large datasets in the sense that they define the large-scale regularities of the system.
With this in mind, \rg is indeed a law of nature rather than a tool and represents the channel through which the different scales of the world are interconnected.
It can be found almost everywhere in physics, from high-energy \rg to condensed matter \rg and even cosmology.
The novice reader may consult references~\cite{Zinn1,Zinn2} for a deeper understanding, as well as the historical reference~\cite{Wilson1} and the pedagogical paper~\cite{kadanoff2009more}.
For a reference about condensed matter \rg and especially the physics of magnetic systems, see also~\cite{kittel2018introduction}.

This section develops the formalism and notably introduces essential concepts related to \rg (\rg flow, renormalizable couplings, canonical dimensions, etc.).
For the reader in data science who may not have heard of \rg, we provide a concise presentation, richly referenced, yet sufficiently self-contained for the remainder of the review.
Our presentation of these concepts will primarily be through the lens of the Ising model, which is also well known in data science and will occupy a significant place in the results of this review.
The reader familiar with these concepts can, of course, skip the first section of this part without hindering their understanding.

\section{The Ising Model and the Mean Field Approximation}\label{sec13}
To introduce the principal concepts of the \rg, we shall consider the classic example of ferromagnets and their statistical behaviour.
We shall introduce the basic concepts and notation that will be useful for the rest of the discussion.
We revisit the Ising model because it provides the simplest laboratory for the concepts that will be essential to our signal-detection framework: universality, the upper critical dimension, the failure of mean-field theory, and the need for a systematic treatment of fluctuations (precisely the role that the \rg will play).

\subsection{Ferromagnetic Systems}

The origins of \rg date back to the early days of particle physics in the 1950s, with the pioneering work of Gell-Mann and Low~\cite{gell1954quantum}.
However, it truly took shape in the 1970s with the work of Kadanoff~\cite{kadanoff1966scaling} and, especially, Wilson~\cite{wilson1971renormalization} on the theory of magnetic phase transitions, a field far removed from particle physics.
At that time, the main concern was the behaviour of magnetic systems (conventional magnets) and more specifically what is known as the \emph{ferromagnetic transition}.
The debate centred on the theoretical understanding of Curie's observation that, above a certain temperature $T_c$, the \emph{Curie temperature} (or \emph{critical temperature}), a magnet spontaneously loses its magnetization.
More precisely, it was shown experimentally that the magnetization $M$, i.e.\ the global magnetic moment of a magnet, vanishes above $T_c$ and acquires a non-zero expectation value below $T_c$, with a behaviour that follows a \emph{universal law}, meaning it is independent of the type of magnetic material considered:\footnote{%
    Universality actually extends far beyond magnetic systems.
    Furthermore, the values of the critical exponents given below, $\beta$ and $\nu$, correspond to magnetic systems in the Ising model universality class.
    In particular, they concern systems with strong uniaxial anisotropy; fully isotropic systems belong instead to the Heisenberg universality class, which we shall not discuss~\cite{kittel2018introduction}.
}
\begin{equation}
    M \sim \left(\frac{T_c-T}{T_c}\right)^{\beta},
\end{equation}
with $\beta \approx 0.326$.
The transition point is also marked by the divergence of the correlation length $\xi$, which represents the typical length over which spins can be strongly correlated:
\begin{equation}
    \langle S_i S_j \rangle \sim e^{-\vert \vec{x}_i-\vec{x}_j \vert/\xi},
\end{equation}
where $\vec{x}_i$ denotes the position vector of the spin at site $i$ and $\vert \vec{x}_i-\vec{x}_j \vert$ is the Euclidean distance between sites $i$ and $j$.
The dependence of the correlation length on the temperature near the critical point is given by:
\begin{equation}
    \xi \sim \left \vert \frac{T-T_c}{T_c}\right \vert^{-\nu},
    \label{exponu}
\end{equation}
with $\nu \approx 0.630$.
The existence of universal behaviour, largely independent of the microscopic nature of interactions between atoms and electrons in magnetic materials, suggests the existence of a \enquote{coarse-graining} mechanism for these details.
In other words, there must exist, at a certain scale, a mathematical description common to all materials.
\rg is the mathematical tool that explains why these properties are not unique to magnetic systems: they are common to families of systems that possess a natural hierarchy (microscopic, mesoscopic, and macroscopic) together with intrinsic variability.

By the mid-1920s, it was understood that electrons within atoms possess a spin and behave, to a good approximation, as tiny magnets.
This spin is discrete, and in the case of electrons it can occupy only two states, labelled $S = \pm 1$ in the Ising model.\footnote{%
    For readers with a background in physics, we are obviously simplifying the discussion to get to the essentials.
    Strictly, the electron has spin quantum number $S = 1/2$, and its $z$-projection takes the values $S_z = \pm \hbar/2$.
}.
The atomic origin of ordinary magnetism was soon established beyond doubt, yet understanding the resulting large-scale magnetism remained a challenge for physics.
A simplified model of the microscopic reality of these magnetic systems and the interactions between spins was proposed by Ising, and we shall review it below.\footnote{%
    In reality, the Ising model was proposed in 1920, before the discovery of spin, as a simplification of another mechanism.
    However, its deep justification came with the advent of quantum mechanics and relies largely on the Pauli exclusion principle~\cite{kittel2018introduction}.
}

\subsection{Ising model}

\begin{figure}[t]
    \centering
    \begin{tikzpicture}
    \def\s{1.0}
    \draw[thin, black] (-0.5,-0.5) grid (5*\s+0.5, 3*\s+0.5);

    \definecolor{spinblue}{RGB}{172,215,230}

    \foreach \x/\y/\d in {
    0/3/1,  1/3/-1, 2/3/-1, 3/3/1,  4/3/-1, 5/3/1,
    0/2/-1, 1/2/1,  2/2/1,  3/2/-1, 4/2/1,  5/2/1,
    0/1/-1, 1/1/1,  2/1/-1, 3/1/-1, 4/1/-1, 5/1/-1,
    0/0/-1, 1/0/-1, 2/0/1,  3/0/1,  4/0/1,  5/0/-1
    } {
    \draw[fill=spinblue!50, draw=spinblue] (\x*\s,\y*\s) circle (0.4);

    \ifnum\d=1
        \draw[-{Latex}, red, very thick] (\x*\s,\y*\s-0.3) -- (\x*\s,\y*\s+0.3);
    \else
        \draw[-{Latex}, red, very thick] (\x*\s,\y*\s+0.3) -- (\x*\s,\y*\s-0.3);
    \fi
    }
\end{tikzpicture}
    \caption{%
        A spin configuration on a 2-dimensional lattice, i.e.\ the 2D Ising model. $S = +1$ spins are represented by an upward-pointing arrow, and $S = -1$ spins by a downward-pointing arrow.}
    \label{Ising1}
\end{figure}

The Ising model is based on a regular $D$-dimensional lattice with a lattice spacing of $1$ (mimicking the atomic lattice in matter, see \Cref{Ising1}).
The interactions between the spins, represented as discrete variables $S_i \defeq \pm 1$ at each site $i$ of the lattice, are treated as nearest-neighbour interactions, and the partition function (or \emph{moment-generating} functional) is given by:
\begin{equation}
    Z_{\text{Ising}}[B,J]
    \defeq
    \sum_{\{ S_i= \pm 1 \}} e^{-H_{\text{Ising}}[\{S_i\}]/T},
\end{equation}
where the \emph{Ising's Hamiltonian} is:
\begin{equation}
    H_{\text{Ising}}[\{S_i\}]
    \defeq
    -J \sum_{\langle i,j \rangle} S_i S_j + \sum_{i=1}^{N^D} S_i B_i
\end{equation}
where $J > 0$ characterises the interaction between spins, $T$ is the temperature of the system, $B_i$ is the value of the external magnetic field at site $i$, $N^D$ denotes the total number of sites ($N$ spins per dimension), and the notation $\langle i, j \rangle$ denotes nearest-neighbour pairs.
It is possible to show that, in spatial dimensions $D > 1$, the Ising model always exhibits a phase transition: below a certain critical temperature $T_c$, a spontaneous magnetization appears.
This is a state where the average value of the spins differs significantly from zero as $N \to \infty$.
Schematically, we have:
\begin{equation}
    \begin{tabular}{ccc}
        $T<T_c$ & $\colon$ & $\vcenter{\hbox{\begin{tikzpicture}
    \def\s{1.0}

    \definecolor{spinblue}{RGB}{172,215,230}

    \foreach \x/\d in {
    0/1, 1/1, 2/1, 3/1, 4/1, 5/1
    } {
    \draw[fill=spinblue!50, draw=spinblue] (\x*\s,0) circle (0.4);

    \ifnum\d=1
        \draw[-{Latex}, red, very thick] (\x*\s,-0.3) -- (\x*\s,0.3);
    \else
        \draw[-{Latex}, red, very thick] (\x*\s,0.3) -- (\x*\s,-0.3);
    \fi
    }
\end{tikzpicture}}}$ \\[1em]
        $T>T_c$ & $\colon$ & $\vcenter{\hbox{\begin{tikzpicture}
    \def\s{1.0}

    \definecolor{spinblue}{RGB}{172,215,230}

    \foreach \x/\d in {
    0/1, 1/-1, 2/-1, 3/1, 4/-1, 5/1
    } {
    \draw[fill=spinblue!50, draw=spinblue] (\x*\s,0) circle (0.4);

    \ifnum\d=1
        \draw[-{Latex}, red, very thick] (\x*\s,-0.3) -- (\x*\s,0.3);
    \else
        \draw[-{Latex}, red, very thick] (\x*\s,0.3) -- (\x*\s,-0.3);
    \fi
    }
\end{tikzpicture}}}$
    \end{tabular}
\end{equation}
Note that the choice of the direction in which the magnetization appears breaks the discrete $\mathds{Z}_2$ symmetry ($S_i \to -S_i$) of the model.
This is referred to as \emph{spontaneous symmetry breaking}, or a \emph{broken phase}.

Although a simplified model, the Ising model captures, at least qualitatively, the essence of the physics of the ferromagnetic transition in dimension 3.\footnote{%
    In arbitrary dimension $D$, the difference can also be qualitative.
    For instance, the Ising model predicts a phase transition in dimension 2, whereas the more physical model known as the Heisenberg model does not.
}
It should be noted that there are no analytical solutions to the Ising model in dimension $D > 2$.
However, it is possible to understand the phase transition through various methods.
The simplest and most immediate is \emph{mean-field theory}, which consists in replacing the exact interaction of the model with an effective interaction, such that each spin interacts only with the average of its neighbours.
For an infinite or periodic system that is perfectly homogeneous, this mean field $\mathcal{M}$ must be site-independent.
By expressing this condition for each site, one readily obtains a self-consistent equation for $\mathcal{M}$:
\begin{equation}
    \mathcal{M}
    =
    \tanh \left(\frac{2D J \mathcal{M}}{T} \right)
    \approx
    \frac{2D J \mathcal{M}}{T} - \frac{1}{3}\left(\frac{2D J \mathcal{M}}{T}\right)^3+\mathcal{O}(\mathcal{M}^4).
\end{equation}
From this transcendental equation, one readily sees that the system acquires a spontaneous magnetization $\mathcal{M} \neq 0$ when $T < T_0 \defeq 2DJ$:\footnote{%
    $T_0$ here plays the role of the critical temperature.
    However, as we shall see, this temperature is not the physical critical temperature $T_c$: fluctuations generally lower the critical temperature compared to the value predicted by mean-field theory.
    The change in notation accounts for this distinction.
}
\begin{equation}
    \mathcal{M} \sim (T_0-T)^{1/2}.
\end{equation}
Mean-field theory predicts $\beta = \nu = 1/2$, regardless of the dimension.
This is a form of super-universality that does not correspond to the reality of the Ising model, whose critical exponents depend heavily on the dimension.
Furthermore, mean-field theory predicts a phase transition in dimension 1, whereas a direct calculation shows that the 1D Ising model does not exhibit a phase transition.
The mean-field approximation is, in fact, only correct for dimensions $D > 4$, and only qualitatively correct for $1 < D \leq 4$.
Mean-field theory thus misses an essential part of the physics, which is naturally associated with the fluctuations of the mean field $\mathcal{M}$.
The Ginzburg criterion evaluates the relative size of these fluctuations $\Delta \mathcal{M}$,
\begin{equation}
    \frac{(\Delta \mathcal{M})^2}{\mathcal{M}^2} \sim \vert T-T_0 \vert^{\frac{D-4}{2}}.
\end{equation}
Thus, when $D > 4$, $|T - T_0|^{\frac{D-4}{2}}$ tends toward zero as $T \to T_0$, such that $(\Delta \mathcal{M})^2 / \mathcal{M}^2 \ll 1$ and mean-field theory remains consistent.
In contrast, when $D \leq 4$, fluctuations become stronger and the mean-field approximation breaks down.
A theory capable of taking these fluctuations into account is therefore required.
Note that $D = 4$ will define the \emph{critical dimension}.

The key lesson of this section is that mean-field theory, while intuitive and simple, fails exactly when we most need it, that is near the critical point in dimensions $D \leq 4$, where fluctuations dominate.
The \rg provides the systematic framework for handling these fluctuations, and it does so by organising couplings according to their relevance under scale transformations.
In the next section, we develop this \rg machinery explicitly, building from the concrete example of the Ising model toward the effective field theory that will later describe eigenvalue spectra.

\section{A Field Theory for Fluctuations}\label{sec14}
This section illustrates the core ideas of the \rg through the Ising model.
Kadanoff's block-spin construction introduces the iterative coarse-graining procedure at the heart of the \rg.
Near criticality this discrete picture gives way to the Ginzburg-Landau $\phi^4$ field theory, whose treatment occupies the remainder of this part.

\subsection{Kadanoff's Block Spin}

\begin{figure}[t]
    \centering
    \begin{tikzpicture}
    \def\s{1.0}
    \def\shift{8.0}
    \definecolor{spinblue}{RGB}{172,215,230}
    \definecolor{blockred}{RGB}{165,42,42}
    \definecolor{structred}{RGB}{138,0,0}

    \draw[thin, black] (-0.5,-0.5) grid (5*\s+0.5, 3*\s+0.5);
    \draw[thin, black!15] (-0.5+\shift,-0.5) grid (5*\s+\shift+0.5, 3*\s+0.5);
    \draw[thin, black] (-0.5+\shift, -0.5) -- (5.5+\shift, -0.5);
    \draw[thin, black] (-0.5+\shift, 1.5) -- (5.5+\shift, 1.5);
    \draw[thin, black] (-0.5+\shift, 3.5) -- (5.5+\shift, 3.5);
    \draw[thin, black] (0.5+\shift, -1.5) -- (0.5+\shift, 4.5);
    \draw[thin, black] (2.5+\shift, -1.5) -- (2.5+\shift, 4.5);
    \draw[thin, black] (4.5+\shift, -1.5) -- (4.5+\shift, 4.5);

    \foreach \x/\y/\d in {
    0/3/1,  1/3/-1, 2/3/-1, 3/3/1,  4/3/-1, 5/3/1,
    0/2/-1, 1/2/1,  2/2/1,  3/2/-1, 4/2/1,  5/2/1,
    0/1/-1, 1/1/1,  2/1/-1, 3/1/-1, 4/1/-1, 5/1/-1,
    0/0/-1, 1/0/-1, 2/0/1,  3/0/1,  4/0/1,  5/0/-1
    } {
    \draw[fill=spinblue!50, draw=spinblue] (\x*\s,\y*\s) circle (0.4);

    \ifnum\d=1
        \draw[-{Latex}, red, very thick] (\x*\s,\y*\s-0.3) -- (\x*\s,\y*\s+0.3);
    \else
        \draw[-{Latex}, red, very thick] (\x*\s,\y*\s+0.3) -- (\x*\s,\y*\s-0.3);
    \fi
    }
    \foreach \x/\y/\d in {
            0.5/1.5/A, 2.5/1.5/B, 4.5/1.5/C
        } {
            \draw[fill=spinblue!50, draw=spinblue] (\x*\s+\shift,\y*\s) node {\large $\Phi_\d$} circle (0.6);
        }
    \foreach \x/\y in {
            0.5/-0.5, 0.5/3.5, 2.5/-0.5, 2.5/3.5, 4.5/-0.5, 4.5/3.5
        } {
            \draw[fill=spinblue!15, draw=spinblue!50] (\x*\s+\shift,\y*\s) circle (0.6);
        }

    \draw[color=blockred, dashed, very thick] (-0.45, 0.55) rectangle (1.45, 2.45);
    \draw[color=blockred, dashed, very thick] (-0.45+\shift, 0.55) rectangle (1.45+\shift, 2.45);
    \draw[color=blockred!50, dashed, very thick] (-0.45+\shift, 0.55+2.0) rectangle (1.45+\shift, 2.45+2.0);
    \draw[color=blockred!50, dashed, very thick] (-0.45+\shift, 0.55-2.0) rectangle (1.45+\shift, 2.45-2.0);
    \draw[color=blockred, dashed, very thick] (1.55, 0.55) rectangle (3.45, 2.45);
    \draw[color=blockred, dashed, very thick] (1.55+\shift, 0.55) rectangle (3.45+\shift, 2.45);
    \draw[color=blockred!50, dashed, very thick] (1.55+\shift, 0.55+2.0) rectangle (3.45+\shift, 2.45+2.0);
    \draw[color=blockred!50, dashed, very thick] (1.55+\shift, 0.55-2.0) rectangle (3.45+\shift, 2.45-2.0);
    \draw[color=blockred, dashed, very thick] (3.55, 0.55) rectangle (5.45, 2.45);
    \draw[color=blockred, dashed, very thick] (3.55+\shift, 0.55) rectangle (5.45+\shift, 2.45);
    \draw[color=blockred!50, dashed, very thick] (3.55+\shift, 0.55+2.0) rectangle (5.45+\shift, 2.45+2.0);
    \draw[color=blockred!50, dashed, very thick] (3.55+\shift, 0.55-2.0) rectangle (5.45+\shift, 2.45-2.0);

    \node[color=blockred] at (0.5, 1.8) {A};
    \node[color=blockred] at (2.5, 1.8) {B};
    \node[color=blockred] at (4.5, 1.8) {C};

    \draw[black, thick] (0.38, 1.38) -- (0.62, 1.62);
    \draw[black, thick] (0.62, 1.38) -- (0.38, 1.62);
    \draw[black, thick] (2.38, 1.38) -- (2.62, 1.62);
    \draw[black, thick] (2.62, 1.38) -- (2.38, 1.62);
    \draw[black, thick] (4.38, 1.38) -- (4.62, 1.62);
    \draw[black, thick] (4.62, 1.38) -- (4.38, 1.62);

    \draw[thin, dashed, black] (0.50, 1.30) -- (0.50, -1.00);
    \draw[thin, dashed, black] (2.50, 1.30) -- (2.50, -1.00);

    \draw[{Latex}-{Latex}, black, thin] (0.50, -0.7) -- (2.50, -0.7);
    \node at (1.2, -1.0) {2$a$};

    \draw[{Latex}-{Latex}, black, thin] (4.0, -0.7) -- (5.0, -0.7);
    \node at (4.4, -1.0) {\emph{a}};

    \draw[-{Latex}, black, thin] (4.5, 3.65) .. controls (5.25, 4.5) and (6.5, 4.5) .. (7.25, 3.65);
    \node at (6, 3.75) {1SRG};

\end{tikzpicture}
    \caption{%
        The construction of spin blocks: spins are averaged inside blocks (A, B, and C, for example), defining the \emph{block spin} ($\Phi_A$, $\Phi_B$, $\Phi_C$), and the interaction between spins is replaced by an effective interaction between blocks derived from the latter.
    }
    \label{IsingBlock}
\end{figure}

Mean-field theory captures the existence of a phase transition but fails quantitatively: the critical exponents it predicts differ from those of the exact solution of the 2D Ising model, proposed by Lars Onsager in the 1940s~\cite{das2019field,onsager1944crystal}\footnote{%
    The critical exponents of mean-field theory differ from the exact values of Onsager's solution.
}.
No such exact solution exists for the Ising model in dimension $D>2$.
However, an exact solution is not required to understand the transition.
A different approach is needed for higher dimensions.

Kadanoff introduced the \emph{block spin} as an original approach to the problem.
Because only macroscopic behaviour matters, Kadanoff argued that the microscopic details of individual spins are irrelevant.
He proposed to:
\begin{enumerate}
    \item Group spins into blocks (e.g.\ $2 \times 2$ or $3 \times 3$ blocks in 2D).
    \item Replace each block with a single \enquote{block spin} ($\Phi_{\text{block}}$) representing the average orientation of the group.
    \item Redefine a new interaction $J'$ between these new spin blocks after \enquote{rescaling} the original interaction $J$, so that the new lattice resembles the original one.
\end{enumerate}
This step ensures that models at every stage are compared on the same lattice.

The general strategy, summarised in \Cref{IsingBlock}, defines the \emph{one-step renormalization group} (1SRG) transformation.
After one step, the partition function of the Ising model is written in terms of the interactions between block spins $\Phi_I$:\footnote{%
    Here, the integral is formal and should be understood as \enquote{a sum over all permitted values of the block spins $\Phi_I$.}
}
\begin{equation}
    Z_{\text{Ising}} = \int \prod_I \dd \Phi_I\, e^{-H^{(1)}[\Phi_I]/k_BT},
\end{equation}
where, up to an irrelevant global factor, we have:
\begin{equation}
    H^{(1)}[\Phi_I] = \sum_{I,J} V_{IJ} \Phi_I \Phi_J+\sum_{I,J,K,L} V_{IJKL} \Phi_I \Phi_J\Phi_K \Phi_L+\cdots,
    \label{onestepblock}
\end{equation}
where the indices $I,J$ run over the 1SRG block spins.
The original form of the Ising Hamiltonian is not preserved: $H^{(1)}$ involves couplings absent from the microscopic model.

The procedure iterates an arbitrary number of times.
At each step, the different couplings $V_{IJ}$, $V_{IJKL}$, etc., are modified.
The $n$SRG is therefore a transformation that maps the Hamiltonian $H^{(n-1)}$ (from step $n-1$) to another Hamiltonian $H^{(n)}$, defining a discrete trajectory in the abstract space of Hamiltonians $\mathbf{S}$, as shown on the left of \Cref{flow}.
In the example in \Cref{IsingBlock}, at each step, the lattice spacing is multiplied by 2, so that if $a$ is the original spacing, the $n$SRG Hamiltonian describes a lattice with spacing $2^n a$.
Rescaling keeps the lattice spacing $a$ fixed.
Small fluctuations at the block scale are eliminated at each step, yet systems at every stage are compared on the same lattice.
Kadanoff's approach thus changes the model's resolution: at each step both the effective degrees of freedom and the coupling constants are replaced.
At $T_c$ the correlation length diverges.
The system is therefore scale-invariant, and the physics at all scales is identical.
This scale invariance manifests in the block-spin transformation as self-similarity: the block Hamiltonian retains the same form as the original Hamiltonian.
In the coupling space, this defines the \emph{critical surface} $\mathbf{S}_\infty \subset \mathbf{S}$.
Self-similarity means that the image of a Hamiltonian on $\mathbf{S}_\infty$ under an SRG step lies in $\mathbf{S}_\infty$ (see \Cref{flow}, right).
Conversely, if the initial Hamiltonian is not exactly at the critical temperature, it moves away from the critical surface.
Scale invariance implies that each Hamiltonian $H^{(n)}$ describes the same large-scale physics and that only small-scale fluctuations have been erased.

\begin{figure}[t]
    \centering
    \begin{tikzpicture}[
        >=Latex,
        axis/.style={->, thick, black},
        rflow/.style={red, thick, rounded corners=20pt,
                postaction={decorate, decoration={markings,
                                mark=between positions 0.1 and 0.95 step 0.1 with {\arrow[red]{Latex}}}
                    }
            },
        rflowarrow/.style={red, ->, thick},
        dashedline/.style={black, dashed, dash pattern=on 1pt off 3pt},
        scale=0.9
    ]

    \coordinate (O) at (0,0);

    \draw[axis] (O) -- (3.5, 0) node[below right] {$V_{IJ}$};
    \draw[axis] (O) -- (0, 3.0) node[right] {$V_{IJKL}$};
    \draw[axis] (O) -- (-1.4, -1.4) node[below right] {$V_{IJKLMN}$};
    \draw[axis] (O) -- (-2.5, -0.45) node[above left] {$V_\infty$};
    \draw[dashedline] (-1.8, -0.3) .. controls (-1.9, -0.8) and (-1.7, -0.9) .. (-0.9, -0.9);

    \node (I) at (1.0, 0.0) {};
    \fill[black] (I) node[below] {Ising} circle (2pt);
    \node (E) at (-1.75, 1.8) {};
    \fill[black] (E) circle (2pt);

    \draw[rflow] (I.center)
    to (2.5, 1.9)
    to (3.7, 2.4)
    to (2.9, 1.1)
    to (1.3, 0.1)
    to (-0.5, 0.6)
    to(E.center);

    \node at (-2.3, 2.5) {$\mathbf{S}$};
\end{tikzpicture} \quad \begin{tikzpicture}[
        >=Latex,
        flow/.style={black, thick, rounded corners=20pt,
                postaction={decorate, decoration={markings,
                                mark=between positions 0.2 and 0.9 step 0.2 with {\arrow{Latex}}}
                    }
            },
        redline/.style={red, very thick},
        reddashed/.style={red, thick, dashed, dash pattern=on 1pt off 3pt},
        scale=0.9
    ]

    \fill[black!15, draw=black!30, thick] (-1.5, 1.2) -- (3.5, 1.2) -- (1.5, -1.2) -- (-3.5, -1.2) -- cycle;
    \node at (-2.5, -0.9) {$\mathbf{S}_\infty$};
    \node at (1.5, -0.9) {$\mathbf{S}$};

    \node (H) at (1.2, 0.1) {};
    \draw[redline] (1.2, 1.8) -- (H.center);
    \draw[reddashed] (H.center) -- (1.2, -1.2);
    \draw[redline] (1.2, -1.2) -- (1.2, -1.8);
    \fill[black] (H) node[below right] {$H_{*}$} circle (2pt);

    \draw[flow] (-1.2, 1.1) to[in=180, out=235] (H.center);
    \draw[flow] (-0.4, -1.0) to[in=200, out=95] (H.center);
    \draw[flow] (-1.0, 1.6) to (1.125, 0.15) to (1.1, 1.8);
    \draw[flow] (2.9, 1.6) to (1.325, 0.15) to (1.3, 1.8);
    \draw[flow] (3.5, -0.05) to (1.45, -1.1) to (1.4, -1.7);
\end{tikzpicture}
    \caption{%
        (Left) The steps of the renormalization group form a trajectory in the parameter space $\mathbf{S}$.
        (Right) The flow behavior in the vicinity of the critical surface $\mathbf{S}_\infty \subset \mathbf{S}$.
    }
    \label{flow}
\end{figure}

Another characteristic of the Ising model, and of the ferromagnetic transition in general, is universality.
The physics at the critical point, described by the critical exponents, is essentially independent of the microscopic details of the initial model, such as the type of interaction or the shape of the lattice. The simplest account of this universality postulates a \emph{fixed point} $H_* \in \mathbf{S}_\infty$: all trajectories within the basin of attraction of $H_*$ in $\mathbf{S}_\infty$ converge toward $H_*$ (see \Cref{flow}, right).\footnote{%
    In a dynamical system, the basin of attraction of a fixed point is the set of all initial states whose trajectories converge to it.
    Within the critical surface, not every Hamiltonian flows to the same fixed point, but those that flow to $H_*$ constitute its basin of attraction.
    Hamiltonians in the same basin share the same large-scale behaviour: this is the geometric origin of universality.
}

In the Ising model, the critical surface has codimension $1$,\footnote{%
    The codimension of a subspace is the number of independent constraints needed to select it from the ambient space.
    A surface of codimension~1 in an $n$-dimensional space has dimension $n-1$, i.e.\ it is a hypersurface.
    In the \rg context, the critical surface is a hypersurface because a single scalar parameter (the temperature) determines whether a Hamiltonian lies on it: tune $T$ to $T_c$ and you are on the surface.
    Any other value places you off it.
} and the transition line along which trajectories converge toward the fixed point from above and below $T_c$ is shown in red in \Cref{flow}.
The affine parameter of this line is identified with the temperature.\footnote{%
    An affine parameter is a coordinate along a curve that varies linearly with distance along that curve.
    In the \rg flow diagram, temperature $T$ plays this role on the transition line: equal changes in $T$ correspond to equal displacements along the line, and the fixed point $H_*$ sits at $T = T_c$.
}
Its value at $H_*$ is the critical temperature $T_c$.
When $T > T_c$, the trajectories converge toward the high-temperature region (unbroken phase), where the magnetisation vanishes. When $T < T_c$ (broken phase), the magnetisation acquires a macroscopic value.
Even at lowest order, the \rg approach improves upon the predictions of mean-field theory.

Kadanoff's spin-block method does not fully exploit universality: the large-scale theory remains tied to the original Ising model and true global properties are not captured.
We shall therefore define a new object, an \emph{effective field}, which will allow us to describe the large-scale behaviour in a more universal way.

\subsection{From Spins to Field}\label{fromisingTofield}

In the critical region, when $T - T_c \ll 1$, the correlation length $\xi$ is finite but large.
Magnetic order can therefore be maintained within droplets of typical radius $r \sim \xi$.
Inside a droplet, each \rg step changes the couplings as though the system were critical.
This continues until the effective lattice spacing reaches the droplet size, $2^n a \approx \xi$.
Beyond this scale the interactions decay, and, without global magnetic order, the droplets are approximately independent.
The droplet magnetisations $\{\Phi_{\mathcal{B}}\}$ are essentially independent variables.
Their distribution therefore tends toward a Gaussian according to the Central Limit Theorem (\clt).
When the droplet size remains small relative to the system, the droplet scale is mesoscopic.
On the macroscopic scale each droplet is a point $x \in \mathds{R}^D$, and the magnetisation $\Phi(x)$ becomes a continuous function on $\mathds{R}^D$.
The function $\Phi(x)$ is then called a \emph{field}.

We can deduce its behaviour from what we know of its physical properties.
Since it must be nearly Gaussian, we expect that the partition function of the original Ising model can be written as:
\begin{equation}
    Z_{\text{Ising}}=\int \prod_x \dd \Phi(x)\, e^{-H[\Phi]/T},
\end{equation}
where the Hamiltonian is assumed to be nearly Gaussian:
\begin{equation}
    H[\Phi]=\frac{1}{2} \int \dd x \dd y \, \Phi(x) V(x,y) \Phi(y)+\mathcal{O}(\Phi^4).
\end{equation}
Although the block spin $\Phi_A$ is bounded in principle, at a sufficiently large scale $\Phi(x)$ can be extended to all of $\mathds{R}$.
The integral is therefore not bounded.

In the case of the Ising model, the interaction between spins can only depend on the distance, such that $V(x,y) = K(\vert x-y \vert) >0$.
The form of $K$, and the deviations from the Gaussian model, can be deduced directly from the Ising model.
We begin by rewriting the Ising Hamiltonian in a more general form (without a magnetic field for simplicity):
\begin{equation}
    H[S]= - J \sum_{\vec{x},\vec{y}}  S_{\vec{x}}S_{\vec{y}},
\end{equation}
where $\vec{x} \in \mathds{R}^D$ is the coordinate point of the spin $S_{\vec{x}}$ on the lattice (the vector, clearly, decomposes on a canonical basis of $\mathds{R}^D$ $\{ \hat{e}_i \}_{i = 1}^{D}$: $\vec{x} = \sum_{i=1}^D n_i \hat{e}_{i}$).
The Ising model on a regular lattice then corresponds precisely to the choice $K_{\vec{x}\vec{y}} \defeq V(\vert \vec{x}-\vec{y} \vert)$, with:
\begin{equation}
    V(\vert \vec{x} \vert) = J \sum_{\alpha=1}^D \left( \delta_{ \vec{x},a \vec{e}_\alpha}+\delta_{\vec{x}, -a \vec{e}_{\alpha}} \right),
    \label{eq:V_Ising}
\end{equation}
where $\vec{e}_{\alpha}$ is the unit vector along the axis $\alpha$ and $a$ is the lattice spacing (in the initial model, $a=1$).
To decouple the quadratic spin interaction, we introduce an auxiliary field via the Hubbard-Stratonovich transformation, which exploits the Gaussian integral identity
\begin{equation}
    \int \dd x \, e^{k x} e^{-\frac{1}{2 \sigma^2} x^2} \propto e^{\frac{1}{2 } \sigma^2 k^2},
\end{equation}
to write (the index $\vec{x}$ is discrete):
\begin{equation}
    \exp \left(-H[S]/T\right)
    \propto
    \int_{-\infty}^{+\infty} \prod_{\vec{x}} \dd \Phi_{\vec{x}} \,
    e^{-\frac{1}{2} T \sum_{\vec{x},\vec{y}} \Phi_{\vec{x}}  K_{\vec{x}\vec{y}}^{-1} \Phi_{\vec{y}} } e^{\sum_{\vec{x}} \Phi_{\vec{x}} S_{\vec{x}}}.
\end{equation}
The sum over spins can be computed exactly, and we get:
\begin{equation}
    \sum_{S_{\vec{x}}=\pm 1}\, \exp \left({\sum_{\vec{x}} \Phi_{\vec{x}} S_{\vec{x}}}\right)= \prod_{\vec{x}} 2 \cosh \left(\Phi_{\vec{x}} \right).
\end{equation}
The partition function becomes:
\begin{equation}
    Z_{\text{Ising}}\propto \int_{-\infty}^{+\infty} \prod_{\vec{x}} \dd \Phi_{\vec{x}} \, e^{-\frac{1}{2} T \sum_{\vec{x},\vec{y}} \Phi_{\vec{x}}  K_{\vec{x}\vec{y}}^{-1} \Phi_{\vec{y}} } \prod_{\vec{x}} e^{\ln \left(2 \cosh \left(\Phi_{\vec{x}} \right)\right)}.
    \label{partition2}
\end{equation}
To isolate the long-wavelength modes we diagonalise $K$ by Fourier transform.
Because the kernel $K$ is translation invariant, it can be diagonalised by its Fourier transform,
\begin{equation}
    \tilde{\Phi}_{\vec{k}} = \frac{1}{\sqrt{N^D}} \sum_{\vec{x}} e^{i \vec{k} \cdot \vec{x}}\, \Phi_{\vec{x}},
\end{equation}
where, assuming periodic boundary conditions, the momentum is quantized:\footnote{%
    This corresponds to the first Brillouin region given the periodic boundary conditions we chose.
    Another way to see this is to remark that, if we denote the modes as $k_{i,n}=2\pi n/(N a)$, with $n=0,1,\dots, N-1$, then the mode $N-1$ is $k_{i,N-1}\approx 2\pi/a$, i.e.\ it is closely equivalent to the lower mode because two vectors differing by $2\pi p/a$ for $p \in \mathds{Z}$ are physically equivalent.
    The true physical maximum momentum is for $n=N/2$, namely $\pi/a$.
    Note that, by consistency, the integrated functions in the integrals must have the same periodicity as the lattice, which is the case here.
}
\begin{equation}
    k_i= \frac{2\pi}{N a}\, n_i, \quad -N/2 \leq n_i<N/2,
\end{equation}
and:
\begin{equation}
    K_{\vec{q}\vec{q}^\prime}
    \defeq
    \frac{1}{N^D}\sum_{\vec{x},\vec{y}}  e^{i \vec{q} \cdot \vec{x}-i \vec{q^\prime} \cdot \vec{y}} V(\vert \vec{x}-\vec{y}\,\vert) = \delta_{\vec{q}\vec{q^\prime}} \tilde{K}(\vec{q}),
\end{equation}
where we used the integral representation of Kronecker's delta:
\begin{equation}
    \delta_{\vec{q}\vec{q^\prime}}=\frac{1}{N^D} \sum_{\vec{x}} e^{i (\vec{q}-\vec{q^\prime}) \cdot \vec{x}}.
\end{equation}
Then, using~\eqref{eq:V_Ising}, we get:
\begin{equation}
    \tilde{K}(\vec{q}\,)
    \defeq
    \sum_{\vec{x}} V(\vert \vec{x}\,\vert) e^{i \vec{q} \cdot \vec{x}}=2 J \sum_{\alpha=1}^D \cos \left(q_\alpha a\right).
\end{equation}
Assuming that $\Phi_{\vec{x}}$ varies slowly enough from one point to another, it can be replaced by a continuous field $\Phi(x)$.
Consequently $\tilde{\Phi}_{\vec{k}}$ is dominated by small momenta $\vert \vec{k} \vert \ll 1$, and we may expand
\begin{equation}
    \tilde{K}(\vec{q}\,) = 2DJ - J a^2 \vec{q}\,^2+\mathcal{O}(q_\alpha^4),
\end{equation}
where $q^2 = \vert \vec{q}\, \vert^2 = \sum_{\alpha=1}^D q_\alpha^2$.
The Gaussian measure for $\Phi$ is then approximated by:
\begin{equation}
    \frac{1}{2} T \sum_{\vec{x},\vec{y}} \Phi_{\vec{x}}  K_{\vec{x}\vec{y}}^{-1} \Phi_{\vec{y}} \approx \frac{1}{2}\sum_{\vec{k}}\tilde{\Phi}_{\vec{k}} \, \frac{T}{T_0} \left(1+\frac{a^2}{2D}\vec{q}\,^2\right)\, \tilde{\Phi}_{-\vec{k}},
    \label{eq:Gaussian_approx}
\end{equation}
with $T_0=2D J$.

The average spin relates to $\Phi$ via
\begin{equation}
    \langle S_{\vec{x}} \rangle = \langle \tanh \Phi_{\vec{x}} \rangle.
\end{equation}
Thus, in the phase where $T-T_c \ll 1$, we expect the deviation from the Gaussian model of zero global magnetization to remain small, which ensures the compatibility with $\vert \Phi_{\vec{x}} \vert \ll 1$ where we have
\begin{equation}
    \ln \left(\cosh \left(\Phi_{\vec{x}} \right)\right)= \frac{\Phi_{\vec{x}}^2}{2}-\frac{\Phi_{\vec{x}}^4}{12}+\mathcal{O}(\Phi_{\vec{x}}^6).
\end{equation}
In the continuous limit and for a smooth function, we use
\begin{equation}
    \sum_{\vec{k}} f(\vec{k})\approx \left(\frac{Na}{2\pi}\right)^D\, \int \dd k\, f(k),
\end{equation}
where $k \in \mathds{R}^D$.
We thus obtain an effective continuous field $\Phi(x)$, $x \in \mathds{R}^D$, which describes the collective behaviour of many spins, much as the equations of hydrodynamics describe the collective motion of many molecules.
After a rescaling of the field, we find that the partition function can be written in the form of a path integral:\footnote{%
    In the continuous limit, the product $\prod_{\vec{x}} \dd\Phi_{\vec{x}}$ becomes a functional measure, written $[\dd\Phi(x)]$, which integrates over all possible field configurations.
}
\begin{equation}
    Z_{\text{Ising}} \approx \int [\dd \Phi(x)]\, e^{-H_{\text{eff}}[\Phi]}\,,\label{Zeff}
\end{equation}
where $[\dd \Phi] = \prod_x \dd \Phi(x)$.
To leading order, the effective Hamiltonian is given by \emph{Landau theory}:
\begin{equation}
    H_{\text{eff}}[\Phi]=\frac{1}{2}\int \dd x \,\Phi(x) (-\Delta +m^2) \Phi(x) + \frac{u}{4}\int \dd x\, \Phi^4(x)+\mathcal{O}(\Phi^6).
    \label{Heff1}
\end{equation}
In this expression, $\Delta$ is the standard Laplacian on $\mathds{R}^D$; in Fourier space it produces the $q^2$ term in~\eqref{eq:Gaussian_approx}.
The parameter $u > 0$ is a \emph{coupling constant}, and the \emph{mass} $m^2$ is:\footnote{%
    For non-physicist readers, $m^2$ is simply an additional parameter.
}
\begin{equation}
    m^2=r(T-T_0),
    \label{eq:Landau_mass}
\end{equation}
where $r$ is a numerical constant.

\begin{figure}[t]
    \centering
    \begin{tikzpicture}
    \begin{axis}[
            width=10cm,
            height=7cm,
            axis lines=middle,
            axis line style={-Latex},
            xlabel={$M$},
            ylabel={$U(M)$},
            xlabel style={right, xshift=2pt},
            ylabel style={above, yshift=2pt},
            xmin=-2.5, xmax=2.5,
            ymin=-2.1, ymax=10,
            ticks=none,
            legend pos=north west,
            legend cell align=left,
            legend style={
                    at={(0.0,0.91)},
                    fill=white,
                    fill opacity=0.85,
                    draw=gray!50,
                    text opacity=1,
                    font=\footnotesize
                },
            smooth,
            thick,
            no marks,
            every axis plot/.append style={line width=1.2pt},
        ]
        \addplot[
            domain=-2.5:2.5,
            samples=200,
            color=orange!70!yellow,
        ]
        {x^2 + 0.2*x^4};
        \addlegendentry{symmetric phase}

        \addplot[
            domain=-2.5:2.5,
            samples=200,
            color=blue!60!cyan,
        ]
        {-x^2 + 0.2*x^4};
        \addlegendentry{broken-symmetry phase}
    \end{axis}
\end{tikzpicture}
    \caption{%
        Behaviour of the Landau potential for $T>T_0$ (yellow curve) or for $T<T_0$ (blue curve).
    }
    \label{potentiel}
\end{figure}

To leading order, the integral is dominated by the saddle points of the Hamiltonian.
Assuming the latter is uniform, $\Phi(x)=M$, it is a solution to the equation:
\begin{equation}
    \frac{\partial U(M)}{\partial M}=0,
\end{equation}
the Landau potential $U$ being defined by:
\begin{equation}
    U(M)=\frac{1}{2} r(T-T_0) M^2+\frac{u}{4} M^4 \equiv \frac{H_{\text{eff}}[\Phi]}{(Na)^D}\Big\vert_{\Phi(x)=M}.
    \label{eq:Landau_potential}
\end{equation}
The only stable solution is $M=0$ when $T > T_0$, while two non-zero stable solutions appear when $T < T_0$, giving $M \sim \pm (T_0 - T)^{1/2}$, the zero solution becoming unstable (see \Cref{potentiel}).
At this order, the field theory is equivalent to mean-field theory: the critical exponent is $\beta = 1/2$.

Unlike the mean-field approach, field theory allows the study of fluctuations, which correspond to deviations $\delta \Phi(x)$ from the Landau solution, $\Phi(x) = M + \delta \Phi(x)$.
The first correction is:
\begin{equation}
    Z_{\text{Ising}}
    =
    \underbrace{e^{-H_{\text{eff}}[M]}}_{\text{saddle point}}
    \cdot
    \underbrace{\int [\dd \delta\Phi(x)]\, e^{-\frac{1}{2}\int \dd x\dd y\, \frac{\delta^2 H_{\text{eff}}}{\delta \Phi_0(x)\delta \Phi_0(y)} \delta \Phi(x)\delta \Phi (y)-\mathcal{O}(\delta \Phi^3)}}_{\text{fluctuations}},\label{equationmeanfield}
\end{equation}
where
\begin{equation}
    \frac{\delta}{\delta\Phi_0(x)}
    =
    \eval{\frac{\delta}{\delta\Phi(x)}}_{\Phi(x) = M}
\end{equation}
denotes the functional derivative, evaluated at $\Phi(x)=M$.\footnote{%
    The functional derivative $\delta F[\Phi]/\delta\Phi(x)$ generalises the ordinary derivative to functionals (functions of a function).
    It measures how the functional $F[\Phi]$ changes when the field $\Phi$ is varied at the point $x$, much as $\partial f/\partial x_i$ measures how a function $f(\vec{x})$ changes when the coordinate $x_i$ is varied.
}
For mean-field theory to be consistent, the influence of fluctuations must remain weak.
However, explicit evaluation shows that fluctuations are not always negligible.
The corrections to mean-field theory diverge when $D \leq 4$, which provides another derivation of the Ginzburg criterion:\footnote{%
    The Ginzburg criterion establishes the upper critical dimension $D_c = 4$, below which fluctuations invalidate mean-field theory.
    It states that mean-field theory is self-consistent only when the variance of the order parameter, averaged over a correlation volume $\xi^D$, is small compared to the order parameter squared: $\langle (\delta\Phi)^2 \rangle_\xi \ll M^2$.
    For the $\phi^4$ theory this condition fails for $D \leq 4$.
} in low dimensions, near the transition, correlations are too strong and the conditions for the central limit theorem break down.
This is a direct consequence of critical phenomena: the divergence of the correlation length implies a non-decoupling of physical scales.
Kadanoff's \rg approach thus inspired Wilson's general method, which founded the modern theory of the \rg.

The same logic (introducing an auxiliary field to decouple interactions, expanding around a Gaussian fixed point, and tracking how couplings run with scale) will be applied in Part~2 to the spectral density of the empirical correlation matrix, where the degrees of freedom in the tail of the eigenvalue spectrum replace the spin variables.

\section{Wilsonian Renormalization Group}\label{sec:wilson_rg}
The Gaussian approximation fails near the critical point: below four dimensions, fluctuation corrections diverge and perturbation theory breaks down.
Wilson's renormalization group resolves this by integrating out short-wavelength modes progressively, revealing which interactions survive at large scales.
This section introduces the formalism of the Wilsonian \rg, the concepts of renormalizability and canonical dimension, and the calculation of critical exponents.

\subsection{Failure of the Gaussian Theory at the Transition}\label{Gaussiantheory}

Wilson's renormalization group adopts Kadanoff's spin-block approach within the framework of field theory.
To understand why the \rg is needed, consider how fluctuations modify the critical temperature.\footnote{%
    This is known in physics as the \emph{one-loop} contribution.
}
Field theory allows the computation of these corrections, bypassing the limitations of the mean-field approach.
Among the contributions of the fluctuations $\delta\Phi$, the terms quadratic in $M$ shift the solution of the Landau equation~\eqref{eq:Landau_potential} and therefore the critical temperature itself.
To first order in $u$, one finds (see also \Cref{sec:app1}):
\begin{equation}
    T_0^* = T_0-\frac{3u}{r}\,\int \frac{\dd^D k}{(2\pi)^D} \, \frac{1}{k^2},\label{defTC}
\end{equation}
where $k^2 = \vert \vec{k} \vert^2$, with $k\in \mathds{R}^D$.
The integral diverges, yet the quantity it is meant to correct, $T_0^*$, is finite and measurable: $T_0$ comes from an analytical computation and represents the mean-field critical temperature, not the true transition point of the interacting theory.
The divergence signals that the integral must be regularised (made finite by introducing a cutoff) and that physical quantities should be expressed in terms of $T_0^*$ rather than $T_0$.

The same problem appears when correcting the mass~\eqref{eq:Landau_mass}.
At order $u$, the effective mass becomes:
\begin{equation}
    m^2_{\text{eff}}\approx r (T-T_0^*)\left(1-3 u\,\int \frac{\dd^D k}{(2\pi)^D} \, \frac{1}{(k^2)^2}\right).
\end{equation}
In $D$ dimensions the integral behaves as $\xi^{4-D}$ in the \ir (i.e.\ $k^2 \to 0$) and converges only for $D > 4$.
This prevents a perturbative study of the system near $T_c$.
The lattice spacing imposes a \uv (i.e.\ $k^2 \gg 1$) cutoff at $k = \pi/a$, and the system size $L$ provides a natural \ir cutoff at $k_0 = 2\pi/L$.\footnote{%
    The mode $k=0$ corresponds to a perfectly uniform magnetisation and carries no information about spatial structure.
    The \ir cutoff is therefore the spacing between modes, $\Delta k = 2\pi/L$, not $k=0$ itself.
    This is also confirmed by the maximum wavelength allowed on a lattice of size $L$, that is $\lambda = L$, which implies $\Delta k = 2\pi/L$.
}
The physical \ir cutoff, however, is set by the correlation length: $k_{\text{phys}} \sim 2\pi/\xi$.
For $D > 4$, the integral is finite in both the ultraviolet and the infrared, even as $T \to T_c$ and $\xi \to \infty$.
For $D \leq 4$, the integral diverges in the critical regime because the correlation length diverges.
Perturbation theory breaks down: the fluctuation corrections are no longer small, and the Gaussian approximation fails to describe the system accurately.

\subsection{Wilsonian Renormalization Group}

Wilson noticed that, as shown in the previous paragraphs, the problem arose from the non-decoupling of physical scales at the transition point~\cite{Wilson1,wilson1971renormalization}.
He proposed replacing the integration over all momenta with a partial, progressive integration which, at each step, would not jeopardize the validity of perturbation theory.
To understand Wilson's proposal, it is useful to return to the discretised version of the field theory:
\begin{equation}
    Z_{\text{Ising}}=\int \prod_{\vec{x}}\, \dd \Phi_{\vec{x}}\, e^{-H_{\text{eff}}[\Phi]},
\end{equation}
such that,
\begin{equation}
    H_{\text{eff}}[\Phi]=\frac{1}{2}\sum_{\vec{x}} \,\Phi_{\vec{x}} (-\Delta_{\text{dis}} +m^2) \Phi_{\vec{x}} + \frac{u}{4}\sum_{\vec{x}} \,\Phi^4_{\vec{x}}+ \frac{v}{6}\sum_{\vec{x}} \,\Phi^6_{\vec{x}}+\mathcal{O}(\Phi^8),\label{Heff2}
\end{equation}
where we have retained the sixth-order interactions here for the sake of the discussion, and $\Delta_{\text{dis}}$ denotes the discrete Laplacian (see \Cref{sec:app2}), computed from the discrete gradient:
\begin{equation}
    \left( \nabla_{\text{dis}} \Phi \right)_{\vec{x}}
    =
    \begin{pmatrix}
        \Phi_{\vec{x} + \hat{e}_1} - \Phi_{\vec{x}} \\
        \Phi_{\vec{x} + \hat{e}_2} - \Phi_{\vec{x}} \\
        \vdots                                      \\
        \Phi_{\vec{x} + \hat{e}_D} - \Phi_{\vec{x}}
    \end{pmatrix},
\end{equation}
where $\{ \hat{e}_i \}_{i=1}^D$ denotes the set of unit vectors along each spatial direction.

In Fourier decomposition, the measure is written as
\begin{equation}
    \prod_{\{ \vert k_i \vert \leq \pi/a\}} \dd \tilde{\Phi}_{\vec{k}}.
\end{equation}
Wilson's proposition consists of splitting the field into two parts by introducing a field $\Psi(\vec{x})$ associated with the \enquote{fast modes}, such that:
\begin{equation}
    \Psi(\vec{k})
    =
    \begin{cases}
        \tilde{\Phi}_{\vec{k}} & \text{if}~\exists i \in \{1, 2, \dots, D\}~\text{s.t.}~\pi/a^\prime \leq k_i \leq \pi/a \\
        0                      & \text{otherwise}
    \end{cases}
\end{equation}
The new lattice spacing $a^\prime > a$ is such that $a^\prime = (1+b) a$, with $b\ll 1$, i.e.\ a small dilation of the initial lattice.
We shall designate the complementary field as $\Phi_{a^\prime}(\vec{x}) \defeq \Phi_{\vec{x}}-\Psi({\vec{x}})$, such that:
\begin{equation}
    \sum_{\vec{x}}\, \Phi_{a^\prime}({\vec{x}}) \Psi({\vec{x}}) =0,
\end{equation}
which ensures that there is no overlap of the two fields.
The measure thus decomposes into two terms.
\begin{equation}
    \prod_{\vec{x}} \dd\Phi_{\vec{x}} = \prod_{\vec{x}} \dd \Phi_{a^\prime}(\vec{x}) \dd \Psi(\vec{x}),
\end{equation}
and the initial partition function becomes:
\begin{equation}
    \begin{split}
        Z_{\text{Ising}}
        & =
        \int \prod_{\vec{x}}\, \dd  \Phi_{a^\prime}(\vec{x})\,
        e^{-H_{\text{eff}}[\Phi_{a^\prime}]}
        \\
        & \times \int \prod_{\vec{x}}\,\dd  \Psi(\vec{x})\, \exp \left(-\frac{1}{2}\sum_{\vec{x}} \,\Psi(\vec{x}) (-\Delta_{\text{dis}} +m^2) \Psi(\vec{x}) \right.
        \\
        & - \left. \frac{u}{4}\sum_{\vec{x}} \,(4\Phi_{a^\prime}^3(\vec{x}\,)\Psi(\vec{x})+6 \Phi_{a^\prime}^2(\vec{x}\,)\Psi^2(\vec{x})+4\Phi_{a^\prime}(\vec{x}\,)\Psi^3(\vec{x})+\Psi^4(\vec{x}))+\mathcal{O}(\Phi^6)\right).
    \end{split}
    \label{Scaling1}
\end{equation}
The field $\Psi$ embodies the short-wavelength fluctuations, and therefore no divergence is expected for the corresponding path integral.
After Taylor expanding to the first order in $u$, the integral over $\Psi$, denoted as $Z_{\Psi}$, can be written as:\footnote{%
    As the Gaussian measure is symmetric, the odd terms in $\Psi$ all have a vanishing mean.
    Furthermore, the constant term arising from the mean of the $\Psi^4$ term is unnecessary and simply omitted.
}
\begin{equation}
    Z_{\Psi} = Z_{\Psi}^{(0)} - \frac{3u}{2} \int \prod_{\vec{x}}\,\dd  \Psi(\vec{x})\, e^{-\frac{1}{2}\sum_{\vec{x}} \,\Psi(\vec{x}) (-\Delta_{\text{dis}} +m^2) \Psi(\vec{x})}\sum_{\vec{x}} \, \Phi_{a^\prime}^2(\vec{x}\,)\Psi^2(\vec{x})+\mathcal{O}(\Phi^6,u^2),
    \label{firstorder}
\end{equation}
with:
\begin{equation}
    Z_{\Psi}^{(0)} \defeq \int \prod_{\vec{x}}\,\dd  \Psi(\vec{x})\, e^{-\frac{1}{2}\sum_{\vec{x}} \,\Psi(\vec{x}) (-\Delta_{\text{dis}} +m^2) \Psi(\vec{x})}.
\end{equation}

The rest of the calculation relies on \emph{Wick's theorem}, a well-known property of Gaussian integrals which we recall here:
\begin{theorem}{Wick's Theorem}{theoremWick}
    Let $\vec{x} = (x_1, x_2, \dots, x_n)$ be a vector and let there be a quadratic form defined by an invertible symmetric matrix $A$.
    The Gaussian integral is defined by the measure:
    \begin{equation}
        Z_0 = \int \dd^n x \, e^{-\frac{1}{2} x_i A_{ij} x_j}.
    \end{equation}
    The mean value (the \emph{moment}) of the product of $2n$ variables is given by:
    \begin{equation}
        \begin{split}
            \langle x_{1} x_{2} \dots x_{2n} \rangle
            & =
            \frac{1}{Z_0} \int \dd^n x \, \prod_{i=1}^{2n} x_i \, e^{-\frac{1}{2} x_i A_{ij} x_j} \\
            & =
            \sum_{\text{pairings}} (A^{-1})_{i_1 i_2} (A^{-1})_{i_3 i_4} \dots (A^{-1})_{i_{2n-1} i_{2n}}.
        \end{split}
    \end{equation}
    Where the sum is over all possible pairings of the random variables.
    For instance, we have:
    \begin{equation}
        \langle x_1 x_2 x_3 x_4 \rangle = (A^{-1})_{12}(A^{-1})_{34} + (A^{-1})_{13}(A^{-1})_{24} + (A^{-1})_{14}(A^{-1})_{23}.
    \end{equation}
\end{theorem}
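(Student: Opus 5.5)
The plan is the standard generating-function argument. Assume $A$ is symmetric positive definite, so that $Z_0$ converges, and introduce a source vector $J \in \mathds{R}^n$ to define
\begin{equation}
    Z_0[J] \defeq \int \dd^n x \, e^{-\frac{1}{2} x^{\mathsf{T}} A x + J^{\mathsf{T}} x}.
\end{equation}
Every moment is obtained by differentiation at zero source:
\begin{equation}
    \langle x_{i_1}\cdots x_{i_{m}}\rangle = \frac{1}{Z_0}\,\partial_{J_{i_1}}\cdots\partial_{J_{i_m}} Z_0[J]\big|_{J=0}.
\end{equation}

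The first step is to evaluate $Z_0[J]$ in closed form by completing the square. The shift $x = y + A^{-1}J$ has unit Jacobian and gives
\begin{equation}
    -\tfrac{1}{2}x^{\mathsf{T}}Ax + J^{\mathsf{T}}x = -\tfrac{1}{2}y^{\mathsf{T}}Ay + \tfrac{1}{2}J^{\mathsf{T}}A^{-1}J .
\end{equation}
Symmetry of $A$ is used here, both to merge the cross terms and to ensure $A^{-1}$ is symmetric. It follows that
\begin{equation}
    Z_0[J] = Z_0\, e^{\frac{1}{2}J^{\mathsf{T}}A^{-1}J}.
\end{equation}
This is the same Gaussian identity the paper already used for the Hubbard--Stratonovich transformation.

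The second step is to expand this exponential in powers of $J$:
\begin{equation}
    e^{\frac{1}{2}J^{\mathsf{T}}A^{-1}J} = \sum_{k\ge 0} \frac{1}{2^k k!}\bigl(J^{\mathsf{T}}A^{-1}J\bigr)^k .
\end{equation}
Every term has even degree in $J$, so odd moments vanish. For $2n$ derivatives evaluated at $J=0$, only the term with $k=n$ survives, because it is homogeneous of degree $2n$. This reduces the problem to computing
\begin{equation}
    \frac{1}{2^n n!}\,\partial_{J_{i_1}}\cdots\partial_{J_{i_{2n}}}\bigl(J^{\mathsf{T}}A^{-1}J\bigr)^n .
\end{equation}

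The main obstacle is the combinatorial bookkeeping in this last step. I would write
\begin{equation}
    \bigl(J^{\mathsf{T}}A^{-1}J\bigr)^n = \sum (A^{-1})_{a_1b_1}\cdots(A^{-1})_{a_nb_n}\,J_{a_1}J_{b_1}\cdots J_{a_n}J_{b_n}
\end{equation}
and distribute the $2n$ derivatives bijectively over the $2n$ factors of $J$. Each such bijection assigns the indices $i_1,\dots,i_{2n}$ to ordered slots $(a_1,b_1),\dots,(a_n,b_n)$ and produces $\prod_r (A^{-1})_{a_r b_r}$, which depends only on the unordered pairing these slots induce. A fixed pairing arises from exactly $n!\,2^n$ bijections: $n!$ ways to permute the pairs among the $n$ factors, and $2^n$ ways to swap the entries within each pair, which is harmless since $A^{-1}$ is symmetric. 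This multiplicity cancels the prefactor $1/(2^n n!)$, leaving exactly the sum over pairings in the statement.

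As a check, for $2n=4$ this gives the three listed terms, matching the $(2n-1)!!$ count of pairings. I would also note that the statement's $Z_0$ has $n$ integration variables while the moment involves $2n$ of them; I would interpret the moment as $\langle x_{i_1}\cdots x_{i_{2n}}\rangle$ with indices taken from $\{1,\dots,n\}$, possibly repeated. The argument above handles repeated indices without change.
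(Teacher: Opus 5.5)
The paper does not prove this statement: it recalls Wick's theorem as a well-known property of Gaussian integrals and applies it directly to the first-order term in the Wilsonian computation. There is therefore no argument of the paper's to compare against.

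Your generating-function proof is the standard one, and it is correct and complete. The shift $x = y + A^{-1}J$ gives $Z_0[J] = Z_0\, e^{\frac{1}{2}J^{\mathsf{T}}A^{-1}J}$. Only the $k=n$ term survives $2n$ derivatives at $J=0$. Each pairing arises with multiplicity $2^n n!$, which is where the symmetry of $A^{-1}$ enters, and this exactly cancels the prefactor.

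You were also right to tighten the hypotheses. $A$ must be positive definite, not merely invertible, for $Z_0$ to converge. The moment should be read as $\langle x_{i_1}\cdots x_{i_{2n}}\rangle$, with pairings taken over the $2n$ positions, so coincident indices are counted with multiplicity.

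The only step you left implicit is differentiating under the integral sign. It is justified by dominated convergence, since polynomials times $e^{-\frac{1}{2}x^{\mathsf{T}}Ax + J^{\mathsf{T}}x}$ are integrable locally uniformly in $J$.
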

Applying this theorem for the computation of the integral~\eqref{firstorder}, we obtain:
\begin{equation}
    \frac{Z_{\Psi}}{Z_{\Psi}^{(0)}} = e^{-\frac{1}{2}\sum_{\vec{x}} \,\Phi_{a^\prime}(\vec{x}) \Delta m^2 \Phi_{a^\prime}(\vec{x})}+\mathcal{O}(u^2),
    \label{contoneloop1}
\end{equation}
with, in Fourier modes and in the continuum limit, recalling that $a^\prime=(1+b) a$:\footnote{%
We compute an integral of the form $\int dk F(k)$ over the domain $[-\pi/a,\pi/a]^D/[-\pi/a^\prime,\pi/a^\prime]^D$.
The factor $2 \times D$ counts the number of contributions of the same type, once expanded in powers of $b$.
}
\begin{equation}
    \Delta m^2 \defeq \frac{3 u}{(2\pi)^D} (D\times 2)\left(\prod_{i=2}^D \int_{-\pi/a}^{\pi/a} \dd k_i\right)\, \frac{1}{\sum_{i=2}^D k_i^2+(\pi/a)^2+m^2}+\mathcal{O}(b^2).
\end{equation}
This integral can be calculated using an approximation, recalling that in the critical region $m^2 \ll 1$, and specifically $m^2 \ll \pi/a^\prime$.
We can therefore treat $m^2$ as a perturbation.
We now use the Schwinger representation:
\begin{equation}
    \frac{1}{k^2} = \int_{0}^\infty \dd \alpha\, e^{-\alpha k^2}.
\end{equation}
Assuming $a \ll 1$, we obtain:
\begin{equation}
    \int_{-\pi/a}^{\pi/a} \dd k_i\, e^{-\alpha k_i^2} \approx \frac{2 \pi}{a},
\end{equation}
then, setting $\Lambda_{\text{\uv}} = \pi/a$:
\begin{equation}
    \Delta m^2 = \frac{3 D u}{(2\pi)^D} 2^{D} \left(\frac{\pi}{a}\right)^{D-2} b = \frac{3 D u}{\pi^D} \Lambda_{\text{\uv}}^{D-2}b.
\end{equation}
The contribution in \eqref{contoneloop1}, once substituted into \eqref{Scaling1}, yields:
\begin{equation}
    Z_{\text{Ising}}
    =
    \int \prod_{\vec{x}}\, \dd  \Phi_{a^\prime}(\vec{x})\, e^{-H_{\text{eff}}[\Phi_{a^\prime}]} e^{-\frac{1}{2}\sum_{\vec{x}} \,\Phi_{a^\prime}(\vec{x}) \Delta m^2 \Phi_{a^\prime}(\vec{x})}+\mathcal{O}(u^2).
\end{equation}
The mass of the field $\Phi_{a^\prime}$ is thus modified.
The new mass, which we shall call $m^2(a^\prime)$, is related to the original mass \enquote{at scale $a$} (which we will denote as $m^2(a)$ rather than $m^2$) by:
\begin{equation}
    m^2(a^\prime)=m^2(a)+\Delta m^2.
    \label{flowm1}
\end{equation}
This equation summarises the entire Wilsonian \rg philosophy.
The idea that integrating out short-wavelength modes generates effective large-scale dynamics is precisely what the signal-detection framework in \Cref{part2} exploits.
In that setting, the fast modes correspond to the bulk eigenvalues of the empirical correlation spectrum, while the effective large-scale dynamics describe the spectral tail where the signal resides.
The partial integration of the \uv modes $\Psi$ generates effective interactions for the \ir field $\Phi_{a^\prime}$, transforming the study of field theory into a dynamical problem.
The mass is, however, not the only coupling to be modified; at order $u^2$, for example, the quartic coupling at scale $a^\prime$ becomes (using the same notation as before):
\begin{equation}
    u(a^\prime)=u(a)+\Delta u,
\end{equation}
where:
\begin{equation}
    \Delta u =-\frac{9 D u^2 \Lambda_{\text{\uv}}^{D-4}}{\pi ^D}\,b+\mathcal{O}(u^3). \label{flowu1}
\end{equation}
The powers of $\Lambda_{\text{\uv}}$ in \eqref{flowm1} and \eqref{flowu1} will play a crucial role.
These equations can be translated into an infinitesimal version by taking the limit $b \to 0$ and because $a^\prime - a = a b$,
\begin{align}
    \Lambda_{\text{\uv}} \frac{\dd m^2}{d\Lambda_{\text{\uv}}} & = -\frac{3 D u}{\pi^D} \Lambda_{\text{\uv}}^{D-2}     \\
    \Lambda_{\text{\uv}} \frac{\dd u}{d \Lambda_{\text{\uv}}}  & = \frac{9 D u^2 }{\pi ^D} \Lambda_{\text{\uv}}^{D-4}.
\end{align}
If we return to Kadanoff's approach, one element remains: the rescaling of the lattice after the partial integration, $a^\prime \to a^\prime / (1+b)$, so as to always compare theories defined on the same lattice.
One way to account for this rescaling is to define \emph{dimensionless} couplings, which explains why the powers of $\Lambda_{\text{\uv}}$ are essential.
The previous equations depend explicitly on the rescaling through these powers, but it is possible to find a transformation of the couplings that eliminates this explicit dependence.
The second equation for $u$ shows that the choice $u \sim \Lambda_{\text{\uv}}^{4-D}$ eliminates this dependence on both sides of the equation.
By then substituting this choice into the equation for $m^2$, we see that the choice $m^2 \sim \Lambda_{\text{\uv}}^2$ in turn eliminates any explicit trace of $\Lambda_{\text{\uv}}$.
We therefore define the \emph{dimensionless} parameters $\bar{m}^2$ and $\bar{u}$ such that:
\begin{equation}
    m^2 \defeq \bar{m}^2 \Lambda_{\text{\uv}}^2, \qquad u \defeq \bar{u} \Lambda_{\text{\uv}}^{4-D},\label{dimensionless}
\end{equation}
and in terms of these couplings, the flow equations become a self-contained system:
\begin{equation}
    \Lambda_{\text{\uv}} \frac{\dd \bar{m}^2}{d\Lambda_{\text{\uv}}} =- 2\bar{m}^2 -\frac{3 D \bar{u}}{\pi^D},
    \qquad
    \Lambda_{\text{\uv}} \frac{\dd \bar{u}}{d \Lambda_{\text{\uv}}} =(D-4)\bar{u}+\frac{9 D \bar{u}^2}{\pi^D}.
\end{equation}
The exponents in \eqref{dimensionless} are called \emph{canonical dimensions}, and this concept will play an important role in this review.
For an ordinary field theory defined on the manifold $\mathds{R}^D$, these canonical dimensions correspond precisely to the ordinary dimensions of the couplings.
Indeed, $H_{\text{eff}}$ in \eqref{Heff1} must be dimensionless, and since $\dd x$ is dimensionful, this fixes the dimension of the couplings.
In units of the cutoff $\Lambda_{\text{\uv}}$, we then recover the previous result.
However, canonical dimensions are far more fundamental, in the sense that they do not require the existence of a reference dimension external to the problem.

The concept of canonical dimension, and in particular its sensitivity to the shape of the spectrum, lies at the heart of the signal-detection method developed in \Cref{part2}.
There, the empirical eigenvalue distribution plays the role of the spectral measure, and a scale-dependent departure of the canonical dimensions from their pure-noise values signals the presence of information.

\subsection{Relevant, Irrelevant, and Marginal Couplings}

Strictly speaking, the model \eqref{Heff2} is defined on a lattice.
There is, a priori, no \enquote{external} dimension to the problem.
Assigning a dimension to the parameter $a$ indeed fixes the dimension of the Laplacian, and by extension that of $m^2$.
One could then deduce the dimension of $u$ by reversing the previous argument.
But one can just as well consider things abstractly, without fixing a dimension at the outset.
The previous reasoning shows that we can associate a dimension with the coupling directly from the flow itself, that is from the way in which the couplings \enquote{change} under the action of the \rg.
This type of definition is found notably in background-independent field theories, which are of particular interest in quantum gravity \cite{LahocheBeyond,carrozza2014tensorial}.
This canonical dimension depends on only three factors:
\begin{enumerate}
    \item The form of the Gaussian term.
    \item The structure of the interactions (the way in which the fields interact).
    \item The spectral distribution of momenta:\footnote{%
          In the continuum limit, the measure $\dd^D k$ is written in spherical coordinates as $\simeq (k^2)^{\frac{D-2}{2}} \dd k^2$.}
          \begin{equation}
              \rho(k^2) \sim (k^2)^{\frac{D-2}{2}}.
              \label{eq:rhok2}
          \end{equation}
\end{enumerate}

\begin{figure}
    \centering
    \begin{tikzpicture}[
        >=Latex,
        axis/.style={-Latex, black},
        flow/.style={black, very thick, rounded corners=35pt,
                postaction={decorate, decoration={markings,
                                mark=between positions 0.2 and 0.9 step 0.2 with {\arrow{Latex}}}
                    }
            },
        flowarrow/.style={black, -Latex, very thick},
        dashedline/.style={black, dashed, dash pattern=on 1pt off 3pt},
    ]

    \fill[black!5, draw=black!50, thick] (0.0, 0.0) -- (5.0, 0.0) -- (3.0, -4.0) -- (-2.0, -4.0) -- cycle;
    \draw[axis] (0.0, 0.0) -- (3.5, 0.0) node[above] {$\bar{u}$};
    \draw[axis] (0.0, 0.0) -- (-1.5, -3.0) node[left] {$\bar{m}^2$};
    \draw[axis] (0.0, 0.0) -- (0.0, 3.0) node[right] {$\bar{v}$};
    \draw[axis] (0.0, 0.0) -- (-3.0, -1.5) node[above left] {$\bar{u}_\infty$};
    \draw[dashedline] (0.0, 1.5) .. controls (-1.25, 1.0) and (-1.75, 0.0) .. (-1.5, -0.75);

    \coordinate (R) at (2.5, -2.75);
    \node at (3.6, -0.85) {$\mathbf{S}_{R}$};
    \node at (2.1, 1.96) {$\mathbf{S}$};

    \draw[flow, flowarrow] (1.1, 0.9) .. controls (0.5, 0.5) and (0.1, -0.1) .. (R);
    \draw[flow, flowarrow] (-0.5, 2.6) .. controls (-1.5, 1.0) and (-1.0, -0.75) .. (R);
    \draw[flow, flowarrow] (-4.5, -0.25) .. controls (-3.1, 0.75) and (0.8, -3.0) .. (R);
\end{tikzpicture}
    \caption{%
        Illustration of the flow in the neighbourhood of the Gaussian point in dimension $D \leq 4$: the flow converges toward a subspace spanned by essential and marginal interactions, the \enquote{renormalizable} subspace $\mathbf{S}_{R}$.
        In the case where the flow converges toward a stable fixed point, it is called the \enquote{large river effect}.
    }\label{FigLRE}
\end{figure}

The reason canonical dimensions are so essential is that they determine the large-scale behaviour of the flow in the neighbourhood of the Gaussian region (the restriction to the neighbourhood of the Gaussian region is justified in the following section).
When $D>4$, the equations show that, near the Gaussian point, the quartic interaction $\bar{u}$ is exponentially suppressed in the \ir as $\ln(\Lambda_{\text{\uv}}) \to -\infty$.
For a sextic coupling of the form $(v/6)\Phi_{\vec{x}}^6$, one would similarly find:
\begin{equation}
    \Lambda_{\text{\uv}} \frac{\dd \bar{v}}{d \Lambda_{\text{\uv}}}= (2D-6) \bar{v}+ \mathcal{O}(\bar{v}\bar{u}),
\end{equation}
and here again, $\bar{v}$ is exponentially suppressed in the \ir when $D>4$.
The same would hold true for all couplings, such that in high dimensions, the flow is essentially dominated by the Gaussian contribution (the Gaussian regime is said to be stable).
We find here, as a mirror image, an alternative reading of the Ginzburg criterion: the fact that fluctuations are large in dimension $D\leq 4$ translates into an instability of the Gaussian region, driven by the growth of non-Gaussian couplings.
The flow thus moves irreversibly away from the Gaussian region to reach a non-trivial fixed point, as we shall see in the following section.

Canonical dimensions determine the stability of the Gaussian region.
When the canonical dimension is negative, the coupling is said to be \emph{irrelevant} (or non-renormalizable) and will be suppressed in the \ir.
If the dimension is positive, the coupling is \emph{relevant} and becomes important at low energy.
In the case where the dimension is zero (as is the case for $u$ in dimension $D=4$), we call it a \emph{marginal coupling}, and whether it is relevant or irrelevant then depends on the next order.
However, since the growth or decay is slow (logarithmic), the coupling will still play an important role at large scales (see \Cref{FigLRE}).

Underlying this classification is an abstract notion of scale, defined directly from the spectrum of the kinetic term rather than from any external geometric dimension.
Following the partial integration procedure, small eigenvalues (high momenta) correspond to small scales (\uv), and conversely, large eigenvalues correspond to large scales (\ir).
It is this spectral definition of scale that makes the canonical-dimension analysis applicable to the data-analysis setting of \Cref{part2}, where the empirical eigenvalue distribution replaces the Laplacian spectrum.

\subsection{How Does the Renormalization Group Solve the Critical Regime?}

Let us now see how the Wilsonian \rg solves the problem of calculating critical exponents.
We have seen that mean-field theory, based on a quasi-Gaussian approximation, is only valid in high dimensions, $D>4$.
In low dimensions, $D\le 4$, fluctuations are too strong, and, from the perspective of the flow, this manifests as an instability of the Gaussian region.

Since the upper critical dimension of the $\phi^4$ theory is $D=4$, i.e.\ the dimension at which the quartic coupling becomes marginal, it is natural to expand around this value.
We therefore parametrise the dimension as $D=4-\epsilon$, with $\epsilon >0$.
Retaining only the relevant couplings, the flow equations become:
\begin{align}
    \Lambda_{\text{\uv}} \frac{\dd \bar{m}^2}{\dd \Lambda_{\text{\uv}}}
     & =
    - 2\bar{m}^2 -\frac{12 \bar{u}}{\pi^4} +\frac{12}{\pi^4} \bar{u}\bar{m}^2+\mathcal{O}(\bar{u}\epsilon, \bar{u} (\bar{m}^2)^2), \\
    \Lambda_{\text{\uv}} \frac{\dd \bar{u}}{\dd \Lambda_{\text{\uv}}}
     & =
    -\epsilon \bar{u}+\frac{36 \bar{u}^2}{\pi^4}+\mathcal{O}(\bar{u}^2\epsilon ).
\end{align}
\begin{remark}{Scheme Dependence of Beta Functions}{remarkScheme}
    The values of the beta functions differ from those generally obtained in the literature, which is essentially due to the geometry of momentum space.
    We have chosen a cubic cutoff $[-\pi/a,\pi/a]^D$, though a spherical cutoff $k^2 \le \pi^2/a^2$ is usually the preferred choice.
    The beta function coefficients are not universal: only the critical exponents, when computed to all orders in the loop expansion, are scheme-independent.
    In what follows, we therefore quote the standard universal values for the critical exponents rather than those that would be obtained from a one-loop diagonalisation of the stability matrix constructed from the beta functions above.
    The reader interested in the detailed scheme dependence may consult the standard literature (e.g.\ \cite{Zinn1}).
\end{remark}
Notice that we have included a term proportional to $\bar{u} \bar{m}^2$, which arises from the one-loop correction to the two-point function and is needed to capture the shift of the mass scaling dimension away from its canonical value at the interacting fixed point.
Strictly speaking, these equations are only valid in the neighbourhood of the Gaussian fixed point.
The advantage of the $\epsilon$-expansion is that it improves the convergence of the equations, particularly in the vicinity of the non-trivial fixed point, as we shall see.
Notice also that, following standard notation, we will denote the derivative of a given dimensionless quantity $\bar{X}$ with respect to $\ln \Lambda_{\text{\uv}}$ by $\beta_X$, and we shall refer to these as \emph{beta functions}.

\begin{figure}
    \centering
    \begin{tikzpicture}[
        >=Latex,
        axis/.style={-Latex, black, thick},
        rgtraj/.style={black, rounded corners=25pt,
                postaction={decorate, decoration={markings,
                                mark=between positions 0.3 and 1.0 step 0.3 with {\arrow{Latex[black]}}}
                    }
            },
        redtraj/.style={red,
                postaction={decorate, decoration={markings,
                                mark=between positions 0.3 and 1.0 step 0.3 with {\arrow{Latex[red]}}}
                    }
            },
        criticalline/.style={blue,
                postaction={decorate, decoration={markings,
                                mark=between positions 0.3 and 1.0 step 0.3 with {\arrow{Latex[blue]}}}
                    }},
    ]

    \def\xshift{6.5}
    \def\yshift{-1.75}
    \coordinate (G1) at (0.0, 0.0);
    \coordinate (G2) at (0.0+\xshift, 0.0+\yshift);
    \coordinate (WF) at (2.0, -1.5);

    \draw[draw=none, fill=red!15] (0.0, 0.5) -- (2.0, 0.5) -- (2.0, -4.0) -- (0.0, -4.0) -- cycle;

    \node at (2.0, -4.5) {$\epsilon > 0$};
    \node at (2.0+\xshift, -4.5) {$\epsilon < 0$};

    \draw[axis] (G1) node[left] {$G$} -- (4.0, 0.0) node[above] {$\bar{u}$};
    \draw[axis] (G2) node[left] {$G$} -- (4.0+\xshift, 0.0+\yshift) node[above] {$\bar{u}$};
    \draw[axis] (0.0, 0.5) -- (0.0, -4.0) node[left] {$\bar{m}^2$};

    \draw[criticalline] (WF) node[above right] {WF} -- (2.0, 0.5);
    \draw[criticalline] (WF) -- (2.0, -4.0);
    \draw[criticalline] (G2) -- (0.0+\xshift, 2.25+\yshift);
    \draw[criticalline] (G2) -- (0.0+\xshift, -2.25+\yshift) node[left] {$\bar{m}^2$};

    \draw[redtraj] (G1) -- (WF);
    \draw[redtraj] (4.0, -3.0) -- (WF);
    \draw[redtraj] (3.5+\xshift, -1.75+\yshift) -- (G2);

    \draw[rgtraj] (G1) .. controls (1.0, -0.6) and (1.6, -0.4) .. (1.75, 0.5);
    \draw[rgtraj] (G1) .. controls (1.5, -0.8) and (1.8, -1.35) .. (1.9, 0.5);
    \draw[rgtraj] (G1) .. controls (1.25, -1.5) and (1.75, -3.1) .. (1.95, -4.0);
    \draw[rgtraj] (G1) .. controls (0.5, -1.0) and (1.1, -2.8) .. (1.26, -4.0);
    \draw[rgtraj] (4.0, -2.75) .. controls (2.9, -1.5) and (2.45, -0.8) .. (2.18, 0.5);
    \draw[rgtraj] (4.0, -3.3) .. controls (3.4, -2.9) and (2.25, -1.5) .. (2.1, -4.0);

    \draw[rgtraj] (3.5+\xshift, -1.5+\yshift) .. controls (1.5+\xshift, -0.7+\yshift) and (0.6+\xshift, -0.2+\yshift) ..  (0.3+\xshift, 2.25+\yshift) node[right] {\small $m^2 > 0$};
    \draw[rgtraj] (3.5+\xshift, -1.9+\yshift) .. controls (1.5+\xshift, -1.0+\yshift) and (0.3+\xshift, -0.5+\yshift) ..  (0.3+\xshift, -2.25+\yshift) node[right] {\small $m^2 < 0$};

    \fill[black] (G1) circle (2pt);
    \fill[black] (G2) circle (2pt);
    \fill[blue] (WF) circle (2pt);
\end{tikzpicture}
    \caption{%
        Behaviour of the \rg in the vicinity of the Gaussian fixed point for $\epsilon>0$ (on left) and $\epsilon<0$ (on right).
        The red region corresponds to the Gaussian region.
        Arrows are oriented toward \ir scales.
    }
    \label{WilsonFisherFig}
\end{figure}

Let us look for the fixed points of these flow equations.
By solving $\beta_{m^2}=\beta_u=0$, we find two solutions:
\begin{itemize}
    \item The Gaussian fixed point $\bar{m}^2=\bar{u}=0$.
    \item An interacting fixed point of order $\epsilon$ : $\bar{m}^2_*=-\epsilon/6$, $\bar{u}_*=\epsilon \pi^4/36$.
\end{itemize}
Notice that when $\epsilon < 0$ ($D > 4$), the quartic coupling of the non-Gaussian fixed point $\bar{u}_*$ becomes negative, rendering the theory unstable: only the Gaussian fixed point exists in dimension $D > 4$.
This interacting fixed point is called, for historical reasons, the \emph{Wilson-Fisher} (\wf) fixed point.
To understand its properties, it is useful to linearise the flow around this fixed point:
\begin{equation}
    \begin{pmatrix}
        \beta_{m^2}(\bar{m}^2,\bar{u})
        \\
        \beta_u(\bar{m}^2,\bar{u})
    \end{pmatrix}
    = \mathbf{M}
    \begin{pmatrix}
        \bar{m}^2-\bar{m}^2_*
        \\
        \bar{u}-\bar{u}_*
    \end{pmatrix},
\end{equation}
where $\mathbf{M}$ is the \emph{stability matrix}:
\begin{equation}
    \mathbf{M}
    =
    \eval{%
        \begin{pmatrix}
            \partial_{\bar{m}^2} \beta_m(\bar{m}^2,\bar{u}) & \partial_{\bar{u}} \beta_m(\bar{m}^2,\bar{u}) \\
            \partial_{\bar{m}^2} \beta_u(\bar{m}^2,\bar{u}) & \partial_{\bar{u}} \beta_u(\bar{m}^2,\bar{u}) \\
        \end{pmatrix}
    }_{\text{fixed point}}.
\end{equation}
We can diagonalise this matrix, and by calling $\theta_1$ and $\theta_2$ the opposite of its eigenvalues, and $\mathbf{O}$ the matrix of the change of basis,
\begin{equation}
    \mathbf{O}^{-1} \mathbf{M}\mathbf{O}
    =
    \begin{pmatrix}
        -\theta_1 & 0         \\
        0         & -\theta_2
    \end{pmatrix}.
\end{equation}
We can then define the couplings in the eigenbasis:
\begin{equation}
    \begin{pmatrix}
        \bar{h}_1 \\ \bar{h}_2
    \end{pmatrix}
    =
    \mathbf{O}^{-1}
    \begin{pmatrix}
        \bar{m}^2 \\ \bar{u}
    \end{pmatrix},
\end{equation}
we have at the linear order:
\begin{equation}
    \beta_{h_1}=-\theta_1 \bar{h_1},
    \qquad
    \beta_{h_2}= -\theta_2 \bar{h}_2.
\end{equation}
The eigenvalues $\theta_1$ and $\theta_2$ are \emph{critical exponents}, and we shall see that they determine the ordinary critical exponents, such as $\nu$ or $\beta$.
Around the Gaussian fixed point, we find $\theta_1=2$ and $\theta_2=\epsilon$: the critical exponents are indeed the canonical dimensions.
Around the \wf fixed point, we get:
\begin{equation}
    \theta_1=  2-\frac{\epsilon}{3},
    \qquad
    \theta_2=-\epsilon.
\end{equation}

Critical exponents generalize canonical dimensions around the non-Gaussian fixed point.
They are associated with the effective couplings along the principal directions spanned by the eigenvectors of $\mathbf{M}$.
Note that the same nomenclature used for canonical dimensions is applied to critical exponents.
Thus, a direction is considered relevant if the exponent is positive, and irrelevant otherwise.

\Cref{WilsonFisherFig} summarises the properties of the flow in the neighbourhood of the fixed points, for $D < 4$ (left) and $D > 4$ (right):
\begin{itemize}
    \item When $D < 4$, the Gaussian fixed point is unstable and the transition is controlled by the non-Gaussian fixed point: the critical surface (red line) separates the phase space into two distinct regions, one where the flow converges toward positive masses (symmetric phase, without magnetisation) and the other where the flow converges toward negative masses (broken phase, with macroscopic magnetisation).
          The transition line corresponds to the blue line, where the parameter ${h}_1$ determines the critical temperature, fixed at the \wf fixed point.
    \item When $D > 4$, the Gaussian point is stable and controls the transition.
          The mass axis then identifies with the transition line.
          In this regime, Gaussian theory remains a good approximation.
\end{itemize}

Let us see through an example how the \rg allows us to determine a critical exponent, say $\nu$ (see equation \eqref{exponu}).
Let us place ourselves along the blue line, to which the flow converges in the \ir, so that the irrelevant parameter $h_2$ can be neglected at large scales.
The correlation length $\xi$ then depends only on the relevant parameter $h_1$.
Dimensional analysis requires the argument of $\xi$ to be dimensionless.
Since $h_1$ has scaling dimension $\theta_1 = 2 - \epsilon/3$, the dimensionless combination is:
\begin{equation}
    \xi = \Lambda_{\text{\uv}}^{-1}\, f\left(\Lambda_{\text{\uv}}^{-\theta_1} \bar{h}_1\right),
\end{equation}
where the prefactor $\Lambda_{\text{\uv}}^{-1}$ accounts for the canonical dimension of $\xi$.
For the function $f$ to cancel the explicit $\Lambda_{\text{\uv}}$ dependence of the prefactor, we must have $f(x) \propto x^{-1/\theta_1}$, yielding:
\begin{equation}
    \xi \propto \vert h_1\vert^{-\frac{1}{2}-\frac{\epsilon}{12}+\mathcal{O}(\epsilon^2)}.
\end{equation}
Since the critical parameter $h_1 \propto (T-T_c)$, we can deduce that:
\begin{equation}
    \nu=\frac{1}{2}+\frac{\epsilon}{12}+\mathcal{O}(\epsilon^2).
\end{equation}
By setting $\epsilon=1$, we find $\nu= 0.58$, to be compared with the empirical value $\nu \approx 0.63$.
The agreement remains quite qualitative here, however, since by imposing $\epsilon=1$ we are effectively moving out of the perturbative regime, since the fixed point then becomes of order $1$, invalidating the power series expansion in $u$.
Recourse to non-perturbative methods, such as those we will see in the following sections, shows excellent agreement with empirical results.
This obviously goes well beyond the scope of this modest introduction, and the curious reader may consult the references provided at the beginning of this part.

\section{The Lesson Learnt}\label{sec:lesson}
Before moving on, let us summarise what we have learnt from the Ising model and the \rg.
We have seen that at an intermediate (mesoscopic) scale, the complex collective behaviour of spins can be effectively described by a continuous field $\Phi(x)$ on $\mathds{R}^D$, corresponding to an average of the spins over a mesoscopic region of space, large relative to the microscopic scale of atoms, but very small compared to the scale of the material.
In the Ising model of ferromagnetism, the field $\Phi$ represents the local magnetisation.
However, this is not the only interpretation of the Ising field and its universality, and its meaning depends on the model.
The key point is that $\Phi_D^4$ theory (the quartic scalar field theory in $D$ spacetime dimensions) is the universal effective description of a broad family of microscopic models near criticality.
On this view, ordinary field theories are effective descriptions of an underlying discrete microscopic reality: they encode long-range correlations and forget microscopic details.
The \rg supports this interpretation directly: as short-wavelength modes are averaged out (the procedure known as \emph{coarse-graining}), the microscopic details decouple from the effective description.
In the Ising example in dimension $D=4$, the mass term is relevant, the quartic coupling is marginal, and every sextic and higher-order interaction is irrelevant (see \Cref{tab:couplings} for the precise criterion).
The \rg flow drives the irrelevant couplings to zero, so the \ir effective theory is dominated by one relevant and one marginal operator, with all higher-order interactions integrated away.
\Cref{fig0} illustrates this mechanism: a large family of microscopic Hamiltonians $H_1$, $H_2$, \dots are well described at large scales by an effective Hamiltonian, generally denoted $\Gamma$ (the \emph{effective action}).

\begin{table}[t]
    \centering
    \caption{%
        Classification of couplings based on their canonical dimensions, considering $D$ spacetime dimensions and a $\Phi^n$ interaction term.
    }\label{tab:couplings}
    \begin{tabular}{@{}lc@{}}
        \toprule
        \textbf{Classification} & \textbf{Dimension}   \\
        \midrule
        Relevant                & $n < \frac{2D}{D-2}$ \\
        Marginal                & $n = \frac{2D}{D-2}$ \\
        Irrelevant              & $n > \frac{2D}{D-2}$ \\
        \bottomrule
    \end{tabular}
\end{table}

The \rg recasts the field theory as a dynamical system: the flow of the couplings determines the large-scale behaviour of the theory.
Scale is set by the spectrum of the Gaussian kernel (the Laplacian, in the cases treated here), where small eigenvalues correspond to large distances, i.e.\ the \ir regime.
The \rg thus constructs effective models by keeping the large-scale physics fixed while modifying the effective interactions.
The influence of microscopic fluctuations is absorbed into the coupling constants at each \rg step.
The flow near the Gaussian fixed point (the free, non-interacting theory around which interactions are perturbations) is controlled by the canonical dimensions, which depend on the structure of the interactions and on the radial measure of $D$-dimensional momentum space \eqref{eq:rhok2}.
This follows from dimensional analysis.
In natural units $\hbar = c = 1$, the action $S = \int \dd^D x\, \mathcal{L}$ is dimensionless and $[\dd^D x] = -D$, so $[\mathcal{L}] = +D$.
The canonical kinetic term $\mathcal{L}_{\text{kin}} = \frac{1}{2}(\partial_\mu \Phi)(\partial^\mu \Phi)$ has dimension $2 + 2[\Phi]$ in mass units (since $[\partial_\mu] = +1$), and equating this to $[\mathcal{L}] = +D$ gives
\begin{equation}
    [\Phi] = \frac{D-2}{2}.
\end{equation}
For a monomial coupling $g_n$ of $\Phi^n$, the action term $(n!)^{-1}\,g_n \Phi^n$ must also have dimension $D$, so
\begin{equation}
    [g_n] = D - n\,[\Phi] = D - n\,\frac{D-2}{2}.
\end{equation}
A coupling is thus relevant if $[g_n] > 0$, marginal if $[g_n] = 0$, and irrelevant if $[g_n] < 0$ (see \Cref{tab:couplings} for a summary).
In words: a coupling with positive mass dimension grows in importance at low energies and is therefore relevant; a coupling with zero mass dimension changes only logarithmically with scale and is marginal; a coupling with negative mass dimension is suppressed in the \ir regime and is irrelevant.

As shown in~\cite{RG1,RG2,RG3,RG4,RG5,RG6}, the problem of separating signal from noise in the quasi-continuous tail of the spectrum reduces to an \rg analysis of an effective field theory (\eft).
In practice, this \eft is a $\Phi^4$ theory whose effective spacetime dimension is set by the shape of the spectrum: for pure noise in the \mpdistr class, the asymptotic dimension is $D = 3$ (this identification, derived from the square-root edge behaviour of the \mpdistr density, is established in the technical sections), where the quartic coupling is relevant and the Gaussian fixed point is unstable.
The effective field plays the same role as $\Phi(x)$ in $\Phi^4_d$ theory, which describes the critical behaviour of the $d$-dimensional Ising model.
Like the Ising field, $\Phi(x)$ does not resolve individual spins: it captures the long-range correlations (in particular the two- and four-point functions) between mesoscopic regions of the ferromagnet.
As established by the \rg pioneers~\cite{Zinn2,Wilson1}, an \eft parametrises the dominant correlations between degrees of freedom and reconstructs the underlying probability distribution from measured data.

As in the Ising case, this \eft does not depend on the microscopic details of the data, provided the noisy \dof follow a universal distribution such as the \mpdistr law~\eqref{MP}.
This insensitivity to microscopic detail is not an additional assumption: it follows directly from universality.
If an \eft can be constructed for one spectrum in a universality class and shown to separate signal from noise, universality guarantees that the same \eft separates them for every other spectrum in the class.
This strategy is developed in this section and in \Cref{App1}.
The \eft follows from the maximum entropy principle applied to the empirical spectrum~\cite{RG7,Jaynes1,Jaynes2,Berman_2024,berman2022dynamicsinferencelearning}.
Following this interpretive reading, the \emph{Standard Model} of particle physics can also be viewed as a maximum-entropy inference: its action is in fact fixed, to within a small number of parameters, by the required gauge symmetries, Lorentz invariance, and the observed particle content.
On the same reading, quantum field theory appears as a mathematical language for the correlations between initial and final states in scattering processes.
Still on the same reading, statistical physics is a special case of statistical inference under the maximum entropy principle (see~\cite{Jaynes1,Jaynes2} and \Cref{App0}, as well as~\cite{Berman_2024,berman2022dynamicsinferencelearning}).

\part{Field Theory Framework for Signal Analysis}\label{part2}

The framework established in \Cref{part1} introduced the mathematical machinery of the \rg and its physical interpretation.
Though abstract, the concepts developed in \Cref{part1} are directly applicable to the analysis of empirical spectra, which is the focus of this part.
We shall build an \eft directly from the empirical spectrum and compute the evolution of the couplings with a functional (Wetterich) \rg approach.
The resulting scale dependence of the canonical dimensions will then become the detection criterion.
In data analysis, this construction goes by the name \emph{Data Field Theory} (\dft), the \eft whose propagator reproduces the empirical spectral density and whose spectral edge provides the natural \uv cutoff.
The reference for \enquote{no signal} is the \mpdistr universality class, whose edge behaviour fixes the asymptotic spacetime dimension of the \dft at $D = 3$ (see \Cref{sec:lesson}).
The presence of a signal perturbs the spectrum away from this reference and drives the system through the \emph{dimensional phase transition} at $D = 4$ that is the central result of this review~\cite{Landau2023}.

The remainder of this part develops the construction in three stages.
We first build the \dft and identify the \mpdistr class as the natural noise reference.
We then implement the Wetterich equation, derive the flow of the canonical dimensions, and locate the dimensional phase transition.
Finally, we turn the phase transition into a detection criterion via the \emph{Generalised Scale Analysis} (\gsa) of \cite{RG1,RG2,RG3,RG4,RG5,RG6,RG7}.

\section{Data Field Theory}\label{section1}
We construct the \dft and the structure of the theory space, and define the \mpdistr universality class.
The construction is heuristic and relies on universality arguments.
The rigorous treatment is in \Cref{App1}.

\subsection{Continuum Limit and Marchenko--Pastur Universality Class}\label{sec:universality_class}

We first clarify the terms \enquote{\mpdistr universality class} and \enquote{continuum limit}.
To avoid any ambiguity with the data-matrix dimensions $N$ and $P$ introduced in \Cref{sec1}, we denote by $T$ the (large) size of a generic symmetric matrix in this generic discussion.
The connection to the empirical correlation matrix $C \in \mathds{R}^{P \times P}$ of this paper is made explicit at the end of the paragraph.
Random Matrix Theory (\rmt) rests on the result that the intrinsically random eigenvalue spectrum of a large matrix ensemble converges to a deterministic distribution as the matrix dimension goes to infinity.
The spectrum is typically bounded, so the density of eigenvalues grows with the matrix size.
We focus on symmetric matrices.
A symmetric $T \times T$ matrix has $T$ real eigenvalues, with typical spacing $\mathcal{O}(1/T)$ between adjacent ones.
The continuum limit is formalised as follows.
For sufficiently large $T$, the empirical eigenvalue distribution $\{\lambda_\mu\}_{1\leq \mu \leq T}$, defined by
\begin{equation}
    \mu_{\text{em}}(\lambda)=\frac{1}{T}\sum_{\mu=1}^T\delta(\lambda-\lambda_\mu),
\end{equation}
converges to a deterministic distribution $\mu_{\infty}(\lambda)$.
Examples include the \mpdistr theorem (see \Cref{thm:thMP}) and Wigner's semicircle law (see the background distribution in \Cref{AppD}).
Concretely, for every continuous bounded test function $f$, this convergence reads:
\begin{equation}
    \frac{1}{T} \sum_{\mu=1}^T f(\lambda_\mu) \to \int \dd \lambda\, \mu_{\infty}(\lambda) f(\lambda),
\end{equation}
which is the form of the continuum limit we use throughout.
The passage from a sum over discrete eigenvalues to an integral over $\lambda$ is the matrix analogue of the standard condensed-matter transition from a sum over discrete momentum modes in the Brillouin zone to an integral over $\dd^D k$.
In both cases the continuum limit replaces a finite discrete sum by a smooth integral over the corresponding variable.
In the remainder of this section, the role of the generic $T$ is played by the dimension $P$ of the empirical correlation matrix $C$ (the number of features, or degrees of freedom), and the role of the $T$ eigenvalues is played by the $P$ eigenvalues $\{\lambda_\mu\}_{\mu=1}^P$.

The \mpdistr spectrum defines the disordered phase of the \dft, acting as a universal attractor for stochastic noise.
The \mpdistr universality class is the set of all random matrices whose empirical spectrum converges to the \mpdistr distribution in the continuum limit:
\begin{definition}{Marchenko--Pastur Universality Class}{definitionMPclass}
    Let $X \in \mathds{R}^{N \times P}$ whose entries are \iid random variables following a centred normal distribution: $X_{ij} \sim \mathcal{N}(0, \sigma^2)$ for $i = 1, 2, \dots, N$ and $j = 1, 2, \dots, P$.
    A white Wishart matrix is $C = N^{-1} X^{\mathsf{T}} X$.
    Let $C$ depend on parameters $(\alpha_1, \dots, \alpha_k) \in I_k \subset \mathds{R}^k$.
    Then $C$ belongs to the \mpdistr universality class at $p \in I_k$ if:
    \begin{enumerate}
        \item The empirical spectrum of $C$ converges to the \mpdistr distribution as $P \to \infty$ with $q = P/N$ fixed.
        \item For any two covariance matrices $C(p)$ and $C(p')$ in the class, the distance between their limiting spectral densities is controlled by the parameter distance. For any open neighbourhood $U_p$ of $p$, there exists a constant $K > 0$ such that
              \begin{equation}
                  \operatorname{TVD}\!\left(\mu_p, \mu_{p'}\right)
                  \;\leq\;
                  K \left\Vert p - p' \right\Vert
                  \quad \text{for all } p' \in U_p.
              \end{equation}
    \end{enumerate}
    where $\operatorname{TVD}$ is the \emph{Total Variation Distance} between the limiting spectral densities:
    \begin{equation}
        \operatorname{TVD}\!\left(\mu_p, \mu_{p'}\right)
        =
        \int \dd\lambda\, \left| \mu_p(\lambda) - \mu_{p'}(\lambda) \right|,
    \end{equation}
    and $\left\Vert p - p' \right\Vert$ is the Euclidean distance on $I_k$.
    This condition is the operational criterion of the universality class: it ensures that any two matrices in the class have sufficiently close spectra (in the TVD sense) that the same \eft analysis applies to both.
\end{definition}
In words, the \enquote{\mpdistr universality class} is the set of covariance matrices whose limiting spectrum can be continuously deformed into the \mpdistr distribution, with the deformation cost in TVD controlled by the parameter distance.
The Lipschitz condition is the operational translation of the standard \enquote{all members of the class share the same universal behaviour} statement, and it is what allows us to apply the same \eft to every member of the class.

\begin{remark}{Ising Model and Random Matrix theory}{RqIsing}
    The Ising model discussed in the first part belongs to the \mpdistr universality class.
    Above the critical temperature, the \ecm is well described by \rmt, a point discussed in detail in \Cref{IsingCritical}.
    Define the entries of the \ecm as:
    \begin{equation}
        C_{ij} \defeq \frac{1}{N}\sum_{n=1}^N s_i^{(n)}s_j^{(n)}\,,
    \end{equation}
    where $\{s_i^{(1)}, s_i^{(2)}, \dots, s_i^{(N)}\}$ denotes $N$ samples for site $i$.
    At high temperatures, the correlation length is small, and the \clt ensures that large-scale fluctuations are Gaussian.
    Consequently, the matrix $C$ is approximately a white Wishart matrix, and \Cref{thm:thMP} states that the spectrum of $C$ follows the \mpdistr law for sufficiently large $N$ and $P$.
\end{remark}

\subsection{Motivation and Universality Argument}

The fundamental challenge of high-dimensional data analysis lies in the regime where the signal is weak or strongly entangled with the noise.
In this regime, the signal does not manifest as isolated eigenvalues (the famous spikes of the \bbp transition, discussed in \Cref{AppD}) but merges into the tail of the continuous bulk of the spectrum.
To extract this hidden information, we adopt a complementary viewpoint: we treat the empirical spectrum through the lens of the \eft that encodes the collective behaviour of the underlying microscopic interactions.
These interactions, much like the \dof that carry them, depend strongly on the system at hand, whether the data originate from financial markets, the activity of genes in a cell, the spins of a magnetic material, or the activity of biological or artificial neural networks.
All these systems nevertheless share features in their large-scale correlations, which are remarkably well described by \rmt~\cite{Bouchaud1,Bouchaud2}.
If we can construct an \eft that determines whether a specific spectrum belongs to the \mpdistr universality class, the same \eft will determine this for any other system in the neighbourhood of that universality class, even though the microscopic nature of the underlying \dof is entirely different.\footnote{%
    This is the transferability afforded by universality (see, for instance, \cite{vtyurina2016hysteresis} for an application of the Ising model to biology).
}

Following the lesson drawn from the study of the Ising model in \Cref{part1}, we postulate the existence of a field, which we denote by $\Phi$.
This field materialises the collective behaviour of the \dof, much as the Ising field of \Cref{part1} materialised the large-scale collective behaviour of discrete spins.
The construction of this field theory rests on three conceptual steps:
\begin{itemize}
    \item \textbf{Universality and the \mpdistr class.}
          In the absence of a signal (i.e.\ in pure noise), the spectrum of the empirical correlation matrix converges to the \mpdistr distribution in the limit $P, N \to \infty$ at fixed ratio $q = P/N$.
          The power of this law lies in its universality: it does not depend on the microscopic details of the data.
          Whether the variables are pixels of an image, gene expression levels, or financial price fluctuations, their purely random interactions converge to the same macroscopic spectral behaviour.
    \item \textbf{Minimum-knowledge prescription.}
          The universality of the noise spectrum makes the maximum-entropy inference the natural baseline.
          Since the noise is universal, we need not formulate an exact microscopic theory for each dataset (see \Cref{App0} for a short review).
    \item \textbf{The emergence of a separation scale between noise and signal.}
          Casting the data matrix into an \eft model turns the signal-detection problem into a field-theory problem.
          The \dof associated with the bulk of the spectrum (small eigenvalues, the \uv\ region in the spectral sense) provide a natural notion of scale that allows us to index the fluctuations of the field and to construct a partial-integration procedure leading to a Wilsonian \rg flow.
          Inspecting this flow in the tail of the spectrum (large eigenvalues) reveals a possible breakdown of universality at a characteristic scale.
          That scale marks the boundary beyond which the pure-noise hypothesis breaks down, and thereby signals the presence of structured information.
\end{itemize}

The use of an \eft is therefore not a mathematical analogy but a direct consequence of the universality of random matrices.
The reader wishing to study the formal construction of this partition function and the integration of the microscopic \dof may consult \Cref{App1}.
In what follows, we take such an \eft as a starting point and show how the \rg scans these scales, detecting the signal's signature in the form of a dimensional phase transition.
In the context of data analysis, we refer to the corresponding \eft as the \dft.

\begin{remark}{High-Temperature Ising Model Universality Class}{RqIsing2}
    The high-temperature Ising model belongs to the \mpdistr universality class (see \Cref{rem:RqIsing}).
    For this system, an explicit field theory was constructed in \Cref{part1} (see \Cref{fromisingTofield}, \Cref{Zeff,Heff1}).
    However, although effective, that field theory remains too closely tied to the underlying physical space $\mathds{R}^D$ in which the system is embedded, and the regime in which it is valid differs from the one of interest here.
    The framework described by \eqref{Heff1} captures large spatial scales, where the theory becomes nearly Gaussian above the critical point.
    At the spectral level of the \ecm, this large-distance regime maps to the \uv region, i.e.\ the core of the bulk (in other words,the spectral convention of small eigenvalues = \uv is the opposite of the real-space convention of large distances = \ir, see \Cref{IsingCritical}).
    In contrast, the data-analysis problem studied in this review focuses on the tail of the spectrum, a distinct mesoscopic regime that we shall analyse in \Cref{IsingCritical}.
    The \eft \eqref{Heff1} remains, however, a valuable source of inspiration, in particular for the structure of its interactions.
\end{remark}

\subsection{Local Field Theory}\label{LocalFieldTheory}

We now construct the \dft explicitly.
We shall first build the Gaussian baseline whose two-point function reproduces the empirical correlation matrix, and then determine, by universality and a maximum-entropy argument, the smallest local interaction that can be added to it.

\subsubsection{Nearly Gaussian Estimator}

As shown in the previous section, the least structured distribution to describe the empirical data, requiring the least prior information, is the maximum entropy distribution (see \Cref{App0}).
This distribution is expected to account for the empirical data, namely the empirical correlation matrix $C_{ij}$ (see \Cref{def:definitionECM}), assumed to be $P \times P$ from a sample of size $N$.

\begin{figure}[t]
    \centering
    \includegraphics[width=0.9\textwidth]{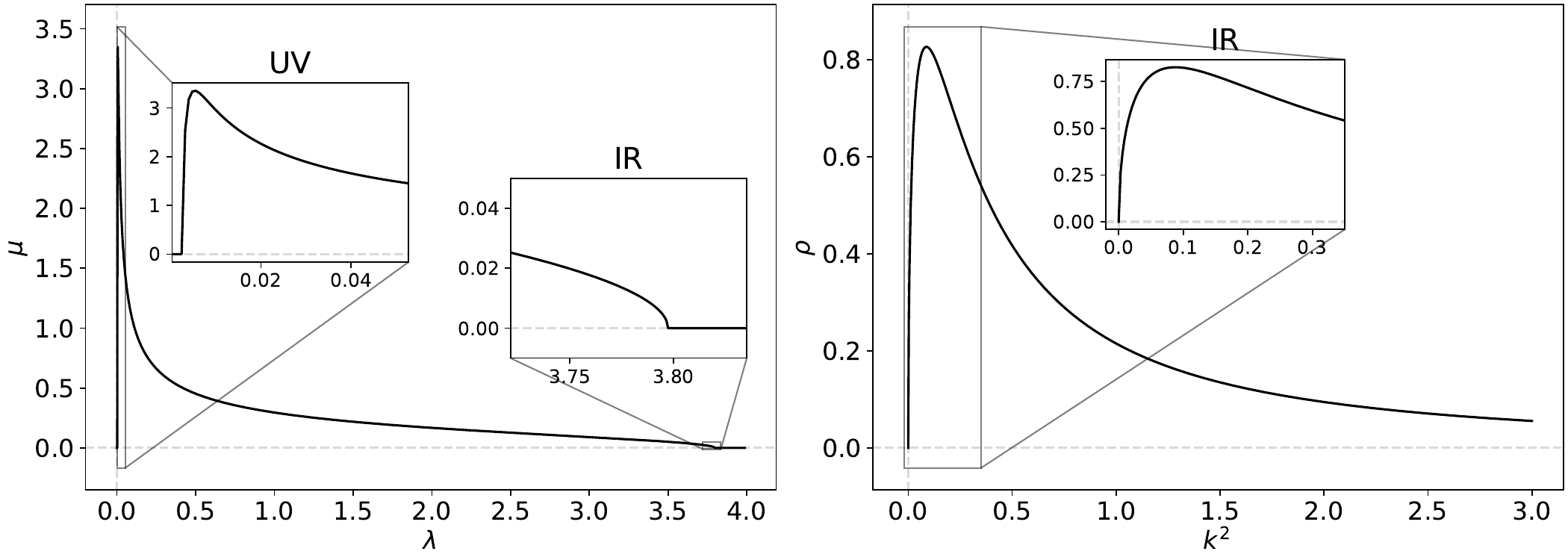}
    \\
    \includegraphics[width=0.9\textwidth]{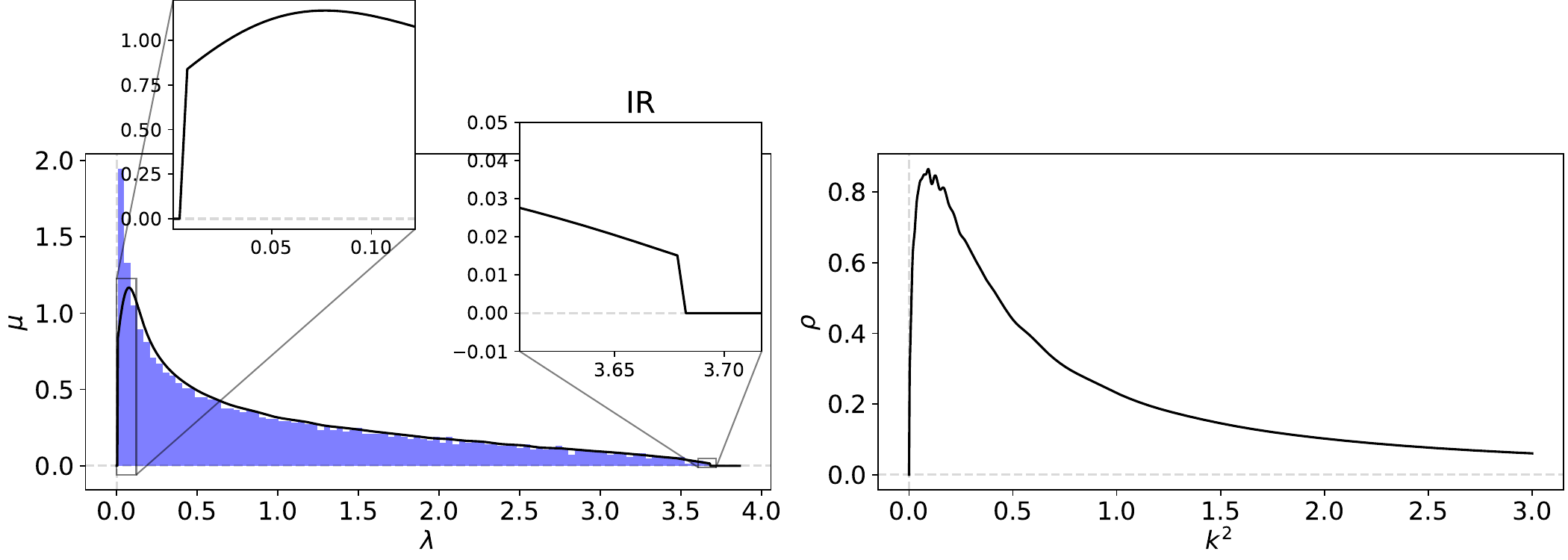}
    \caption{%
        Deep \ir and deep \uv details of the eigenvalue distribution (left) and of the momenta distribution (right).
        The analytic \mpdistr distribution is shown on top, some empirical distribution with sample size $N = 2500$ and ratio $q = 0.9$ at the bottom.
        The black line is the numerical interpolation used to construct the empirical inverse distribution.
    }
    \label{fig:figtail}
\end{figure}

Let us consider a field $\Phi$ as a set of $P$ discrete variables $\Phi=\{\phi_1,\phi_2,\dots, \phi_P \}$ (similar to the Ising model in \Cref{fromisingTofield}).
Let $G$ be its $2$-point function.
Then, the inference constraint translates as the equality between the components of the \ecm $C$:
\begin{equation}
    G_{ij} \defeq \langle \phi_i \phi_j \rangle \overset{\text{emp.}}{\equiv\!\equiv} C_{ij},
    \quad
    i, j = 1, 2, \dots, P.
    \label{eqconst}
\end{equation}
In this context, the \uv and \ir scales are defined in terms of the eigenvalues of the correlation matrix $C$:
\begin{definition}{UV and IR Energy Scales}{UV/IR}
    The spectrum of the correlation matrix $C$, assumed to be nearly continuous, provides a canonical notion of scale.
    \uv scales (high-energy, small position-space scale) correspond to small eigenvalues and \ir scales (low-energy, large position-space scale) to large eigenvalues.

    Let the width of the eigenvalue distribution $\{ \lambda_\mu \}_{\mu = 1}^P$ be defined as
    \begin{equation}
        \lambda_0 \defeq \lambda_+ - \lambda_-,
    \end{equation}
    where $\lambda_+ = \max_\mu \{ \lambda_\mu \}_{\mu=1}^P$ and $\lambda_- = \min_\mu \{ \lambda_\mu \}_{\mu=1}^P$.
    Then, $\lambda_0$ can be used to define a canonical correlation length
    \begin{equation}
        \xi=\sqrt{\lambda_0}.
    \end{equation}
\end{definition}
\Cref{fig:figtail} illustrates the regions corresponding to \uv and \ir scales for a purely Gaussian matrix.

Imposing the constraint \eqref{eqconst}, the maximum entropy estimator takes the standard form:
\begin{equation}
    p(\Phi)
    =
    \frac{1}{Z} \exp \left(-\frac{1}{2} \sum_{i,j=1}^P \phi_i D_{ij} \phi_j \right),
    \label{GaussianModel}
\end{equation}
where $Z=(2\pi)^{P/2}(\det D)^{-1/2}$ (the integral runs over the $P$ components $\phi_1, \dots, \phi_P$).
The maximum entropy estimator \eqref{GaussianModel} satisfies the constraint \eqref{eqconst} provided that
\begin{equation}
    C_{ij}=D^{-1}_{ij}.
    \label{eq:inference_constraint_corr_mat}
\end{equation}
Note that, since we operate at a sufficiently large scale, the matrix $C$ is expected to be invertible, with zero eigenvalues systematically removed during the coarse-graining procedure, as discussed at the end of \Cref{fromisingTofield} (see also \Cref{App1}).
The Gaussian model \eqref{GaussianModel} fixes the form of higher-order correlation functions via Wick's theorem (\Cref{thm:theoremWick}): odd-point functions vanish, whereas even-point functions decompose into sums of products of two-point correlation functions:
\begin{equation}
    \langle \phi_i \phi_j \phi_k \phi_l \cdots \rangle
    =
    C_{ij} \times C_{kl}\times \cdots + \text{permutations},
    \label{corrhigher}
\end{equation}
where the permutations run over all allowed pairings of indices.

Since the matrix $C$ is symmetric and positive-definite, it can be diagonalized, and its eigenvectors $u^{(\mu)}$ can be chosen to form an orthonormal basis:
\begin{equation}
    \sum_{j=1}^P C_{ij} u_{j}^{(\mu)}
    =
    \lambda_\mu u_i^{(\mu)},
    \qquad
    \sum_{i=1}^P u_i^{(\mu)} u_i^{(\mu^\prime)}=\delta_{\mu\mu^\prime}.
\end{equation}
As we keep proceeding in the derivation, recall that for purely random matrices the components of the eigenvectors are uniformly distributed on the sphere $\mathcal{S}_{P-1}$ (see \Cref{sectionPhaseTrans} below).
As it will become clear later, it is particularly convenient to work in the spectral representation, defining:
\begin{equation}
    \phi_\mu \defeq \sum_{i=1}^P \phi_i u_i^{(\mu)}.
    \label{eigenfunctions}
\end{equation}
For purely noisy data, this is equivalent to applying a random rotation to the vector $\Phi$: $\Phi \to \mathbf{O}\Phi$, where $\mathbf{O}$ is Haar-distributed over the orthogonal group $\mathrm{O}(P)$.
In this respect, the maximum entropy estimator \eqref{GaussianModel} reads:
\begin{equation}
    P(\Phi)=\frac{1}{Z}\exp \left(-\frac{1}{2} \sum_{\mu=1}^P \, \phi_\mu (\lambda_\mu)^{-1} \phi_\mu\right).
\end{equation}
Now, let us examine the inference constraint \eqref{eqconst}.
For the positive-definite matrices focused on in this paper, all eigenvalues satisfy $\lambda_\mu > 0$.
Let $\lambda_+$ denote the largest eigenvalue.
It is convenient to express the propagator in the standard field-theoretic form (see \Cref{part1}):
\begin{equation}
    \lambda_\mu=\frac{1}{p^2_\mu+m^2},
    \label{inferenceZero}
\end{equation}
such that the generalized momentum $p_\mu$ interpolates between the \uv and \ir scales.
In the \ir limit ($p_\mu \to 0$), which corresponds to the tail of the spectrum, the mass must be defined as
\begin{equation}
    m^2 \defeq \lambda_+^{-1}.
    \label{eq:mass_largest_eigenvalue}
\end{equation}
Conversely, in the \uv regime, the maximum momentum cutoff $p = \Lambda_{\text{\uv}}$ corresponds to the smallest eigenvalue of the spectrum:
\begin{equation}
    \lambda_- \defeq \dfrac{1}{\Lambda_{\text{\uv}}^2+m^2}.\label{equationLow}
\end{equation}
This maximum value, $p_\mu=\Lambda_{\text{\uv}}$, plays the role of a \uv cutoff in field-theoretic parlance, representing the microscopic scale.
This observation provides the primary motivation for \Cref{def:UV/IR}.

Note that in the following, it will be useful to systematically shift the spectrum so that $\lambda_- = 0$ (i.e.\ $C \to C - \lambda_-\mathds{I}$, where $\mathds{I}$ is the identity matrix).
This convention amounts to replacing the largest eigenvalue with $\lambda_0 \defeq \lambda_+ - \lambda_-$, setting the mass scale to $m^2 \defeq \lambda_0^{-1}$, and choosing $\Lambda = \infty$.
Under this convention, the spectrum of the generalized momentum $p$ spans the entire positive real axis in the continuous limit, i.e.\ $p^2 \in (0, +\infty)$.
Note that the value $p = 0$ is strictly excluded from this interval.
Indeed, since the typical spacing between eigenvalues is of order $\mathcal{O}(1/P)$, the smallest accessible value of $p$ is expected to be of order $\mathcal{O}(1/\sqrt{P})$, which represents the final discrete mode before reaching the mass scale $m^2$.\footnote{%
    In a classical spatial field theory within a finite volume $V = L^D$, the lowest momentum (excluding the zero mode) is dictated by the boundary conditions: $p_{\text{min}} \sim 2\pi/L$.
    See the discussion of \Cref{Gaussiantheory}.
}

\subsubsection{Local Field Theory}

The Gaussian model is remarkably simple, yet significantly inadequate, as it assumes that all higher-order correlations are entirely reducible to two-point correlations encoded in the \ecm via Wick's theorem.
Furthermore, we could also have some doubts whether the Gaussian distribution, characterized by the spectrum of the generalized momentum $\rho(p^2)$ induced by the empirical distribution $\mu_{\text{em}}(\lambda)$, is stable under coarse-graining.
In other words, does a non-Gaussian perturbation remain small along the \rg flow? In \Cref{part1}, we showed that for the field theory naturally associated with the Ising model, the Gaussian fixed point is unstable in low dimensions ($D \leq 4$).
What about the data-driven \dft we just constructed?

A pragmatic way to evaluate the stability of the Gaussian regime is to introduce a small interaction term, a non-Gaussian monomial, and analyse its effect on the flow.
However, determining the appropriate form of this interaction is not straightforward, as its behavior strongly depends on how the fields couple within the interaction terms.
Concretely, this amounts to adding a new constraint to define the maximum entropy estimator (see \Cref{App0}), for instance of the form:\footnote{%
    One could certainly consider odd-power couplings, but they complicate the integrability of the distribution.
    Thus, a quartic interaction appears to be a simpler and more sensible choice.
}
\begin{equation}
    G_4 = \sum_{ijkl} W_{ijkl} \phi_i\phi_j\phi_k\phi_l,
\end{equation}
where the tensor $W_{ijkl}$ acts as a Lagrange multiplier that physically encodes a four-point correlation.

Universality provides the most reliable path to choose the form of tensor $W$.
Such tensor possesses $P^4$ \dof, though the interaction must necessarily reduce this number.
We know that the Ising model, which involves discrete microscopic variables, falls into the \mpdistr universality class.
Furthermore, we have explicitly shown in \Cref{fromisingTofield} that this model can be mapped onto a discrete field $\Phi_i$ (analogous to ours, see \Cref{partition2}), whose defining characteristic is the emergence of local interactions, namely, even powers of the form $\sim \sum_i \Phi_i^{2n}$.
Here, $i$ plays the role of the lattice site index $\vec{x}$.
As universality dictates a common form for the minimal model in the vicinity of a random matrix universality class, the local interactions emerge naturally.

The minimal model is then the following:
\begin{equation}
    p(\Phi) = \frac{1}{Z} \exp \left(-\frac{1}{2} \sum_{i,j=1}^P \phi_i D_{ij} \phi_j -\frac{g_4}{4!} \sum_{i=1}^P \phi_i^4+\mathcal{O}(\phi_i^6)\right).
    \label{GaussianModelBeyond}
\end{equation}
This amounts to choosing $W_{ijkl} \propto \delta_{ij}\delta_{kl}\delta_{il}$, which reduces the $P^4$ \dof of the tensor to one.
It specifically presents an obvious $\mathds{Z}_2$ symmetry ($\Phi \to -\Phi$), proposed in particular in \cite{RG0} in a similar context.
For the data-scientist reader, this symmetry reflects the empirical fact that the empirical correlation matrix $C$ does not distinguish a feature from its negative: the matrix of correlations is unchanged under a global sign flip of the centred variables, and the maximum-entropy estimator inherits this invariance.

For practical reasons, the model \eqref{GaussianModelBeyond} is not yet optimal.
It would be more convenient to operate in momentum space, in the eigenbasis of the symmetric positive-definite matrix $D$.
Following the previous construction, we parameterize its eigenvalues as $(p^2_\mu + m^2)$, thereby defining the generalized momentum $p^2$ (we shall return to the definition of the mass and the distribution of this generalized momentum in the next section).
In this basis, the explicit expression of non-Gaussian interactions becomes complicated, depending heavily on the specific eigenvectors.
A simple way to write it, as proposed in \cite{RG1}, is to double the number of degrees of freedom by indexing the fields by $p$ ($p \in \mathds{R}$ in the continuous limit) rather than by $p^2$.
We thus introduce a new field, $\varphi(p)$, such that:
\begin{itemize}
    \item In the Gaussian limit:
          \begin{equation}
              p(\varphi) =\frac{1}{Z} \exp \left(-\frac{1}{2}\sum_p \varphi(p) (p^2+m^2) \varphi(-p) \right),
          \end{equation}
          faithfully reproduces the correlations provided that \eqref{inferenceZero} holds true.
    \item The non-Gaussian interactions preserve the $\mathds{Z}_2$ reflection symmetry and conserve the generalized momentum.
          For a typical interaction of order $2K$:
          \begin{equation}
              U_{2K}(\varphi)\sim \sum_{\{p_i\}} \delta\left(\sum_{\ell=1}^{2K} p_\ell \right) \prod_{\ell=1}^{2K} \varphi(p_\ell).
          \end{equation}
\end{itemize}
By doubling the \dof in this manner, the field $\varphi(p)$ mimics the algebraic structure of a genuine spatial Fourier transform.
The non-Gaussian couplings, which were previously constrained by the specific shape of the eigenvectors, can now be rewritten in terms of a much simpler conservation law $\delta(p_1 + p_2 + p_3 + p_4)$ (i.e.\ total momentum conservation).
This maps the problem from a pure spectral geometry onto a real-space-like geometry, rendering the evaluation of summation functions and \rg loop integrals analytically tractable.
For the physicist reader, the Feynman diagrams generated by a typical local interaction of the form $\sum_i \Phi_i^4$ are structurally identical to those generated by this generalized-momentum-conserving term.
Consequently, both representations yield an identical \rg flow.
Note that this is not an additional assumption, but rather a standard observation from the ordinary Fourier transform.

In order to construct the statistical inference framework, we consider an \eft whose structure mirrors that of a standard Euclidean equilibrium field theory.
We use a field $\varphi(p)$, which depends on a generalised \emph{momentum variable} $p \in \mathds{Z}_P$.
The properties of the field are governed by the partition function:
\begin{equation}
    Z[j]=\int [\dd \varphi] \, \exp \left(- H[\varphi]+ \sum_p j(-p) \varphi(p) \right).
\end{equation}
The corresponding classical Hamiltonian, $H[\varphi]$, is defined by the quadratic (Gaussian) term and the interaction potential $U[\varphi]$:
\begin{equation}
    H[\varphi]=\frac{1}{2}\sum_p \varphi(p) (p^2+m^2) \varphi(-p)+U[\varphi].\label{classicalaction}
\end{equation}
The interaction term, $U[\varphi]$, is expanded in higher-order field powers, resembling conventional local interactions in momentum space ($\delta$ is the ordinary Kronecker Dirac delta):
\begin{equation}
    U[\varphi] \defeq \sum_{n=3}^\infty\, \frac{u_{2n}}{(2n)!P^{n-1}}\, \delta \left(\sum_{i=1}^n p_i\right) \prod_{i=1}^{2n} \varphi(p_i).
    \label{ultralocalint2}
\end{equation}

\subsection{More on the Inference Constraint}

In the context of \dft, we often work in a pseudo-continuum limit.
While the matrix size $P$ is large and fixed, the squared momentum $p^2$ is restricted to $P$ discrete values with spacing of order $1/P$, as explained above.
These values are determined by an \emph{a priori} unknown distribution $\rho(p^2)$, which exhibits weak convergence toward a true continuous distribution as $P$ tends to infinity.
The inference condition serves to constrain the parameters of the classical action.
Specifically, we impose the requirement that the two-point correlation function, $G(p) \defeq \langle \varphi(p) \varphi(-p) \rangle$, must coincide with the empirical correlation matrix values in the eigenbasis.
In the symmetric phase, this propagator is explicitly given by:
\begin{equation}
    G(p^2) \defeq \frac{1}{Z[j=0]} \int [\dd \varphi] \, e^{-{H}[\varphi]}\varphi(p)\varphi(-p).
\end{equation}
In the Gaussian model, the spectral correspondence is established by \eqref{inferenceZero}.
The mass parameter $m^2$ is fixed by the inverse of the largest eigenvalue, $m^2 = \lambda_0^{-1}$.
Recall that, with the convention adopted at the end of \Cref{LocalFieldTheory}, the spectrum is shifted by $\lambda_- \to 0$, so that the largest eigenvalue is $\lambda_0 = \lambda_+ - \lambda_-$.
The induced momentum distribution $\rho_{\text{G}}(p^2)$ (where the index $G$ stands for \enquote{Gaussian}) is consequently derived from the empirical eigenvalue distribution $\mu_{\text{em}}(\lambda)$.
It is important to note that a perturbative analysis indicates that the Gaussian fixed point is unstable, implying that a perturbation, such as a quartic term, will be relevant under the standard \rg flow.
The canonical dimension typically dictates the linear term in the flow equation near the Gaussian fixed point.

Beyond the Gaussian limit, the bare propagator acquires quantum corrections that are collectively resummed by the well-known Dyson series in quantum field theory.
In \eqref{inferenceZero}, the bare mass $m^2 = \lambda_+^{-1}$ was the parameter of the Gaussian action.
In the Wetterich formalism of the next section, this bare parameter is replaced by the \emph{running} two-point vertex $u_2(k) \defeq \Gamma_k^{(2)}(0,0)$, which depends on the \rg scale $k$.
At the ultraviolet scale $k = \Lambda$, $u_2(\Lambda) = m^2$.
The Dyson resummation is therefore most naturally written in terms of the running $u_2$ rather than the bare $m^2$ (we keep $u_2$ throughout, since the bare $m^2$ is a \emph{measured} input whereas $u_2$ is the \emph{theoretical} coupling that will flow under the \rg).
The resummation is the standard field-theoretic one, i.e.\ a geometric series $\sum_n x^n = 1/(1-x)$ when $|x| < 1$, here applied with $x = \Sigma(p^2)/(p^2 + u_2)$:
\begin{equation}
    \begin{split}
        G(p^2)
        & =
        \frac{1}{p^2 + u_2} + \frac{1}{p^2 + u_2}\,\Sigma(p^2)\,\frac{1}{p^2 + u_2} + \dots
        \\
        & =
        \frac{1}{p^2 + u_2}
        \sum_{n = 0}^{\infty} \left(\frac{\Sigma(p^2)}{p^2 + u_2}\right)^n
        \\
        & = \frac{1}{p^2 + u_2 - \Sigma(p^2)},
    \end{split}
    \label{eq:dyson}
\end{equation}
where $\Sigma(p^2) < p^2 + u_2$ is the \emph{self-energy} at the scale of the running coupling $u_2$.
Solving $G(p^2_\mu) = \lambda_\mu(p^2_\mu)$ exactly is non-trivial and generally leads to $\rho(p^2) \neq \rho_{\text{G}}(p^2)$.
However, the analysis is greatly simplified when focusing on the tail of the spectrum, corresponding to the \ir regime where the generalized momentum is small ($p^2 \ll 1$).
In this low-momentum limit, where standard techniques such as the derivative expansion and the \emph{Local Potential Approximation} (\lpa, discussed in the following sections) are robust enough, the propagator can be approximated as
\begin{equation}
    G(p^2) \approx \frac{1}{Z p^2 + m^2_{\text{eff}}},
\end{equation}
where
\begin{equation}
    Z \defeq 1 - \Sigma'(0)
\end{equation}
is the field-strength renormalization and
\begin{equation}
    m^2_{\text{eff}} \defeq m^2 - \Sigma(0)
\end{equation}
is the effective mass.
Within the strict \lpa, the field-strength renormalization effects ($Z$ corrections) may be ignored, as explicitly validated in \cite{RG2,RG5} and expected to hold in the deep \ir.
This simplification restores the exact Gaussian correspondence, albeit subject to a global shift in the mass term:
\begin{equation}
    m^2 - \Sigma(0) = \lambda_0^{-1},
\end{equation}
leading in particular to the approximation:
\begin{equation}
    \rho(p^2) \approx \rho_G(p^2).
    \label{eqrho}
\end{equation}

The \dft is thus built as the \eft whose propagator reproduces the empirical spectrum of the empirical correlation matrix $C$.
We identified the \mpdistr universality class as the disordered phase of the theory: a Gaussian fixed point perturbed by the smallest local interaction compatible with universality.
Two distinct continuum limits are at play in this section.
The first is the \emph{spectral} continuum limit $P \to \infty$ at fixed $q = P/N$, which controls the smoothness of the empirical density $\rho(p^2)$ and justifies replacing the discrete sum over $P$ eigenvalues by a smooth integral, which is the matrix analogue of the Brillouin-zone sum-to-integral transition introduced at the beginning of the section.
The second is the \emph{field-theoretic} continuum limit in which the discrete set of momentum modes is replaced by a continuous integral weighted by $\rho(p^2)$ and $\rho(p^2) p \dd p$.
This limit is what makes the \rg loop integrals well defined.
The first limit justifies the use of a continuum \eft, while the second limit defines the \emph{unusual geometry} in which the flow lives, and the central place where the data field theory departs from a textbook Euclidean field theory.
In this geometry, the canonical dimension of a coupling is not a fixed number but a function of the running scale $k$, because the radial measure $\rho(p^2)$ is not a power law.
The presence of a signal deforms $\rho(p^2)$ in the tail of the spectrum, thereby changing the canonical dimensions.
This scale dependence of the canonical dimensions is the central object of the next section, where the Wetterich equation is used to track the flow of the couplings $u_2, u_4, u_6, \dots$ and to locate the dimensional phase transition that separates signal from noise.

\section{Functional Renormalization Group for Data Field Theory}\label{sec:frg_dft}
The Functional Renormalization Group (\frg) is an approach to the \rg specifically suited for non-perturbative studies.
Although the core of this review focuses on dimensional analysis of the flow behaviour in the vicinity of the Gaussian fixed point, the non-perturbative nature of the flow, suggested by the canonical dimensions, requires a non-perturbative formalism.
In this section, we present the theoretical foundations of the \frg applied to \dft.
We begin with a brief review of the non-perturbative Wetterich-Morris formalism and discuss the construction of approximations justified by the local nature of the \dft interactions.
For a general review of the \frg, the reader may refer to the standard reference~\cite{Delamotte}, as well as~\cite{RG5}, which is more specific to \dft.

\subsection{Wetterich's Formalism in a Nutshell}

In \Cref{part1}, we saw that, in the Wilsonian approach, \dof are integrated out step-by-step, causing the field couplings to change.
The \rg thus transforms the Hamiltonian into another Hamiltonian at each step, tracing a trajectory in the configuration space of physical theories, i.e.\ the flow.
From this perspective, the Hamiltonian describes long-distance physics for the \ir modes that have survived the partial integration of \uv modes.
The \frg approach, and more specifically that of Wetterich-Morris, follows Wilson's philosophy but focuses on a different object: the \emph{effective action}, denoted by $\Gamma$ (see \Cref{fig0}).

Consider a Euclidean field theory for a field $\Phi(x)$, where $x \in \mathcal{X}$ is either a discrete or continuous variable.
Suppose the partition function is defined by a path integral of the type:
\begin{equation}
    Z[J] = \int [\dd \Phi]\, e^{-H[\Phi]+\int J(x)\, \Phi(x) },
\end{equation}
where $[\dd \Phi] \defeq \prod_x \dd \Phi(x)$ is the usual integration measure of the path integral (see \Cref{part1}) and $\int J(x)\, \Phi(x)$ formally represents the sum of the \emph{source couplings} (discrete or almost continuous)\footnote{%
    Almost continuous, in this context, means that, in a given limit, discrete sums should be replaced by integrals, see \Cref{part1} and below.
}
such that the cumulants of the distribution are generated by a power series expansion in $J$ (i.e.\ cumulants are higher-order derivatives of $\ln Z[J]$ with respect to $J$).
The Gaussian part $H_{\text{kin}}[\Phi]$ is assumed to take the general form:
\begin{equation}
    H_{\text{kin}}[\Phi]
    \defeq
    \frac{1}{2}\int \Phi(x) K_{xy} \Phi(y)+\frac{1}{2}\int m^2 \Phi^2(x),
\end{equation}
where the last term is the ordinary mass term, and the coupling $m^2$ is called \emph{the mass}.
The symmetric kernel $K_{xy}$, on the other hand, has positive spectrum $\Lambda = \{\zeta_{\mu}\}$.
It is useful to work in a basis where $K_{xy}$ is diagonal.
Defining the normalized eigenvector $u_x^{(\mu)}$ (i.e.\ the standard Fourier basis in physics, though we shall keep it as general as possible in this section) such that:
\begin{equation}
    \int_y K_{xy} u_y^{(\mu)} = \zeta_\mu u_x^{(\mu)}.
\end{equation}
Then, because of the normalization of eigenvectors, the kinetic action becomes (remember the shift \eqref{eq:mass_largest_eigenvalue}):
\begin{equation}
    H_{\text{kin}}[\Phi]
    =
    \frac{1}{2} \sum_\zeta \Phi(\zeta) (\zeta+m^2) \Phi(\zeta).
\end{equation}
with:
\begin{equation}
    \Phi(\zeta_\mu) \defeq \int_x \Phi(x) \,u_x^{(\mu)}.
\end{equation}
The spectrum of the Gaussian kernel, which is essentially the inverse of the Fisher information metric when interactions (i.e.\ the non-Gaussian terms in the Hamiltonian) are set to zero, provides a non-trivial notion of scale.
In particular, a Wilsonian renormalization group can be constructed via a partial integration procedure over \uv modes, leaving the long-distance \ir physics unchanged.
Thus, the spectrum of the Gaussian kernel defines a preferential ordering, which enables the formal definition of this partial integration procedure.

\begin{figure}[t]
    \centering
    \begin{tikzpicture}[scale=1.25, every node/.style={scale=1.25}]
    \draw[draw=black!50, fill=cyan!15] (0, 0)
    node {\begin{tabular}{c}{\large \uv scale} \\[1em] $H[\Phi]$\end{tabular}}
    ellipse (3em and 4em);
    \draw[draw=black!50, fill=cyan!35] (4.5, 0)
    node {\begin{tabular}{c}{\large \ir scale} \\[1em] $\Gamma[M]$\end{tabular}}
    ellipse (3em and 4em);

    \draw[-{Triangle[width = 18pt, length = 8pt]}, draw=red, fill=none, line width=10pt] (1.25, 0) -- (3.25, 0)
    node[midway, below, text=red] {\Large \rg};
    \draw[dashed, -Latex, thick, red] (1.15, 1) .. controls (1.75, 2) and (2.5, 2) .. (3.5, 1);

    \draw[red, thick] (1.86, 1.5) node[below right, xshift=-10pt, text=red] {\enquote{scale $\ell$}} -- (1.86, 1.9) node[above, text=red] {$\Gamma_{\ell}[M]$};
\end{tikzpicture}
    \caption{%
        Illustration of the \rg map from \uv physics described by the Hamiltonian $H$ to long range physics described by the effective action $\Gamma$.
    }
    \label{figRG2}
\end{figure}

To construct the Wetterich-Morris formalism, we must first define the effective action $\Gamma$.
The effective action is defined as the Legendre transform of the free energy (the generating functional of the cumulants, or connected correlation functions) $W[J] \defeq \ln Z[J]$:
\begin{equation}
    \Gamma[M] = \int J(x)\, M(x) - W[J].
    \label{GammaDef}
\end{equation}
The effective action, also known as the \emph{generating functional of 1-particle irreducible} (\onepi) diagrams in physics, physically corresponds to the classical action of the classical field
\begin{equation}
    M(x) \defeq \langle \Phi(x) \rangle,
\end{equation}
though one should rather speak here of a \enquote{classical Hamiltonian}:\footnote{%
    In all these relations, if $x$ is a continuous variable, the formal partial derivatives must be replaced by functional derivatives.
}
\begin{equation}
    M(x)
    \defeq
    \frac{1}{Z[J]}\frac{\partial }{\partial J(x)} Z[J] \equiv \frac{\partial }{\partial J(x)} W[J],
\end{equation}
in the sense that:
\begin{equation}
    \frac{\partial}{\partial M(x)} \Gamma[M]= J(x),
    \label{eqstate}
\end{equation}
a relation equivalent to the one that fixes the saddle point of the path integral:
\begin{equation}
    \frac{\partial}{\partial \Phi(x)} H[\Phi]= J(x).
    \label{eqbare}
\end{equation}
Formally, \eqref{eqstate} and \eqref{eqbare} are similar: the latter determines a solution that does not account for fluctuations and will only be physically correct if these fluctuations are negligible.
In contrast, \eqref{eqstate} integrates all quantum fluctuations.
It provides an exact effective description, which is why we call it an \enquote{effective action} (or effective Hamiltonian).
In \Cref{part1}, we saw for example that the saddle-point approximation for the $\Phi_D^4$ theory is only valid in high dimensions $D>4$, whereas in low dimensions, fluctuations are too large.
The \emph{complete} physics is therefore captured by $\Gamma$ in that case.

\begin{figure}[t]
    \centering
    \begin{tikzpicture}
    \begin{axis}[
            width=10cm,
            height=7cm,
            axis lines=middle,
            axis line style={-Latex},
            xlabel={$\zeta$},
            ylabel={$R_{\ell}(\zeta)$},
            xlabel style={right, xshift=2pt},
            ylabel style={above, yshift=2pt},
            xmin=-0.5, xmax=5.0,
            ymin=-0.5, ymax=5.0,
            ticks=none,
            legend pos=north west,
            legend cell align=left,
            legend style={fill=white, fill opacity=0.85, draw=gray!50, text opacity=1, font=\footnotesize},
            smooth,
            thick,
            no marks,
            every axis plot/.append style={line width=1.2pt},
            declare function={
                    erf_approx(\x) = 1 - 1 / (1 + 0.278393*\x + 0.230389*\x^2 + 0.000972*\x^3 + 0.078108*\x^4)^4;
                    erf(\x) = (\x >= 0 ? erf_approx(\x) : -erf_approx(abs(\x)));
                }
        ]
        \addplot[
            domain=0.0:4.75,
            samples=200,
            color=blue,
        ]
        {2*erf(-(x-2)/0.3)+2};
    \end{axis}

    \node[anchor=center] at (0.36, 4.4) {\large $\Lambda_{\text{\uv}}$};
    \draw[dashed, black!80] (3.75, 0.4) node[below] {$\ell$} -- (3.75, 4.79);
\end{tikzpicture}
    \caption{Typical behaviour of the cut-off $R_{\ell}(\zeta)$.}
    \label{Rkshape}
\end{figure}

Instead of focusing on the Hamiltonian of the \enquote{remaining} \ir \dof, Wetterich's approach focuses on the effective action of the \uv \dof, i.e.\ those integrated out by the \rg procedure, by introducing a scale-dependent modification.
Wetterich's modified effective action, $\Gamma_{\ell}$, depends on an IR scale $\ell \in [0, \Lambda_{\text{\uv}}]$ and interpolates between the initial \emph{microscopic} (\uv) Hamiltonian $H[M]$ (where $M \equiv \Phi$) as $\ell \to \Lambda_{\text{\uv}}$, and the effective action $\Gamma[M]$, given by \eqref{GammaDef}, as $\ell \to 0$ (see \Cref{figRG2}).
The scale $\Lambda_{\text{\uv}}$ is an intermediate \uv scale where the field theory is assumed to be a good approximation of a more fundamental reality (a mesoscopic scale in the example of the Ising model discussed in \Cref{part1}).
To explicitly construct this effective action, the \frg modifies the Hamiltonian by adding a mass-like term $\Delta H_{\ell}[\Phi]$:
\begin{equation}
    H[\Phi] \to H[\Phi] +
    \underbrace{\frac{1}{2} \sum_{\zeta} \phi(\zeta) R_{\ell}(\zeta) \phi(\zeta)}_{\defeq\Delta H_{\ell}[\Phi]}.
\end{equation}
The function $R_{\ell}(\zeta)$ freezes large-scale fluctuations ($\zeta > \ell$), which acquire a large mass, whereas the \uv modes ($\zeta < \ell$) maintain a small mass and are integrated out.\footnote{%
    Physically, larger masses suppress fluctuations, as can be seen from the Gaussian distribution $e^{-m^2 x^2/2}$, which becomes more and more concentrated around $x=0$ as $m^2$ increases.
}
More precisely, the following conditions must hold:
\begin{enumerate}
    \item $R_{\ell}(\zeta) \sim \Lambda_{\text{\uv}} \gg 1$ as $\ell \to \Lambda_{\text{\uv}}$.
    \item $R_{\ell}(\zeta) \to 0$ as $\ell \to 0$.
    \item $R_{\ell}(\zeta) \ll 1$ as $\zeta \gg \ell$.
    \item $R_{\ell}(\zeta) \gg \ell$ as $\zeta \ll \ell$.
\end{enumerate}
The typical shape of $R_{\ell}(\zeta)$ is shown in \Cref{Rkshape}.
The last two conditions ensure that \ir modes are frozen out from the long-distance physics (corresponding to small eigenvalues $\zeta$) while \uv modes are integrated out.
The first two conditions, moreover, ensure that the boundary conditions:
\begin{enumerate}
    \item $\Gamma_{\ell}[M] \to H[M]$ for $\ell \to \Lambda_{\text{\uv}}$.
    \item $\Gamma_{\ell}[M]\to \Gamma[M]$ for $\ell \to 0$.
\end{enumerate}

We can formally show these conditions as follows.
To begin, let us provide the exact expression for the functional $\Gamma_{\ell}[M]$ in the eigenbasis of $K$:
\begin{equation}
    \Gamma_{\ell}[M]+\Delta H_{\ell}[M]=\sum_\zeta \, J(\zeta) M(\zeta)-W_{\ell}[J],
    \label{defGammak}
\end{equation}
where $W_{\ell}[J] \defeq \ln Z_{\ell}[J]$, the partition function $Z_{\ell}$ being calculated with the modified Hamiltonian $H+\Delta H_{\ell}$.
The second condition is evident: as $\ell\to 0$, $R_{\ell}(\zeta)\to 0$ and $\Gamma_{\ell}[M]$ reduces to $\Gamma[M]$.
For the \uv limit, it should be noted that the definitions imply, by expressing $W_{\ell}[J]$ in terms of $\Gamma_{\ell}[M]$:
\begin{equation}
    \exp \left({\sum_\zeta \, J(\zeta) M(\zeta)-\Delta H_{\ell}[M]-\Gamma_{\ell}[M]}\right)= \int [\dd \Phi]\, e^{-H[\Phi]-\Delta H_{\ell}[\Phi]+ \sum_\zeta \Phi(\zeta) J(\zeta)}.
\end{equation}
Using the fact that, because of the definition \eqref{defGammak}, we have:
\begin{equation}
    J(\zeta)=\frac{\partial \Gamma_{\ell}}{\partial M(\zeta)} + R_{\ell}(\zeta) M(\zeta),
\end{equation}
and the previous equation can be rewritten as:
\begin{equation}
    e^{-\Gamma_{\ell}(M)}=\int [\dd \Phi] e^{-H[\Phi]-\frac{1}{2}\sum_\zeta R_{\ell}(\zeta) (\Phi(\zeta)-M(\zeta))^2+\sum_\zeta \frac{\partial \Gamma_{\ell}}{\partial M(\zeta)}(\Phi(\zeta)-M(\zeta))}.
\end{equation}
$R_{\ell}(\zeta)$ becomes large in the limit $\ell \to \Lambda_{\text{\uv}}$.
The term $\frac{1}{2}\sum_\zeta R_{\ell}(\zeta) (\Phi(\zeta)-M(\zeta))^2$ enforces $\Phi=M$ in the integral (it behaves like a functional Dirac delta).
Hence, we have:
\begin{equation}
    \lim_{\ell\to \Lambda_{\text{\uv}}} e^{-\Gamma_{\ell}(M)} \propto e^{-H[M]},
\end{equation}
and the proposition follows.

The functional $\Gamma_{\ell}$ satisfies a first-order integro-differential equation, known as the \emph{Wetterich's equation}:
\begin{equation}
    \dot{\Gamma}_{\ell}[M]=\frac{1}{2}\sum_\zeta \,\dot{R}_{\ell}(\zeta) \, G_{\ell}(\zeta,\zeta),
    \label{Wett}
\end{equation}
where $\dot{X} \defeq \ell \frac{\dd X}{\dd \ell}$ and $G_{\ell}(\zeta,\zeta^\prime)$ is the two-point function:
\begin{equation}
    G_{\ell}(\zeta,\zeta^\prime) \defeq \frac{\partial^2 W_{\ell}}{\partial J(\zeta) \partial J(\zeta^\prime)},
\end{equation}
which, given the properties of the Legendre transform, corresponds to the inverse of the operator $\Gamma^{(2)}_{\ell}+R_{\ell} \mathds{I}$,
\begin{equation}
    G_{\ell}(\zeta,\zeta^\prime) \defeq (\Gamma^{(2)}_{\ell}+R_{\ell} \mathds{I})^{-1}(\zeta,\zeta^\prime),
\end{equation}
where $\mathds{I}$ is the identity matrix with elements $\delta_{\zeta\zeta^\prime}$ and $\Gamma^{(2)}_{\ell}$ is the second derivative of $\Gamma_{\ell}$:
\begin{equation}
    \Gamma^{(2)}_{\ell}(\zeta,\zeta^\prime) \defeq \frac{\partial^2 \Gamma_{\ell}}{\partial M(\zeta) \partial M(\zeta^\prime)}.
\end{equation}
Notice that in the Wetterich's equation~\eqref{Wett}, the dot derivative is calculated at fixed $M$ rather than at fixed $J$, the two derivatives being related by:
\begin{equation}
    \dot{\Gamma}_{\ell} \vert_{J~\text{fixed}}
    =
    \dot{\Gamma}_{\ell} \vert_{M~\text{fixed}} + \left. \frac{\partial \Gamma_{\ell}}{\partial M} \dot{M}\right|_{J~\text{fixed}}.
\end{equation}
While this is an exact equation, it is practically impossible to solve exactly.
We should remark that this is not a specific feature of the \rg.
Most equations in physics cannot be solved exactly and require approximations.
In the following, we shall discuss the form of these approximations in the specific context of \dft.

\subsection{The Local Potential Approximation}

\dft is a local theory that is distinguished from ordinary field theories by the features of its momentum distribution.
The structure of its interactions justifies the use of standard approximation methods to construct approximate solutions to the Wetterich's equation~\eqref{Wett}.
The simplest and most widespread is the \lpa (\emph{Local Potential Approximation}).
The choice of approximation is intimately linked to the scaling behaviour of the couplings, that is, to their canonical dimensions.
We shall see that from this perspective, in the vicinity of the \mpdistr universality class, \dft essentially behave like ordinary Euclidean field theories in three dimensions, justifying a posteriori the choice of the \lpa.

\subsubsection{The Principles}

\Cref{Wett} is a functional equation describing trajectories in an infinite-dimensional space.
Finding exact solutions for problems of actual interest is notoriously difficult, and the \dft makes no exception.
The most general strategy consists in constructing approximate solutions within different regimes.
These solutions, commonly referred to as \emph{truncations}, essentially project the flow onto a finite-dimensional or otherwise manageable subspace.
The first approximation relies on a derivative expansion, that is, an expansion in powers of the momenta $p_\mu$.
This approximation is notably assumed to be valid in the deep IR ($k \ll 1$) where the \emph{effective average action} (\eaa) becomes:
\begin{equation}
    \Gamma_k[M]
    \defeq
    \frac{1}{2}\sum_{\mu=1}^P Z(k)p^2_\mu\, M(p_\mu)  M(-p_\mu)+\mathcal{U}_k[M]+\mathcal{O}(p_\mu^2 M^3) + \mathcal{O}(p_\mu^3 M^2),
    \label{eq:LPA_basic_exp}
\end{equation}
where the form of the coupling is constrained by the form of the \uv theory space, many interaction terms being forbidden by the perturbative expansion of the \rg.
Clearly, this is a strong assumption.
To rigorously verify it, one would in principle need to construct an explicit trajectory starting from the deep \uv.
However, as we shall see, canonical dimensions diverge in this limit, causing most standard approximations to fail.
We must therefore accept that the flow exists, even if it is not explicitly computable, and instead construct a tractable \ir approximation.
The deep \ir is thus not only the region of interest (where the signal resides), but it also represents the region where most standard approximations apply.
Note moreover that, in a large part of this work, investigations focus on the Gaussian region, and the above argument can be sidestepped, though it is important to identify the limitations of the approach.

The functional $\mathcal{U}_k[M]$ does not involve any powers of $p_\mu$, and following the common terminology in the literature, we shall call it the \emph{local potential}.
The coupling $Z(k)$ denotes the \emph{wave function renormalization}.
The \lpa essentially consists in neglecting this effect by setting $Z(k) \equiv 1$ (which is justified in the vicinity of the Gaussian fixed point and will be discussed in \Cref{part3}), so that within this approximation:
\begin{equation}
    \Gamma_k[M]
    \defeq
    \frac{1}{2}\sum_{\mu=1}^N p^2_\mu\, M(p_\mu)  M(-p_\mu)+\mathcal{U}_k[M].
\end{equation}
In the following, we will specifically consider two approximation schemes for the functional $\mathcal{U}_k[M]$, valid in two different regimes: the Gaussian regime, within the \emph{symmetric} phase, and far from the Gaussian regime, within the \emph{broken symmetry} phase, using the standard terminology related to the physics of symmetry breaking, as discussed in \Cref{part1}.

Let us conclude this section with some remarks on the continuum limit.
With the newly defined expansion, \Cref{Wett} involves a discrete sum over the momentum modes $p_\mu$.
Since our analytical calculations and numerical simulations focus on the \enquote{large-$P$ limit} ($P \gg 1$), where the number of \dof is large, one can replace this discrete sum with a continuous integral by introducing the mode density $\rho(p^2)$, via the substitution
\begin{equation}
    \sum_\mu \to \int \dd p\, \rho(p^2) p.
\end{equation}
However, this continuous approximation disregards the contribution of the isolated mode associated with the eigenvalue $\lambda_+-\lambda_- = 1/m^2$ (the inverse of the renormalized mass).
Within this framework, the \rg flow is best understood as describing the evolution of the effective couplings governing these zero modes.
As a consequence, the infrared scale $k$ does not flow strictly to zero but terminates at the eigenvalue immediately above the zero modes.
Given that the spacing of typical eigenvalues scales as $\mathcal{O}(1/P)$ (which translates to a momentum scale of order $1/\sqrt{P}$), the running scale is bounded from below, meaning that $k \in [\sim 1/\sqrt{P}, \Lambda_{\text{\uv}}]$.
Let $P_c \le P$ denote the effective number of \dof in the nearly continuous spectrum, then the flow equation becomes:
\begin{equation}
    \dot{\Gamma}_k = P_c \int_0^\infty \dd p\, \rho(p^2) p\, \dot{R}_k(p^2) \left( \Gamma^{(2)}_k + R_k \right)^{-1}(p,-p),
    \label{Wett2}
\end{equation}
where the upper integration limit has been extended to $+\infty$, assuming that the momentum window selected by the regulator derivative $\dot{R}_k$ is highly localized.
Note moreover that the $2$-point function is defined as:
\begin{equation}
    G_{k}(p,-p) \defeq \frac{\partial^2 W_k}{\partial J(p) \partial J(-p)} = \left( \Gamma^{(2)}_k + R_k \right)^{-1}(p,-p),
\end{equation}
the $\pm p$ structure reflecting the momentum conservation.

This setup is reminiscent of field theories in finite geometries, which is expected since the total integrated mode density remains bounded: $\frac{1}{P_c} \sum_\mu 1 \sim \int \dd p\, \rho(p^2) p = \mathcal{O}(1)$.
To illustrate this analogy, consider a one-dimensional lattice theory with periodic boundary conditions, where the momentum is quantized as $p_n = \frac{2\pi n}{N}$, and $N$ denotes the number of lattice sites.
The spacing between consecutive modes is then equal to $2\pi/N$, while the total number of modes is $\sum_n 1 = N$, representing the \enquote{volume} of the network, which does not changes under \rg transform.

\subsubsection{Symmetric Phase and Vertex Expansion}

The condition $J=0$ is generally imposed on the source terms after computations.
The symmetric phase is defined as the region of the phase space where an expansion around the solution $M=0$ of the exact \Cref{eqstate} is consistent.
In this scenario, the local potential $\mathcal{U}_k[M]$ takes the form of an expansion in powers of $M$, where each monomial represents a local interaction in the sense defined in \Cref{section1}:
\begin{equation}
    \mathcal{U}_k[M]
    \defeq
    \frac{1}{2} \sum_p u_2 M(p) M(-p) +
    \sum_{n=3}^\infty\, \frac{u_{2n}}{(2n)!P_c^{n-1}}\,
    \sum_{\{p_i \}} \delta \left(\sum_{i=1}^n p_i\right)
    \prod_{i=1}^{2n} M(p_i).
    \label{ultralocalint}
\end{equation}
Note that the power of $P_c$ is chosen such that $1/P_c$ exist, and in particular such that the $P_c \to \infty$ limit can be suitably constructed.

The cancellation of the classical field $M$ ensures the cancellation of all the odd vertex function $\Gamma_k^{(2n+1)}$, with the definition:
\begin{equation}
    \Gamma_k^{(n)}(p_{\mu_1}, p_{\mu_1}, \cdots, p_{\mu_n})
    \defeq
    \frac{\partial^n \Gamma_k[M]}{\partial M(p_{\mu_1})\partial M(p_{\mu_2})\cdots \partial M(p_{\mu_n})}.
    \label{npoints}
\end{equation}
Taking the second derivative for $M$ of the Wetterich-Morris equation \eqref{Wett}, setting $M=0$ on both sides of the equation and using the condition above, we get:
\begin{equation}
    \dot{\Gamma}^{(2)}_{k}(p_{\mu_1},-p_{\mu_1})
    =
    -\frac{1}{2}\sum_{p_\mu} \dot{R}_k(p_\mu^2) G_{k}^2(p_\mu^2) \Gamma_{k}^{(4)}(p_\mu,-p_\mu,p_{\mu_1},-p_{\mu_1}).
    \label{equflow1}
\end{equation}

Let us recall that our inference prescription \eqref{eq:inference_constraint_corr_mat} enforces the condition for the \ir mass $m^2$:
\begin{equation}
    m^2 \defeq \Gamma^{(2)}_{k=0}(0,0),
\end{equation}
as the mass is defined as the zero-mode (largest eigenvalue) of the 2-point function (i.e.\ the \ecm).
Moreover, $1/m^2$ must behave like any eigenvalue under a global dilation of the spectrum.
In other words, $m^2$ must have the same scaling behaviour as the cut-off $\Lambda^2_{\text{UV}}$.
Hence, it follows that the 2-point function at zero momenta,
\begin{equation}
    u_2(k) \defeq \Gamma^{(2)}_{k}(0,0),
    \label{defu2}
\end{equation}
which is the \emph{effective mass at scale} $k$ (notice that $k \ge 0$ in this notation), must have the same scaling as $k^2$ itself, and we define the \emph{dimensionless} mass $\bar{u}_2$ as:
\begin{equation}
    \bar{u}_2(k) \defeq k^{-2} u_2(k).
    \label{defu2bar}
\end{equation}
Following the standard definition in field theory used in \Cref{part1}, this means that the scale dimension of the mass $u_2$ is $2$.

We shall now return to \Cref{equflow1}.
Assuming that $\dot{R}_k$ allows only a narrow window of momenta around $k^2$, and that $k^2 \ll 1$ following the fact that we are only interested in the tail of the spectrum, we expect to be able to neglect the dependence in $p_\mu$ in $\Gamma_k^{(4)}$, replacing $p_\mu$ by $0$:
\begin{equation}
    \dot{\Gamma}^{(2)}_{k}(p_{\mu_1},-p_{\mu_1})
    \approx
    -\frac{1}{2}\Gamma_{k}^{(4)}(0,0,p_{\mu_1},-p_{\mu_1})\sum_{p_\mu} \dot{R}_k(p_\mu^2) G_{k}^2(p_\mu^2).
    \label{equflow1Bis}
\end{equation}
Finally, in the local potential approximation, $\Gamma^{(4)}_k$ must have the form of a local vertex (interaction):
\begin{equation}
    \Gamma^{(4)}_k(p_1,p_2,p_3,p_4)=\frac{u_4(k)}{P_c} \delta (p_1+p_2+p_3+p_4),\label{quarticvert}
\end{equation}
according to the definition \eqref{ultralocalint}.

For the rest of this work, we choose the standard optimised Litim regulator \cite{Litim}:
\begin{equation}
    R_k(p^2) \defeq (k^2-p^2)\theta(k^2-p^2).
    \label{Litim}
\end{equation}
Hence, setting $p_{\mu_1}=0$, and using the definitions \eqref{defu2} and \eqref{defu2bar}, we find:
\begin{equation}
    \dot{u}_2= -\frac{1}{2}\frac{2k^2}{(k^2+u_2)^2} \sum_{p_\mu} \theta(k^2-p^2_\mu) \left. \Gamma^{(4)}_{k}(p_\mu,-p_\mu,p_{\mu_1},-p_{\mu_1})\right|_{p_{\mu_1}=0},
\end{equation}
or, in the continuum limit, considering the overall factor $1/P_c$ in front of the right hand side (coming from the quartic vertex, equation \eqref{quarticvert}):
\begin{equation}
    \dot{\bar{u}}_2
    =
    -2\bar{u}_2-\frac{2u_4}{(1+\bar{u}_2)^2} \frac{1}{k^4} \int_0^k \dd p \, \rho(p^2) p.
    \label{eqfloWu2}
\end{equation}
Similar equations for $u_4$, $u_6$, etc.\ can be obtained by taking higher derivatives of \eqref{Wett}.
Higher-order flow equations for the vertex function $\Gamma_k^{(2n)}$, then, involve $\Gamma_k^{(2n+2)}$, for every order $n$.
To close the system of equations, we must thus truncate the flow, i.e.\ project it into a finite-dimensional subspace, by imposing:
\begin{equation}
    \Gamma_k^{(2\bar{P})}=0,
\end{equation}
up to a given $\bar{P}$.
In what follow, we will consider explicitly the truncation around $\bar{P}=3$, taking into account only local sextic effective interactions (i.e.\ we keep only the relevant couplings in the deep \ir).

\subsubsection{Beyond the Symmetric Phase}

In this section, we extend the formalism to the case of a non-zero vacuum.
This formalism is particularly suitable for investigations in the deep \ir, and we will assume that the vacuum only affects the zero-mode, neglecting the momentum dependence of the classical field $M(p_\mu)$:
\begin{equation}
    M_0(p_\mu)\sim M \delta_{\mu,0}.
    \label{eq:M_zeromode}
\end{equation}
This approximation works well at a large scale, where a symmetry breaking scenario is expected, requiring an expansion around a non-vanishing vacuum $M\neq 0$.
For this reason, we consider the following parameters:
\begin{equation}
    {U_k[\chi]}
    =
    \frac{u_4(k)}{2!} \left(\chi-\kappa(k)\right)^2+\frac{u_6(k)}{3!} \left(\chi-\kappa(k)\right)^3+\dots,
    \label{truncationpotential}
\end{equation}
where the broken-phase couplings $u_4(k)$ and $u_6(k)$ denote the second and third derivatives of $U_k$ with respect to $\chi$, evaluated at the running vacuum $\chi = \kappa(k)$. They differ from the symmetric-phase quartic coupling of \eqref{quarticvert} by a factor that depends on $N$ and on the change of variable $M^2 = 2 N \chi$ introduced below.
where
\begin{equation}
    N \chi \defeq M^2/2.
\end{equation}
We call $\kappa(k)$ the \emph{running vacuum}.
The global normalisation is chosen such that, under condition \eqref{eq:M_zeromode}, we have:
\begin{equation}
    N{U_k[\chi]} \defeq \Gamma_k[M=M_0].
\end{equation}
The $2$-point vertex $\Gamma_k^{(2)}$ is then defined as:\footnote{%
    Notice that in \eqref{eq:LPA_basic_exp}, we introduced the wavefunction renormalization $Z(k)$, which presents a non-vanishing flow equation for $\kappa\neq 0$.
    The curious reader should have a look at the full derivation~\cite{RG5}.
}
\begin{equation}
    \Gamma^{(2)}_{k}(p_\mu,p_\mu^\prime)
    =
    \left.\left(p^2+\frac{\partial^2 U_k}{\partial M^2 }\right)\right|_{M^2=2N\chi} \delta(p_\mu+p_{\mu^\prime}),
    \label{2points2}
\end{equation}
where the second derivative of the potential plays the role of an effective mass.
The flow equation for $U_k$ can be deduced from \eqref{Wett}, setting $M=M_0$ on both sides.
For a large number of \dof, taking the continuum limit, we get:
\begin{equation}
    \dot{U}_k[M]
    =
    \frac{1}{2}\, \int \dd p^2 \, k\partial_k (R_k(p^2)) \rho(p^2) \left( \frac{1}{\Gamma^{(2)}_k+R_k} \right)(p,-p).
    \label{exactRGEbis}
\end{equation}
Then, using \eqref{2points2} and the Litim regulator \eqref{Litim}, we arrive to the formula of the flow of the effective potential~\cite{RG5}:
\begin{equation}
    \dot{U}_k[\chi]=\left(2 \int_0^k \dd p \, \rho(p^2)p \right)\, \frac{k^2}{k^2+\partial_\chi U_k (\chi)+ 2\chi\, \partial^2_{\chi}U_k(\chi)}.
    \label{exactRGEbis_2}
\end{equation}
The flow equations can be deduced using the same strategy as before.
Because the couplings $\kappa$, $u_4$, $u_6$, etc., satisfy the conditions:
\begin{align}
    \left. \frac{\partial U_k}{\partial \chi}\right|_{\chi=\kappa}     & = 0,\label{renmass}        \\
    \left. \frac{\partial^2 U_k}{\partial \chi^2}\right|_{\chi=\kappa} & = u_4(k),\label{rencoupl1} \\
    \left. \frac{\partial^3 U_k}{\partial \chi^3}\right|_{\chi=\kappa} & = u_6(k),\label{rencoupl2}
\end{align}
we can obtain the flow equation for $u_{2n}$ by simply differentiating $n$ times both sides of the equation and setting $\chi=\kappa(k)$.
The flow equation for $\kappa$ is slightly trickier, though we notice:
\begin{equation}
    \frac{\partial \dot{U}_k}{\partial \chi}=-u_4 \dot{\kappa}.
\end{equation}
We will return to the explicit derivation of the flow equation in the next section.

\section{Generalized Scaling and Dimensions}\label{sec:generalized_scaling}
In this section, we analyse the heart of \dft: the dependence of the canonical dimensions on the renormalization scale $k$.
This dependence, inherent to the structure of the data, can be interpreted as a scale dependence of the effective dimension of the background space, which is itself shaped by the data.
This scaling forms the basis of the signal detection approach that we develop in \Cref{part3} of this work.

\subsection{Canonical Dimensions}

The flow equations derived above involve a scale factor taking the form of an integral over the distribution $\rho(p^2)$ in the continuous limit (see, for instance, \eqref{eqfloWu2}):
\begin{equation}
    L(k)=\int_0^k \dd p \, \rho(p^2) p.
\end{equation}
For a power law distribution $\rho(p^2)\sim (p^2)^{\alpha}$, we recover $L(k)\propto k^{2\alpha+2}$.

We also know the ordinary \enquote{time} of the flow
\begin{equation}
    t \defeq \ln k \Rightarrow \dd t = \frac{1}{k} \dd k.
\end{equation}
We can generalise this definition using $L(k)$, replacing the ordinary logarithmic time by a more general notion that absorbs the shape of the spectral measure:
\begin{equation}
    \dd \tau \defeq \dd \ln \left(\int_0^k \dd p \, \rho(p^2) p \right).
    \label{dtau}
\end{equation}
For a power law distribution we get $\dd\tau=(2\alpha+2) \dd t$.
Requiring the explicit $k$-dependence on both sides of the flow equations to cancel, and using the $t$-time scaling dimension of $u_2$ (which is $2$), the quartic coupling must scale as $u_4\sim k^{2-2\alpha}$ in ordinary $t$ time, that is $\dim_t(u_4)=2-2\alpha$.
In $\tau$ time, with $t^{\prime} = \dd t/\dd\tau = 1/(2\alpha+2)$, this becomes $\dim_\tau(u_4) = (1-\alpha)/(1+\alpha)$.
For an ordinary field theory in dimension $D$, $\alpha=(D-2)/2$ and $t'=1/D$, so the $\tau$-time dimension becomes $\dim_\tau(u_4) = (4-D)/D$.
Multiplying by $D$ recovers the standard $t$-time canonical dimension $4-D$ of a local quartic coupling in $\mathds{R}^D$.

In our case, however, we do not deal with power law distributions.
We suppose that the distribution of $p_\mu$ is given by a generic $\rho(p^2)$, which has an immediate impact as if the effective dimension of the space depended on the scale.
Under these conditions, we cannot remove the explicit scale dependence completely.
The optimal strategy is to absorb the residual scale dependence into the linear term of the coupling, allowing the canonical dimension itself to become scale-dependent.
Replacing $t$ by $\tau$ rescales the canonical dimension of $u_2(k)$ by the Jacobian $t^{\prime} \defeq \dd t/\dd\tau$.
We denote the resulting $\tau$-scaling dimension by $\dim_\tau(u_2)$, which by construction is:
\begin{equation}
    \dim_\tau(u_2) \defeq \frac{\dd \ln k^2}{\dd \tau} = 2 t^\prime.
\end{equation}
This way, the contribution proportional to $u_4$ receives the scale-dependent factor $\rho(k^2) k^{-2} (t^\prime)^2$ (the equation will be derived in full detail in the next section).
Getting rid of the explicit scale dependence in the loop term thus amounts to defining the \emph{locally dimensionless coupling} as:
\begin{equation}
    \bar{u}_4
    \defeq
    u_4 \frac{{\rho}(k^2)}{k^2} \left(\frac{dt}{d\tau}\right)^2.
    \label{baru4}
\end{equation}
Using $\bar{u}_4$ instead of $u_4$ in the flow equation for $u_4$ thus introduces a linear contribution
\begin{equation}
    -\dim_\tau(u_4) \bar{u}_4,
\end{equation}
with the dimension obtained by differentiating \eqref{baru4} w.r.t.\ $\tau$ and using $\dd t = t^{\prime} \dd \tau$:
\begin{equation}
    \dim_\tau(u_4)
    \defeq
    -2\left(\frac{t^{\prime\prime}}{t^\prime}+t^\prime\left(\frac{1}{2} \frac{\dd \ln{\rho}}{\dd t}-1\right)\right).
    \label{canonicalu4}
\end{equation}
The flow equation for $u_4$ in turn fixes the $\tau$-dimension of $u_6$.
The local dimensionless sextic coupling is constructed so that the two explicit factors of $\bar{u}_4$ that enter the quartic loop in the $u_6$ flow are absorbed:
\begin{equation}
    \bar{u}_6
    \defeq
    u_6\,k^2 \left( \frac{{\rho}(k^2)}{k^2} \left(\frac{dt}{d\tau}\right)^2\right)^2.
    \label{u6bar}
\end{equation}
Replacing $u_6$ with $\bar{u}_6$ in its own flow equation introduces the linear term $-\dim_\tau(u_6) \bar{u}_6$, with:
\begin{equation}
    \dim_\tau(u_6)
    \defeq
    -\dim_\tau(u_2) + 2 \dim_{\tau}(u_4).
\end{equation}
The same argument generalises to higher couplings: a coupling $u_{2p}$ with $2p$ field insertions is made dimensionless by absorbing the corresponding powers of $\rho(k^2)/k^2$ and $t^{\prime}$, which yields the recurrence
\begin{equation}
    \dim_\tau(u_{2p})
    \defeq
    (2-p) \dim_{\tau}(u_2) + (p-1) \dim_\tau(u_4).
\end{equation}

\subsection{Dimensionless Flow Equations}

In this section, we show how the previous definitions allow us to rewrite the flow equations in an explicitly dimensionless form, both in the symmetric phase and in the broken phase.
In turn, this will be the starting point for numerical investigations of the flow equations, which will be the subject of \Cref{part3,part4}.

\subsubsection{Vertex Expansion in the Symmetric Phase}

We truncate the tower of couplings at order $u_6$ by setting $u_8=0$ in the flow of $\bar{u}_6$, retaining only the quadratic, quartic, and sextic terms in the effective action.
The truncation order is denoted by $\bar{P}=3$ since three couplings ($u_2, u_4, u_6$) are kept, in agreement with the truncation introduced in the previous section:
\begin{equation}
    \begin{split}
        \Gamma_k[M]
        & = \frac{1}{2}\sum_p M(-p) \left( p^2+u_2(k)\right) M(p)
        \\
        & + \frac{u_4(k)}{4!\,P_c}\, \sum_{\{p_i\}} \, \delta\left(\sum_i p_i \right) \prod_{i=1}^4 M(p_i)
        \\
        & +\frac{u_6(k)}{6!\,P_c^{2}}\, \sum_{\{p_i\}}\,  \delta\left(\sum_i p_i \right) \prod_{i=1}^6 M(p_i).
    \end{split}
    \label{truncation1}
\end{equation}
The powers of $P_c$ are chosen so that the couplings $u_4$ and $u_6$ have a well-defined $P_c\to\infty$ limit, with $P_c$ the number of effective \dof in the nearly continuous spectrum.
Using the time flow \eqref{dtau}, the flow equation for $u_2$ becomes:
\begin{equation}
    \bar{u}_2^{\prime}
    =
    -2 t^{\prime} \bar{u}_2
    -
    \frac{2u_4}{(1+\bar{u}_2)^2}
    \frac{{\rho}(k^2)}{k^2}
    \left(t^{\prime}\right)^2,
\end{equation}
and the definition \eqref{baru4} gives us:
\begin{equation}
    \bar{u}_2^{\prime}
    =
    -2 t^{\prime} \bar{u}_2
    -
    \frac{2\bar{u}_4}{(1+\bar{u}_2)^2}.
    \label{flow1}
\end{equation}
The flow equation for the coupling $u_4$ can be computed following the same strategy.
Taking the fourth derivative of the flow equation \eqref{Wett} with respect to $M(p)$ and discarding the odd functions which vanish in the symmetric phase, we obtain:
\begin{equation}
    u_4^{\prime}
    =
    -\frac{2u_6}{(1+\bar{u}_2)^2} {\rho}(k^2)\left(t^{\prime}\right)^2
    +
    \frac{12u_4^2}{(1+\bar{u}_2)^3} \frac{{\rho}(k^2)}{k^2} \left(t^{\prime}\right)^2.
\end{equation}
Then, using \eqref{u6bar}, we get:
\begin{equation}
    \bar{u}_4^{\prime}
    =
    -\dim_\tau(u_4) \bar{u}_4
    -
    \frac{2\bar{u}_6}{(1+\bar{u}_2)^2}
    +
    \frac{12\bar{u}^2_4}{(1+\bar{u}_2)^3}.
    \label{flow2}
\end{equation}
Finally, the truncation closes by setting $u_8=0$ in the flow of $\bar{u}_6$, giving:
\begin{equation}
    \bar{u}_6^{\prime}
    =
    -\dim_\tau(u_6)\bar{u}_6
    +
    60\, \frac{\bar{u}_4\bar{u}_6}{(1+\bar{u}_2)^3}
    -
    108\, \frac{\bar{u}_6^3}{(1+\bar{u}_2)^4}.
    \label{flow3}
\end{equation}

\subsubsection{Beyond the Symmetric Phase}

Using the flow parameter $\tau$ in \eqref{exactRGEbis_2}, we obtain:
\begin{equation}
    {U}_k^\prime[\chi]=k^2 \rho(k^2) \left( t^{\prime} \right)^2\, \frac{k^2}{k^2+\partial_{\chi}U_k(\chi)+ 2\chi \partial^2_{\chi}U_k(\chi)}.
    \label{eq:flowforUk_nobar}
\end{equation}
Then, according to the definitions adopted in the symmetric phase, the scaling of the effective potential has to verify (globally, the denominator must scale as $k^2$).
Comparing \eqref{eq:flowforUk_nobar} with the expression at fixed $\bar\chi$ obtained from the chain rule, we see that the denominator $k^2+\partial_\chi U_k+2\chi\partial^2_\chi U_k$ must be invariant under the rescaling of $U_k$ and $\chi$, which gives:
\begin{equation}
    \partial_{\chi}U_k (\chi)k^{-2}
    =
    \partial_{\bar\chi}\bar{U}_k (\bar{\chi}),
    \qquad
    \chi \partial^2_{\chi}U_k(\chi) k^{-2}
    =
    \bar{\chi}\partial^2_{\bar\chi} \bar{U}_k(\bar{\chi}),
    \label{scaling1}
\end{equation}
leading to:
\begin{equation}
    {U}_k^\prime[\chi]
    =
    \left(t^{\prime}\right)^2\, \frac{k^2\rho(k^2)}{1+\partial_{\bar\chi}\bar{U}_k (\bar{\chi})+ 2\bar{\chi}\partial^2_{\bar\chi} \bar{U}_k(\bar{\chi})}.
    \label{flowforUk}
\end{equation}
\Cref{scaling1} fixes the relative scaling of $U_k$ and $\chi$.
The previous relation also fixes the absolute scaling, which we use to define the dimensionless potential:
\begin{equation}
    {U}_k[\chi]
    \defeq
    \bar{U}_k[\bar{\chi}] k^2\rho(k^2) \left(t^{\prime}\right)^2.
\end{equation}

Finally, we fix the dimension of the field $\chi$.
Let us introduce a scale-dependent rescaling $\chi \defeq A(k) \bar{\chi}$, with the prefactor $A(k)$ to be determined below.
Substituting it in the rescaled potential gives:
\begin{equation}
    {U}_k[\chi]
    \defeq
    \bar{U}_k[A^{-1}{\chi}]k^2 \rho(k^2) \left(t^{\prime}\right)^2.
\end{equation}
Expanding in power of $\chi$ on both sides, we find for the linear term:
\begin{equation}
    \chi \partial_{\chi}U_k[\chi=0]
    =
    \bar{\chi}
    \partial_{\bar{\chi}}\bar{U}_k[\bar{\chi}=0]
    k^2\rho(k^2) \left(t^{\prime}\right)^2,
\end{equation}
or, from \eqref{scaling1}:
\begin{equation}
    \chi \partial_{\chi}U_k[\chi=0]
    =
    \chi \partial_{\chi}U_k[\chi=0] A^{-1} \rho(k^2) \left(t^{\prime}\right)^2.
\end{equation}
Then, using \eqref{scaling1} for the linear term, the two sides differ by the factor $A^{-1}\rho(k^2)(t^{\prime})^2$, and assuming $\chi \partial_{\chi}U_k[\chi=0] \neq 0$ (the linear term does not vanish by hypothesis), this factor must equal $1$, which fixes:
\begin{equation}
    A(k) \defeq \rho(k^2) \left(t^{\prime}\right)^2,
\end{equation}
and:
\begin{equation}
    \bar{\chi} \defeq \frac{1}{\rho(k^2) \left(t^{\prime}\right)^2} \chi.
\end{equation}

Notice that these equations hold in particular for $\chi = \kappa$.
To derive the flow equations for the dimensionless couplings, it is convenient to work with a flow equation at fixed $\bar{\chi}$.
However, the flow equation \eqref{flowforUk} is defined for a fixed $\chi$.
Converting the flow at fixed $\chi$ to a flow at fixed $\bar{\chi}$ introduces two extra terms: the explicit $k$-dependence of the rescaling factor in $\bar U_k$ (giving $\dim_\tau(U_k)\bar U_k$) and the rescaling of the field (giving $-\dim_\tau(\chi)\bar\chi\partial_{\bar\chi}\bar U_k$).
Explicitly, we get:
\begin{equation}
    {U}_k^\prime[\chi]
    =
    \rho(k^2) \left(t^{\prime}\right)^2
    \left[
    {\bar{U}}_k^\prime[\bar{\chi}]+\dim_\tau(U_k)\bar{U}_k[\bar{\chi}] -\dim_\tau(\chi) \bar{\chi} \partial_{\bar{\chi}} \bar{U}_k[\bar{\chi}]
    \right],
    \label{eqLagrangebis}
\end{equation}
where $\dim_\tau(U_k)$ and $\dim_\tau(\chi)$ respectively denote the canonical dimensions of $U_k$ and $\chi$.
The factor $k^2\rho(k^2)(t^{\prime})^2$ in the definition of $\bar U_k$ enters $\dim_\tau(U_k)$, and the factor $\rho(k^2)(t^{\prime})^2$ in the definition of $\bar\chi$ enters $\dim_\tau(\chi)$.
Their explicit expressions are obtained by differentiating the corresponding logarithms:
\begin{equation}
    \dim_\tau(U_k)= t^\prime \frac{\dd}{\dd t} \ln \left(k^2\rho(k^2) \left( t^{\prime}\right)^2 \right),
\end{equation}
and
\begin{equation}
    \dim_\tau(\chi)= t^\prime \frac{\dd}{\dd t} \ln \left(\rho(k^2) \left( t^{\prime}\right)^2 \right).
\end{equation}
The final expression for the effective potential \rg equation is then:
\begin{equation}
    \bar{U}_k^\prime[\bar{\chi}]
    =
    -\dim_\tau(U_k)\bar{U}_k[\bar{\chi}]
    +\dim_\tau(\chi) \bar{\chi} \partial_{\bar{\chi}} \bar{U}_k[\bar{\chi}]
    +\frac{1}{1+\partial_{\bar\chi}\bar{U}_k (\bar{\chi})
    + 2\bar{\chi}\partial^2_{\bar\chi} \bar{U}_k(\bar{\chi})}.
    \label{potentialflow}
\end{equation}

From this expression, it is straightforward to work out the explicit expressions for the coupling constants.
Using the definition \eqref{renmass}, we have
\begin{equation}
    \partial_{\bar\chi}{\bar{U}}^{\prime}_k[\bar{\chi}=\bar{\kappa}]=-\bar{u}_4\, {\bar{\kappa}}^{\prime}.
\end{equation}
Therefore, differentiating equation \eqref{potentialflow}, we obtain the following expression for ${\bar{\kappa}}^{\prime}$:
\begin{equation}
    {\bar{\kappa}}^{\prime}
    =
    -\dim_\tau(\chi) \bar{\kappa}
    +2\frac{3+2\bar{\kappa} \frac{\bar{u}_6}{\bar{u}_4}}{(1+ 2\bar{\kappa}
        \bar{u}_4)^2}
\end{equation}
Taking second and third derivatives, and from the conditions \eqref{rencoupl1} and \eqref{rencoupl2}, we obtain the flow equations of the dimensionless broken-phase couplings:
\begin{align}
    {\bar{u}_4}^{\prime}
     & =
    -\dim_\tau(u_4) \bar{u}_4 +\dim_\tau(\chi) \bar{\kappa} \bar{u}_6-\,\frac{10\bar{u}_6}{(1+2\bar{\kappa} \bar{u}_4)^2}+4\,\frac{(3\bar{u}_4+2\bar{\kappa} \bar{u}_6)^2}{(1+ 2\bar{\kappa} \bar{u}_4)^3},
    \\
    {\bar{u}_6}^{\prime}
     & =
    -\dim_\tau(u_6) \bar{u}_6 -12\, \frac{(3\bar{u}_4+2\bar{\kappa}\bar{u}_6)^3}{(1+2\bar{\kappa} \bar{u}_4)^4}+40\bar{u}_6\,\frac{3\bar{u}_4+2\bar{\kappa}\bar{u}_6}{(1+ 2\bar{\kappa} \bar{u}_4)^3},
\end{align}
where the term $-\dim_\tau(\chi)\bar\kappa\bar u_6$ in $\bar u_4^\prime$ arises from the $\tau$-dependence of the running vacuum $\bar\kappa$, which contributes to the second derivative of the potential through the chain rule.

\subsection{Wave Function Renormalization and Anomalous Dimension}

The previous formalism can be extended to take into consideration the wave function renormalization (\wfr) factor $Z(k)$, which we neglected so far.
This coupling affects the scaling dimensions of the renormalized couplings, which are henceforth defined by:
\begin{equation}
    u_{2n} = Z^n(k) k^{\dim_\tau(u_{2n})} \bar{u}_{2n}.
    \label{defcouplingRenprime}
\end{equation}
Notice that, at a fixed point (asymptotic, in this case) $p_*$, $Z(k)$ obeys a scaling law $Z_*(k) = k^{\eta(p_*)}$, which introduces a new explicit $k$-dependence into the flow equations around this point.
The definition \eqref{defcouplingRenprime} ensures that the flow equations remain autonomous, as per requirement formulated in \Cref{part1}.

The quantity $\eta(p_*)$ is called the \emph{anomalous dimension}.
The general definition (extending beyond the neighbourhood of a fixed point) is:
\begin{equation}
    \eta(k)
    \defeq
    k\frac{\dd}{\dd k} \ln Z(k),
\end{equation}
and one can verify by explicit substitution that this definition ensures that the equations remain autonomous.
The anomalous dimension generally yields a non-trivial contribution to the calculation of critical exponents.
\Cref{defcouplingRenprime} ensures that, for the \emph{renormalized field}, we have:
\begin{equation}
    \bar{\Phi}_R \defeq Z^{-1/2}(k)k^{-\dim_\tau(\phi)} \Phi,
\end{equation}
and the effective propagator becomes:
\begin{equation}
    \langle \bar{\Phi}_R(0) \bar{\Phi}_R(0) \rangle = \frac{1}{\bar{u}_{2}}.
\end{equation}

To discuss the \wfr, it is suitable to modify the standard optimised Litim regulator \eqref{Litim} introduced in the previous section, following~\cite{Litim,RG5}:
\begin{equation}
    R_k(p^2) = Z(k) (k^2-p^2)\theta(k^2-p^2).
    \label{eq:modified_litim}
\end{equation}
The truncation, which we denote by \lpa\!$^{\prime}$ (i.e.\ \lpa plus wave-function renormalization), includes $Z(k)$ explicitly:
\begin{equation}
    \Gamma_k[M]
    \defeq
    \frac{1}{2}\sum_{\mu=1}^{P_c}
    Z(M,k)\,p^2_\mu\, M(p_\mu)  M(-p_\mu)+\mathcal{U}_k[M],
    \label{LPAprime}
\end{equation}
where the function $Z(M,k)$ includes an explicit dependence on $M$.
Assuming the vacuum of the effective potential is $\sqrt{2\kappa}$, we have:
\begin{equation}
    Z(M=\sqrt{2\kappa},k)
    \equiv
    \left.\frac{\dd}{\dd p^2}\Gamma^{(2)}_k(p,-p)\right|_{M=\sqrt{2\kappa},\,p=0}.
\end{equation}
In the symmetric phase, $\kappa=0$ and the vacuum sits at the origin, so the second functional derivative of the effective action at $p=0$ is the second derivative of the potential at the origin, namely $u_2$ (the kinetic term, which is proportional to $p^2$, vanishes at $p=0$).
For $p^2 < k^2$, the modified Litim regulator is active and the regulated propagator factors as $G_k(p^2) = (Z(k)k^2)^{-1}(1+\bar{u}_2)^{-1}$, where the prefactor $(Z(k)k^2)^{-1}$ is absorbed into the rescaling of the dimensionless coupling.
The factor $(1+\bar{u}_2)^{-1}$ then appears explicitly in the loop term of the dimensionless flow equation, as in \eqref{flow1}.
Then, since the internal loop has no dependency with respect to the external momenta, the anomalous dimension must have a vanishing flow:
\begin{equation}
    \dot{Z}=0.
\end{equation}
Assuming that $Z$ depends only on the value of the vacuum, we obtain the running anomalous dimension by differentiating the flow of the two-point function.
The standard derivation takes the $p^2$-derivative of the Wetterich flow at fixed $M=\sqrt{2\kappa}$ and identifies the coefficient of $p^2$ as the flowing $Z(k)$:
\begin{equation}
    \eta(k)
    \defeq
    \frac{1}{Z(k)}k\frac{\dd Z}{\dd k}
    =\frac{1}{Z(k)} \left.\frac{\dd}{\dd p^2}\dot{\Gamma}^{(2)}_k(p,-p)\right|_{M=\sqrt{2\kappa},\,p=0}.
\end{equation}
The flow equation for $\Gamma_k^{(2)}$ can be computed from \eqref{Wett} taking the second derivative with respect to the classical field.
Because the effective vertex are momentum independent in the \lpa, the contributions involving only $\Gamma^{(4)}_k$ have to be discarded from the flow equation for $Z$, as it does not depend on the external generalized momenta $p$.
Therefore:
\begin{equation}
    \dot{Z}
    \defeq
    (\Gamma^{(3)}_{k}(0,0,0))^2 \left. \frac{\dd}{\dd p^2}\sum_{q}\dot{R}_k(q^2) G^2(q^2)G((q+p)^2) \right|_{M=\sqrt{2\kappa}, p=0},
\end{equation}
where, according to the \lpa, we evaluated the right-hand side over uniform configurations.
Computing the $p^2$-derivative of the Wetterich flow and identifying the coefficient of $p^2$ in the limit $\kappa \neq 0$ yields the following closed form:
\begin{proposition}{Anomalous Dimension}{propanomalous}
    The anomalous dimension $\eta(k)$ for $\kappa\neq 0$ is given by:
    \begin{equation}
        \eta(k)
        =
        2(t^\prime)^{-2}\frac{(3\sqrt{2\bar{\kappa}} \bar{u}_4 + (2\bar{\kappa})^{3/2} \bar{u}_6)^2}{(1+2\bar{\kappa} \bar{u}_4)^4} .
        \label{anomalousdim}
    \end{equation}
\end{proposition}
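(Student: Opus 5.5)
The computation will start from the expression for $\dot Z$ displayed just before the statement, and the work splits into two parts. First, the cubic vertex at the running vacuum must be evaluated from the broken-phase potential \eqref{truncationpotential}. Second, the momentum derivative of the loop sum must be evaluated with the modified Litim regulator \eqref{eq:modified_litim}.

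For the vertex, I will write $U$ as a function of $M$ through $\chi=M^2/(2N)$ (normalisation as in \eqref{2points2}). Then
\[
\partial_M^3 U=3\frac{M}{N}\partial^2_\chi U+\frac{M^3}{N^2}\partial^3_\chi U .
\]
At the vacuum $M^2=2N\kappa$, conditions \eqref{rencoupl1}--\eqref{rencoupl2} give $\partial^2_\chi U=u_4$ and $\partial^3_\chi U=u_6$. Absorbing the factors of $N$ into the normalisation $NU_k=\Gamma_k[M_0]$, this yields
\[
\Gamma^{(3)}_k(0,0,0)\propto 3\sqrt{2\kappa}\,u_4+(2\kappa)^{3/2}u_6 .
\]
Its square produces the numerator of \eqref{anomalousdim}. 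By the same chain rule, the effective mass at the vacuum is
\[
\partial_\chi U+2\chi\,\partial_\chi^2U\big|_{\kappa}=2\kappa u_4,
\]
so the regulated propagator for $q^2<k^2$ is
\[
G=\bigl(Zk^2(1+2\bar\kappa\bar u_4)\bigr)^{-1}
\]
after passing to dimensionless variables. This supplies the denominator structure.

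For the loop, with $\dot R_k=Z(2k^2+\eta(k^2-q^2))\theta(k^2-q^2)$ and $\eta$ dropped on the right-hand side at leading order, the product $\dot R_k G^2$ is constant on the shell $q^2<k^2$. The $p^2$-derivative then acts only on $G((q+p)^2)$. For the Litim regulator, $G((q+p)^2)$ is flat inside the ball and equals $(Z(q+p)^2+2\kappa u_4)^{-1}$ outside it. The derivative is therefore a boundary term supported where $|q|\approx k$ and $|q+p|>k$. I will compute this standard Litim boundary contribution in the one-dimensional generalised-momentum geometry, where $q$ ranges over $\mathds{R}$ with density $\rho(q^2)|q|$. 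To linear order in $p^2$, this contribution is $\propto k\,\rho(k^2)\cdot(\text{jump in }G)'$ and evaluates to a factor $\rho(k^2)\,G^4$-type weight times $Z$. Putting the pieces together gives four powers of $(1+2\bar\kappa\bar u_4)^{-1}$ from $G^3$ together with the derivative of the propagator at the edge.

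The last step is to convert everything to dimensionless form. I will substitute $u_4=\bar u_4\,k^2/(\rho(t')^2)$ from \eqref{baru4}, $\kappa=\rho(t')^2\bar\kappa$, and the analogous relation for $u_6$ from \eqref{u6bar}. The combination $\kappa u_4^2$ then becomes $k^4\bar\kappa\bar u_4^2/(\rho(t')^2)$, and the remaining $\rho(k^2)$ from the loop measure cancels it. This leaves a residual $(t')^{-2}$, which accounts for the prefactor $(t')^{-2}$ in \eqref{anomalousdim}; it arises because $\eta$ is defined with respect to $t$ rather than $\tau$. The overall numerical factor 2 is to be checked against the symmetric-measure counting of $\pm q$.

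The main obstacle I expect is the boundary-term computation of $\frac{\dd}{\dd p^2}G((q+p)^2)$ in this non-Euclidean, effectively one-dimensional momentum geometry with density $\rho$. One has to show that it produces exactly $\rho(k^2)$ with the correct numerical factor and no additional $k$-dependence. I would first try the standard approach: expand $\theta(k^2-(q+p)^2)$ to first order in $p$, symmetrise over $p\to -p$ so that the $p^2$ coefficient comes out, and keep only the $\delta$-function term at the shell edge.
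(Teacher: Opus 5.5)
Apart from one step, your outline is the paper's own proof. The following all match:
\begin{itemize}
\item the chain-rule vertex $\Gamma^{(3)}_k(0,0,0)\propto 3\sqrt{2\kappa}\,u_4+(2\kappa)^{3/2}u_6$;
\item the effective mass $2\kappa u_4$;
\item the reduction of the $p^2$-derivative of the loop to an edge term of the Litim-regulated propagator;
\item the dimensionless bookkeeping, in which the $\rho(k^2)$ of the edge term cancels the $1/\rho$ in $\kappa u_4^2$ and leaves $(t')^{-2}$.
\end{itemize}
On the last point, only one of the two powers of $t'$ comes from $\eta$ being a $t$-derivative. The other appears because the edge term carries $\rho(k^2)k^2$ rather than $L(k)=k^2\rho(k^2)t'$.

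The gap is that the edge term itself is not computed, and it is the only source of the prefactor $2$ in \eqref{anomalousdim}. As stated, the recipe you propose for it does not give the right coefficient. Write $G_{\mathrm{out}}(x)=(Zx+\mathcal{M}^2)^{-1}$ for the propagator outside the shell. Since $G$ is continuous at $|q+p|=k$, the first-order term $\propto qp\,\delta(q^2-k^2)$ of the $\theta$-expansion multiplies a vanishing jump. The $p^2$ coefficient comes instead from the kink, i.e.\ from the jump $G'_{\mathrm{out}}(k^2)=-Z(Zk^2+\mathcal{M}^2)^{-2}$ of $\partial_{q^2}G$.
\begin{itemize}
\item Keeping only the $\delta$-term times the first-order change of $G_{\mathrm{out}}$ gives $4k^2G'_{\mathrm{out}}(k^2)\,\delta(q^2-k^2)\,p^2$.
\item The second-order ($\delta'$) term of the $\theta$-expansion contributes $-2k^2G'_{\mathrm{out}}(k^2)\,\delta(q^2-k^2)\,p^2$, so the correct density is half the naive one.
\item This $\delta$ also sits exactly on the endpoints $q=\pm k$ of the support of $\dot R_k$, where it counts with weight $1/2$ each. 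Counting both edges fully (your \enquote{symmetric-measure counting of $\pm q$}) is a second way of getting $4$ instead of $2$.
\end{itemize}

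The paper avoids both pitfalls by working at finite $p>0$ with $I_n(k,p)=\int_{-k}^{k}\mathrm{d}q\,\rho(q^2)\,q\,(q^2)^n\,G_k\bigl((p+q)^2\bigr)$. On the branch $q<0$ one has $|q+p|<k$, so it does not depend on $p$. On $q>0$ one splits at $q=k-p$, and the two moving-limit terms in $\partial_p I_n$ cancel by continuity of $G$. What remains is an integral over $[k-p,k]$ of size $\mathcal{O}(p)$, whose $p$-derivative at $p=0$ comes only from the moving lower limit:
\[
\tfrac12\,\partial_p^2 I_n(k,p)\big|_{p=0}=-\frac{Z\rho(k^2)\,(k^2)^{n+1}}{(Zk^2+\mathcal{M}^2)^2}.
\]
Only one edge is active for each sign of $p$, so the $2$ in \eqref{anomalousdim} is just the $2$ in $\dot R_k\supset 2Zk^2\theta(k^2-q^2)$, not a $\pm q$ multiplicity.

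You also need not drop $\eta$ \enquote{at leading order}. The piece $\eta R_k$ of $\dot R_k$ is proportional to $k^2-q^2$, which vanishes on the edge; in the paper this is the exact identity $k^2I_0''-I_1''=0$.

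Finally, watch the sign. Since $G'_{\mathrm{out}}<0$, the paper's definition $\eta\defeq k\,\mathrm{d}\ln Z/\mathrm{d}k$ gives the displayed expression with a minus sign. The paper's proof passes over this; the stated sign is that of the usual convention $\eta=-k\,\mathrm{d}\ln Z/\mathrm{d}k$. So fix your convention explicitly.
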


\begin{proof}
    We have that
    \begin{equation}
        G_k(p,p^\prime) \defeq G_k(p^2)\delta(p+p^\prime)
        =
        \left[\left(\Gamma^{(2)}_k(p,p^\prime) + R_k(p^2) \right) \delta(p+p^\prime)\right]^{-1}.
    \end{equation}
    The expression of $\Gamma^{(3)}_{k}(0,0,0)$ can be obtained from the definition \eqref{npoints}, taking the third derivative of the effective potential for $M$ and setting $M = \sqrt{2\kappa}$:
    \begin{equation}
        \Gamma^{(3)}_{k}(0,0,0)= 3 u_4 \sqrt{2\kappa} + u_6 (2\kappa)^{3/2}.
    \end{equation}
    Next, as the Litim regulator was modified in \eqref{eq:modified_litim}, we get:
    \begin{equation}
        \dot{R}_k(p^2) = \eta(k) R_k(p^2)+2Z(k) k^2\theta(k^2-p^2)\,,
    \end{equation}
    and
    \begin{equation}
        \frac{\dd}{\dd p^2}\, R_k(p^2) = -Z(k)\theta(k^2-p^2).
    \end{equation}
    In this framework, the diagonal components of the effective propagator reads:
    \begin{equation}
        G_k(p^2)
        =
        \frac{1}{Z(k)p^2+Z(k)(k^2-p^2)\theta(k^2-p^2)+\mathcal{M}^2(\kappa)},
    \end{equation}
    where $\mathcal{M}^2$ denotes the effective mass, i.e.\ the second derivative of the effective potential:
    \begin{equation}
        \mathcal{M}^2(\kappa)
        \defeq
        \partial_{{\chi}}\mathcal{U}_k(\kappa)+ 2{\kappa} \partial^2_{{\chi}}\mathcal{U}_k({\kappa}).
    \end{equation}

    The computation then involves integrals of the form:
    \begin{equation}
        I_n(k,p)
        =
        \int_{-k}^k \dd q \, \rho(q^2) q (q^2)^n G_k\left((p+q)^2 \right).
    \end{equation}
    We focus on small and positive $p$.
    The integral thus decomposes as
    \begin{equation}
        I_n(k,p)=I_n^{(+)}(k,p)+I_n^{(-)}(k,p),
    \end{equation}
    where:
    \begin{equation}
        I_n^{(\pm)}(k,p)=\pm\int_{0}^{\pm k} \dd q \, \rho(q^2) q (q^2)^nG_k\left((p+q)^2 \right) .
    \end{equation}
    Since $p>0$, in the negative branch, we have $(q+p)^2<k^2$ and:
    \begin{equation}
        I_n^{(-)}(k,p)=\frac{1}{Z(k) k^2+\mathcal{M}^2}\; \int_{-k}^0 \dd q \, \rho(q^2) q (q^2)^n,
    \end{equation}
    which is independent of $p$.
    In the positive branch however:
    \begin{equation}
        I_n^{(+)}(k,p)= \frac{1}{Z(k)k^2+\mathcal{M}^2}\, \int_0^{k-p} \dd q \, \rho(q^2) q (q^2)^n + \int_{k-p}^k \dd q \, q (q^2)^n\frac{\rho(q^2)}{Z(k)(q+p)^2+\mathcal{M}^2}.
    \end{equation}
    Hence, taking the first derivative with respect to $p$, we get:
    \begin{equation}
        \begin{split}
            \frac{\dd}{ \dd p}I_n^{(+)}(k,p) =
            & -\frac{1}{Z(k) k^2+\mathcal{M}^2} \rho(q^2) q (q^2)^n\vert_{q=k-p}                            \\
            & + \rho(q^2) q (q^2)^n  \frac{1}{Z(k)(q+p)^2+\mathcal{M}^2}\vert_{q=k-p}                       \\
            & -2Z(k) \int_{k-p}^k \dd q \, \rho(q^2) q (q^2)^n \frac{(q+p)}{(Z(k)(q+p)^2+\mathcal{M}^2)^2}.
        \end{split}
    \end{equation}
    The first two terms cancel exactly.
    We can then consider:
    \begin{equation}
        \frac{\dd}{ \dd p}I_n^{(+)}(k,0)
        =
        -2Z(k) \int_{k-p}^k \dd q \, \rho(q^2) q (q^2)^n\frac{(q+p)}{(Z(k)(q+p)^2+\mathcal{M}^2)^2}.
    \end{equation}
    Taking the second derivative and setting $p=0$, the integration range $[k-p,k]$ collapses to a point, so the second derivative of the integral reduces, by Leibniz' rule, to $-2\,\partial f/\partial q(k,0)$ where $f(q,p)$ is the integrand.
    Computing this derivative, the result is
    \begin{equation}
        \frac{1}{2}\frac{ \dd^2}{\dd p^2} I_n(k,0)
        = - \frac{Z(k) \rho(k^2) (k^2)^{n+1}}{(Z(k) k^2+\mathcal{M}^2)^2}
        \defeq
        I_n^{\prime\prime}(k,0).
    \end{equation}
    Therefore, we get:
    \begin{equation}
        \begin{split}
            Z(k)\eta(k)
            & = \frac{(3 u_4 \sqrt{2\kappa} + u_6 (2\kappa)^{3/2})^2}{(Z(k) k^2+\mathcal{M}^2)^2} \left(2 Z(k) k^2 I_0^{\prime\prime}(k,0) \right. \\
            & + \left. Z(k)\eta(k) (k^2 I_0^{\prime\prime}(k,0)-I_1^{\prime\prime}(k,0)) \right).
        \end{split}
    \end{equation}
    To introduce $\tau$-dimensionless quantities, we notice that both $u_4 \kappa$ and $u_6 \kappa^2$ have the same scaling dimension.
    Finally, using renormalized and dimensionless quantities $\bar{u}_{2p}$, and replacing the effective mass by its value:
    \begin{equation}
        \bar{\mathcal{M}}^2
        =
        \partial_{\bar{\chi}}\bar{U}_k(\bar\kappa)+ 2\bar{\kappa} \partial^2_{\bar{\chi}}\bar{U}_k(\bar{\kappa})=2\bar{\kappa} \bar{u}_4,
    \end{equation}
    we arrive to the expression \eqref{anomalousdim}.
\end{proof}

Notice that to derive this expression we took into account the additional rescaling coming from $Z(k)$ for the renormalized vacuum, which modifies $\bar{\kappa} \to Z(k)^{-1} \bar{\kappa}$ with respect to the strict \lpa definition.
Due to the factors of $Z(k)$ in the definition of the barred quantities, $\eta(k)$ appears in the flow equations.
It corresponds in particular to a redefinition of the canonical dimensions
\begin{equation}
    \dim_{\tau}(u_{2n}) \to \dim_{\tau}(u_{2n})-n t^{\prime} \eta(k)
\end{equation}
with respect to the equations obtained previously for $Z=1$.
$\eta$ thus enters the effective loops because the regulator $R_k(p^2) = Z(k)(k^2-p^2)\theta(k^2-p^2)$ depends on $Z(k)$, so $\dot R_k$ contains a term proportional to $\eta(k)$ that feeds back into the loop integrals.

\begin{remark}{Regulator Bound}{remarkregulator}
    Although the computation is thereby greatly simplified, the factor $Z$ in front of the regulator \eqref{Litim} must not affect the boundary conditions at $k \to 0$ and $k \to \infty$, namely:
    \begin{equation}
        \Gamma_{k \to \infty} \to H, \qquad \Gamma_{k \to 0} \to \Gamma.
    \end{equation}
    In particular, the first of these conditions requires that $R_{k \gg 1} \sim k^r$ for some positive exponent $r$.
    This requirement is automatically satisfied for $Z=1$, where $R_{k \gg 1} \sim k^2$.
    However, the scale dependence of $Z(k)$ can potentially violate this condition.
    This situation may occur when the flow reaches a fixed point with a non-zero anomalous dimension, $\eta_* \neq 0$.
    In this regime, the \wfr scales as $Z(k) \sim k^{\eta_*}$, implying that $R_{k \gg 1} \sim k^{2+\eta_*}$.
    Hence, the requirement $r > 0$ imposes the following constraint:
    \begin{equation}
        \eta_* > -2,
    \end{equation}
    which we refer to as the \emph{regulator bound}.
    This restriction is a limitation entirely connected to the choice of regulator, rather than to the method itself.
    Since the flow equation is intrinsically non-autonomous, exact fixed points are not expected, implying that this criterion requires a more refined definition.
    Generally, the \lpa, which represents the lowest order in the derivative expansion, is reliable only in regimes where $\eta$ remains sufficiently small.
    For $\vert \eta \vert \gtrsim 1$, higher-order corrections become necessary~\cite{balog2019convergence}.
\end{remark}

\subsection{Dimensional Phase Transition}\label{sectionPhaseTrans}

The fact that canonical dimensions are scale-dependent has the immediate consequence that a global fixed point cannot exist.
However, fixed directions may exist, along which the beta functions vanish.
There are also asymptotic fixed points, arising from the fact that, in the deep \ir, the behaviour of the distribution $\rho(p^2)$ follows a power law $\rho(p^2) \sim (p^2)^{\alpha}$.
In other words, in this limit, the flow behaves like an ordinary field theory in the asymptotic dimension $D_0$.
Since moments in $\mathds{R}^D$ are distributed according to $\rho_{\mathds{R}^D}(p^2) \sim (p^2)^{\frac{D-2}{2}}$, this motivates the definition of the \emph{asymptotic dimension}:
\begin{definition}{Asymptotic Dimension of a Distribution}{asymptoticdim}
    For a universal distribution $\mu(\lambda)$ behaving as a power law $\mu(\lambda)\sim(\lambda_{+}-\lambda)^{\alpha}$ in the vicinity of $\lambda_+$, we call
    \begin{equation}
        D_0 \defeq 2\alpha+2
    \end{equation}
    the asymptotic dimension of the distribution.
\end{definition}
The thickness $\delta \lambda \defeq \vert \lambda_{\text{max}}-\lambda_+\vert$ between the largest eigenvalue and $\lambda_+$ can be estimated by the observation that $\mu(\lambda_{\text{max}})\delta \lambda$ must equal $1/P$, the minimal separation from which we can distinguish two eigenvalues \cite{Bouchaud3}.
Setting $\mu(\lambda_+ - \delta\lambda) \sim (\delta\lambda)^\alpha$ and equating to the typical $1/P$ spacing of eigenvalues at the edge, since $D_0 = 2(1+\alpha)$ one obtains:
\begin{equation}
    \delta \lambda \sim P^{-1/(1+\alpha)} = P^{-\frac{2}{D_0}}.
\end{equation}
This is closely related to the concept of \emph{sparsity}, on which we shall return in \Cref{part3}.
In a nutshell, sparsity measures the number of vanishing entries of the matrix, or equivalently the number of clusters in its correlation structure.
A purely random matrix has statistically independent entries, which in graph language corresponds to a complete graph: every pair of variables is correlated, in the sense that the off-diagonal entries of the correlation matrix are non-zero with probability $1$.
The associated spectral density has $D_0 = 3$, as already mentioned in the previous sections.
In the presence of structures, the connectivity of the correlation graph is reduced, clusters are formed, and $D_0$ increases.

The sparsity property is directly linked to eigenvector localization.
For a purely random matrix belonging to the \mpdistr class, each component $u_i$ of a given normalised eigenvector carries approximately the same weight, scaling as $\sim 1/\sqrt{P}$.
The entropy of this vector is thus maximized, and the distribution of its components follows a Gaussian law, the \emph{Porter-Thomas distribution} (see \cite{porter1956fluctuations}):
\begin{equation}
    p(x) = \sqrt{\frac{P}{2\pi}} e^{-P x^2/2}.
    \label{PTdist}
\end{equation}
An increase in sparsity leads to a progressive localisation of the eigenvectors $w_i$ ($i = 1, 2, \dots, P$), with the bulk of their norm (close to $1$) concentrating on a very small number of components.\footnote{%
    This is a phenomenon akin to \emph{Anderson's localisation}, a quantum effect occurring in the physics of disordered conductors where the electronic wave function localises under the influence of randomly distributed impurities.
}
The localisation can be measured using the \emph{Inverse Participation Ratio} (\ipr):
\begin{equation}
    \operatorname{\ipr}(w)
    \defeq
    \sum_{i=1}^P \vert w_i \vert ^4,
\end{equation}
which scales as $1/P$ for delocalised states and remains of order $1$ if the vector localises on a small number of components.
We shall return to the delocalisation issue in \Cref{part3}.

In this limit, we recover the usual result:
\begin{equation}
    \dd \tau = (2\alpha+2) \dd t = D_0\, \dd t.
\end{equation}
A direct calculation further yields:
\begin{equation}
    \dim_\tau(u_4) = \frac{1-\alpha}{1+\alpha} = \frac{4-D_0}{D_0}.
\end{equation}
Since this dimension is a derivative with respect to $\tau$, the canonical dimension in ordinary $t$ time is recovered by multiplying by $D_0$, giving $4-D_0$.
We thus recover the standard canonical dimension of a local quartic coupling in $\mathds{R}^{D_0}$.

\begin{figure}[t]
    \centering
    \includegraphics[width=0.7\textwidth]{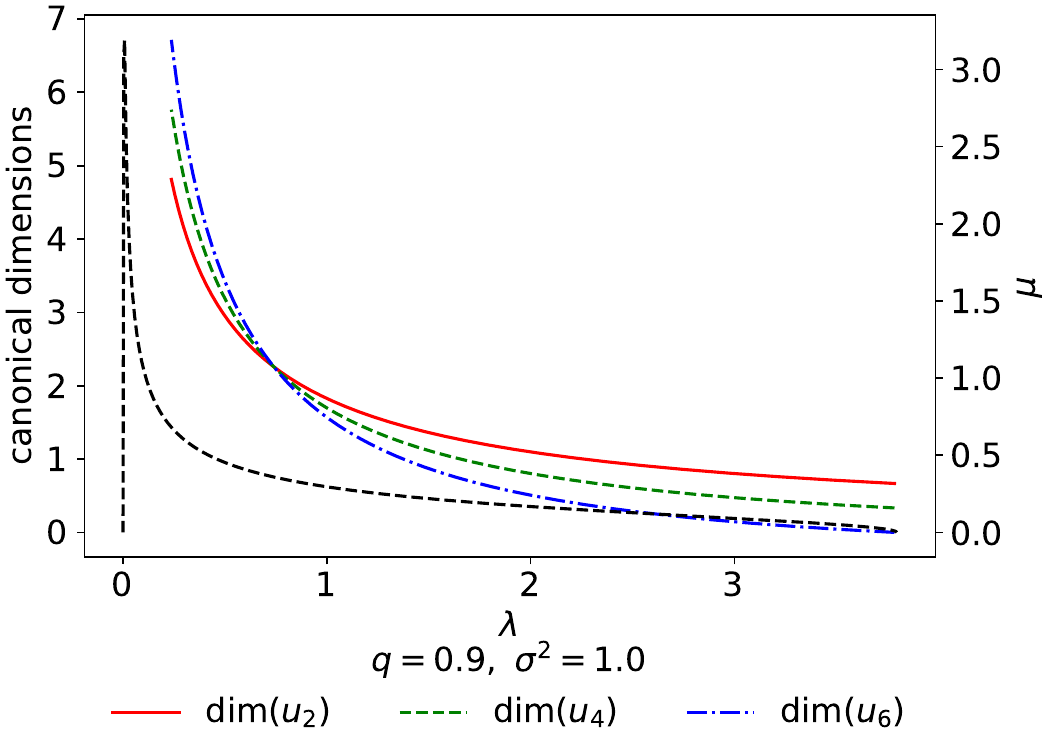}
    \caption{
        Behaviour of the canonical dimensions for the \mpdistr distribution with $\sigma^2 = 1$ and $q=0.9$ (black dashed curve, see \Cref{thm:thMP}).
    }
    \label{MP_fig}
\end{figure}

For the analytical \mpdistr law, we have
\begin{equation}
    \alpha=1/2, \qquad
    \delta \lambda\sim P^{-2/3}, \qquad
    D_0=3.
\end{equation}
Asymptotically, the flow behaves like that of a field theory in three dimensions (note that $D_0=3$ is actually a general feature for convex potentials except at the critical points, see~\cite{Bouchaud3}).
This implies, in particular, that quartic interactions are relevant and sextic interactions are marginal.
Higher-degree interactions are irrelevant in the \ir.
The renormalizable sector is therefore three-dimensional for pure noise.
This is quite counter-intuitive, as one might have expected pure noise to correspond to a Gaussian theory, but this is not the case.
The Gaussian fixed point is unstable because the quartic coupling is relevant in $D_0=3$, so the flow runs away from the Gaussian region, similar to what is encountered in $\Phi^4$ field theory in dimension $D_0 \leq 4$ (see \Cref{part1}).
The overall behaviour of the canonical dimensions for \mpdistr is presented in \Cref{MP_fig}.
We observe that starting from a relatively deep scale in the UV, only quartic and sextic interactions survive within the renormalizable sector (positive dimensions).
Note that in the deep \uv, all interactions are relevant, and their canonical dimensions diverge.
Asymptotically, the theory involves only a small number of relevant parameters and can be studied using standard \frg methods.
For pure noise, non-perturbative methods are required, analogous to those used to describe $\Phi^4$ field theory in $D_0=3$.
Conversely, in the deep \uv, more specific non-perturbative methods must be developed, which are beyond the scope of this review.
Here, we will consistently focus on the \ir, where the signal is assumed to be located.
This \ir zone is roughly bounded by the point where $\dim_\tau(u_8)=0$.
This marks the limit of our analysis in the \uv.

\begin{figure}[t]
    \centering
    \includegraphics[width=0.7\textwidth]{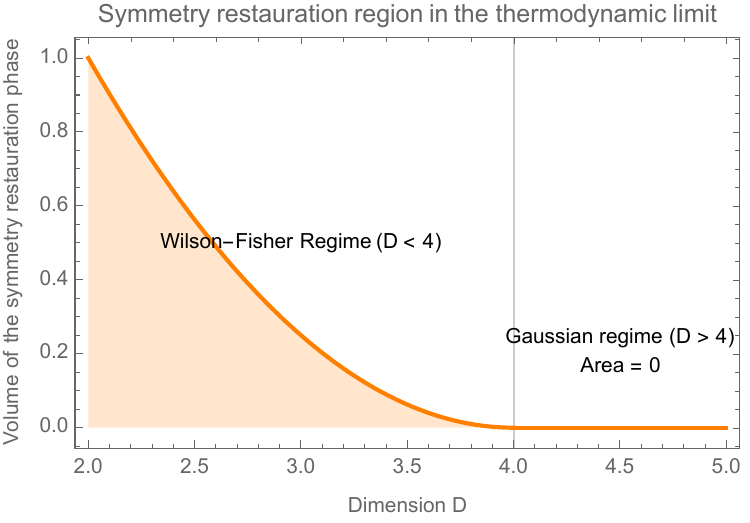}
    \caption{Symmetric phase restoration as a function of the dimension $D_0$ in the thermodynamic limit.}\label{fig_size_RS}
\end{figure}

\begin{figure}[t]
    \centering
    \includegraphics[width=0.95\textwidth]{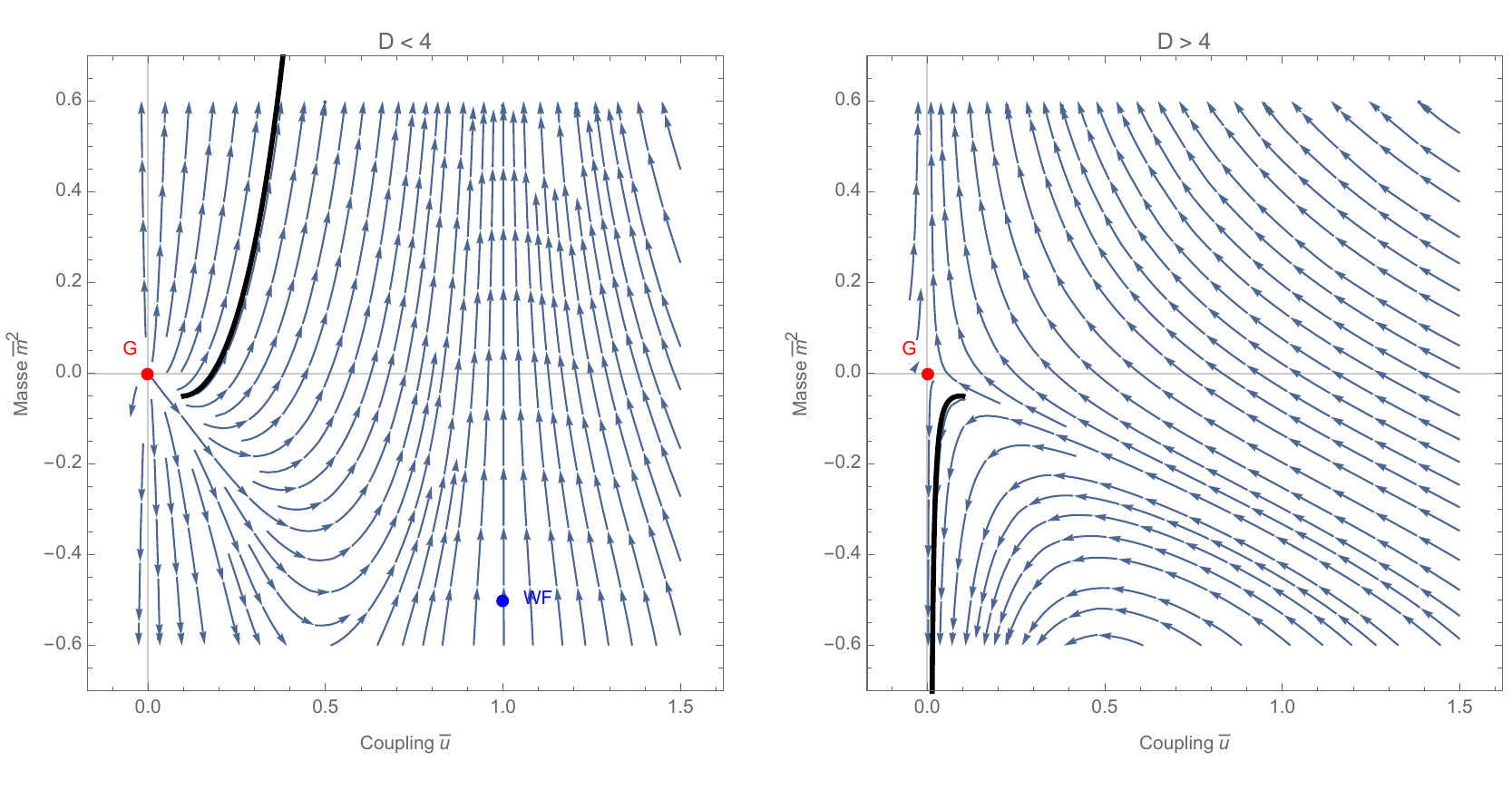}
    \caption{Flow behaviour and symmetry restoration for $D_0<4$ (left) and $D_0>4$ (right). The same trajectory, with initial conditions $(-0.05, 0.1)$, ends up in the symmetric phase in the first case but not in the second.}\label{size_effect}
\end{figure}

As explained in \Cref{sec_invitation} and as will be further discussed in \Cref{part3}, the presence of a signal makes interactions \enquote{less relevant}, notably by lowering the values of the asymptotic canonical dimensions of the sextic and quartic couplings.
This has a consequence on the behaviour of certain \rg flow trajectories by affecting a very specific region of the phase space: the symmetry restoration region.
These are trajectories which start with a negative mass in the \uv and end up in the positive mass region at sufficiently large scales due to interactions.
Thus, on the left side of \Cref{WilsonFisherFig} in \Cref{part1}, some trajectories starting from the $m^2<0$ region (broken symmetry region) end up in the $m^2>0$ zone (symmetric restoration region).
However, the existence of this region depends heavily on the dimension of the space.
Flow equations can be analytically continued in $D_0$, allowing us to follow the behaviour of the flow from $D_0<4$ to $D_0>4$:
\begin{itemize}
    \item In $D_0<4$, there exists a non-trivial \wf fixed point (visible in \Cref{WilsonFisherFig}), which controls the ferromagnetic transition.
          The red line in the figure thus divides the flow into two distinct regimes: one where trajectories end up in the positive mass region (symmetric region), and another where they end up in the negative mass region (broken symmetry region).\footnote{%
              If you do not see to what the symmetric and broken regimes correspond, please read \Cref{part1}.
          }
    \item As $D_0 \to 4$, the upper critical dimension, the \wf fixed point converges toward the Gaussian fixed point, such that in $D_0>4$, the two fixed points have merged, and the Gaussian fixed point now controls the transition.
          In this region, the quartic coupling is irrelevant and thus converges rapidly toward zero at large scales, so that the size of the critical region suddenly tends toward zero, as shown in \Cref{fig_size_RS}.
\end{itemize}
The reason is that even though the dimensionless couplings show the existence of a critical direction in the negative mass region (as shown on the right side of \Cref{WilsonFisherFig}), the effect of the quartic coupling vanishes rapidly, and the critical line shifts along the zero-mass axis in the thermodynamic limit (at large scales).
Thus, trajectories that end up in the symmetric phase when $D_0<4$ end up in the non-symmetric phase when $D_0>4$ (\Cref{size_effect}).
In this specific case, the extent of the symmetry restoration region is fixed by the Gaussian fixed point.
If $m^2_{c}(u)$ denotes the value of the mass along the critical line, the area of the critical region is:
\begin{equation}
    \operatorname{Area}(D_0) = \int_0^{u_*} \dd u \, m^2_{c}(u),
\end{equation}
where $u_*$ denotes the value of the quartic coupling at the \wf fixed point.
Increasing the \snr pushes the system from the noise universality class ($D_0=3$) toward higher effective dimension: the flow, parameterised by the canonical dimension of the couplings, crosses the upper critical dimension $D_0=4$ at which the quartic coupling becomes irrelevant.
The signal, in this language, is the deviation of the canonical dimension from its \mpdistr value, evaluated at the location where $\dim_\tau(u_4)=0$.
Gradually, trajectories that were located in the symmetry restoration phase exit it, inducing a \emph{dimensional phase transition}.

\begin{remark}{Dimensional Phase Transition and Quantum Gravity}{remarkdimtrans}
    The observations of this section can be rephrased in the language of information theory.
    The transition from a Gaussian matrix model to a model featuring an interacting (non-Gaussian) potential is, in this reading, akin to shifting from \enquote{white noise} to \enquote{structured} (or \emph{coloured}) noise.
    The Gaussian action, which maximises entropy under a variance constraint, describes a system of fundamentally uncorrelated matrix elements whose macroscopic state is constrained by the rigidity of the semicircle law.
    Introducing a non-linear matrix coupling into the joint probability distribution (such as $P\, \gamma \mathrm{Tr}(M^4)/4$) breaks this local independence and generates topological correlations between elements.
    This restriction of the accessible phase space corresponds to a decrease in Shannon entropy and, consequently, a gain in structural information.
    A concrete realisation is provided by the quartic matrix model with coupling $\gamma$: at the critical point $\gamma = -1/12$, the system undergoes a phase transition affecting the universality of the spectral edge, and the edge exponent changes discontinuously from $\alpha = 1/2$ for $\gamma > -1/12$ to $\alpha = 3/2$ at $\gamma = -1/12$.
    The value $\alpha = 3/2$ yields an asymptotic dimension of $D_0 = 5$, which, according to the criteria of \Cref{sectionPhaseTrans}, corresponds to the emergence of a signal.
    In this regime, the matrix noise ceases to be a mere statistical fluctuation.

    In some sense analogously, the Feynman diagrams of such interacting random matrix models generate random surfaces, as established in the 1990s by Di Francesco, Kazakov, Parisi, and Zuber~\cite{di19952d}.
    At the critical point $\gamma = -1/12$, these surfaces converge toward a continuous, non-smooth geometry identified with 2D Einstein quantum gravity (or, more precisely, Liouville quantum gravity).
    According to this analogy, the underlying matrix structure manifests as a (Brownian) geometry, and the dimensional analysis presented in this section is consistent with such a geometric interpretation (see the next part).
\end{remark}

\subsection{Learnable Region}

Field theory and the standard approximations for the \eaa $\Gamma_k$, and specifically the vertex expansion or the \lpa, are valid at sufficiently low-energy (\ir) scales, where the tail of the spectrum resides.
\Cref{MP_fig} illustrates the cause of this limitation: in the deep \uv, all couplings become relevant and, worse, diverge.
This corresponds to a \emph{dimensional crisis}, meaning, in \rg language, that an arbitrarily large number of couplings become relevant at high-energy.
In such a regime the flow is no longer predictive since, in other words, an arbitrarily large number of initial conditions for the couplings must be fixed, and the truncations become arbitrarily large.

The theory is then assumed to be valid until some \uv scale, the cutoff $\Lambda_{\text{\uv}}$ introduced in \Cref{def:UV/IR}.
The power counting allows us to fix this limit less arbitrarily than heuristically, and strictly from \rg arguments.
The frontier between the \enquote{good} \ir regime and the \enquote{bad} \uv regime can be conservatively estimated by the scale $\Lambda_0$ where the canonical dimension of $u_8$ vanishes:
\begin{equation}
    \left[ t^{\prime} - \frac{3}{4} \dim_\tau(u_4) \right]_{t=\ln(\Lambda_0)}=0.
    \label{eqcritique}
\end{equation}
The field theory was not designed to be more than an effective model valid at large scales, yet the canonical dimension analysis determines its own domain of validity: the truncation at $\bar{P}=3$ (i.e.\ setting $u_8=0$) is consistent only as long as $u_8$ remains irrelevant, and breaks down at the scale $\Lambda_0$.
Numerically, this limit corresponds for the \mpdistr distribution to the eigenvalue domain $\lambda \sim \lambda_+/3 \equiv \Lambda_{\text{\uv}}$.

The flow thus exhibits two regions:
\begin{itemize}
    \item The \emph{learnable region}, for $k<\Lambda_{\text{\uv}}$.
          In this region, only $u_4$ and $u_6$ are relevant and the theory is effectively predictive: the presence of signal can be learnt.
    \item The \emph{deep noisy region}, for $k \gg \Lambda_{\text{\uv}}$, in which the number of relevant couplings becomes arbitrarily large and the values taken by the dimensions diverge: inference is no longer possible.
\end{itemize}

\part{Generalised Scale Analysis: The Proof of Concept}\label{part3}

This section develops the principles of \gsa as introduced in \Cref{sec_invitation} and numerically demonstrates its validity.
For a family of datasets generated by perturbing a real-world image with additive Gaussian noise, we investigate how the canonical dimensions depend on the \snr.
This construction provides precise control over the \snr while allowing us to isolate the effects of finite-size fluctuations associated with the dimensionality of the data.
We show that a sufficiently strong signal near the \mpdistr class induces a distinct shift in the universality class of the asymptotic \ir flow, accompanied by a breakdown of Porter--Thomas eigenvector statistics.
We demonstrate that this phenomenon can be analysed as a phase transition, in which the signal reduces the phase-space volume within which the symmetry of the \mpdistr class is restored.
Recall from \Cref{part2} that the canonical dimensions quantify the scale dependence of the effective couplings: the dimensional phase transition occurs when all couplings become irrelevant, signalling Gaussian behaviour dominated by the mass term.

These conclusions are supported by robust empirical evidence from an independent benchmark: mapping the signal detection problem onto the estimation of the critical temperature in a non-equilibrium magnetic system.
The critical temperature estimated via \gsa for the 2D Ising model agrees closely with the exact Onsager solution (to within 2\%).
It also matches predictions from standard methods such as Binder cumulant analysis.
\gsa thus outperforms standard signal analysis methods, notably those based on \kl divergence minimisation.

\section{Generalised scale analysis}\label{sec:gsa}
This section introduces the key concepts of \gsa and the signal detection thresholds.
We also describe in detail the methodology used in our experiments.
The Python libraries used for the numerical experiments are freely available at \href{https://github.com/thesfinox/frg-signal-detection}{frg-signal-detection} and \href{https://github.com/ParhamRadpay/Model-A}{Model-A}.

\subsection{Marchenko--Pastur Universality Class and Signal Detection}

The proposed signal-detection method focuses, by construction, on the vicinity of a random matrix universality class, specifically the \mpdistr universality class.
In the previous sections, we showed how the empirical spectral distribution induces a natural power counting for the effective field describing the large-scale correlations of the data, following an \enquote{energy} scale fixed by the spectrum of the \ecm.
Within the \mpdistr universality class, as well as most random matrix classes whose edge behaviour follows the same power law, only a finite number of couplings (the quartic and sextic interactions) survive in the renormalisable sector.
\gsa relies on the observation that the presence of a signal alters the canonical dimensions, thereby modifying the behaviour of the large-scale \rg flow.
In particular, for a sufficiently strong signal (above the critical dimension $D = 4$), all couplings become irrelevant and the flow is then dominated by the mass term.
This behaviour might appear counter-intuitive, as noise is associated with non-Gaussian dynamics while signal pushes the system towards Gaussianity.
In what follows, we show how this dimensional interpretation of universality establishes a boundary between signal and noise within a quasi-continuous spectrum.

\subsection{Empirical Methodology}

For numerical experiments, we rely on the standard Python scientific libraries \texttt{numpy}~\cite{Harris:2020xlr} and \texttt{scipy}~\cite{Virtanen:SciPy10Fundamental:2020}.
Our goal is to solve numerically differential equations of the form:
\begin{equation}
    \dv{f(s, x)}{s} = \mathcal{D}_{\qty{\partial_x, \partial^2_x}}\qty[f](s, x),
\end{equation}
where $\mathcal{D}$ is a differential operator of at most second order acting on the function $f$.

It can be shown that the canonical dimensions are independent of the prior \rg flow history.
They are determined entirely by the position within the spectrum $\rho\qty(p^2)$ (or $\rho_G\qty(p^2)$).
The \rg flow equations \Cref{flow1,flow2,flow3} are integrated numerically using a finite element method:
\begin{equation}
    u_{2n}\qty(k^2 - \Delta k^2)
    =
    u_{2n}\qty(k^2) - \Delta k^2\, \mathcal{R}\qty(u_{2n}\qty(k^2), u_{2(n+1)}\qty(k^2)),
\end{equation}
where $\mathcal{R}$ denotes the right-hand side of the relevant flow equation and $\Delta k^2$ is a small integration step over the spectrum.
The negative sign arises naturally from the direction of the \rg evolution, since the \dof are integrated out from the \uv scales towards the deep \ir limit ($k^2 \to 0$).

\begin{algorithm}[t]
    \caption{Construction of the samples}\label{alg:sample_constr}

    \DontPrintSemicolon
    \SetKwInOut{Input}{input}
    \SetKwInOut{Init}{init}
    \SetKwInOut{Let}{let}
    \SetKwInOut{Output}{output}

    \Input{size of the sample $N > 0$, and ratio $q \in \qty[0, 1]$}
    \Let{$P = \left\lfloor q\, N \right\rfloor$}
    \Input{$\beta \ge 0$}
    \Input{$Z \in \mathds{R}^{N \times P}$ where $Z \sim \mathcal{N}\qty(0, \sigma^2)$}
    \Input{an image $S \in {\qty[0, 255]}^{H \times W \times C}$}

    \If{$C > 1$}{$S_{ij} \gets C^{-1}\, \sum\limits_{c = 1}^C S_{ijc}$ \tcp*{convert to greyscale}}
    $S \gets \frac{S - \left\langle S \right\rangle}{\sqrt{\mathrm{Var}\qty(S)}}$ \tcp*{standardise the image}
    $S \gets \mathrm{resize}(S) \in \mathds{R}^{N \times P}$ \tcp*{interpolate to sample size}
    \Let{$X = \beta\, S + Z \in \mathds{R}^{N \times P}$}
    $\qty(\Sigma, W) = \mathrm{SVD}(X)$ \tcp*{compute singular values and right eigenvectors}
    $\mathrm{E} \gets \mathrm{flatten}\qty(\Sigma^2 / (N - 1))$ \tcp*{convert to covariance eigenvalues}
    $\mathrm{E}^{\prime} = \text{remove\_spikes}\qty(\mathrm{E})$ \tcp*{PCA --- isolated spikes removal}
    $\tilde{\mu}_G \gets \text{histogram}\qty(\mathrm{E}^{\prime})$

    $\mu_G \gets \operatorname{\text{\kde}}\qty(\tilde{\mu}_G)$ \tcp*{interpolation by kernel density}

    \Output{$\rho_G \gets \mu_G\qty(\frac{1}{k^2 + m^2} + \lambda_-) / {\qty(k^2 + m^2)}^2$ \tcp*{momenta distribution}}
    \Output{$W^{\mathsf{T}}$ \tcp*{covariance eigenvectors}}
\end{algorithm}

We build the samples for the analysis using a simple additive model for normally distributed noise with a \snr $\beta \ge 0$:
\begin{equation}
    X = \beta\, S + Z,
    \label{eq:additive_model}
\end{equation}
where $Z \sim \mathcal{N}\qty(0, \sigma^2)$ is an $N \times P$ matrix with \iid normally distributed entries and $S \in \mathds{R}^{N \times P}$ is the centred signal matrix.
Unless specified otherwise, we set $\sigma^2 = 1$ in all our numerical experiments, and
\begin{equation}
    N = \num{2.0e4},
    \quad
    P = \num{1.8e4},
    \quad
    \text{s.t.}
    \quad
    q = \frac{P}{N} = 0.9.
\end{equation}
\begin{remark}{Scale of the Additive Noise Model}{scaleadditivenoise}
    As already pointed out in the introduction, it is worth emphasising that the definition of a \enquote{signal} is fundamentally a matter of scale.
    Prior to any artificial corruption, the data intrinsically contain a certain level of noise.
    Consequently, \eqref{eq:additive_model} is strictly valid only in a regime where the added Gaussian noise dominates this intrinsic noise.
    We assume this throughout this section and return to this point in \Cref{Sec4-1}.
\end{remark}

\begin{figure}[t]
    \centering
    \includegraphics[width=0.15\textwidth]{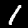}
    \hfill
    \includegraphics[width=0.15\textwidth]{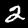}
    \hfill
    \includegraphics[width=0.15\textwidth]{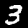}
    \hfill
    \includegraphics[width=0.15\textwidth]{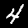}
    \hfill
    \includegraphics[width=0.15\textwidth]{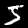}
    \caption{%
        Samples extracted from the \textsc{mnist} dataset~\cite{LeCun:2010:MNIST} and used for numerical evaluations.
    }\label{fig:exp_images}
\end{figure}

\begin{figure}[t]
    \centering
    \includegraphics[width=0.45\textwidth]{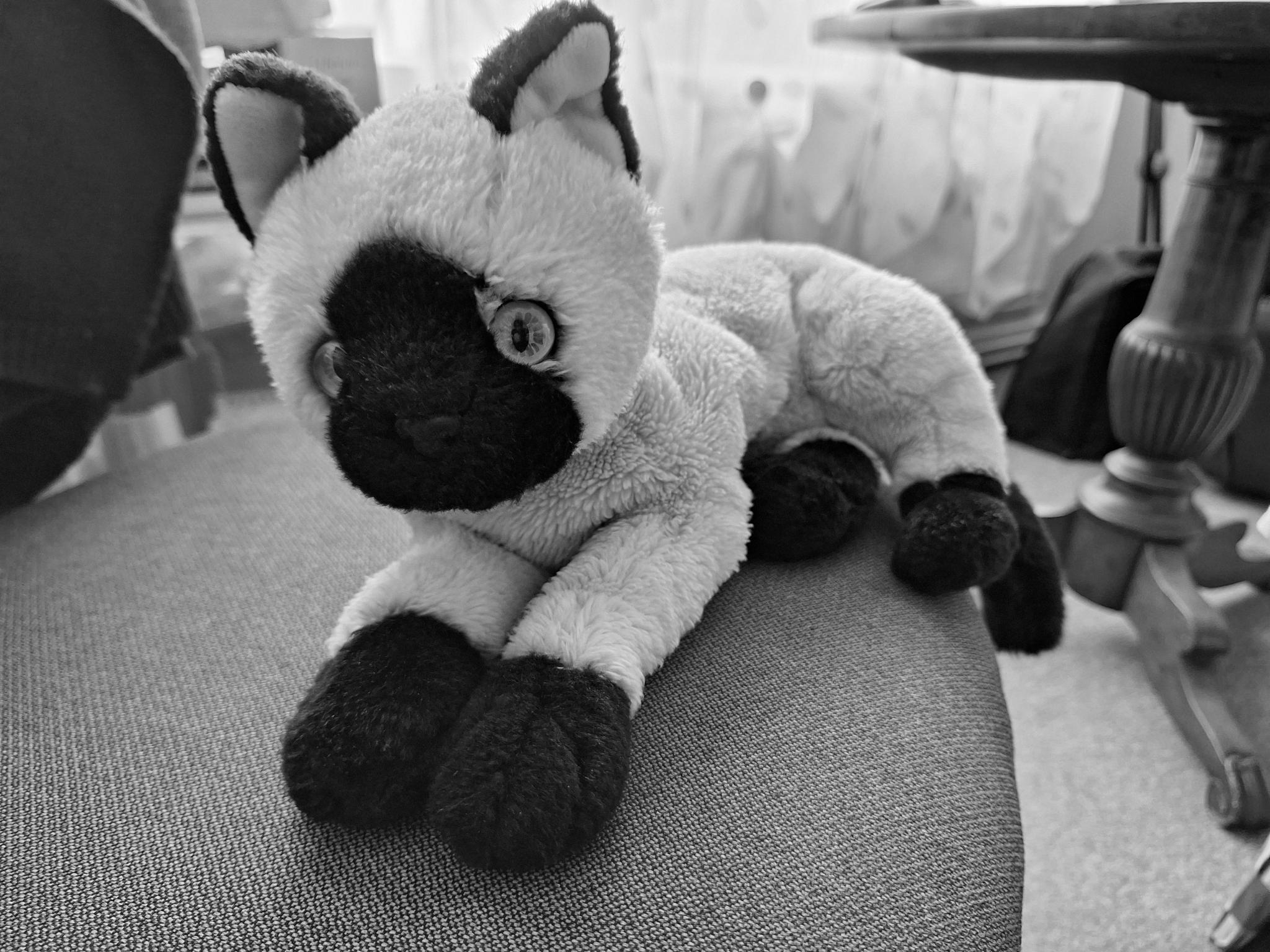}
    \caption{%
        Realistic scenario considered in the analysis: a photo of a (plush) cat with a non-trivial background.
        For simplicity, we consider a monochrome version.
    }\label{fig:gianduja}
\end{figure}

The numerical integration of the \rg equations and the evaluation of the canonical dimensions rely on the empirical momentum distribution generated by \Cref{alg:sample_constr}.
Since our analysis operates on finite-size data (such as images), the intrinsic scale of the empirical distributions requires careful consideration.
These spectra consist of a finite, ordered set of eigenvalues, from which normalised histograms are constructed to approximate the continuous distributions $\rho_G(p^2)$.
The discrete binning used to estimate the density of empirical momenta introduces a discrete \enquote{energy step} into the \rg flow, corresponding to a physical scale difference:
\begin{equation}
    \Delta_{\text{phys}} = P^{-\alpha},
    \label{eq:time_step}
\end{equation}
where the parameter $\alpha \in [0.5, 1)$ can be determined by analysing the distance between isolated spikes and the bulk momentum distribution.
In our numerical experiments, we set $\alpha = 0.5$, matching the theoretical arguments presented in the previous sections.
Below this resolution, the eigenvalues no longer form a dense continuum and start to emerge as isolated states.
The histogram is therefore constructed with a bin width of $\Delta_{\text{phys}}$ and subsequently smoothed via kernel density estimation (\kde) to yield a continuous, analytically tractable curve.
Finally, the momentum density function is derived by computing the probability distribution of the inverse variable and shifting it to the origin:
\begin{equation}
    \rho_G\qty(k^2)
    =
    \frac{1}{{\qty(k^2 + m_{\text{eff}}^2)}^2}\,
    \mu_G\qty(\frac{1}{k^2 + m_{\text{eff}}^2} + \lambda_-),
    \label{eq:empirical_momentum_density}
\end{equation}
where $m_{\text{eff}}^2$ denotes the inverse of the largest eigenvalue, as established in the previous sections.
For our numerical experiments, we use images from the standard \textsc{mnist} dataset~\cite{LeCun:2010:MNIST}, chosen for their inherent simplicity and well-defined structure (see \Cref{fig:exp_images}), together with a real-world image (depicted in \Cref{fig:gianduja}).
The corresponding empirical distributions for various \snr are presented in \Cref{fig:distgianduja}.

\begin{figure}[t]
    \centering
    \includegraphics[width=0.45\textwidth]{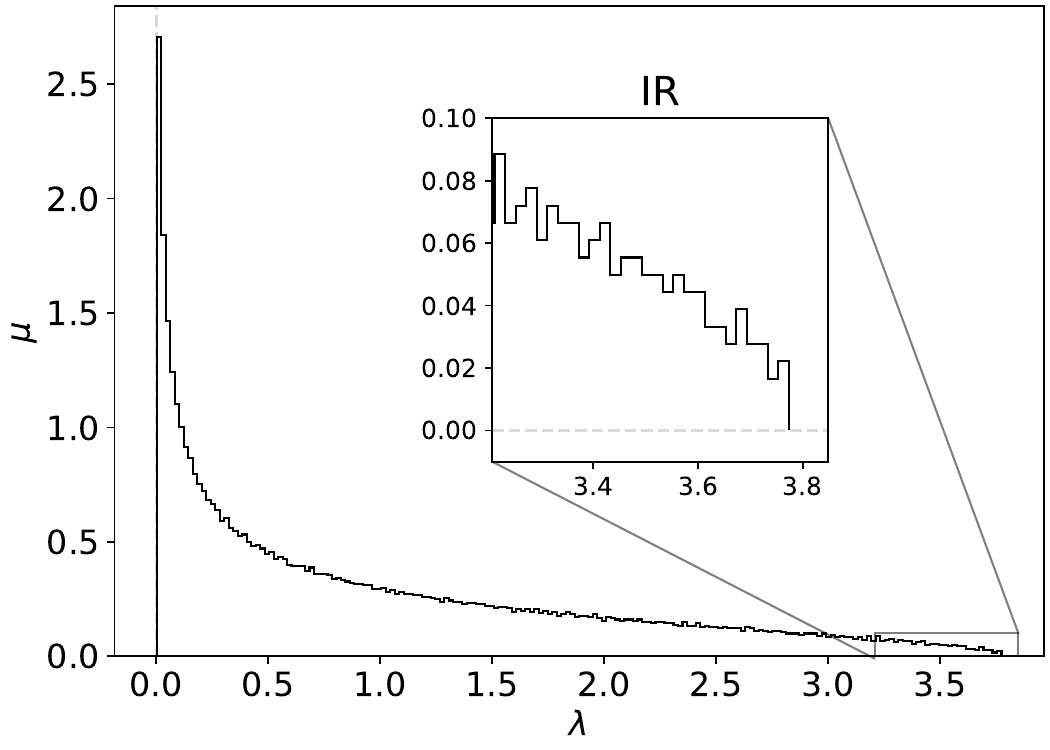}
    \hfill
    \includegraphics[width=0.45\textwidth]{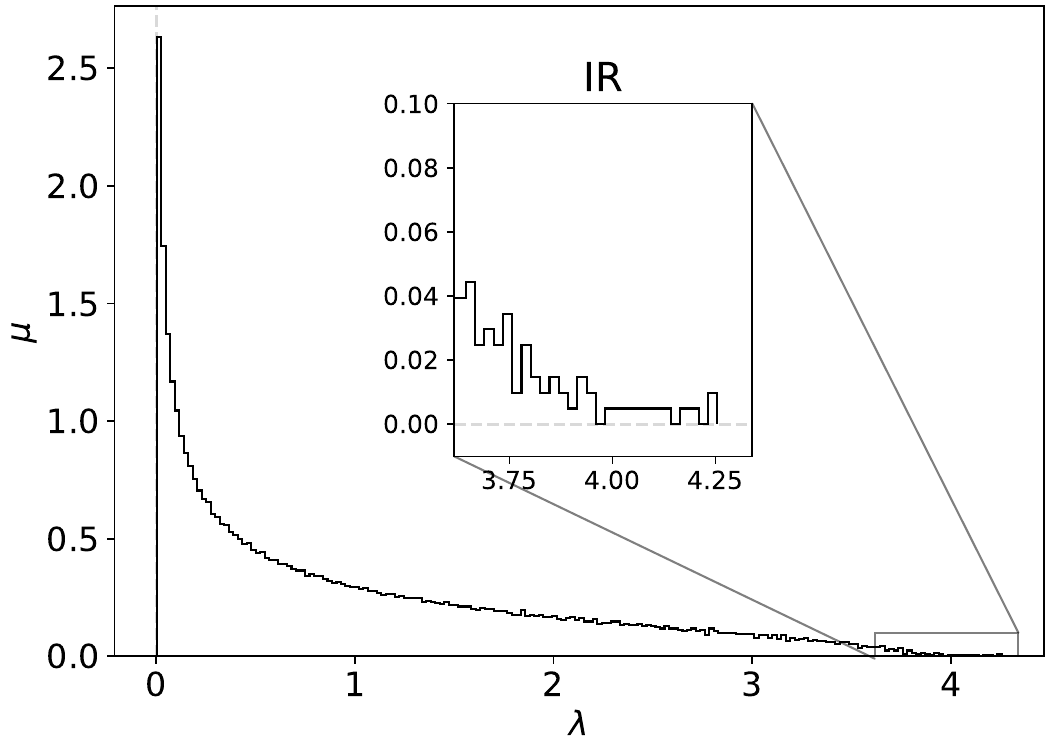}
    \caption{%
        Empirical distribution corresponding to \Cref{fig:gianduja} for $\beta=0$ (no signal, on the left), and for $\beta=0.4$ (on the right).
    }\label{fig:distgianduja}
\end{figure}

Due to the eigenvalue gap \eqref{eq:time_step}, it becomes numerically infeasible to track the evolution of the flow equations or to evaluate the canonical dimensions in the deep \ir limit ($k^2 \to 0$).
Computations must therefore stop at a finite infrared cutoff $k^2_{\text{IR}}$.
This scale is chosen to be as close as practicable to the minimum attainable resolution $\Delta_{\text{phys}}$ while remaining safely shielded from numerical instabilities.
In practice, this stopping scale is set to the midpoint:
\begin{equation}
    k^2_{\text{IR}}
    =
    \frac{k^2_{*} - k^2_{0.5}}{2},
\end{equation}
where $\rho_G\qty(k^2_{0.5}) = 0.5$ and $k^2_{*} = \underset{k^2}{\text{argmax}}\, \rho_G\qty(k^2)$ serve as a pragmatic definition of the near-IR region within the bulk momentum distribution.
A more rigorous justification stems from established results concerning eigenvalue densities~\cite{Zinn2}.
Indeed, since the eigenvalues are confined to a compact support, any structural variation in the distribution propagates across the entire spectrum, coupling the \uv and \ir regimes.
Consequently, for signal detection in nearly continuous spectra, the exact choice of the evaluation scale is largely arbitrary.
Because of the natural density of the eigenvalues, the \enquote{signal signature} in the \ir affects the entire distribution, although it is most readily measured (and best supported by theory) in the \ir region, where the spectral deformations are most pronounced.

In practice, this implies that only relative detection against a chosen \enquote{background} is feasible.
While the flow equations and canonical dimensions can be computed deterministically for the asymptotic \mpdistr distribution, finite-size effects of the empirical data introduce statistical fluctuations.
Nevertheless, as we discussed in the context of universality classes (\Cref{sec:universality_class}), this limitation reveals a powerful feature of the framework.
By evaluating the couplings and canonical dimensions at any energy scale for a \emph{blank} reference (e.g.\ a signal-less sample in chemometrics), one can establish a robust baseline.
The presence of a relevant signal or any structural deformation in the eigenvalue spectrum can then be quantified via the \rg framework as a \emph{distance} from this pure noise background.
The \frg thus acts as a highly sensitive tool for the relative detection of signals against a chosen background distribution.

\begin{remark}{Signal Detection in Extensive-Rank Spectra}{extensiverank}
    Since our primary focus is the behaviour of nearly continuous spectra, we must carefully address the presence of isolated spikes in the momentum distribution.
    Our framework is specifically designed to uncover extensive-rank signals that remain hidden within the (nearly-continuous) bulk spectrum, even after standard \pca has extracted the isolated outliers.
    We consider these spikes to represent the readily accessible, low-rank component of the signal, whereas our approach targets the non-trivial component deeply embedded within the noise.
    Indeed, increasing the signal-to-noise ratio $\beta$ inevitably delocalises the specific eigenvectors that carry the majority of the signal, a phenomenon that we examine in detail in the later sections of this article.
\end{remark}

To isolate the continuous distribution of eigenvalues, we systematically prune the momentum spectrum by discarding these spikes.
This is achieved by scanning the eigenvalues from the \ir to the \uv regime and computing the spacing between adjacent eigenvalues.
This procedure allows us to pinpoint the exact index corresponding to the spectral edge of the bulk:\footnote{%
    As emphasised in the previous sections, our computations rely on dimensionless parameters, such as $\Delta_{\text{phys}}$.
    Throughout these sections, we use this parameter interchangeably to quantify distances between both eigenvalues and momenta, since its numerical value, defined in \Cref{eq:time_step}, remains identical in both contexts.
}
\begin{equation}
    \mu_{\text{bulk}}
    =
    \underset{\mu \in \qty[0, P-1]}{\text{argmin}}
    \qty{\max\qty(0, \qty(\lambda_{\mu} - \lambda_{\mu+1}) - \tilde{\Delta}_{\text{phys}})},
\end{equation}
where $\tilde{\Delta}_{\text{phys}}$ may differ from $\Delta_{\text{phys}}$, depending on the specific continuum limit under consideration.
Standard results from \rmt suggest a scaling $\tilde{\Delta}_{\text{phys}} = \mathcal{O}\qty(1/P)$, since the typical eigenvalue spacing for Wigner matrices scales as $1/P$.
Under this assumption, the spectrum generally partitions into distinct continuous components, the most prominent being the bulk.
Varying the definition of the continuum limit directly impacts the identification of both the bulk and the isolated spikes.
Our numerical experiments demonstrate that setting $\tilde{\Delta}_{\text{phys}} = P^{-0.8}$ is empirically appropriate and aligns with the rationale behind \Cref{eq:time_step}.
We also explored alternative definitions, including a linear dependence on $\beta$ (introduced conservatively, since $\beta$ is typically unknown in practice), but these variations had no significant effect on our overall conclusions.

Following the discussion above, we restrict the set of eigenvalues to:
\begin{equation}
    \Lambda = \qty{\lambda_{\mu_{\text{bulk}}} \ge \lambda_{\mu_{\text{bulk}}+1} \ge \dots \ge \lambda_P},
\end{equation}
since $\mu_{\text{bulk}} \ge 0$ by construction.
While reducing the cardinality to $|\Lambda| \le P$ could in principle introduce additional finite-size effects, we find no such artefacts for $P \gg 1$ beyond those already addressed by the previous procedure.
Indeed, across the range of $\beta$ values explored in this study, we numerically observe only a few tens of isolated spikes.
Removing them has a negligible impact on the bulk distribution, which still comprises $P = \num{1.8e4}$ \dof.
Conceptually, this procedure is equivalent to analysing the residual eigenvalue spectrum after the dominant spikes have been extracted via standard \pca (see \Cref{rem:extensiverank}).

\subsection{Detection Thresholds}\label{detectionTh}

We establish a hierarchy of thresholds ($\beta_t < \beta_c < \beta_O$) that defines the \emph{limit of detection} (\lod) for extensive-rank signals.
The notion of \lod plays a central role in many practical applications, where identifying the presence of a signal is essential for calibrating instruments or interpreting the results of subsequent analyses~\cite{Armbruster2008}.
Such a detection threshold is typically defined via Gaussian statistics for univariate calibration models.
However, unlike traditional univariate definitions, this field-theoretic criterion is model-agnostic, relying solely on the universal breakdown of the noise sector rather than on specific priors regarding signal structure.
We contrast this regime with the standard \bbp transition~\cite{math3} for isolated spiked models, demonstrating that the field-theoretic approach remains sensitive within the spectral bulk, where conventional methods fail.
This provides a dual perspective on the bimodal connected phase identified in~\cite{Landau2023}.

\begin{figure}[!ht]
    \centering
    \begin{minipage}[t]{0.45\textwidth}
        \centering
        {Analytic computation (\mpdistr)} \\[0.25em]
        \includegraphics[width=\linewidth]{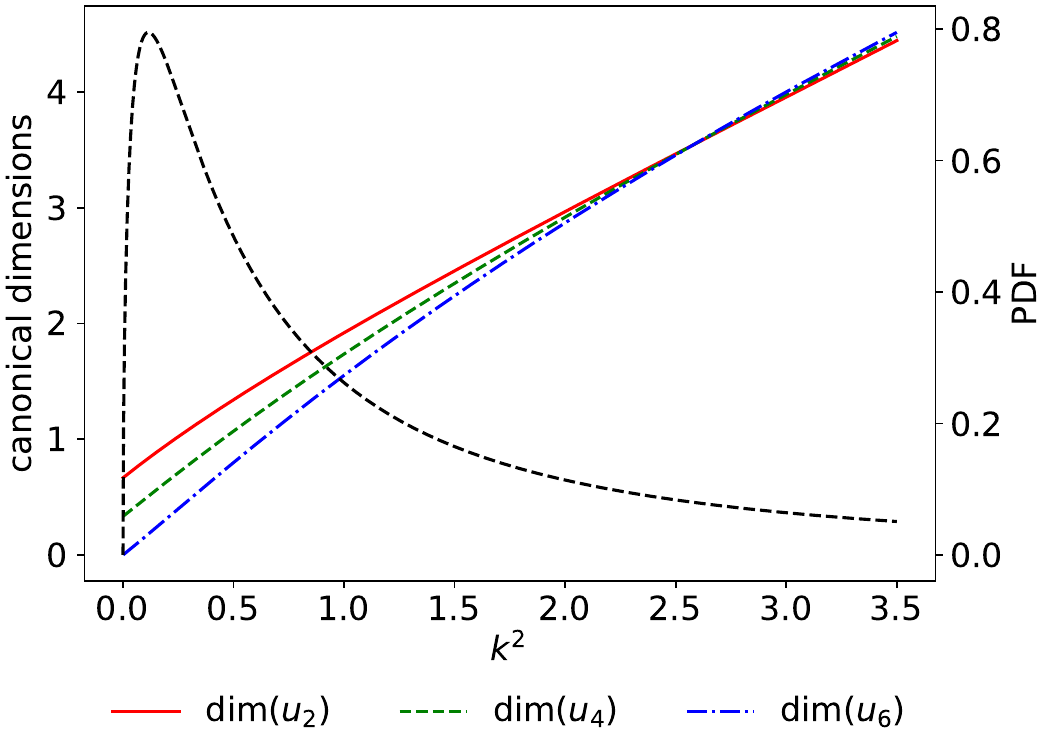}
    \end{minipage}
    \hfil
    \begin{minipage}[t]{0.45\textwidth}
        \centering
        {Numerical simulation ($\beta = 0$)} \\[0.25em]
        \includegraphics[width=\linewidth]{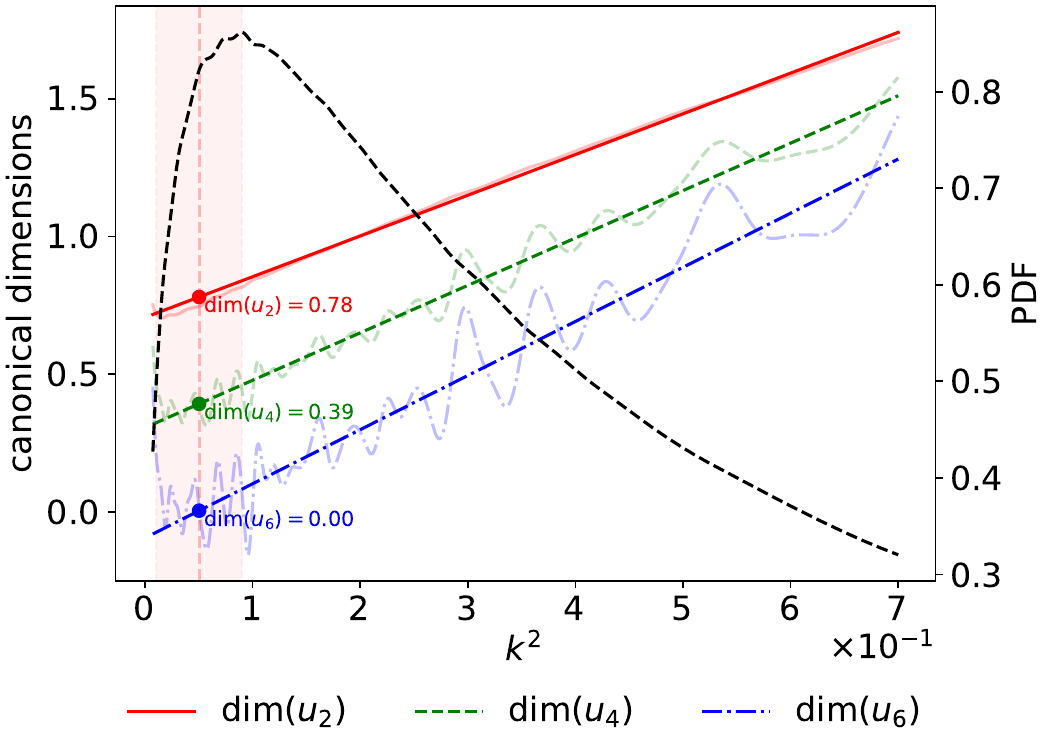}
    \end{minipage}
    \\[0.25em]

    \begin{minipage}[t]{0.45\textwidth}
        \centering
        {Numerical simulation ($\beta = 0.1$)} \\[0.25em]
        \includegraphics[width=\linewidth]{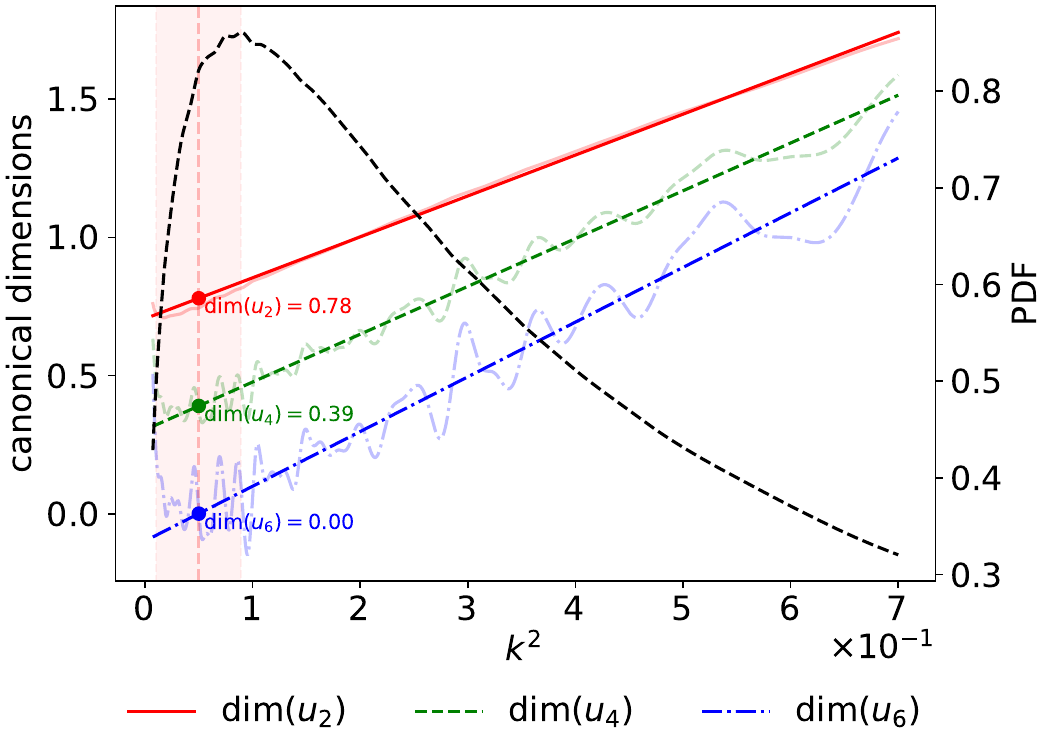}
    \end{minipage}
    \hfil
    \begin{minipage}[t]{0.45\textwidth}
        \centering
        {Numerical simulation ($\beta = 0.2$)} \\[0.25em]
        \includegraphics[width=\linewidth]{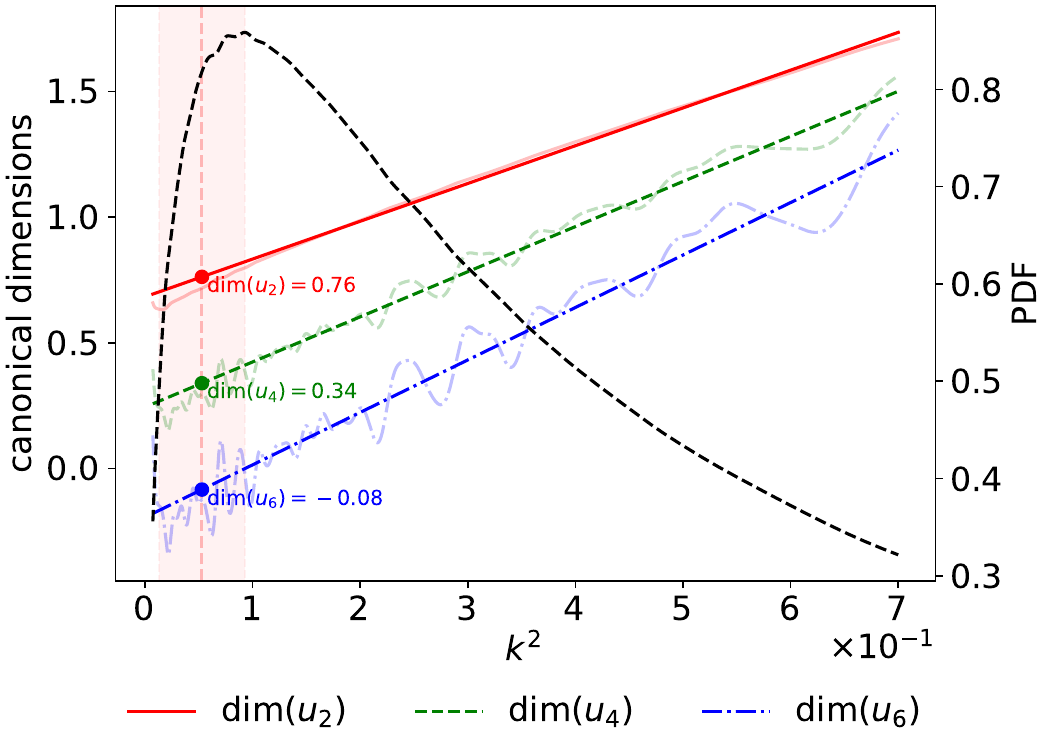}
    \end{minipage}
    \\[0.25em]

    \begin{minipage}[t]{0.45\textwidth}
        \centering
        {Numerical simulation ($\beta = 0.3$)} \\[0.25em]
        \includegraphics[width=\linewidth]{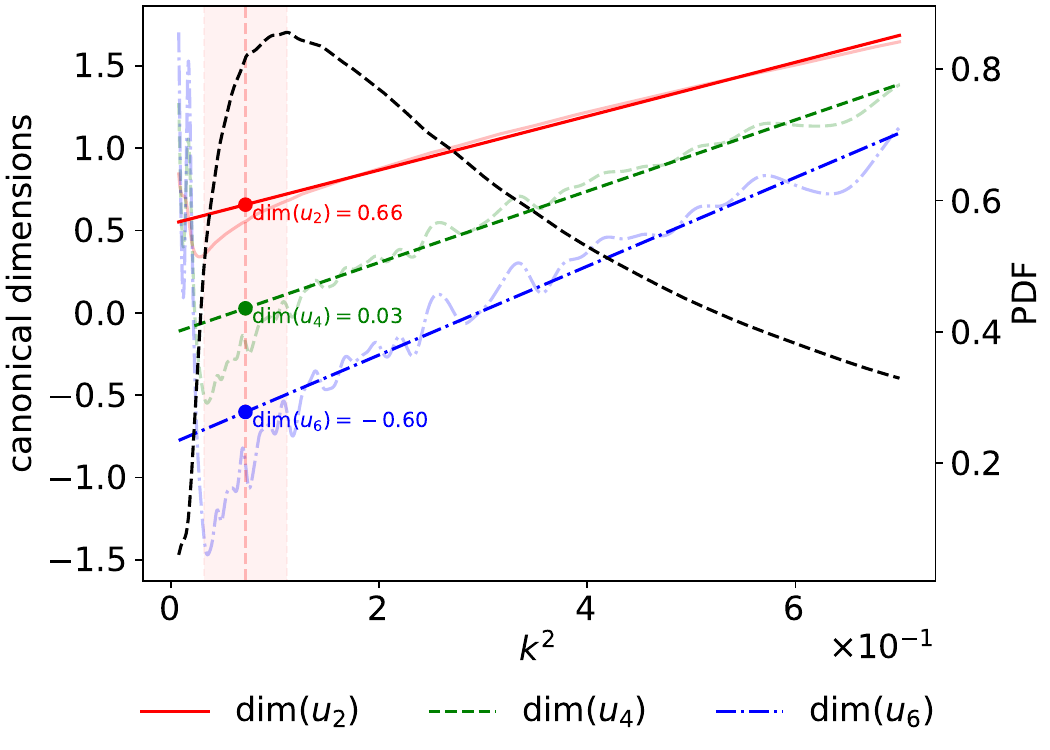}
    \end{minipage}
    \hfil
    \begin{minipage}[t]{0.45\textwidth}
        \centering
        {Numerical simulation ($\beta = 0.4$)} \\[0.25em]
        \includegraphics[width=\linewidth]{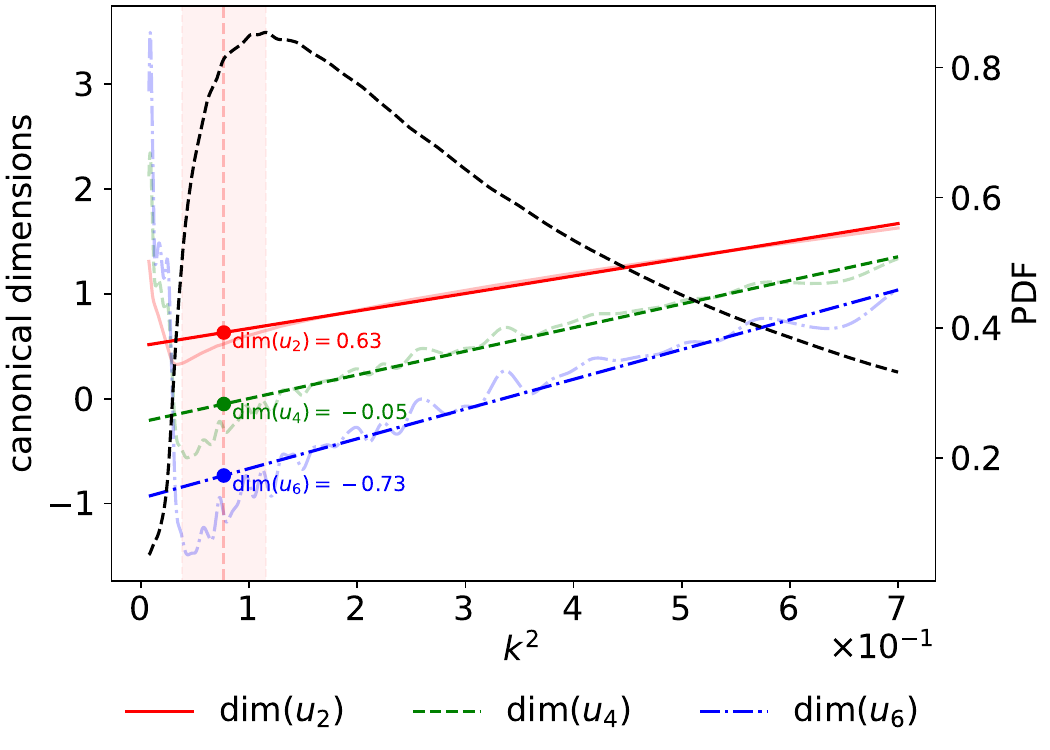}
    \end{minipage}

    \caption{%
        Behaviour of the canonical dimensions of \Cref{fig:gianduja} as the \snr $\beta$ increases.
        The top-left panel provides a comparative analysis with the \mpdistr distribution.
        Canonical dimensions are represented on the left y-axis, while the momentum distribution values are on the right.
    }\label{fig:figplotgianduja}
\end{figure}

First, we analyse the realistic sample depicted in \Cref{fig:gianduja} as a function of the \snr $\beta$.
The evolution of the canonical dimension with respect to $k^2$ is illustrated in \Cref{fig:figplotgianduja,fig:figplotgianduja2}.
Due to numerical instabilities in the spectral tail, we restrict our analysis to the \ir scale $k^2_{\text{IR}}$ defined previously, which is marked by a vertical dashed red line in each plot.

\begin{figure}[t]
    \centering
    \includegraphics[width=0.7\textwidth]{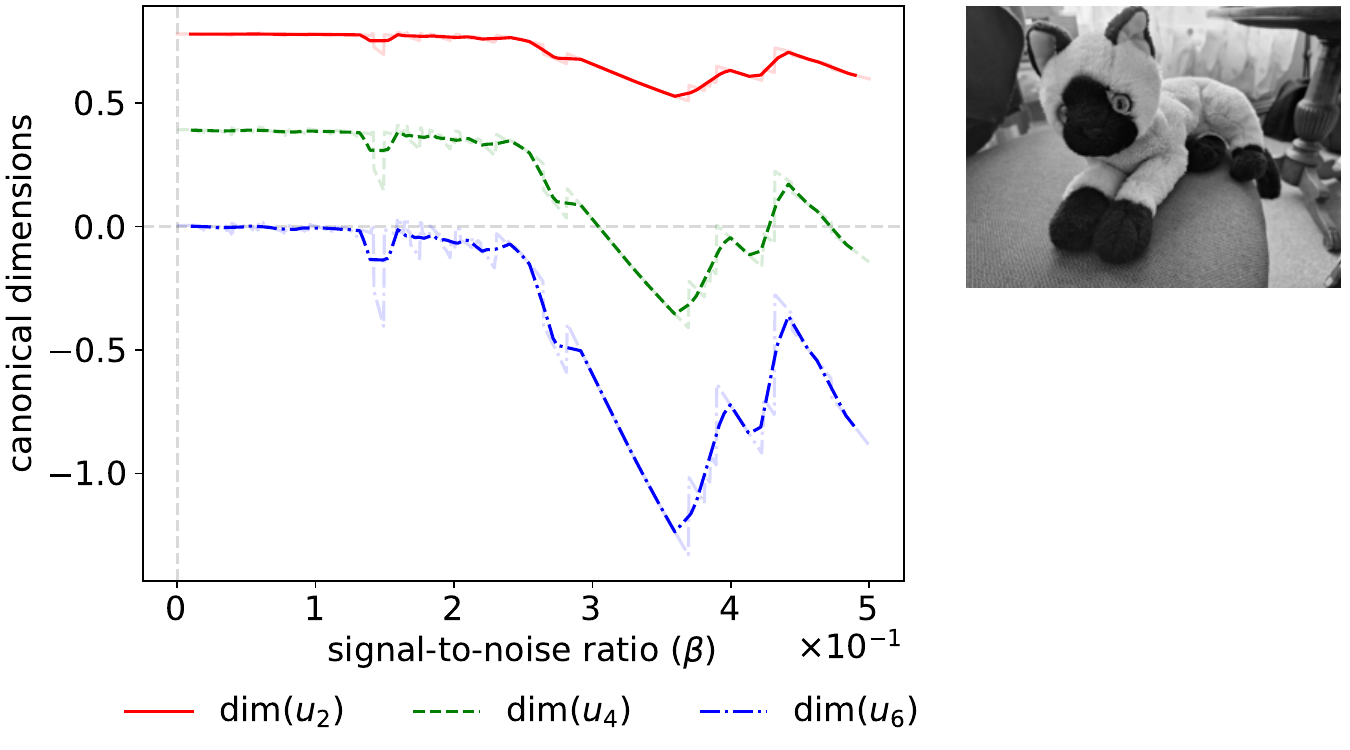}
    \caption{%
    The behaviour of the canonical dimension at the scale $k^2_{\text{IR}}$ is presented as a function of $\beta$.
    Computations were performed with a step size of $\Delta \beta = \num{5e-4}$.
    Solid lines represent a moving average with a window width of $\Delta \beta_{\text{w}} = \num{2e-2}$, while the raw experimental values are rendered with reduced opacity.
    }\label{fig:figplotgianduja2}
\end{figure}

\begin{figure}[t]
    \centering
    \includegraphics[width=0.45\textwidth]{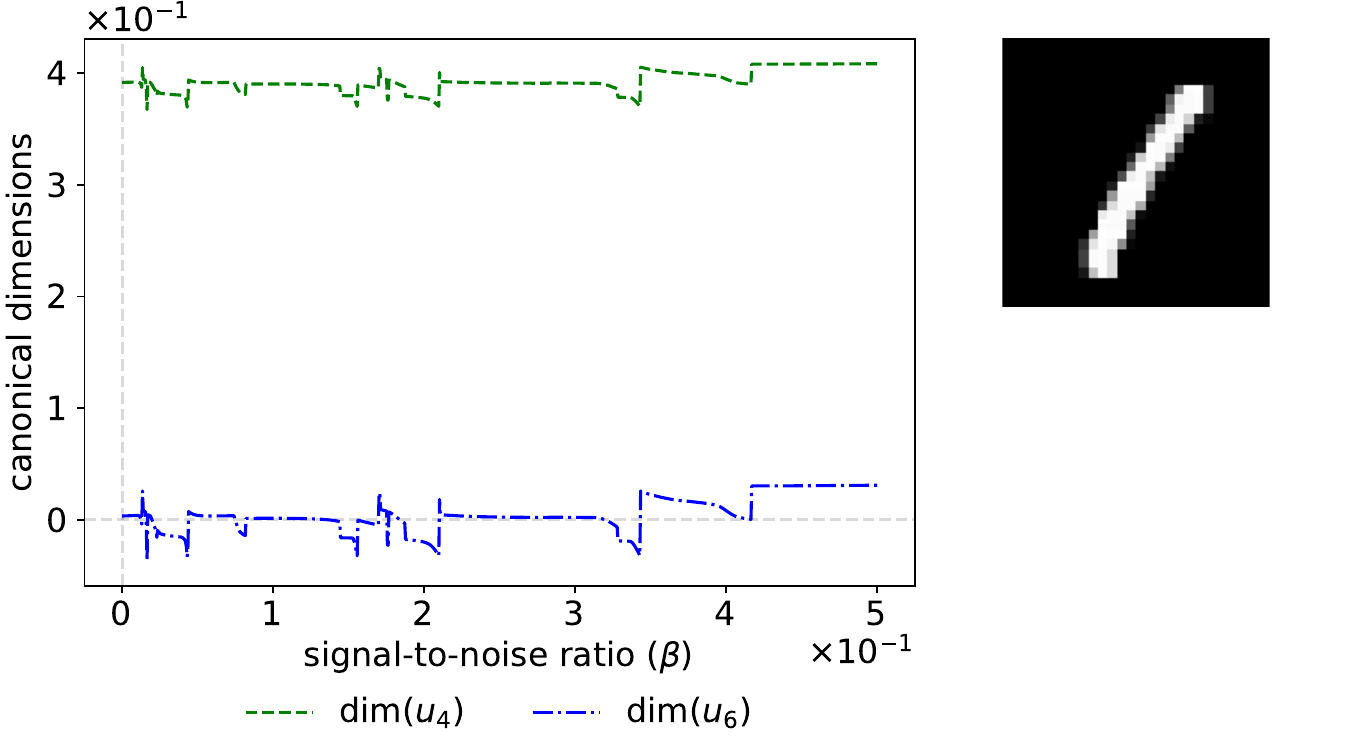}
    \hfil
    \includegraphics[width=0.45\textwidth]{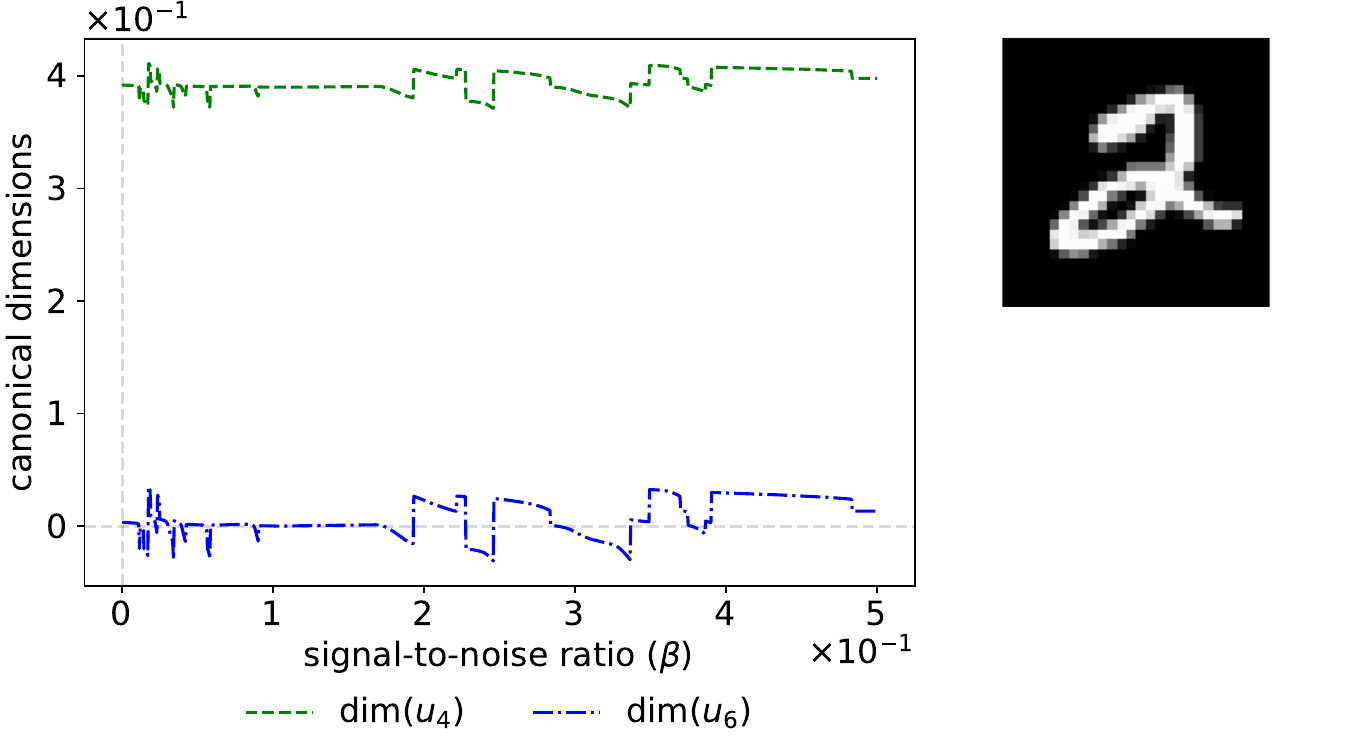}\\[0.5em]
    \includegraphics[width=0.45\textwidth]{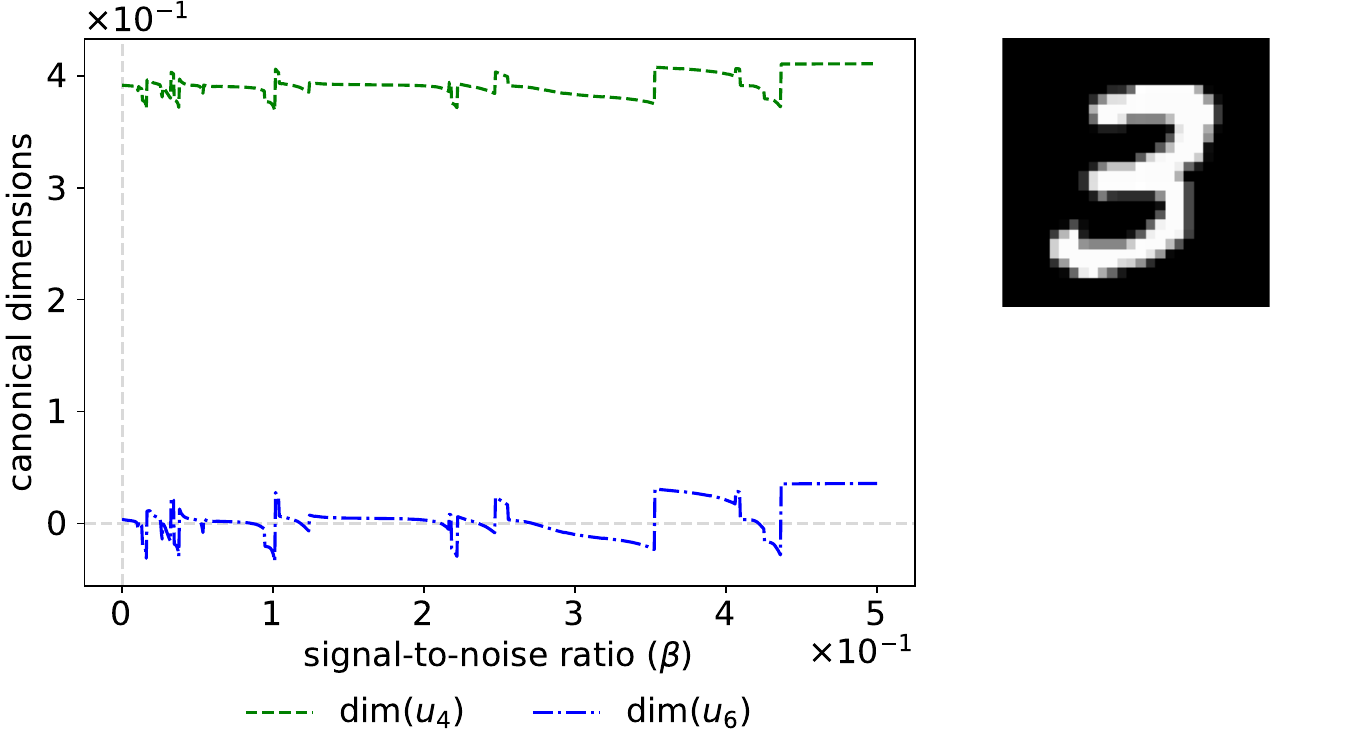}
    \hfil
    \includegraphics[width=0.45\textwidth]{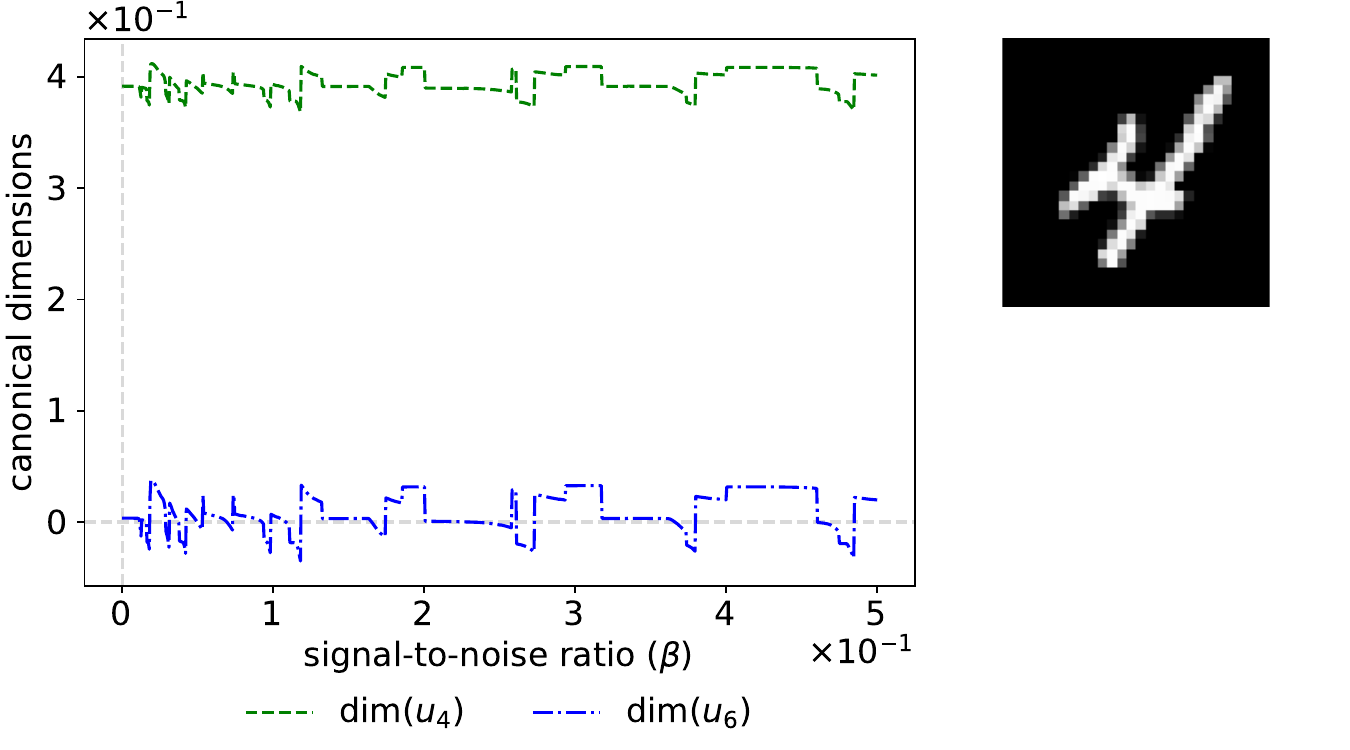}
    \caption{%
        Typical behaviour of the canonical dimension in the \textsc{mnist} set.
    }\label{fig:mnistplot}
\end{figure}

\Cref{fig:figplotgianduja} illustrates the global behaviour of the canonical dimensions as the \snr magnitude $\beta$ is progressively increased.
In each panel, the empirical inverse distribution $\rho(k^2)$ is shown as a dashed black curve, except in the top-left plot, which displays the analytical \mpdistr distribution for comparison.
We highlight several key features of these results:
\begin{enumerate}
    \item The canonical dimensions exhibit fluctuations that intensify with the interaction rank $n$, scaling approximately by a factor of $(n-1)$.
    \item The optimal linear interpolations, excluding the dataset endpoints which are affected by numerical instabilities originating from residual spikes, show excellent agreement with the analytical predictions at $\beta = 0$.
    \item These plots demonstrate the rigidity property of the distribution: the canonical dimension remains largely invariant for low values of $\beta$, undergoing a significant transition only upon reaching a threshold identified as the \lod $\beta_t$ (in this instance, $\beta_t \approx 0.15$).
\end{enumerate}

\begin{table}[t]
    \caption{Summary of the detection thresholds identified by the \frg flow.}\label{tab:thresholds}
    \centering
    \begin{tabular}{@{}cp{0.6\textwidth}@{}}
        \toprule
        \textbf{Symbol} & \textbf{Physical Interpretation}                                                                                                                                               \\
        \midrule
        $\beta_t$       & \textbf{Limit of detection (\lod).} The \enquote{rigidity} threshold at which the canonical dimensions begin to depart from the noise baseline.                                \\
        $\beta_c$       & \textbf{Critical threshold.} The value at which the canonical dimension of $u_4$ vanishes ($\dim_{\tau}(u_4) = 0$), thereby defining the effective critical point $\lambda_c$. \\
        $\beta_O$       & \textbf{Optimal threshold.} The first local minimum of $\dim_{\tau}(u_4)$, which corresponds to the point of maximum signal contrast prior to the emergence of cyclic effects. \\
        \bottomrule
    \end{tabular}
\end{table}

We compare these detection thresholds with the standard \bbp transition (see \Cref{AppD} for a review), which governs the detection of isolated, low-rank spikes in random matrices~\cite{math3}.
In the standard spiked covariance model, the \bbp transition occurs when the signal strength becomes large enough to eject the largest eigenvalue from the \mpdistr bulk.
For the additive noise model \eqref{eq:additive_model} with $q = P / N$, the critical \bbp threshold is defined by $\beta_{\text{BBP}} = q^{1/4}$~\cite{Bouchaud3}.
Given our simulation parameters ($N=\num{2.0e4}$, $q=0.9$), we obtain $\beta_{\text{BBP}} \approx 0.97$.
Our detection threshold of $\beta_t \approx 0.15$ is much lower ($\beta_t \ll \beta_{\text{BBP}}$), confirming that the \frg framework detects signals while they remain deeply embedded within the spectral bulk.
This regime corresponds to the bimodal connected phase recently identified in the extensive spike model, characterised by the emergence of a bimodal spectral distribution~\cite{Landau2023}.

We define two additional detection thresholds motivated by physical principles.
The critical detection threshold ($\beta_c$) corresponds to the value where the asymptotic canonical dimension of $u_4$ at scale $k^2_{\text{IR}}$ vanishes, signifying a local critical dimension of exactly 4 (in this example, $\beta_c \approx 0.32$).
This defines the critical scale $\lambda_c$ (in the eigenvalue spectrum), marking the boundary between signal and noise:
\begin{equation}
    \eval{\dim_{\tau}(u_4)}_{\lambda_c} = 0.
    \label{Lambdac}
\end{equation}
Finally, we define the optimal threshold ($\beta_O$) as the first local minimum of $\dim_{\tau}\qty(u_4)$ for $\beta < \beta_c$:
\begin{equation}
    \eval{\dv{\beta} \dim_{\tau}(u_4)}_{\beta_O} = 0.
\end{equation}
For our data, $\beta_O \approx 0.37$.
By construction, these thresholds satisfy the hierarchy $\beta_t \le \beta_c < \beta_O$.
A summary of these detection thresholds is provided in \Cref{tab:thresholds}.

\Cref{fig:figplotgianduja2} illustrates the behaviour of the canonical dimension at the scale $k^2_{\text{IR}}$, providing a pragmatic visualisation of the transition between two distinct physical regimes.
In the rigid regime ($\beta \in [0, \beta_t]$), the \ir canonical dimensions remain essentially invariant, exhibiting only minor fluctuations arising from the intrinsic variability of the noise relative to the analytical asymptotic spectra (see the next \Cref{Intrinsic}).
Beyond this threshold, the variations grow and the canonical dimension begins to drift.
The \rg interpretation of this transition is direct: a sufficiently strong signal pushes the flow towards Gaussianity, driven primarily by the flow of the mass, which emerges as the only relevant parameter.
Recalling that, for a suitably rescaled spectrum, the physical mass corresponds to the inverse of the largest eigenvalue, $\lambda_0^{-1}$, we conclude that, near the Gaussian fixed point, a signal-induced variance robustly approximates the microscopic \dof encoded in the correlation matrix spectrum.
We focus on the couplings $u_2$, $u_4$, and $u_6$, since they are the only relevant or marginal interactions for the \mpdistr universality class in the deep \ir~\cite{RG5}.
Within the \rg framework, higher-order interactions ($u_{2n}$ for $n > 3$) are strictly irrelevant: they are quickly suppressed as the flow evolves toward the \ir, thereby carrying no relevant information concerning signal-induced spectral deformations.

Note that for $\beta > \beta_O$, the canonical dimensions exhibit a secondary increase before decreasing again.
This implies that the quartic coupling may regain relevance before its canonical dimension becomes negative once more.
While we examine this cyclic behaviour and its interpretation in detail in \Cref{Sec4-1}, we restrict our current analysis to the neighbourhood of $\beta_O$, where our approximations, in particular those concerning the flow and the use of the \lpa, remain valid.
The behaviour of the canonical dimension in \Cref{fig:figplotgianduja2} can be contrasted with that of the \textsc{mnist} handwritten digits shown in \Cref{fig:mnistplot}.
It is well known that \textsc{mnist} samples are sufficiently low-dimensional for their primary features to be captured by algorithms such as \pca.
Consequently, no significant signal is expected within the spectral bulk, as isolated spikes account for nearly all relevant information.
Nevertheless, we can still detect residual signatures of a nearly continuous deformation, arising from the non-trivial nature of the signal.
This is evident in the variation of the sextic coupling behaviour, even though the quartic coupling remains relevant.

\subsection{Intrinsic Variability}\label{Intrinsic}

\begin{figure}[t]
    \centering
    \includegraphics[width=0.49\textwidth]{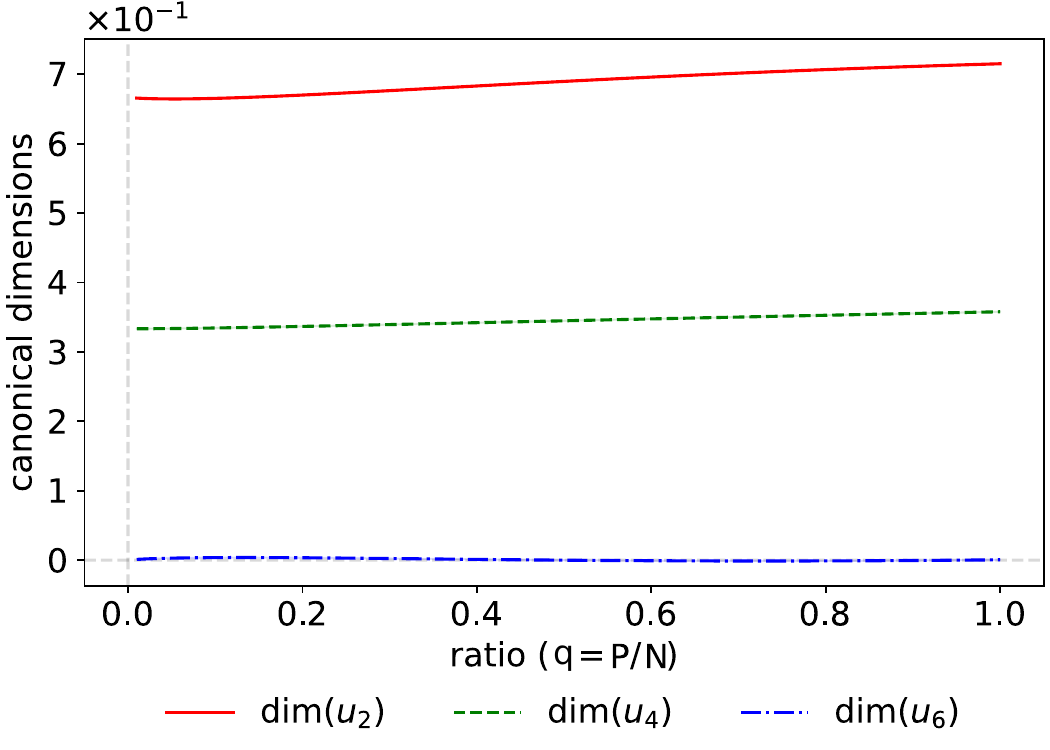}
    \hfil
    \includegraphics[width=0.49\textwidth]{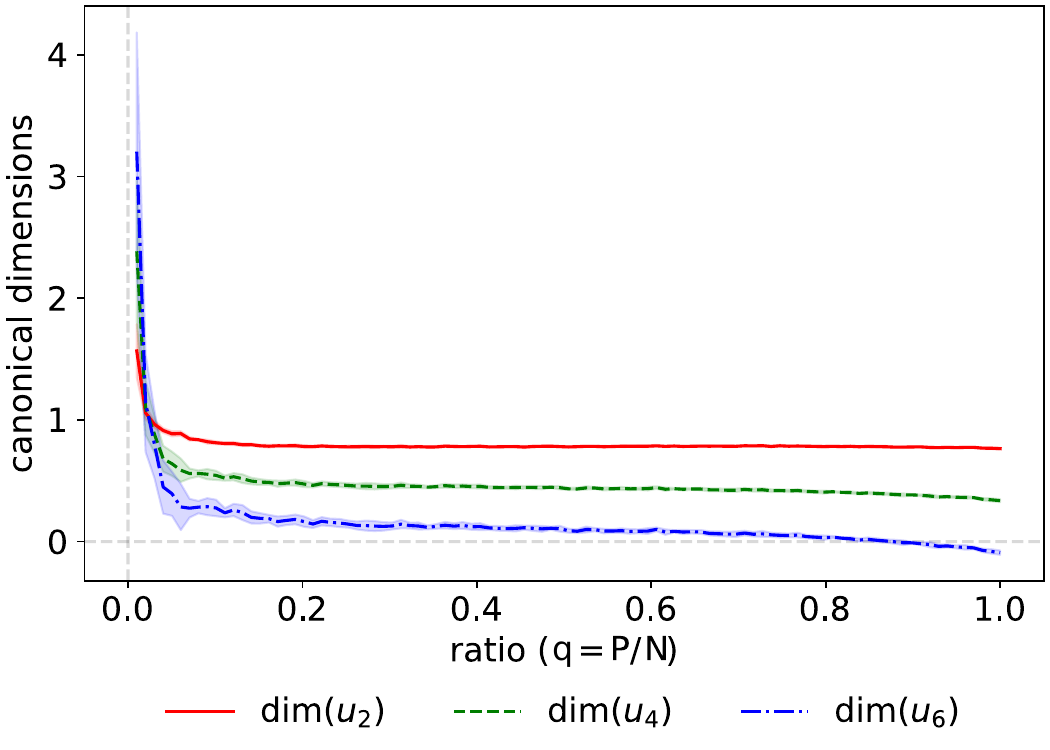}
    \caption{%
        (Left) Behaviour of the canonical dimension based on the analytical \mpdistr law. (Right) Behaviour of the empirical canonical dimension in the IR with respect to $q$ ($N$ fixed).
    }\label{fig:figPvar}
\end{figure}

\begin{figure}[t]
    \centering
    \includegraphics[width=0.49\textwidth]{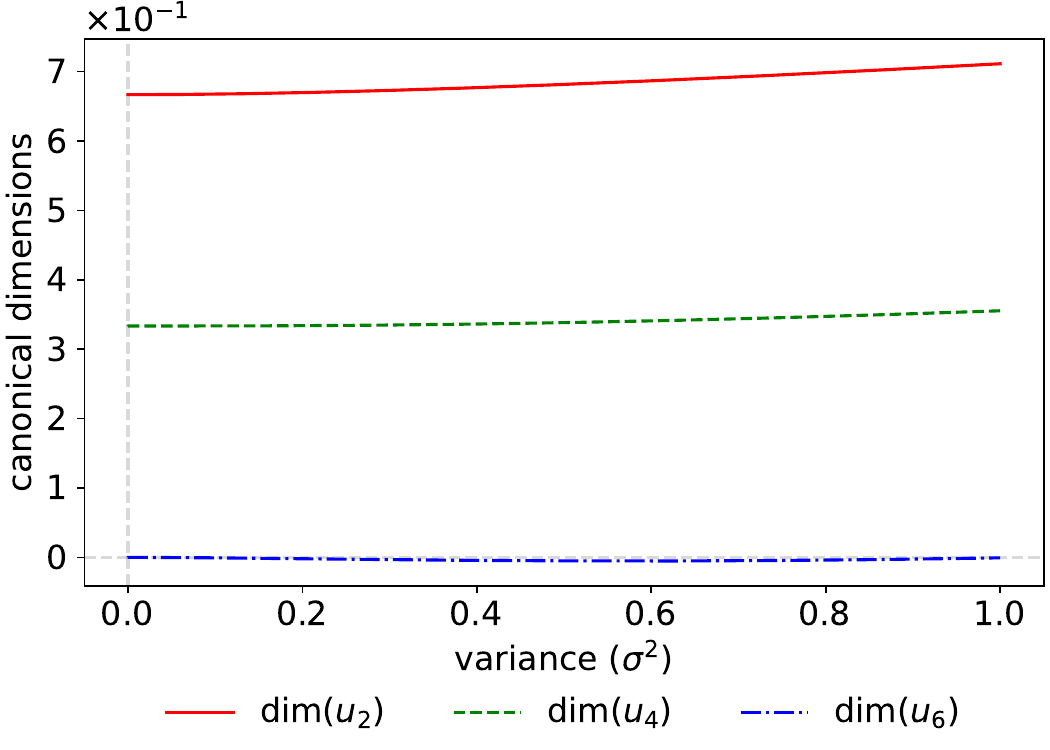}
    \hfil
    \includegraphics[width=0.49\textwidth]{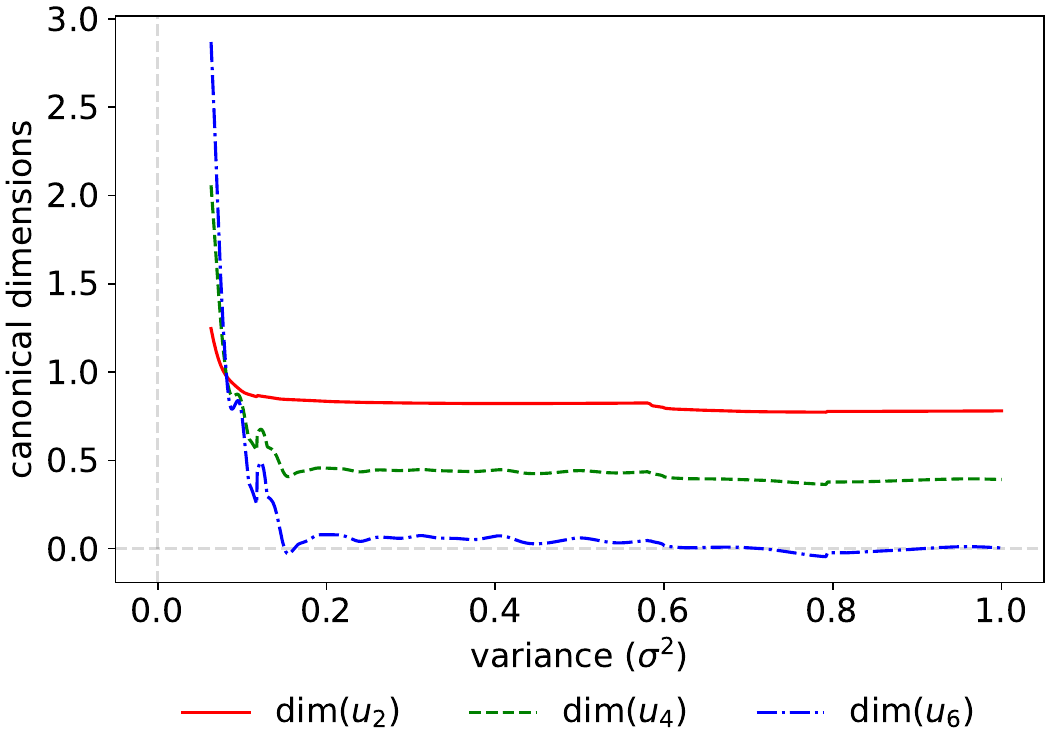}
    \caption{%
        (Left) Behaviour of the canonical dimension based on the analytical \mpdistr law. (Right) Empirical behaviour of the canonical dimension in the IR as a function of $q$ (with fixed $N$).
    }\label{fig:deviation}
\end{figure}

In what follows, we establish that the impact of intrinsic data fluctuations on the canonical dimensions is several orders of magnitude smaller than that induced by the signal.
This disparity identifies a specific detection threshold, which we term the variability threshold.
We elaborate on this criterion in \Cref{Formalization}, where we propose a formal method for its evaluation.
For the present discussion, we compare the magnitude of these noise-induced effects with those stemming from the signal in the vicinity of $\beta_O$.

In this context, \emph{variability} refers to the aleatoric uncertainty stemming from finite-size effects intrinsic to random matrix realisations.
Even with fixed macroscopic parameters ($N, P, q, \sigma^2$), distinct microscopic realisations of the noise matrix $Z$ introduce fluctuations into the empirical eigenvalue spectrum.
Quantifying this baseline variance is essential for distinguishing genuine signal-induced symmetry breaking from statistical noise.

\Cref{fig:figPvar,fig:deviation} quantify the variability of the \ir canonical dimensions across realisations for varying distribution parameters, such as $q$ (averaged over 100 random realisations with $N$ fixed) and the variance of the distribution.
\Cref{fig:Pvarbetaneq0} further details the dependence of the canonical dimension on $q$ for $\beta > 0$ (averaged over 100 realisations for a large, fixed $N$).
These findings clearly show that signal-induced effects, in particular the sign of the dimension of the quartic coupling, substantially exceed the intrinsic fluctuations inherent in the data.

\rmt provides a quantitative framework to characterise the fluctuations induced by finite-sample effects, especially at the spectral tail.
It is well established that the top eigenvalue exhibits fluctuations governed by the Tracy--Widom distribution, typically scaling as $\sim P^{-2/3}$~\cite{Bouchaud3}.
Consequently, signal-induced effects on the power counting become comparable in magnitude to these fluctuations when $\beta \simeq \beta_0 \defeq P^{-2/3}$.
Although this threshold sets an intrinsic limitation of our approach, we find that, for the values of $P$ considered here, $\beta_0 \approx \num{1.5e-3}$, which is substantially smaller than our typical detection scale ($\beta \sim 0.3$).
Although this detection scale depends on the data set, the clear separation of scales ($\beta_t \gg \beta_0$) confirms that the dimensional phase transition observed at $\beta_t \approx 0.15$ is a robust macroscopic phenomenon, distinct from the finite-size fluctuations of the spectral edge.

\begin{figure}[t]
    \centering
    \includegraphics[width=0.7\textwidth]{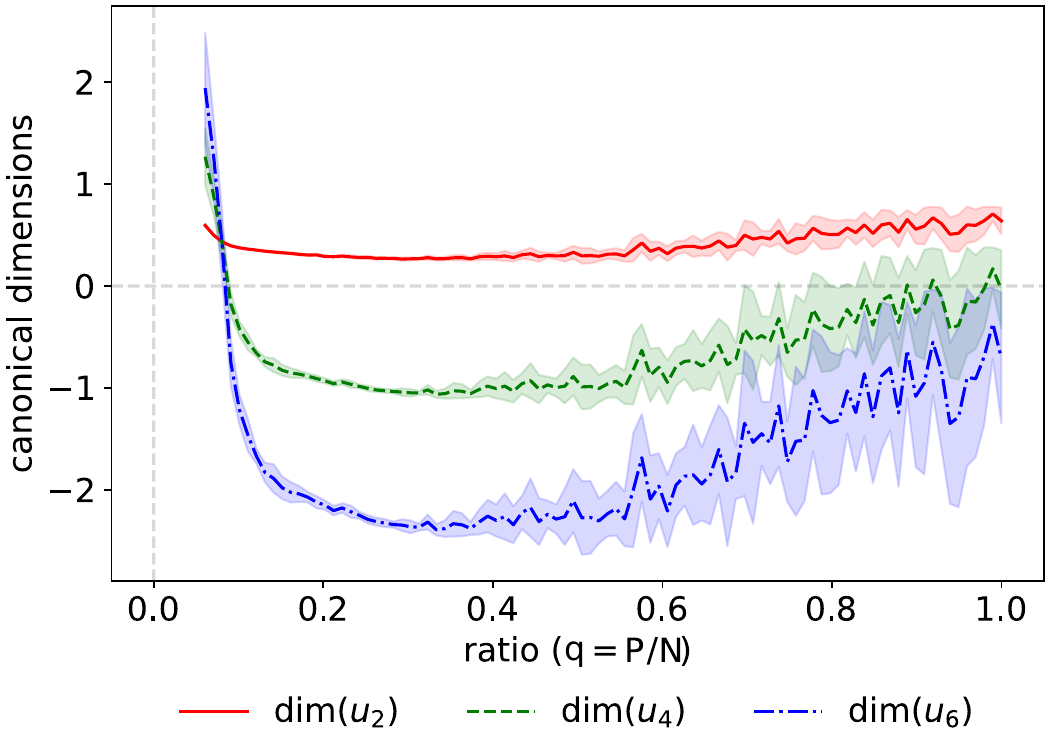}
    \caption{%
        Empirical canonical dimension in the \ir as a function of $q$ for a fixed \snr ($\beta=1.25$) and constant sample size ($N = \num{2e4}$), using the dataset \Cref{fig:gianduja}.
    }\label{fig:Pvarbetaneq0}
\end{figure}

\subsection{Small Breaking of the Porter--Thomas Universality}

To conclude our presentation of the results near the detection threshold, we now turn our attention to the statistical properties of the eigenvectors.
We begin by examining eigenvectors in the \mpdistr class.
In the case of purely stochastic noise, these eigenvectors
\begin{equation}
    u_\lambda \defeq (u_\lambda^1,u_\lambda^2,\dots, u_\lambda^P),
\end{equation}
are delocalised, with individual entries bounded by $\simeq 1/\sqrt{P}$~\cite{bouferroum:hal-00835504,bogomolny2017modification}.
Furthermore, as $P \to \infty$, the matrix of right eigenvectors is asymptotically Haar-distributed over the orthogonal group $\mathrm{O}(P)$.
In the absence of additional information, the distribution of the components $s = u_{(\mu)}^{i}$ is well approximated by the maximum entropy distribution subject to the normalisation constraint $\sum_i (u_\lambda^i)^2 = 1$.
This yields the Porter--Thomas distribution~\eqref{PTdist}.

This behaviour is empirically validated for $\beta = 0$ and illustrated in \Cref{fig:distbeta0}.
For this comparison, we analyse eigenvectors associated with the 100 smallest eigenvalues of the \uv regime, and those at the \mpdistr mass scale ($k^2 = \qty{(\lambda_+ - \lambda_-)}^{-1}$) of the \ir regime, to ensure consistency across varying $\beta$ values.
The empirical distribution shows strong agreement with the Porter--Thomas maximum entropy estimator, reinforcing the consensus that the underlying distribution lacks additional structure beyond that captured by the Porter--Thomas law.

\begin{figure}[t]
    \centering
    \includegraphics[width=0.7\textwidth]{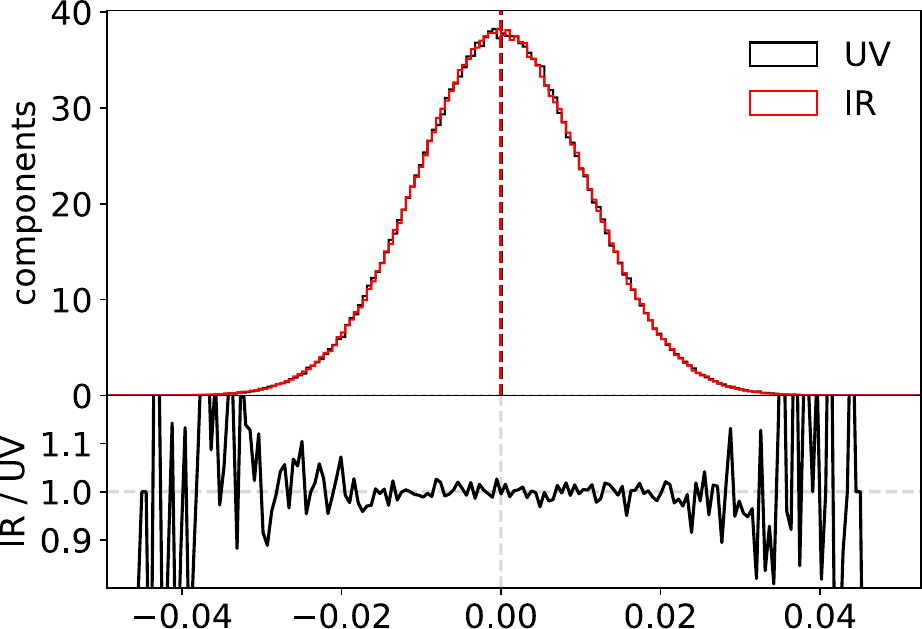}
    \caption{%
        Distribution of eigenvector components at $\beta = 0$ (pure noise regime), shown for the 100 eigenvectors associated with the smallest eigenvalues (\uv) and the 100 eigenvectors corresponding to the \mpdistr bulk scale (\ir).
    }\label{fig:distbeta0}
\end{figure}

\begin{figure}[t]
    \centering
    \begin{minipage}{0.49\textwidth}
        \centering
        {$\beta = 0.2$} \\[0.25em]
        \includegraphics[width=\linewidth]{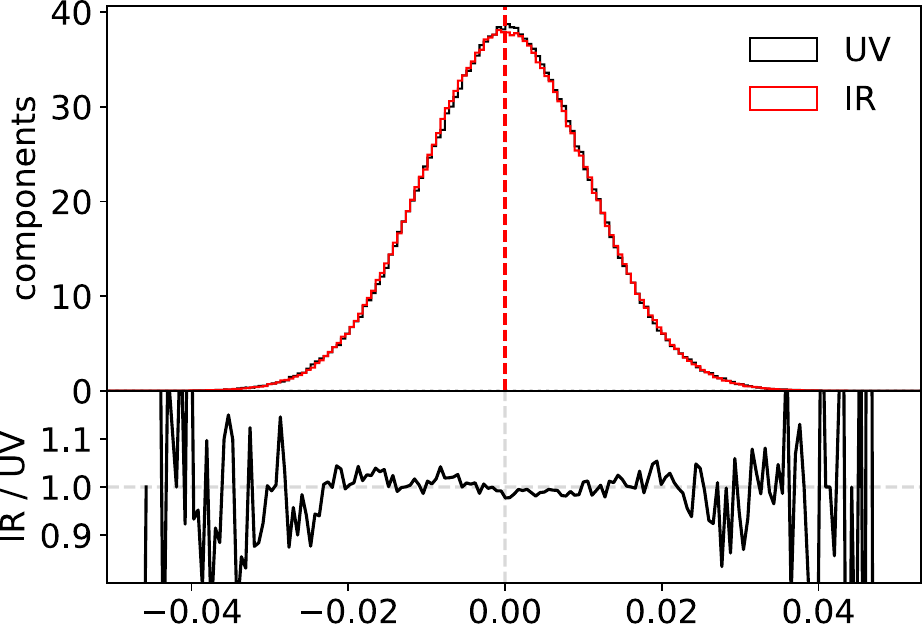}
    \end{minipage}
    \hfil
    \begin{minipage}{0.49\textwidth}
        \centering
        {$\beta = 0.3$} \\[0.25em]
        \includegraphics[width=\linewidth]{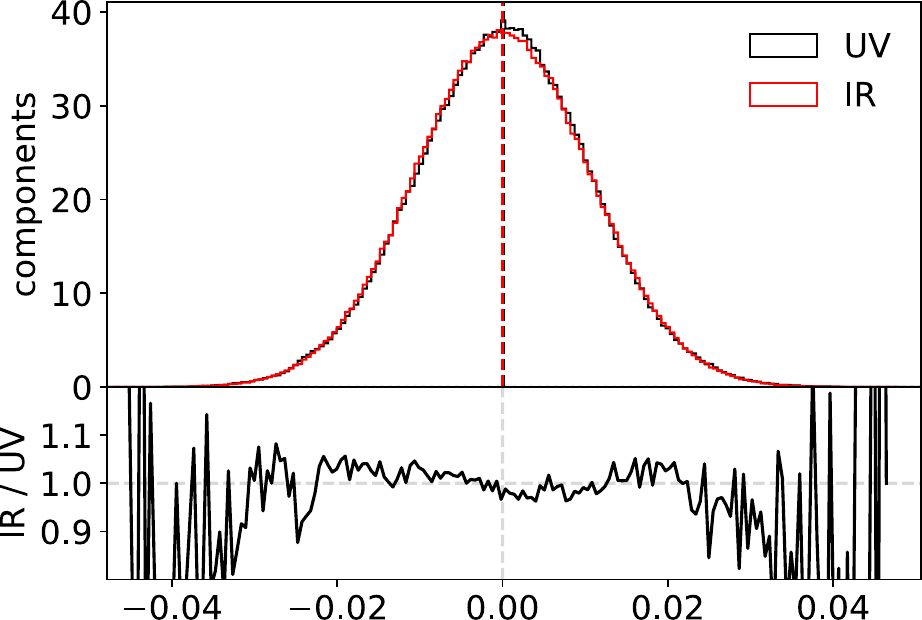}
    \end{minipage}
    \\[0.25em]

    \begin{minipage}{0.49\textwidth}
        \centering
        {$\beta = 0.4$} \\[0.25em]
        \includegraphics[width=\linewidth]{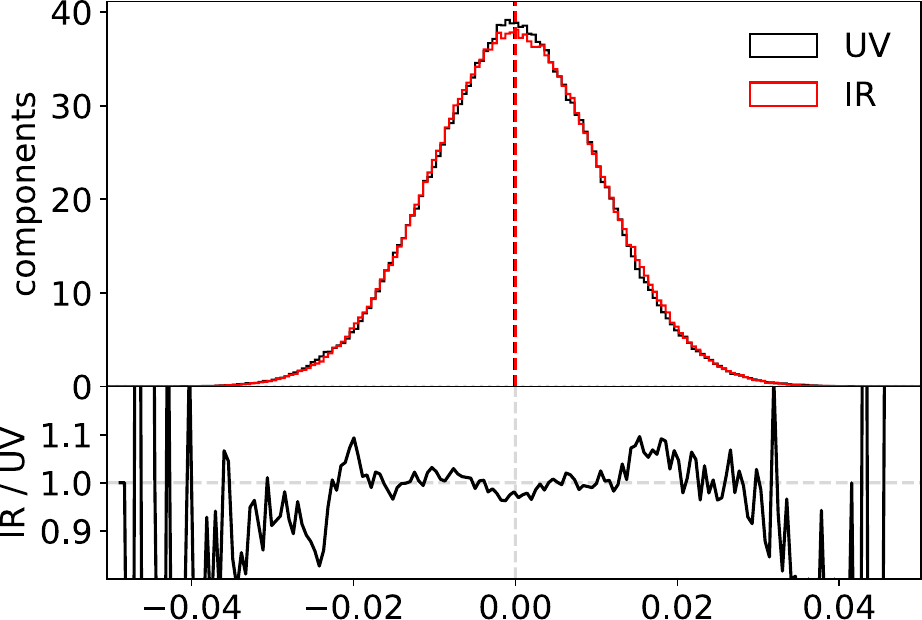}
    \end{minipage}
    \hfil
    \begin{minipage}{0.49\textwidth}
        \centering
        {$\beta = 0.5$} \\[0.25em]
        \includegraphics[width=\linewidth]{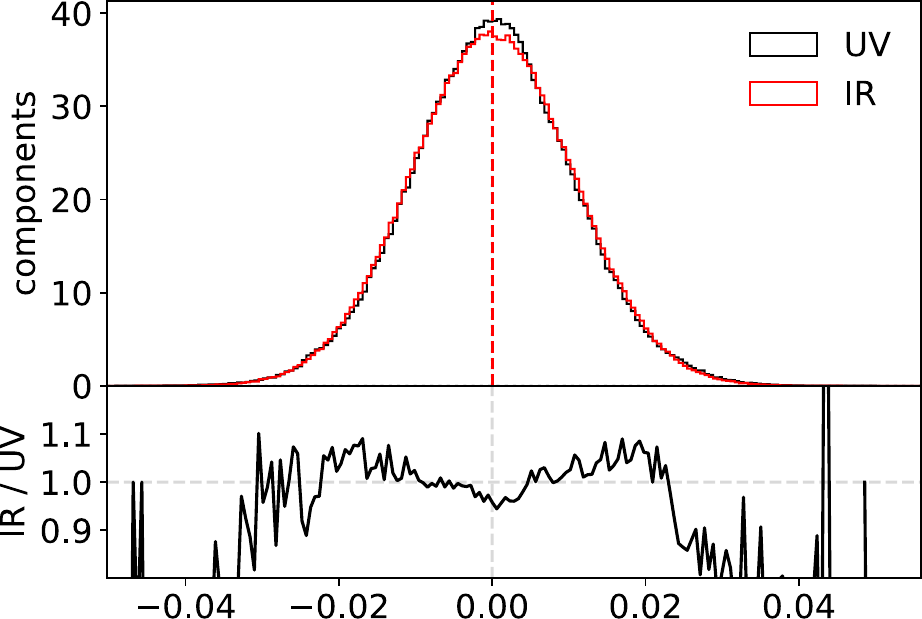}
    \end{minipage}

    \caption{%
        \uv and \ir distributions of eigenvector components for different values of $\beta$.
    }\label{fig:distbetabeta}
\end{figure}

\Cref{fig:distbetabeta} illustrates the evolution of eigenvector statistics with increasing \snr, with the relevant properties summarised in \Cref{fig:distbetabeta2}.
These results demonstrate a clear transition in the statistics at values of $\beta$ corresponding to significant changes in the canonical dimensions, thereby validating $\beta_t$, $\beta_c$, and $\beta_O$ as robust indicators of signal presence.
The most pronounced change is observed in the distribution's standard deviation, which increases with signal intensity.
In particular, the ratio of the \ir to \uv standard deviations exhibits a peak at each local minimum of the canonical dimensions, particularly near $\beta_O$.
The presence of a signal also induces a shift in the mean of the distribution.
These numerical findings are consistent with the theoretical results of~\cite{bogomolny2017modification} concerning the single-spike regime.

\begin{figure}[t]
    \centering
    \includegraphics[width=\textwidth]{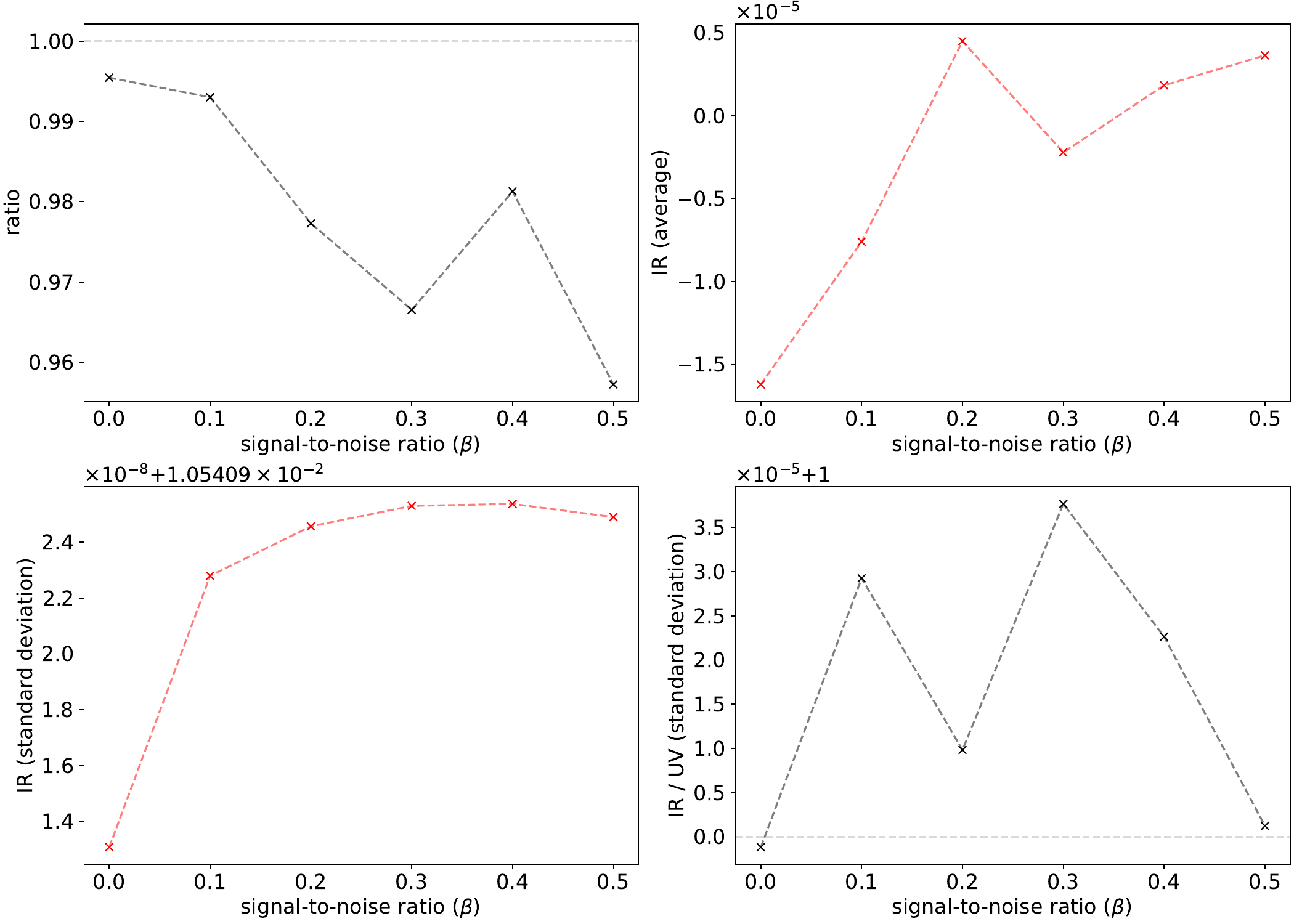}
    \caption{%
        Summary of the statistical properties characterising the eigenvector distributions: the ratio of the components at the origin, the mean value, the standard deviation of the \ir components, and the ratio of the \ir-to-\uv standard deviations.
    }\label{fig:distbetabeta2}
\end{figure}

\section{Ising Critical Temperature Estimation}\label{sec:ising_critical_temp}
In this section, we present a first benchmark analysis that anchors the theoretical claims of the previous sections in physical reality, by examining the non-equilibrium dynamics of an Ising system in the quenched regime.
This allows us to connect the results of the \dft to known physical phenomena and to test the effectiveness of \gsa for signal detection.

In a simple 2D non-equilibrium Ising model, spin domains emerge during coarsening after a \emph{quench} of the dynamics from a high-temperature phase, forming correlated structures of varying sizes in which all spins remain aligned~\cite{livi2017nonequilibrium}.
Such correlations mirror those found in realistic settings such as computer-vision images, where the distinction from white noise rests on the fact that pixels remain correlated over extended spatial domains.
From a heuristic standpoint, the temperature that controls the emergence of these domains below the critical value $T_c$ plays a role analogous to the inverse of the \snr.
A noise-dominated image therefore exhibits essentially no correlation among its pixels.

We focus on a non-equilibrium field theory model for this analysis.
Specifically, we work with \emph{Model~A}, which can be viewed as the non-equilibrium (time-dependent) counterpart of the ordinary $\Phi_D^4$ theory (see \Cref{part1}) for the Ising model.
This model exhibits a dynamic phase transition at the same critical temperature as the corresponding equilibrium theory.
The section serves two purposes.
First, it illustrates a methodological connection with the physics of phase transitions.
Second, it demonstrates the relevance of \gsa on a well-known example.
Indeed, Onsager's exact solution~\cite{onsager1944crystal} of the 2D Ising model provides an excellent benchmark, with the critical temperature known to very high precision.\footnote{%
    The precise value depends weakly on the system size, provided the system is sufficiently large.
}
The study of these systems is taken further in \Cref{part3}; we provide here an overview of the relevant background before presenting the results.
The physical principles summarised in \Cref{part1} of this review are taken as prerequisites, although the key concepts needed to follow this section are recalled briefly where they first appear.

\subsection{Critical Dynamics and Coarsening}\label{critical}

In contrast to equilibrium Euclidean field theory, which is well understood and naturally arises in condensed matter physics~\cite{Zinn1,Zinn2}, non-equilibrium Euclidean field theory and non-equilibrium phenomena remain an active research frontier in theoretical physics~\cite{kamenev2023field,calzetta2009nonequilibrium,livi2017nonequilibrium}.
These phenomena, which break time-reversal symmetry, include the dynamics of phase transitions (critical dynamics), glassy systems, reaction-diffusion processes, and quantum transport, among others.
They also serve as a source of inspiration for optimisation algorithms in deep learning~\cite{agoritsas2018out}.
In this review, we focus on critical dynamics and phase-ordering kinetics~\cite{Bray_1994,tauber2014critical}.

The term \emph{quench} denotes the rapid change in control parameters by which a thermodynamic system such as the Ising model (see \Cref{part1}) is driven abruptly from a disordered phase to an unstable phase in which order subsequently develops progressively and locally.
A representative example, central to this section, is a ferromagnetic system at high temperature that is rapidly quenched in a low-temperature bath.
When the bath temperature lies below the critical temperature $T_c$, nucleation associated with ageing effects drives the appearance of domains that grow slowly over time according to a characteristic scaling law
\begin{equation}
    L(t) \sim t^n,
\end{equation}
where $L(t)$ also represents the typical correlation length of the system at time $t$ (see \Cref{fig2}).
For a system of dimension $D$, the two-time correlation function for $t>t_0$ decays as a power law, exhibiting \emph{weak ergodicity breaking} (i.e.\ the system retains a memory of its initial state in the long-time limit, see~\Cref{AppOJK})~\cite{Bray_1994}:
\begin{equation}
    C(t, t_0) \sim \left(\frac{t}{t_0}\right)^{-D/4},
    \label{timecorr}
\end{equation}

This coarsening phenomenon, characterised by slow growth governed by the coarsening exponent $n$, physically reflects the fact that the system takes an arbitrarily long time to reach equilibrium as the system size increases.
Before this phase, a comparatively short stage known as \enquote{critical coarsening} occurs immediately after the quench, while the system is still influenced by critical fluctuations.
Contrary to what happens at equilibrium, the behaviour of the system, and in particular the value of $n$, does not depend on the spatial dimension for stochastic models (those subject to random thermal fluctuations); instead, it depends on other characteristics, such as whether the dynamics is conservative.
For further details, the reader may consult the references cited above and the monograph~\cite{livi2017nonequilibrium}.

At the moment of the transition, a phenomenon linked to the divergence of the correlation length occurs, known as \emph{critical slowing down}.
This poses serious numerical difficulties, which are further accentuated in the ordinary Ising model where the variables are discrete (for instance, the energy cost of \enquote{flipping} a spin).
This phenomenon, for which an elementary approach is provided in \Cref{sec:app1}, arises from the fact that near the critical temperature, strong long-range correlations between spins make it highly improbable for a spin to change its sign.
The system thus appears frozen on a long time scale $\tau$, typically:
\begin{equation}
    \tau \sim \xi^z,
\end{equation}
where $\xi \sim |T-T_c|^{-\nu}$ is the correlation length and $\nu, z > 0$ are the associated critical exponents.

In this paper, we describe this phenomenon through the lens of \rg considerations applied to the \ecm of time series generated by a time-dependent discretised Ginzburg--Landau theory, namely \emph{Model~A}, for which $n = 1/2$ below the critical temperature.
We also focus on the behaviour observed above $T_c$, specifically as the system approaches the critical point from above.
In this regime, the spectrum of the \ecm, which corresponds to a \mpdistr distribution for $T = \infty$, begins to exhibit a heavy-tailed, power-law form as the transition temperature is approached.
We show below that this crossover conceals a phase transition in the corresponding field theory, which itself coincides with the maximum-entropy estimator of the spectral data (see \Cref{App0}).

Model~A is a kinetic Ising model that emerges in the coarse-graining limit of Glauber dynamics for Ising spins.
In $D$ dimensions, the coarse-grained description of the discrete spins is captured by a \emph{continuous field} $\varphi(x,t)$ following Langevin-type dynamics~\cite{Zinn1,livi2017nonequilibrium}:
\begin{equation}
    \frac{\partial}{\partial t} \varphi(x,t)
    =
    - \frac{1}{T} \frac{\delta H}{\delta \varphi(x,t)}+ \eta (x,t),
    \label{eqLangevin}
\end{equation}
where $T$ is the temperature of the equilibrium system\footnote{In units where the Boltzmann constant $k_B$ is set to unity.} and $H$ is the Ginzburg--Landau Hamiltonian:
\begin{equation}
    H = \int \dd x \dd t\,
    \left(\frac{1}{2} (\nabla \varphi)^2 + V(\varphi) \right).
\end{equation}
The potential $V$ is taken to be a polynomial $V(\varphi) = \frac{1}{2} a \varphi^2 + \frac{b}{4} \varphi^4 + \cdots$, and the term $\eta(x,t)$ represents spatiotemporally white Gaussian noise with zero mean:\footnote{%
    The factor of $2$ ensures that the equilibrium probability distribution matches the Boltzmann distribution $P(\varphi) \sim e^{-H/T}$, see~\cite{Zinn1}.
}
\begin{equation}
    \langle \eta(x,t) \rangle = 0,
    \qquad
    \langle \eta(x,t) \eta(x^\prime,t^\prime) \rangle_\eta = 2 \delta(x-x^\prime) \delta(t-t^\prime).
\end{equation}
\Cref{eqLangevin} describes a stochastic gradient descent toward a minimum of the Hamiltonian.
For the numerical simulations in this review, we focus on a 2D model on a regular grid, governed by the explicit Euler--Maruyama discretisation:
\begin{equation}
    \varphi_{i,j}^{n+1}
    =
    \varphi_{i,j}^{n} + \Delta t \left[ \frac{1}{T} (\Delta_{\text{dis}}\varphi)^{n}_{i,j} - V^{\prime}[\varphi^{n}]\right]+\sqrt{2\Delta t}\,\xi_{i,j}^{\,n}
    \label{discrete}
\end{equation}
where $\xi_{i,j}^{\,n}\sim\mathcal{N}(0,1)$ is a spatiotemporally uncorrelated standard Gaussian variable.
We use the standard 5-point discrete Laplacian:
\begin{equation}
    (\Delta_{\text{dis}}\varphi)_{i,j}
    =
    \frac{\varphi_{i+1,j} + \varphi_{i-1,j}+\varphi_{i,j+1} + \varphi_{i,j-1}-4\varphi_{i,j}}{ \ell^2}.
    \label{DiscreteLaplacian}
\end{equation}
The hyperparameter $\ell$ is the lattice spacing, fixed at $\ell = 1$.
We further truncate the potential at the quartic term:
\begin{equation}
    V(\varphi) = \frac{1 - 1/T}{2}\,\varphi^2 + \frac{b}{4T}\varphi^4.\label{potentialV}
\end{equation}
This parametrisation is numerically advantageous compared with the standard $(T-T_0)$ form, with which it is equivalent near the transition, because it avoids the appearance of an excessively large mass term.
The bare transition temperature $T_0$ is then fixed at $T_0 = 1$.
When $T<1$, the potential exhibits a double-well structure and $\varphi=0$ becomes an unstable solution.
Note, however, that $T_0$ does not coincide with the physical critical temperature $T_c$: the quartic interaction generates additional rigidity through fluctuations, which shifts the bare value of the transition (see \Cref{part1} and \Cref{sec:app1})~\cite{Hoenberg1977}.

\begin{figure}[t]
    \centering
    \includegraphics[height=0.18\textheight]{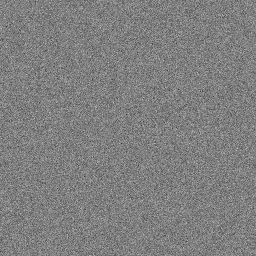}\,
    \includegraphics[height=0.18\textheight]{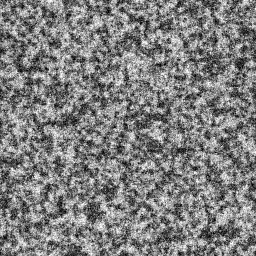}\,
    \includegraphics[height=0.18\textheight]{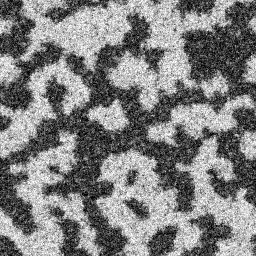}\\[0.25em]
    \includegraphics[height=0.18\textheight]{img/IM1.png}\,
    \includegraphics[height=0.18\textheight]{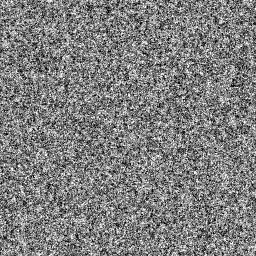}\,
    \includegraphics[height=0.18\textheight]{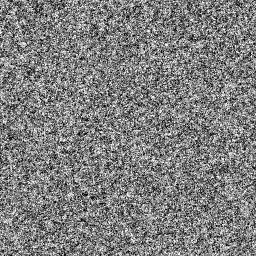}\
    \caption{%
        Illustration of the phase ordering kinetics for the Ising-like stochastic model (Model~A) considered in this paper.
        The three top images represent the evolution following a quench from $T=\infty$ at $t<0$ (first image on the left) to $T=0.5\, T_c$ at $t=0$.
        The next two images correspond respectively to time steps $\num{3e3}$ and $\num{3.9e4}$ of the evolution.
        From left to right, the three bottom images correspond to the same situation for $T=1.5\, T_c$.
    }\label{fig2}
\end{figure}

The correspondence with the standard 2D Ising model can be established heuristically using the background material in \Cref{sec13,sec14}.
The critical temperature in mean-field theory is $T_0=4\,J$, where $J>0$ is the ferromagnetic Ising coupling, defined such that aligned neighbouring spins lower the energy. The Hamiltonian is therefore:
\begin{equation}
    H_\text{Ising} = -J \sum_{\langle i,j \rangle} S_i S_j,
    \label{IsingModel}
\end{equation}
where $\langle i,j \rangle$ denotes a sum over nearest neighbours.
For $D=2$, Onsager obtained~\cite{onsager1944crystal}:
\begin{equation}
    T_c = \frac{2\,J}{\ln(1 + \sqrt{2})} \approx 2.27\, J.
\end{equation}
Setting $T_0=1$ fixes $J=1/4$, which yields $T_c \approx 0.57$.
The formal correspondence at the critical point then requires
\begin{equation}
    b = \frac{1}{3 T_c^3} \approx 1.80.
\end{equation}
This value of $b$ is determined by a fourth-order truncated Ginzburg--Landau expansion obtained from a Hubbard--Stratonovich transformation (see \Cref{sec14}).
Although this mapping is exact at the level of the effective action (up to interactions involving powers of the field higher than six), numerical convergence toward the Onsager temperature is achieved only in the limit of an infinitely deep potential well ($b \to \infty$), which suppresses transitions between the two wells.
Numerical coincidences for finite values of $b$ correspond to Ising-like systems, but they remain heuristic at best.
A more robust correspondence with the standard 2D Ising model can be obtained by considering the potential (see \Cref{sec:app2})
\begin{equation}
    V(\varphi)=-\frac{1}{2} \varepsilon\varphi^2+\frac{1}{4}\varepsilon \varphi^4
    \label{VIsing}
\end{equation}
with the limit $\varepsilon \to \infty$, which enforces the constraint $\varphi=\pm 1$.
The choice $J=1/4$ then requires rescaling the Laplacian in~\eqref{discrete} by a factor of $1/4$, as detailed in the appendix.
In what follows, the model defined by~\eqref{VIsing} is referred to as the \emph{Ising model}, while \emph{Ising-like} denotes the heuristic correspondence derived from the quartic Model~A.

\subsection{Critical Temperature Estimation}\label{IsingCritical}

When approaching $T_c$ from above, the spectrum of the \ecm of the Ising model shows a deformation of the tail of the distribution, which begins to follow a power law.
Below, we show that this crossover conceals a phase transition in the scaling behaviour of the operators, which manifests as a discontinuity in the cutoff $\lambda_c$ determined by the condition \eqref{Lambdac}.
This transition occurs at a temperature that coincides with experimental estimates of the critical temperature.

We also reproduce the same analysis for the exact Ising model, defined as Model~A in the $\varepsilon \to \infty$ limit according to the parametrisation \eqref{VIsing}.
The key advantage of this case is the existence of an exact analytical result, namely the Onsager temperature ($T_c \approx 0.57$ in the chosen parametrisation), which allows a direct comparison with the \gsa estimate, independent of any experimental measurement.

\subsubsection{Model~A, Method and Result}

The data are represented by a large matrix $X \in \mathds{R}^{t_{\text{fin}} \times N^2}$, where $N$ is the linear size of the 2D grid and $t_{\text{fin}} = 4 \times 10^4$ is the number of time steps.
At high temperatures the model is nearly Gaussian, and the spectrum of the $N^2 \times N^2$ correlation matrix $C$ closely follows the distribution predicted by \Cref{thm:thMP} (left panel of \Cref{fig3}).
At lower temperatures, and particularly when crossing the critical point $T=T_c$, long-range correlations emerge and distort the spectrum, which then exhibits a characteristic heavy-tailed behaviour.
This phenomenon is linked to the appearance of low-mass \ir modes, associated with long-range correlations and characteristic power-law scaling~\cite{vinayak2014spectral} (right panel of \Cref{fig3}).

\begin{figure}[t]
    \centering
    \includegraphics[width=0.49\textwidth]{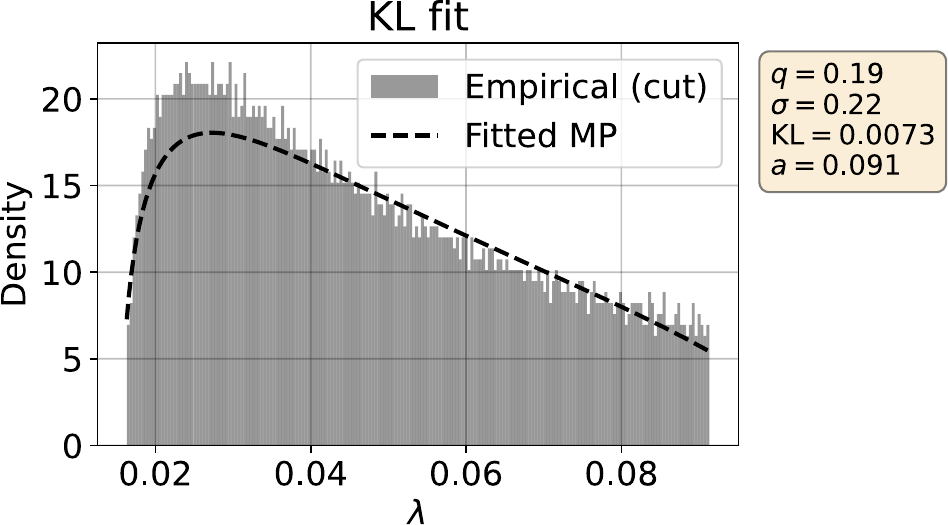}
    \hfil
    \includegraphics[width=0.49\textwidth]{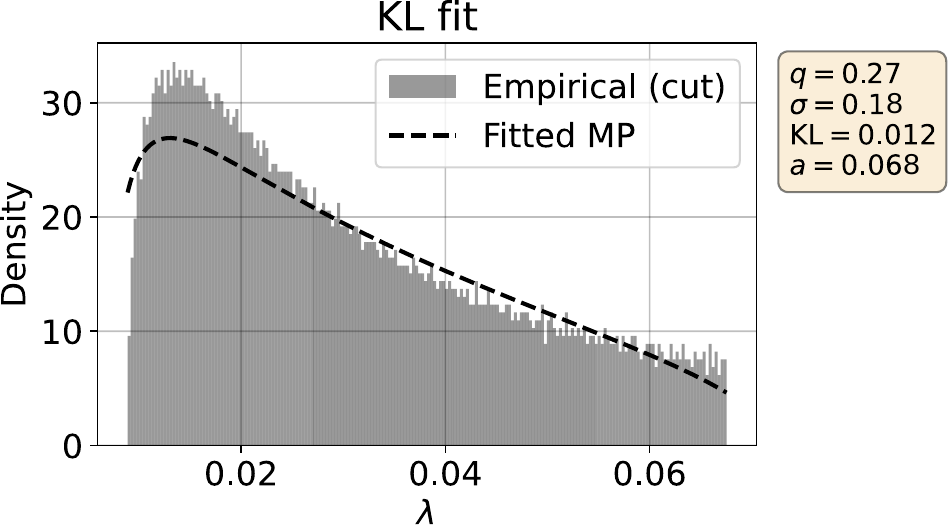}
    \hfil
    \includegraphics[width=0.49\textwidth]{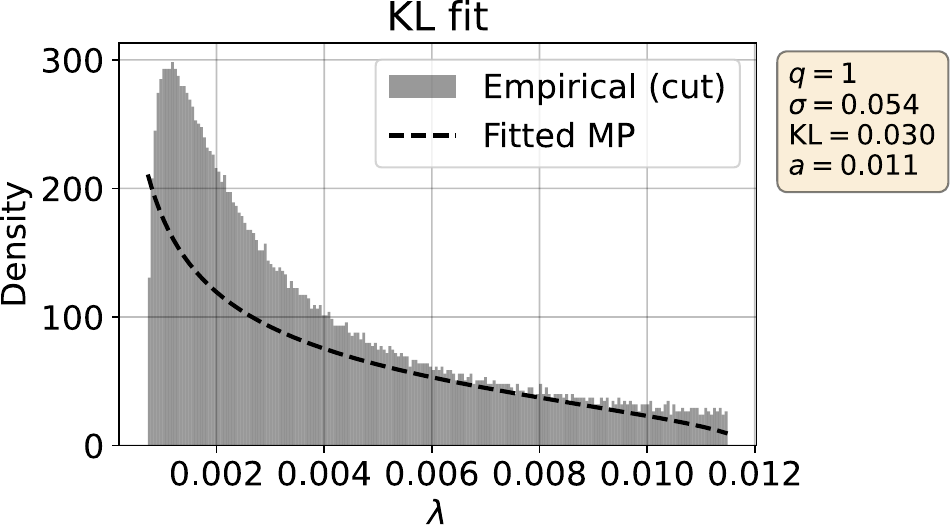}
    \caption{%
        (Top left) Behaviour of the spectrum found by the \kl-proxy for high temperature, $T=5$.
        (Top right) Behaviour in the vicinity of the critical regime $T=0.648$.
        (Bottom) Behaviour below the critical temperature $T=0.350$ for $b=0.8$.
    }\label{fig3}
\end{figure}

\begin{figure}[t]
    \centering
    \includegraphics[width=0.7\textwidth]{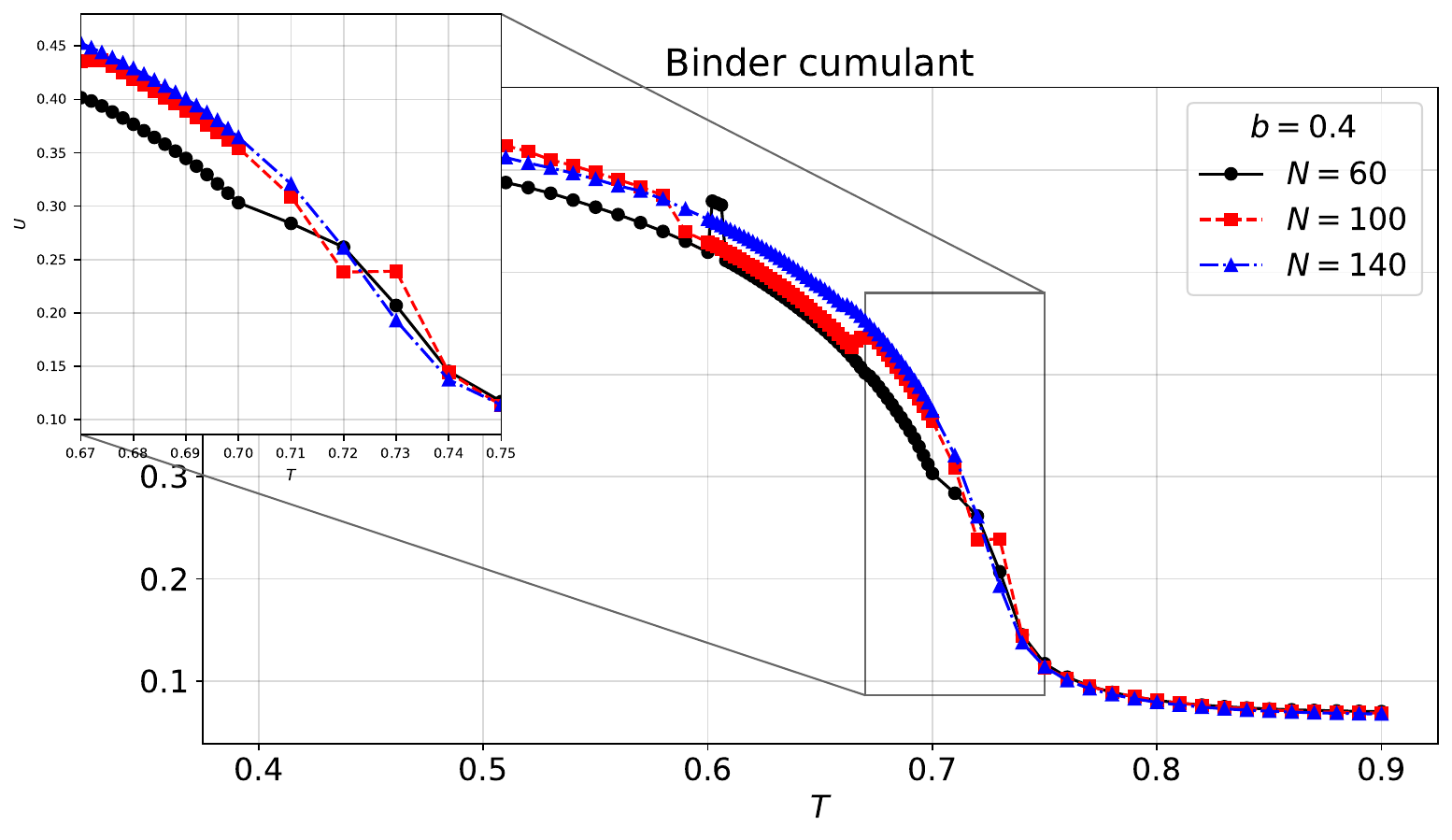} \\[0.25em]
    \includegraphics[width=0.7\textwidth]{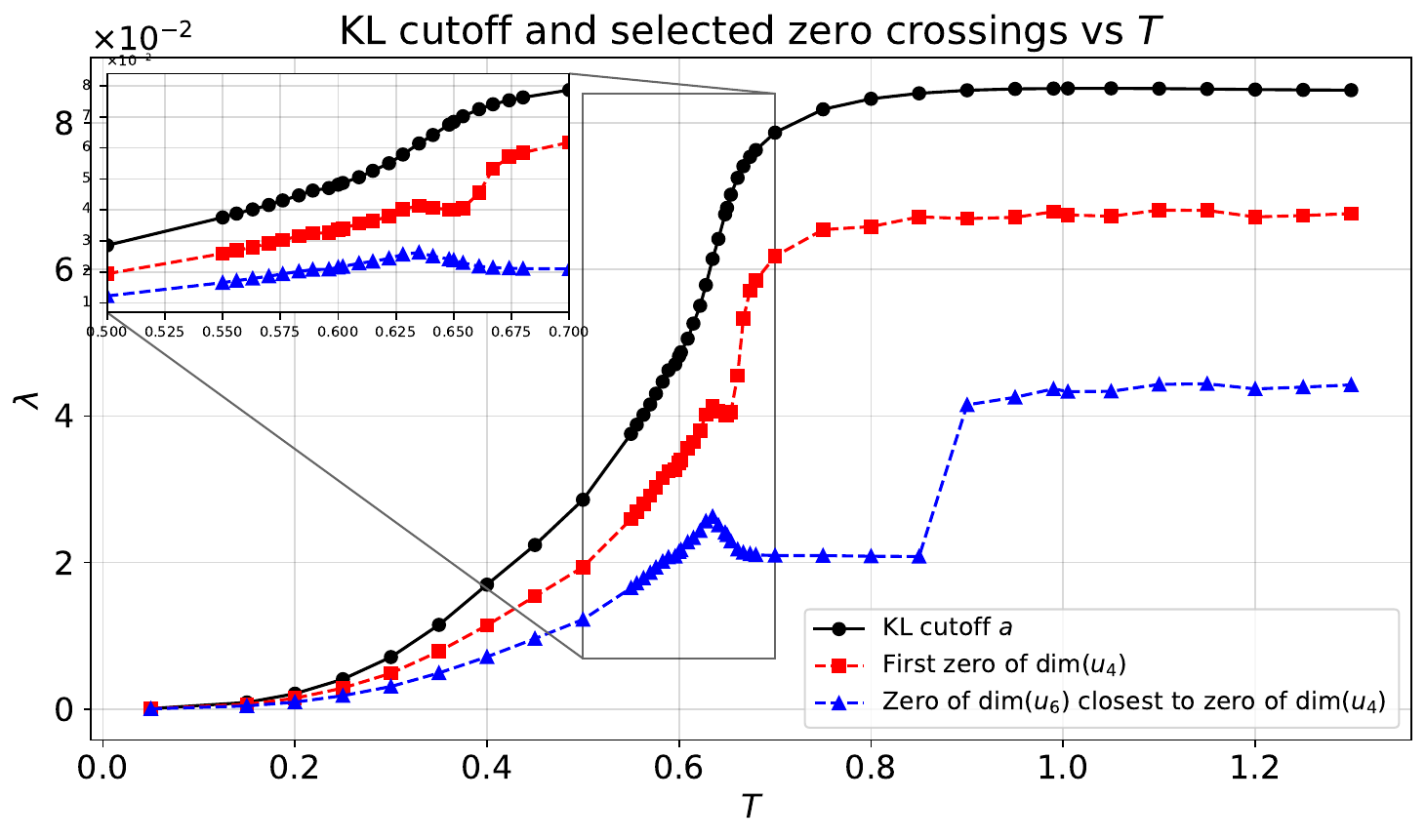}
    \caption{%
        (Top) Behaviour of the Binder cumulant for different grid sizes with $b=0.4$.
        (Bottom) Empirical behaviour of $\lambda_c$ for different temperatures with different methods for $b=0.8$.
    }  \label{fig6}
\end{figure}

To validate the \gsa results, we need a reliable way to estimate the critical temperature.
A common method relies on the behaviour of the magnetic susceptibility, $\chi \defeq \partial M / \partial B$, using the notation established in \Cref{part1}.
This method has a major drawback: while the susceptibility exhibits a singularity for an infinite system, this singular behaviour disappears for finite systems.
Furthermore, the maximum values of the susceptibility depend on the system size, which reduces the precision of the critical-temperature estimate.
Another widely used method exploits \emph{Binder cumulants}~\cite{selke2006critical}:\footnote{%
    The Binder cumulant is a dimensionless quantity that characterises the shape of the probability distribution of the order parameter.
    In statistical terms, it equals the negative excess kurtosis of that distribution divided by three: $K = -\tilde{\mu}_4/3$, where the excess kurtosis $\tilde{\mu}_4 = \mu_4/\sigma^4 - 3$ is the standardised fourth central moment in excess of the Gaussian value (a Gaussian distribution has $\tilde{\mu}_4 = 0$ and therefore $K = 0$).
}
\begin{equation}
    K = 1 - \frac{\langle \varphi^4 \rangle}{3 \langle \varphi^2 \rangle^2}.
\end{equation}
As a function of temperature $T$, $K$ interpolates between two limiting behaviours.
At high temperatures ($T \gg 1$), the field is essentially Gaussian, and Wick's theorem (see \Cref{thm:theoremWick}) yields the exact four-point function $\langle \varphi^4 \rangle$:
\begin{equation}
    \langle \varphi^4 \rangle \approx 3\langle \varphi^2 \rangle,
\end{equation}
such that $\lim_{T \gg 1} K(T)\approx 0$.
At low temperatures, the field is frozen at a non-zero value $|\phi_0|$, so that
\begin{equation}
    \langle \varphi^2 \rangle \approx \phi_0^2, \qquad \langle \varphi^4 \rangle \approx \phi_0^4.
\end{equation}
Thus, $\lim_{T \ll 1} K(T)\approx 2/3$.
At the transition point, the correlation length becomes singular and the absolute system size becomes irrelevant (in the \rg sense): the physical transition temperature is therefore identified as the crossing point of $K(T)$ for grids of different sizes.
The top panel of \Cref{fig6} illustrates this for grids of sizes $N \in \{ 60, 100, 140 \}$ at fixed coupling $b$ (see \Cref{potentialV}).
On varying $b$, we obtain an experimental value $T_c^{(\text{exp})}(b)$ and study the behaviour of the eigenvalue distribution of the matrix $C$ across temperatures.

The \snr boundary $\Lambda_{\text{exp}}(T)$, previously defined by the \gsa (see \Cref{sec:gsa}), also depends on temperature: a qualitative change in the distribution between the high-temperature Gaussian regime and the low-temperature regime, in which large spin clusters dominate, can be interpreted as the emergence of a signal associated with the appearance of highly correlated macroscopic regions (mirroring what is observed in real-world images).
The transition region around $T_c$ therefore marks the precise \snr transition point, which is reflected in the empirical behaviour of $\Lambda_{\text{exp}}(T)$ (see the bottom panel of \Cref{fig6}).
The transition point highlighted in the figure corresponds to the inflection point, at which a discontinuity emerges that becomes increasingly pronounced as the grid size increases.
This point marks the transition from a high-temperature regime, characterised by a distribution very close to the \mpdistr law, to a low-temperature regime in which an increasing fraction of the \dof is captured by the emerging order (such that $\Lambda_{\text{exp}}(T) \to 0$ as $T \to 0$).
At the transition, a macroscopic fraction of the \dof suddenly becomes ordered and therefore abruptly exits the noise-dominated regime.
The \snr boundary $\Lambda_{\text{exp}}(T)$ decreases in the low-temperature regime until it vanishes as $T \to 0$.
In the high-temperature regime, the distribution stabilises as the typical scale of fluctuations tends towards a constant.

As a complementary benchmark, we compare the \gsa-based estimate of the \snr boundary with a more standard method in the literature, in the spirit of the approach of Bouchaud and Potters~\cite{Bouchaud1,Bouchaud2}.
We use the \kl divergence:\footnote{%
    Although alternative metrics, such as the Wasserstein distance, are in principle available (and theoretical connections between them exist~\cite{Belavkin_2018}), the \kl divergence is particularly advantageous here owing to its high sensitivity to the \snr.
    It does, however, strictly require both distributions to be defined over the same sample space $\mathds{X}$.
}
\begin{definition}{Kullback-Leibler Divergence}{KLdiv}
    For two (probability) distributions $\mathcal{P} \colon \mathds{X} \to [0,1]$ and $\mathcal{Q} \colon \mathds{X} \to [0,1]$ defined on the same probability space $\Omega$, the \kl divergence is defined as:
    \begin{equation}
        \mathrm{D}_{\text{KL}}(\mathcal{P} \lVert \mathcal{Q})
        =
        \sum_{x\in \mathds{X}}\, \mathcal{P}(x)\,\ln \left( \frac{\mathcal{P}(x)}{\mathcal{Q}(x)}\right).
        \label{eq:KLdiv}
    \end{equation}
    From an information-theoretic perspective, the \kl divergence quantifies the additional information required to encode data when using an approximating distribution $\mathcal{Q}$ instead of the true distribution $\mathcal{P}$.
\end{definition}
In this work, $\mathcal{P}$ represents the analytic momentum distribution induced by the asymptotic \mpdistr law (i.e.\ the distribution of the generalised momentum $p$ obtained by inverting the eigenvalue-momentum correspondence $\lambda = (p^2+m^2)^{-1}$), while $\mathcal{Q}$ is the empirical momentum density $\rho_G$ defined in~\eqref{eq:empirical_momentum_density}.
To compare the empirical distribution with the baseline theoretical model, we minimise \eqref{eq:KLdiv}.\footnote{%
    Strictly speaking, \kl is not a true metric, as it lacks symmetry and does not satisfy the triangle inequality.
}
The upper edge of the empirical momentum distribution $\rho_G$ defines an estimator for $\lambda_c(T)$, whose behaviour is shown in the bottom panel of \Cref{fig6} and closely matches that derived from the \gsa.
In the same manner, we extract a measurement of the critical temperature $T_c^{(KL)}(b)$ for different values of $b$.
\Cref{fig7} summarises our findings, comparing the three estimates: $T_c^{(\text{exp})}(b)$, $T_c^{(\text{\gsa})}(b)$, and $T_c^{(\text{\kl})}(b)$.

\begin{figure}[t]
    \centering
    \includegraphics[width=0.7\textwidth]{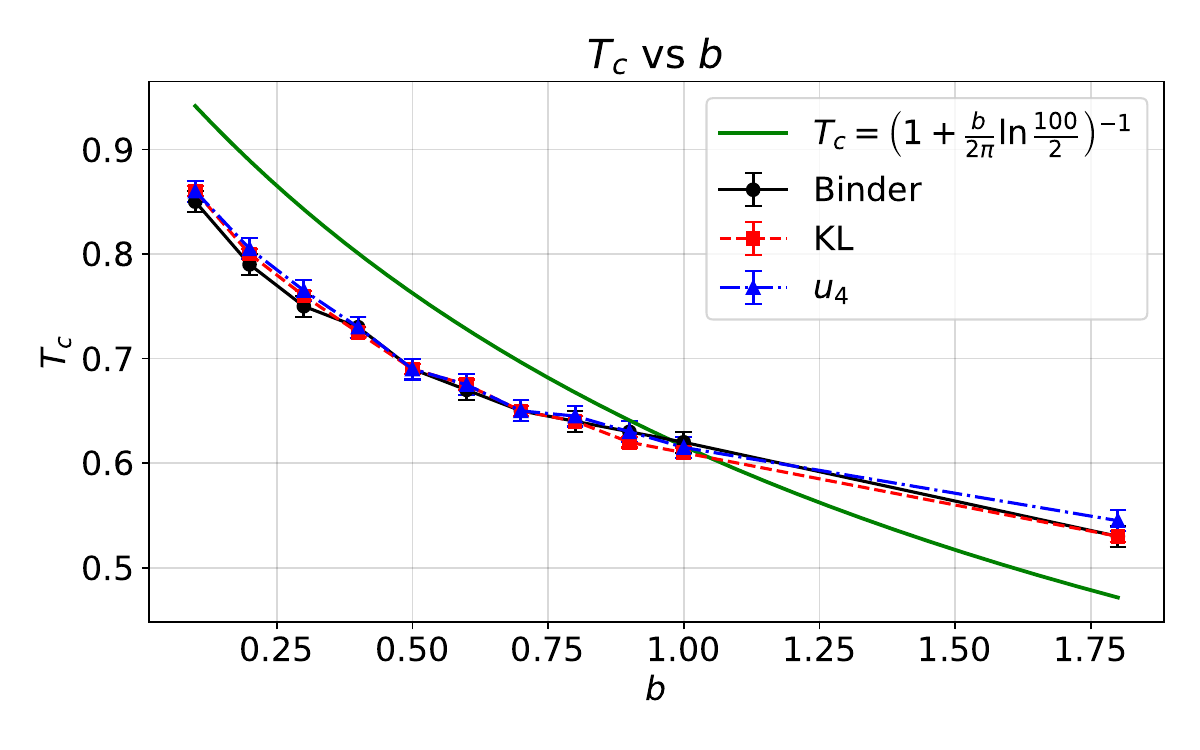}
    \caption{%
        Comparison of the critical temperatures $T_c(b)$ obtained by the various methods, for different values of $b$ and $N=100$.
        The final point at $b \approx 1.8$, corresponding to the \enquote{Ising-like} case, is in good agreement with Onsager's temperature ($\approx 0.57$).
        The mean-field approximation obtained through the Hartree method is given by the green curve (see \Cref{sec:app1}).
    }\label{fig7}
\end{figure}

By formulating these transitions as an anomaly-detection task and using \gsa to perform the detection, we have empirically shown that the \snr boundary proposed by the \gsa, in its most conservative version, aligns with experimental observations.
It competes with standard methods, such as those based on the minimisation of the \kl divergence, while offering a far more readable signature of the transition.
As shown in the case of the Ising model in the previous subsection, this qualitative assessment can be translated into quantitative results.
These findings establish the \gsa as a robust and reliable method, with a precision at least comparable to established techniques in the literature for the estimation of the critical temperature.
The results also validate the method against its original objective, making the \gsa a high-precision approach for anomaly detection in low-\snr regimes.
From a physics perspective, the \gsa is particularly interesting because it replaces a non-perturbative out-of-equilibrium study with an analysis of the \rg flow behaviour around the Gaussian fixed point (within the perturbative regime), associated with an effective equilibrium theory.
This correspondence offers a direct practical advantage over the Binder method, as it only requires varying the temperature rather than the system size.
As a consequence, the \gsa could be especially relevant for systems that lack self-averaging (i.e.\ systems in which sample-to-sample fluctuations do not vanish in the thermodynamic limit), notably those with quenched disorder such as spin glasses~\cite{de2006random,Bouchaud4,lahoche2024intriguing}.

\subsubsection{Ising Model}

\begin{figure}[t]
    \centering
    \includegraphics[width=0.49\textwidth]{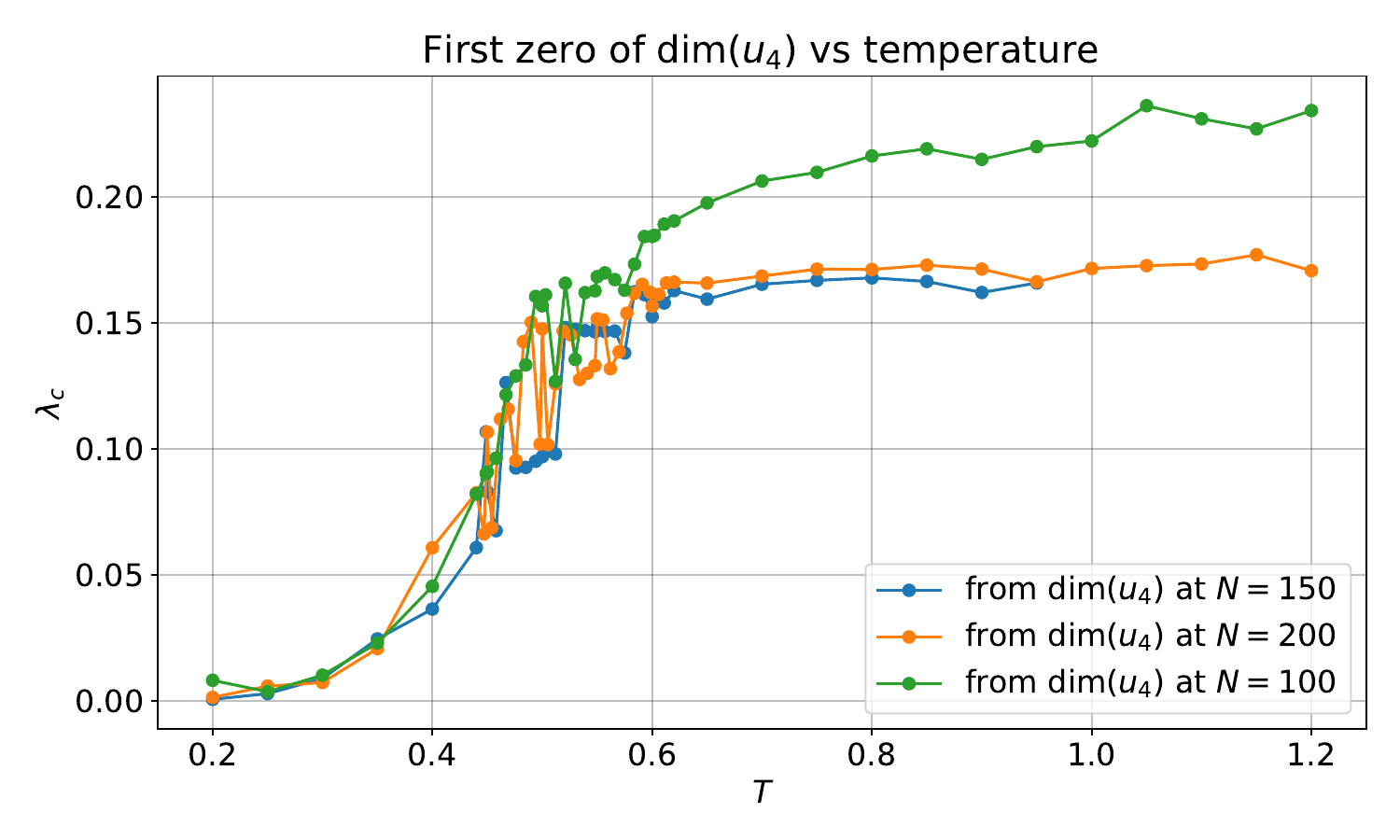}
    \hfil
    \includegraphics[width=0.49\textwidth]{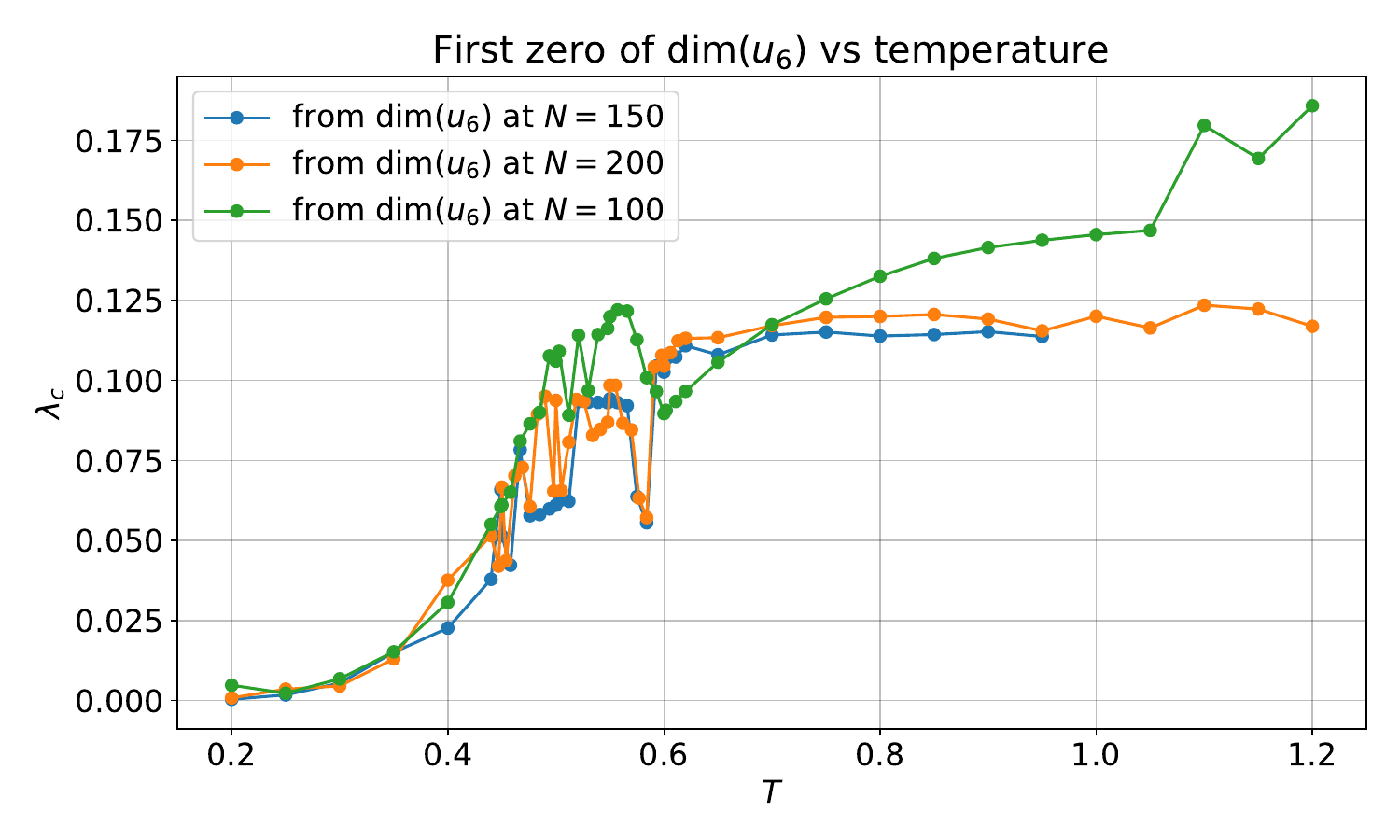}
    \hfil
    \includegraphics[width=0.49\textwidth]{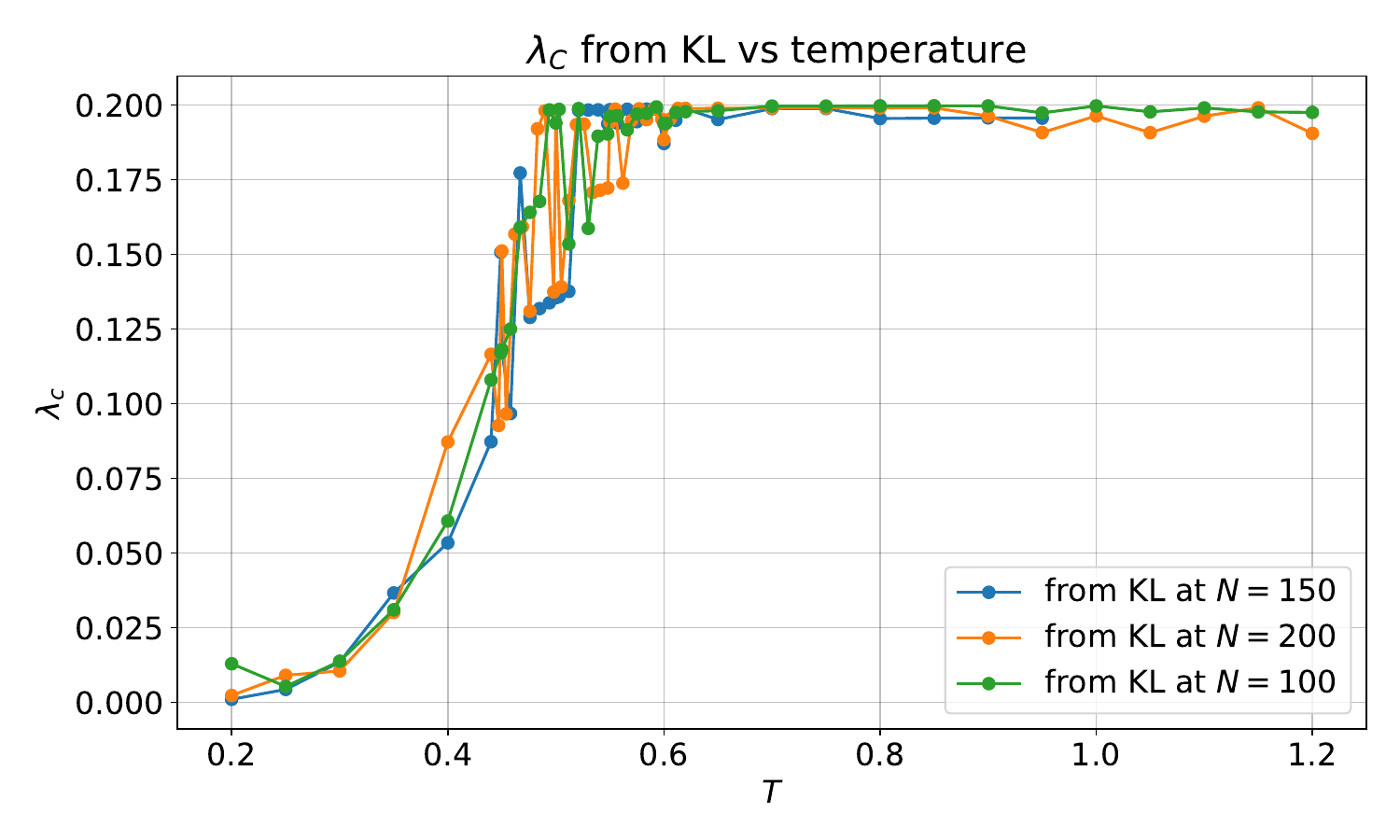}
    \caption{%
        Position of $\lambda_c$ as a function of $T$ with different methods: $\dim_{\tau}(u_4)$ (top left), $\dim_{\tau}(u_6)$ (top right), and \kl divergence (bottom).
    }\label{fig44}
\end{figure}

In this subsubsection, we analyse the phase transition of the pure Ising model, corresponding to the parametrisation in \eqref{VIsing} as $\varepsilon \to \infty$, using a Monte Carlo method with $J=1/4$.
We no longer describe the dynamics of a continuous field but rather that of discrete variables, taking values $S_i = \pm 1$ at each lattice site, with the Hamiltonian given by \eqref{IsingModel} (details on the correspondence with Model~A are provided in \Cref{sec:app2}).
The experimental method differs from the previous subsubsection in that we use \emph{importance sampling} via the Metropolis--Hastings algorithm.
The algorithm consists of the following steps:
\begin{enumerate}
    \item \textbf{Initialization:} We start from a configuration where all spins are randomly distributed, mimicking a high-temperature regime.
    \item \textbf{Selection:} A site $i$ is chosen at random on the lattice.
    \item \textbf{Energy variation calculation ($\Delta E$):} We determine the energy change that would result from flipping the spin ($S_i \to -S_i$).
          The corresponding energy variation is:
          \begin{equation}
              \Delta E = E_{final} - E_{initial} = 2 J S_i \sum_{j \in \text{neighborhood}} S_j,
          \end{equation}
          where the sign is consistent with the ferromagnetic convention $H = -J \sum S_i S_j$ of \eqref{IsingModel}: a flip aligned with the local field (\mbox{$S_i \sum_j S_j > 0$}) raises the energy (\mbox{$\Delta E > 0$}) and is therefore disfavoured, while a flip against the local field is energetically favoured.
\end{enumerate}
The results are summarised in \Cref{fig44,fig45}.
The Binder cumulant method must be adapted to the discrete case by replacing the continuous field $\varphi$ of \eqref{eqLangevin} with the total magnetisation $M$ of a spin configuration.

The magnetisation for a given configuration is defined as:
\begin{equation}
    M \defeq \sum_{i=1}^{N^2} S_i
\end{equation}
and the Binder cumulant becomes:
\begin{equation}
    K \defeq 1-\frac{\langle M^4 \rangle}{3 \langle M^2 \rangle^2}.
\end{equation}
Here, to compute the moments, we make use of ergodicity, and, for the quantity $\mathcal{O}_k$, we consider an average over a large enough ensemble of $N_s$ realisations:
\begin{equation}
    \langle \mathcal{O} \rangle = \frac{1}{N_s} \sum_{k=1}^{N_s} \mathcal{O}_k.
\end{equation}
Typically, $N_s=10^4$ configurations around the transition.
This value has to be large enough because of the critical slowing down (see \Cref{sec:app1} and below).

The Binder cumulant method provides an estimate of the critical temperature in the interval $T_c \in [0.58, 0.60]$, corresponding to a deviation of $+2.2\%$ to $+5.8\%$ from Onsager's analytical prediction ($T_c \approx 0.5673$ for $J=1/4$, as derived in \eqref{IsingModel}).
The behaviour of the magnetisation yields a similar, if slightly less precise, result with central estimate $T_c \approx 0.59$ in the same interval.

The figures, which are comparable to those obtained for Model~A, clearly show three distinct regimes:
\begin{enumerate}
    \item A high-temperature regime for $T > 0.6$, where the curves form a plateau.
          This is a consequence of the chosen parametrisation of the potential: as $T \to \infty$, the distribution of the magnetisation tends toward a Gaussian regime in which the variance is fixed by the parametrisation and the fluctuations are statistically suppressed.
    \item A transition regime at $0.45 \le T < T_c$ marked by strong instability: the correlation length $\xi$ becomes so large that the noise--signal separation becomes numerically unstable.
    \item A low-temperature regime for $T < 0.45$, where the cutoff $\Lambda$ decreases because the signal (order) occupies the entire space and the \snr becomes low, with the majority of available degrees of freedom captured by the emerging order.
\end{enumerate}
The presence of oscillations and instability in the critical region is numerically expected.
It is primarily a consequence of the divergence of the correlation length and the associated critical slowing down.

\begin{figure}[t]
    \centering
    \includegraphics[width=0.7\textwidth]{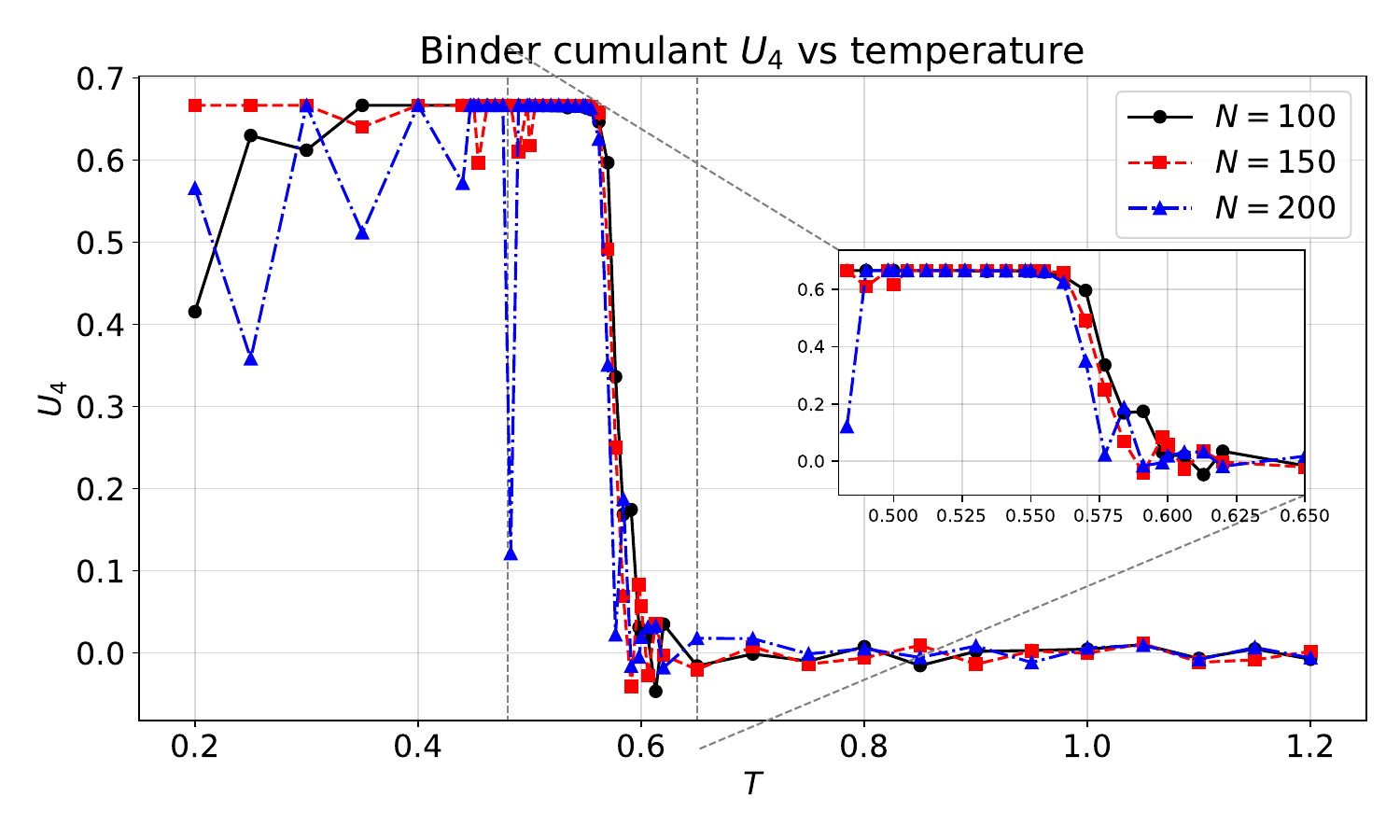} \\[0.25em]
    \includegraphics[width=0.7\textwidth]{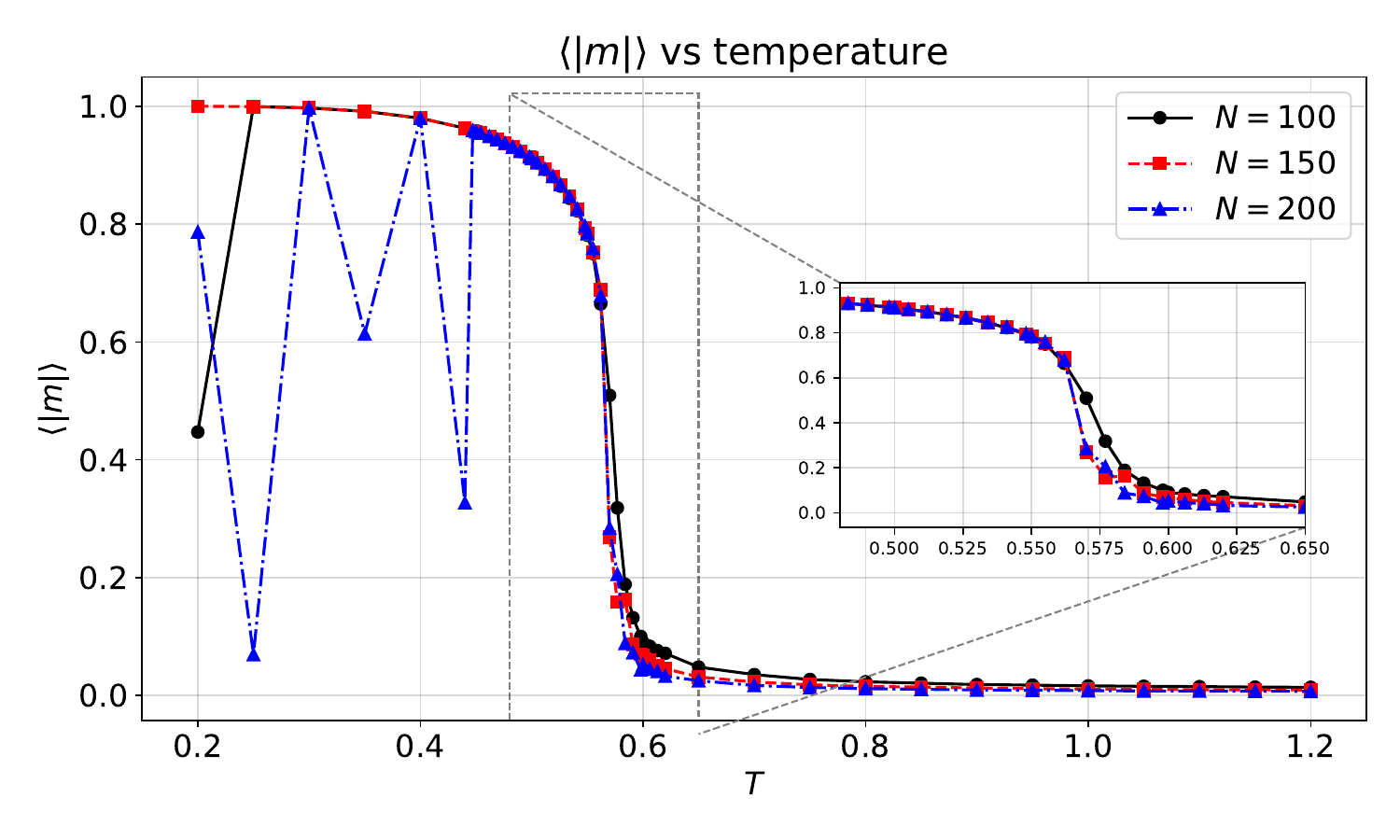}
    \caption{%
        (Top) Binder cumulant of the Ising model for different system sizes.
        (Bottom) Absolute magnetisation for the Ising model for various values of $N$.
    } \label{fig45}
\end{figure}

This critical slowing down causes instabilities to persist on the time scales accessible to the Monte Carlo simulation: due to strong correlations, the flip of a single spin is rejected by the Metropolis criterion with high probability, and the configuration evolves only through rare collective rearrangements.
At $T_c$ itself, the correlation length diverges, so the system hosts fluctuations on every length scale, and a single-spin flip can in principle trigger a cascade across the entire lattice: thermal noise and domain rearrangements become indistinguishable on any finite time scale.

Notably, the transition is significantly clearer via \gsa, specifically through the quartic dimension, but even more so through the sextic dimension: the transition toward the high-temperature plateau (where the sextic coupling reaches its marginality threshold and \mpdistr noise dominates) occurs around $T = 0.58 \pm 0.01$, which is in good agreement with the theoretical prediction.
This constitutes a stringent validation of \gsa: the canonical dimension of the sextic coupling detects the Ising transition to within approximately $2\%$ (the central value of the estimate, $0.58$, deviates from Onsager's $0.5673$ by $2.2\%$), simply by observing when the interaction couplings reach the marginality threshold.
The \kl method, by contrast, predicts a lower value, around $0.53$, corresponding to a deviation of approximately $6.6\%$ (or, equivalently, a factor of about three times worse than the sextic-criterion \gsa estimate at the central value).
This suggests that the \kl method is significantly more impacted by the discrete nature of the Ising transition than the \gsa approach.

Finally, numerical instabilities appear stronger in this case than in Model~A.
In the Ising model, this is primarily caused by the fact that the system can only overcome energy barriers through discrete jumps.
In contrast, the continuous nature of the Langevin dynamics in Model~A provides a smoother evolution than the discrete Monte Carlo spin-flip dynamics.
For Model~A, the results from the \kl method were comparable to those from \gsa.
The results presented here demonstrate that \gsa not only reproduces but also improves the prediction of a known physical behaviour, by using signal detection as the underlying mechanism.
Finally, the asymptotic scale (for high temperature) of the cutoff is of particular interest: based on the canonical dimension of the quartic coupling, the \gsa cutoff value remains below the one provided by the minimisation of the \kl method at all temperatures we have explored.

\section{The Formalisation of Generalised Scale Analysis}\label{Formalization}
This section formalises \gsa, refining the dimensional criteria introduced in \Cref{sec:gsa} and centred on the critical cut-off $\lambda_c$ of \Cref{Lambdac}.
The empirical \rg flow and the universal \mpdistr baseline are compared through a family of spectral distances, which act as order parameters quantifying the deviation of the empirical flow from the reference.
We begin with the \mpdistr \emph{distribution proxy} and the corresponding \emph{direct} and \emph{inverse} distances, and close with a discussion of how higher-order couplings serve as sensitive probes of \uv structure.

\subsection{The Marchenko--Pastur Distribution Proxy and Concordance Index}

We first define the \mpdistr \emph{distribution proxy}.
Let $\mu(\lambda)$ denote the spectral density of the \ecm, and let $\mu_{\Delta_{\text{phys}}}(\lambda)$ denote the bulk distribution depending on the discretisation scale $\Delta_{\text{phys}}$ introduced in \Cref{eq:time_step}.
Let $\lambda_{\pm}$ denote the empirical spectral edges.
As a conservative numerical estimate, we identify $\lambda_+$ with the first global minimum of $\dim_\tau(u_4)$ in the \ir, deliberately excluding local fluctuations.
\begin{definition}{Adherent Set of Marchenko--Pastur Distributions}{adherentMP}
    Let $\mathcal{D}$ denote the one-parameter family of \mpdistr distributions $\nu_{\sigma_*^2, q} \equiv \nu(q)$ of \Cref{thm:thMP}, parametrised by the ratio $q$ and the variance
    \begin{equation}
        \sigma_*^2 \defeq \frac{\lambda_+ - \lambda_-}{4\sqrt{q}},
    \end{equation}
    fixed by the empirical edges $\lambda_{\pm}$.
\end{definition}
The expression for $\sigma_*^2$ is a direct consequence of the MP edges $\lambda_\pm = \sigma^2 (1 \pm \sqrt{q})^2$ of \Cref{thm:thMP}: expanding the difference $\lambda_+ - \lambda_-$ yields $4\sigma^2\sqrt{q}$, which inverts to the formula above.
For each $q$, the resulting $\sigma_*^2(q)$ is the unique variance that places the bulk of the corresponding MP law exactly on the empirical support $[\lambda_-, \lambda_+]$, and the family $\mathcal{D}$ traces out all such distributions.

We define the direct Gaussian distance as follows:\footnote{%
    The term \enquote{Gaussian} refers to the fact that the distance is built from the canonical dimensions, themselves derived from the Gaussian power counting of \Cref{sec:lesson} and \eqref{canonicalu4}.
}
\begin{definition}{Direct Gaussian Distance}{directGaussianDistance}
    Let $\mu$ denote the empirical spectral density, and let $\nu \in \mathcal{D}$ be a reference \mpdistr distribution.
    The direct Gaussian distance between $\mu$ and $\nu$ is
    \begin{equation}
        G_\mathcal{D}\qty(\mu,\nu(q))
        \defeq
        \max_{\lambda \in (\lambda_-, \lambda_+)} \,
        \left| \eval{\dim_{\tau}\qty(u_4)}_{\mu(\lambda)} - \eval{\dim_{\tau}\qty(u_4)}_{\nu(\lambda)} \right|.
    \end{equation}
\end{definition}
The maximum is taken over the common spectral support, on which both $\mu$ and $\nu$ are defined.
The proxy is the member of $\mathcal{D}$ minimising this distance.
\begin{definition}{Marchenko--Pastur Distribution Proxy}{MPdistributionProxy}
    The \mpdistr distribution proxy $\nu_{*}(\lambda) \in \mathcal{D}$ is the analytical \mpdistr distribution satisfying
    \begin{equation}
        G(\mu,\nu_{*})
        =
        \min_{q} \,
        G_\mathcal{D}(\mu, \nu(q)).
    \end{equation}
\end{definition}
If the minimiser lies in the interior of the admissible range of $q$, the proxy satisfies the first-order necessary condition
\begin{equation}
    \eval{\frac{\mathrm{d}}{\mathrm{d} q}\, G_{\mathcal{D}}(\mu,\nu(q))}_{q=q_*} = 0,
    \label{eq:FOC}
\end{equation}
where $q_*$ denotes the value of $q$ corresponding to $\nu_*$.
The reduced distance $G(\mu, \nu_*)$ then quantifies how far the empirical distribution lies from the adherent set, and we call it the direct concordance index.
\begin{equation}
    \eta(\mu) \defeq  G(\mu,\nu_{*}).
    \label{eq:eta}
\end{equation}
The set $\mathcal{D}$ is, in topological terms, the closure of the family of \mpdistr distributions consistent with the empirical edges: it contains the limiting distributions at the boundary of the admissible range of $q$ (in particular the $\sigma^2 \to 0$ and $q \to 0$ limits), beyond which no \mpdistr law can reproduce the empirical support.
The concordance index $\eta(\mu)$ therefore measures the global agreement between the empirical and reference \rg flows, with $\eta(\mu) = 0$ when the empirical spectrum is itself of the \mpdistr form.

We finally define the direct absolute global adherence $\zeta(\mu)$ of $\mu$ as
\begin{equation}
    \zeta(\mu) \defeq
    \min_{\lambda \in (\lambda_-, \lambda_+)} \,
    \left|  \eval{\dim_{\tau}\qty(u_4)}_{\mu(\lambda)} - \eval{\dim_{\tau}\qty(u_4)}_{\nu_*(\lambda)} \right|.
    \label{eq:zeta}
\end{equation}
Whereas $\eta(\mu)$ measures the global deviation, $\zeta(\mu)$ quantifies the typical amplitude of local fluctuations of the empirical canonical dimension around the proxy, providing a single number characteristic of the bulk.
Both indices are nevertheless global order parameters: they do not single out the spectral region in which the signal is expected to be most visible.
As established in \Cref{sec:gsa}, fluctuations dominate the \uv regime, while signal-induced effects and large deviations are concentrated in the \ir.
The sign of these deviations moreover carries diagnostic information.
We therefore refine the definitions as follows.

\begin{definition}{Local Direct Concordance Index and Direct Relative Adherence}{localDirectConcordance}
    The local direct concordance index at scale $\lambda$, denoted $\eta_{<\lambda}(\mu)$, and the direct relative adherence $\zeta_{<\lambda}(\mu)$ of the distribution $\mu$ are defined by
    \begin{equation}
        \eta_{<\lambda}(\mu)
        \defeq
        \max_{\lambda' \in (\lambda, \lambda_+)} \,
        \left|  \eval{\dim_{\tau}\qty(u_4)}_{\mu(\lambda')} - \eval{\dim_{\tau}\qty(u_4)}_{\nu_*(\lambda')} \right|,
    \end{equation}
    \begin{equation}
        \zeta_{<\lambda}(\mu)
        \defeq
        \min_{\lambda' \in (\lambda, \lambda_+)} \,
        \left( \eval{\dim_{\tau}\qty(u_4)}_{\nu_*(\lambda')} - \eval{\dim_{\tau}\qty(u_4)}_{\mu(\lambda')} \right).
    \end{equation}
\end{definition}
The relative sign in $\zeta_{<\lambda}(\mu)$ is essential: the analysis of \Cref{sec:gsa} indicates that a positive value is a necessary condition for the presence of a signal within the spectrum.

\begin{figure}[t]
    \centering
    \includegraphics[width=0.95\textwidth]{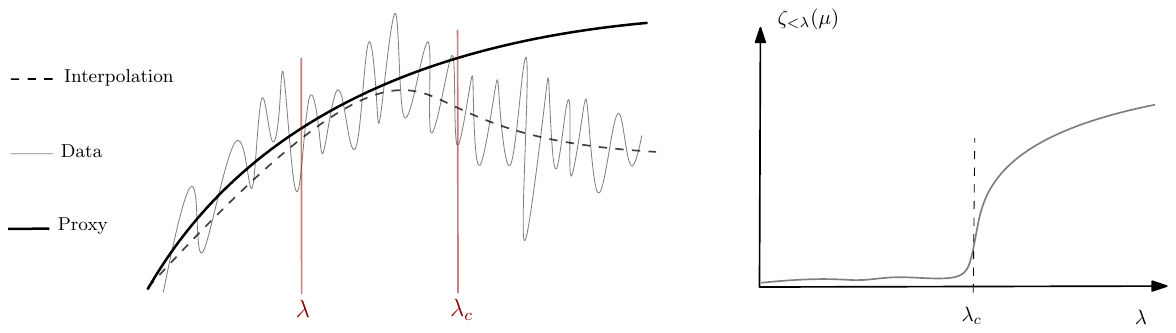}
    \caption{%
        Qualitative illustration of the behaviour of the direct relative adherence $\zeta_{<\lambda}(\mu)$.
    }\label{fig:illustration}
\end{figure}

\begin{figure}[t]
    \centering
    \begin{minipage}[b]{0.49\textwidth}
        \centering
        {$\beta = 0.21$} \\[0.25em]
        \includegraphics[height=0.25\textheight]{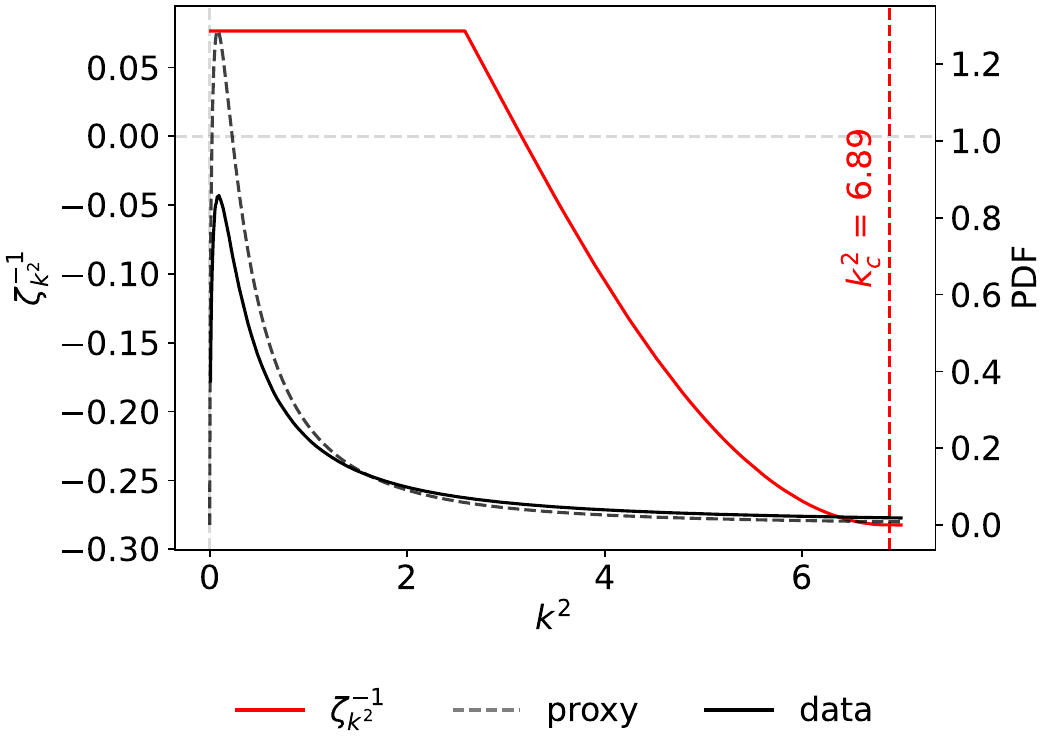}
    \end{minipage}
    \hfill
    \begin{minipage}[b]{0.49\textwidth}
        \centering
        {$\beta = 0.37$} \\[0.25em]
        \includegraphics[height=0.25\textheight]{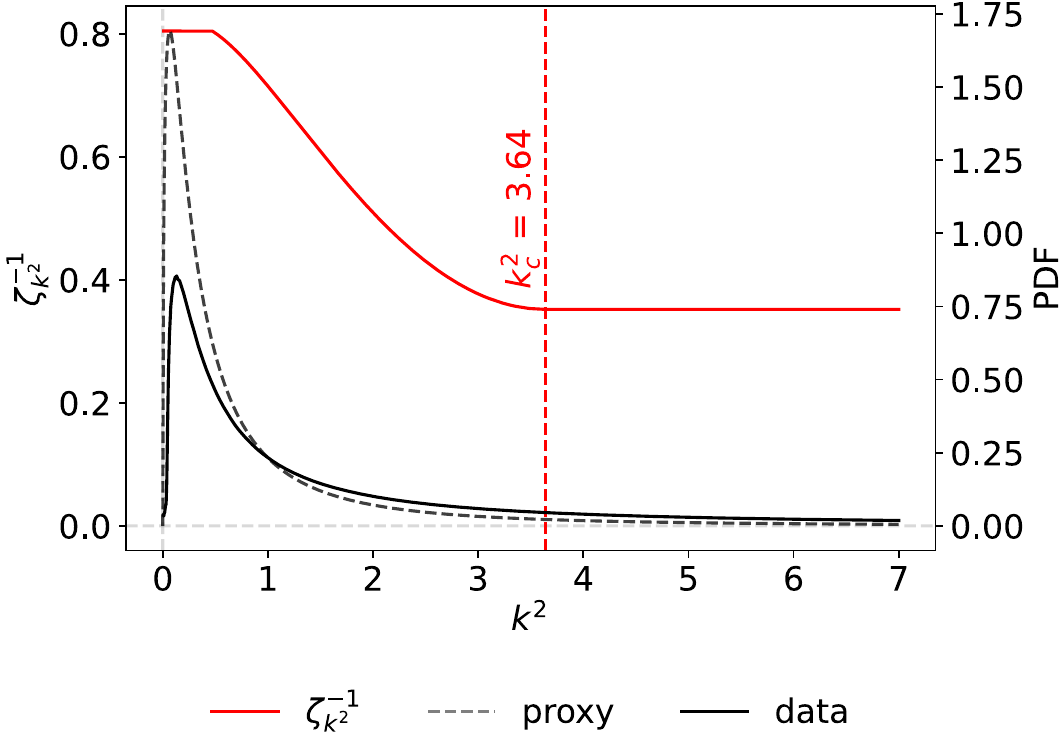}
    \end{minipage}
    \\[1em] 

    \begin{minipage}[b]{0.49\textwidth}
        \centering
        {$\beta = 0.50$} \\[0.25em]
        \includegraphics[height=0.25\textheight]{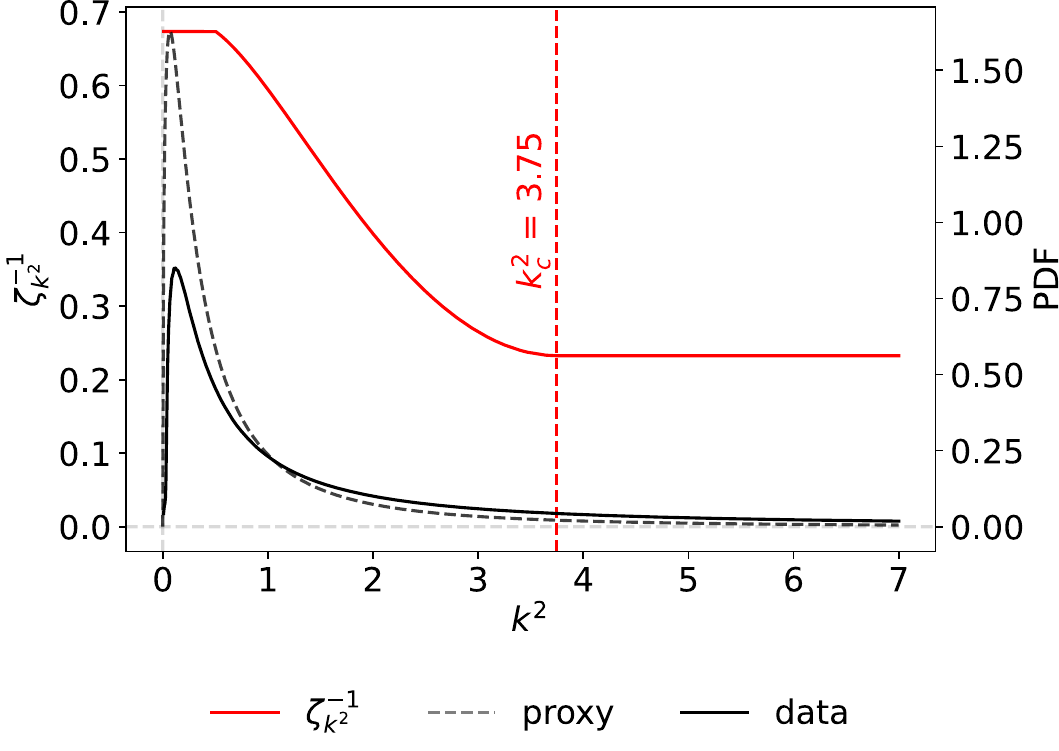}
    \end{minipage}

    \caption{%
        Local inverse adherence values (left axis, red curve) compared with spectral densities (right axis) across varying \snr values ($\beta$).
    }\label{fig:local_adherence}
\end{figure}

The local definitions above align with the anticipated behaviour of the statistical ensemble.
Numerical results show that fluctuations oscillate around the proxy distribution, a phenomenon visible at sufficiently high \uv scales (corresponding to small $\lambda$).
For $\lambda$ in this regime, $\zeta_{<\lambda}(\mu)$ remains essentially zero.
As the global trend drives the empirical spectrum away from the proxy, the probability of fluctuations crossing the proxy curve vanishes at a critical value $\lambda_c$, and $\zeta_{<\lambda}(\mu)$ becomes non-zero, provided the signal scale exceeds the typical fluctuation scale.
\Cref{fig:illustration} qualitatively illustrates this behaviour, showing that $\zeta_{<\lambda}(\mu)$ deviates significantly from zero for $\lambda > \lambda_c$.
Consequently, $\lambda_c$ serves as an approximation of the decoupling cut-off $\Lambda$, at which the empirical distribution begins to diverge from the proxy.
A more pragmatic approach, discussed in~\cite{RG3,RG4}, defines $\lambda_c^\prime$ at the point where $\dim_{\tau}(u_4) = 0$.
While these two values generally differ, both signify a substantial deviation from the MP ensemble.
We therefore adopt the conservative choice
\begin{equation}
    \Lambda = \min (\lambda_c,\lambda_c^\prime).
    \label{eq:Lambda}
\end{equation}

\subsection{Inverse Gaussian Distance and Inverse Adherence}

The inverse Gaussian distance operates in momentum space, the variable conjugate to the spectrum under the inference relation \eqref{inferenceZero}.
The transformation from the empirical spectral distribution $\mu$ to the momentum distribution $\rho$ erases the finite support $[\lambda_-, \lambda_+]$: in the continuous limit and under the shift convention of \Cref{LocalFieldTheory}, this interval is mapped onto $(0, +\infty)$.
Working on an unbounded support simplifies the comparison of distributions across different empirical realisations, since the divergence at the spectral edge no longer needs to be regularised by hand.

\begin{definition}{Inverse Gaussian Distance and Inverse Adherence}{inverseGaussianDistance}
    Let $\rho_1(p^2)$ and $\rho_2(p^2)$ denote two momentum distributions, each related to its spectral counterpart through \eqref{eq:empirical_momentum_density}.
    The local inverse Gaussian distance at scale $k^2$, denoted $g_{k^2}(\rho_1, \rho_2)$, and the local inverse adherence $\zeta_{k^2}^{-1}(\rho_1)$ are defined by
    \begin{equation}
        g_{k^2}(\rho_1,\rho_2)
        \defeq
        \max_{p^2 \in (0,k^2)} \,
        \left|  \eval{\dim_{\tau}\qty(u_4)}_{\rho_1(p^2)} - \eval{\dim_{\tau}\qty(u_4)}_{\rho_2(p^2)} \right|,
    \end{equation}
    \begin{equation}
        \zeta_{k^2}^{-1}(\rho_1)
        \defeq
        \min_{p^2 \in (0,k^2)} \,
        \left( \eval{\dim_{\tau}\qty(u_4)}_{\rho_*(p^2)} - \eval{\dim_{\tau}\qty(u_4)}_{\rho_1(p^2)} \right),
    \end{equation}
    where $\rho_*$ is the momentum distribution associated with the proxy of $\rho_1$, i.e.\ the inverse of the MP proxy of \Cref{def:MPdistributionProxy} transported to the momentum variable.
\end{definition}

\Cref{fig:local_adherence} illustrates the evolution of the critical momentum $k^2_c$, the momentum-space analogue of $\lambda_c$, across varying \snr $\beta$, using the realistic distribution of \Cref{fig:gianduja}.
The proxy $\rho_*$ is determined by matching the empirical momentum distribution over the numerically explored range $k^2 \in [0, 7]$; the upper limit $k^2 = 7$ is arbitrary but lies sufficiently far in the \uv to capture the relevant fluctuations.
For $\beta$ values associated with the strongest signals (see \Cref{fig:figplotgianduja}), the empirical distribution deviates from the \mpdistr proxy at high \uv scales, as shown by the pronounced local inverse adherence $\zeta^{-1}_{k^2}$.
For weaker \snr, the decoupling between the empirical distribution and the proxy occurs deeper in the \uv regime, and $\zeta^{-1}_{k^2}$ remains small, possibly indistinguishable from background statistical fluctuations.
The non-vanishing asymptotic behaviour of $\zeta_{k^2}^{-1}$ in the \uv is a consequence of the mapping from $[\lambda_-, \lambda_+]$ to $(0, +\infty)$: the support extends to arbitrarily large $k^2$, so the distance between the empirical distribution and its proxy accumulates over this domain.

In summary, the dimension of the quartic coupling $u_4$ provides a robust order parameter in the \ir.
The canonical dimensions of higher-order couplings, such as $u_6$, are more sensitive to local fluctuations.
Since the canonical dimensions depend on the logarithmic derivative of the spectral density ($\dd \ln \rho / \dd t$, see \Cref{canonicalu4}), they can become ill-defined or singular in regions with spectral gaps or abrupt density variations.
Higher-order couplings therefore function as more sensitive probes of \uv microstructures.
Real-world applications of the dimensional criterion will be discussed in \Cref{part4}, where different use cases will be extensively explored.

\section{Evidence of a Dimensional Phase Transition}\label{sec:dimensional_phase_transition}
\begin{figure}[t]
    \centering
    \includegraphics[width=0.7\textwidth]{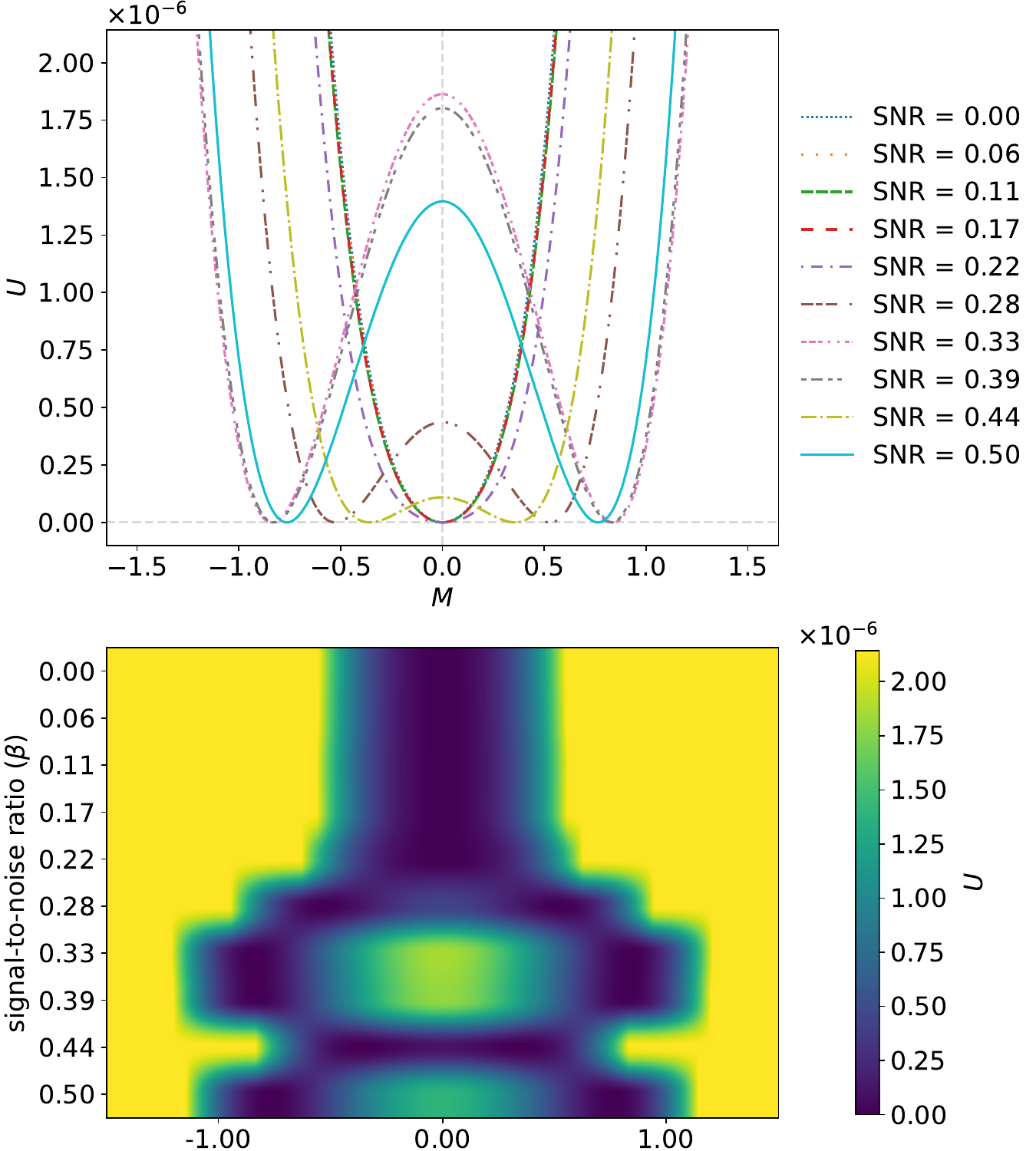}
    \caption{%
    Symmetry-breaking scenario at higher \snr ($\beta$) for the data set of \Cref{fig:gianduja}.
    The figure shows the \ir evolution of the effective potential, evaluated at the scale $k^2_{\text{\ir}}$, starting from the initial conditions at the mesoscopic scale $\Lambda$: $\bar{u}_2(\Lambda) = -8.24 \times 10^{-6}$, $\bar{u}_4(\Lambda) = 2.70 \times 10^{-6}$, and $\bar{u}_6(\Lambda) = 1.73 \times 10^{-6}$.
    }\label{fig:plotpot1}
\end{figure}

Throughout the previous sections, our analysis has focused on the vicinity of the Gaussian fixed point, the unique point at which all interaction couplings vanish and the theory reduces to a free, non-interacting form.
Because the canonical dimensions depend on the spectral density, and therefore on the scale, the system possesses no other global fixed point: a true \rg fixed point would require the couplings to become scale-invariant.
In this short concluding section, we probe the neighbourhood of the Gaussian fixed point in greater detail, where effects not captured by the perturbative analysis become accessible to the numerical flow without compromising the validity of the \lpa and the truncation at $\bar{P} = 3$ (see \Cref{flow1,flow2,flow3}).
This provides a numerical illustration of the dimensional phase transition that underlies \gsa as a dimensional mechanism for signal and anomaly detection in \dft.

We focus on \rg flow trajectories initialised near the Gaussian point, but sufficiently far from it that non-Gaussian interactions contribute to the dynamics.
Initial conditions are sampled at the mesoscopic scale $\Lambda$ using a Latin Hypercube scheme, generating $2.5 \times 10^3$ points in the domain $(\bar{u}_2, \bar{u}_4, \bar{u}_6) \in [-10^{-5}, 10^{-5}]^3$.
For each sampled initial condition, the flow equations are integrated numerically from $k^2 = \Lambda$ down to the infrared reference scale $k^2_{\text{\ir}}$.
We pay particular attention to the global morphology of the effective potential, which depends on the trajectory in a manner more robust to the specific choice of approximation than the individual coupling values.

\Cref{fig:plotpot1} shows the effective potential at $k^2_{\text{\ir}}$ as a function of the \snr $\beta$, with initial conditions chosen in the symmetry-restoration regime (the orange region for $D < 4$ of \Cref{fig_size_RS}).
As $\beta$ increases, the potential evolves from a single-well (symmetric) shape to a double-well shape, signalling the spontaneous breaking of the $\mathds{Z}_2$ symmetry, namely the sign-inversion symmetry $\varphi \to -\varphi$ of the order parameter.
\Cref{fig:plotpot2} tracks the symmetric phase volume, defined as the fraction of sampled initial conditions flowing to a $\mathds{Z}_2$-symmetric minimum, as a function of $\beta$.
This fraction exhibits a sharp drop, with maximal variations aligning with those of the canonical dimension $\dim_\tau(u_4)$ identified in \Cref{sec:gsa}.
The transition shown in \Cref{fig:plotpot1} is a direct consequence of the signal-induced deformation of the empirical eigenvalue distribution in the \ir.
Unlike standard spontaneous symmetry breaking in $\varphi^4$ models, typically driven by temperature, this transition is driven by a shift in the effective dimensionality of the system.
As the signal modifies the spectrum, the canonical dimensions of the couplings evolve, and the system is driven from a noise-dominated symmetric phase ($D \approx 3$) to a signal-dominated broken phase ($D > 4$).
We refer to this phenomenon as \emph{dimensional symmetry breaking}, which, to our knowledge, has not been previously identified in the literature.

\begin{figure}[t]
    \centering
    \includegraphics[width=0.7\textwidth]{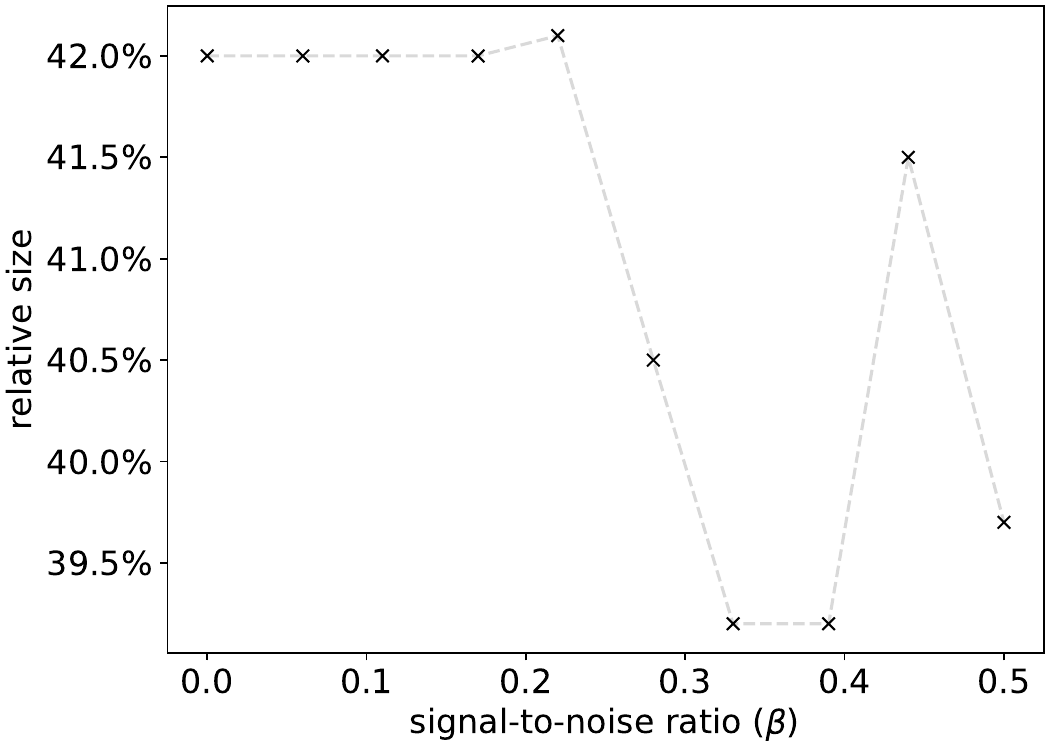}
    \caption{%
        Symmetric phase volume, relative to the total number of sampled initial conditions, as a function of the \snr ($\beta$).
    }\label{fig:plotpot2}
\end{figure}

\part{Real-World Applications and Advanced Techniques}\label{part4}

In the previous parts, we established the theoretical foundations of the \rg approach to signal detection and demonstrated its validity through the \gsa framework.
\Cref{part3} introduced the dimensional phase transition as a universal detection criterion, quantified the three detection thresholds $\beta_t$, $\beta_c$, and $\beta_O$, and validated the method against the exact Onsager solution of the two-dimensional Ising model.
Everything was verified under controlled conditions, primarily within the additive Gaussian noise model \eqref{eq:additive_model}.
Some natural questions arise: how does the framework behave when the Gaussian-noise idealisation is relaxed? What additional insight can it extract from data whose noise structure is itself rich and structured?

This last part addresses these questions along three complementary axes.
First, we extend the noise model beyond a single Gaussian source.
\Cref{Sec4-1} shows that multiple independent noise components generate a multimodal structure in the canonical dimension flow, providing an estimator for the number of distinct confounding sources.
The following section then tests the method against non-Gaussian and structured noises by blurring natural images through convolutions and injecting periodic interference patterns that mimic realistic degradation processes.
Second, we move from static to time-dependent data: using the coarsening dynamics of the two-dimensional Ising model, we measure the dynamic critical exponent $z$ through the spectral scaling of the cutoff index $\lambda_c(t)$, demonstrating that the \gsa naturally extends to temporal correlations.
Third, we confront the method with real-world sensor data from a hyperspectral image of the Martian surface acquired by the \emph{Compact Reconnaissance Imaging Spectrometer for Mars} (\textsc{crism}) instrument, where residual spectral correlations invisible to \pca-based benchmarks are detected in the post-\pca regime and shown to distinguish mineral classes with markedly different spectral signatures.

Each of these extensions validates the framework from a distinct perspective.
For the physicist, they confirm that the dimensional phase transition is not an artefact of the Gaussian-noise idealisation but survives under correlated, non-Gaussian, and temporally varying backgrounds.
For the data scientist, they provide evidence that the method operates reliably under the imperfect conditions of real-world data, where noise is rarely white, often structured, and accompanied by multiple confounding sources.
Together, the following sections demonstrate that the \rg approach to signal detection is not merely an elegant theoretical construction, but a practical tool whose domain of applicability extends well beyond the controlled setting in which it was originally formulated.

\section{Independent Noise Components}\label{Sec4-1}
Following the analysis of the cyclic phenomenon in \Cref{detectionTh}, we develop the corresponding interpretation in detail.
\Cref{fig:can_dim_mnist} shows the canonical dimensions evaluated at the \ir scale $k^2_{\text{\ir}}$ for two representative data sets: a realistic image and handwritten digits.
Two distinct regimes appear, one for each data set.
\begin{enumerate}
    \item For the handwritten digits, our results are consistent with the findings of \Cref{detectionTh}: the dimension of $u_4$ does not vanish but instead oscillates around the analytical value supplied by the \mpdistr proxy.
          By the detection criterion of \Cref{sec:gsa}, this is the signature of an absence of structured signal inside the bulk of the spectrum, a conclusion supported independently by the statistics of the eigenvectors discussed in \Cref{sec:gsa}.
          The isolated spikes, however, carry most of the information, while a faint residual signal remains visible in the bulk distribution.
    \item For the realistic image, by contrast, the canonical dimensions exhibit a richer dynamics: once $\beta$ exceeds the initial threshold $\beta_O$, the dimensions undergo a sequence of irregular cycles, climbing to successive local maxima before falling back, with the cycles demarcated by the set of thresholds $\{\beta_O^{(1)}, \beta_O^{(2)}, \dots, \beta_O^{(M_0)}\}$.
          The oscillations persist until a critical \snr $\beta_L$ is reached, beyond which the canonical dimensions resume a regular oscillation around the analytical values provided by the \mpdistr distribution proxy.
          At this stage the bulk of the matrix is well described by the proxy alone, and the original Gaussian matrix $Z$ has effectively decoupled from the signal.
\end{enumerate}

\begin{figure}[t]
    \centering
    \includegraphics[width=0.49\textwidth]{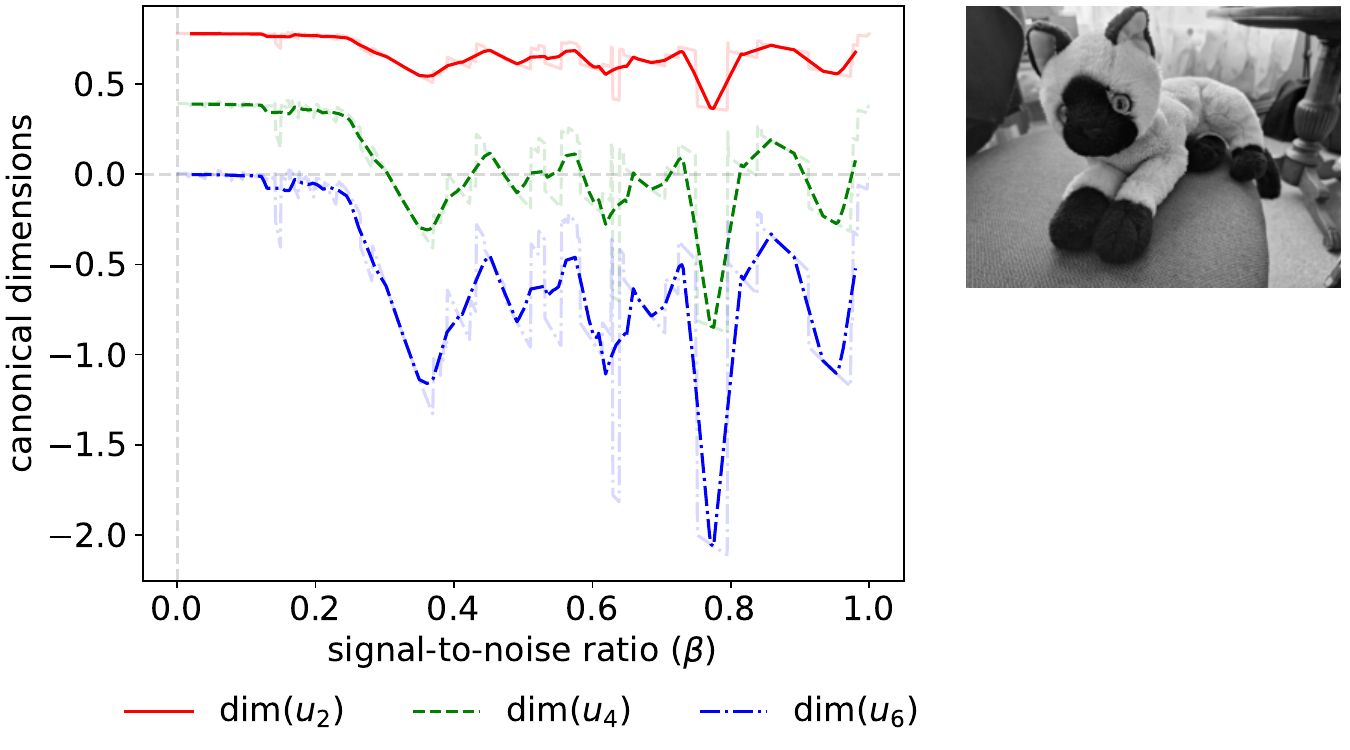}
    \hfil
    \includegraphics[width=0.49\textwidth]{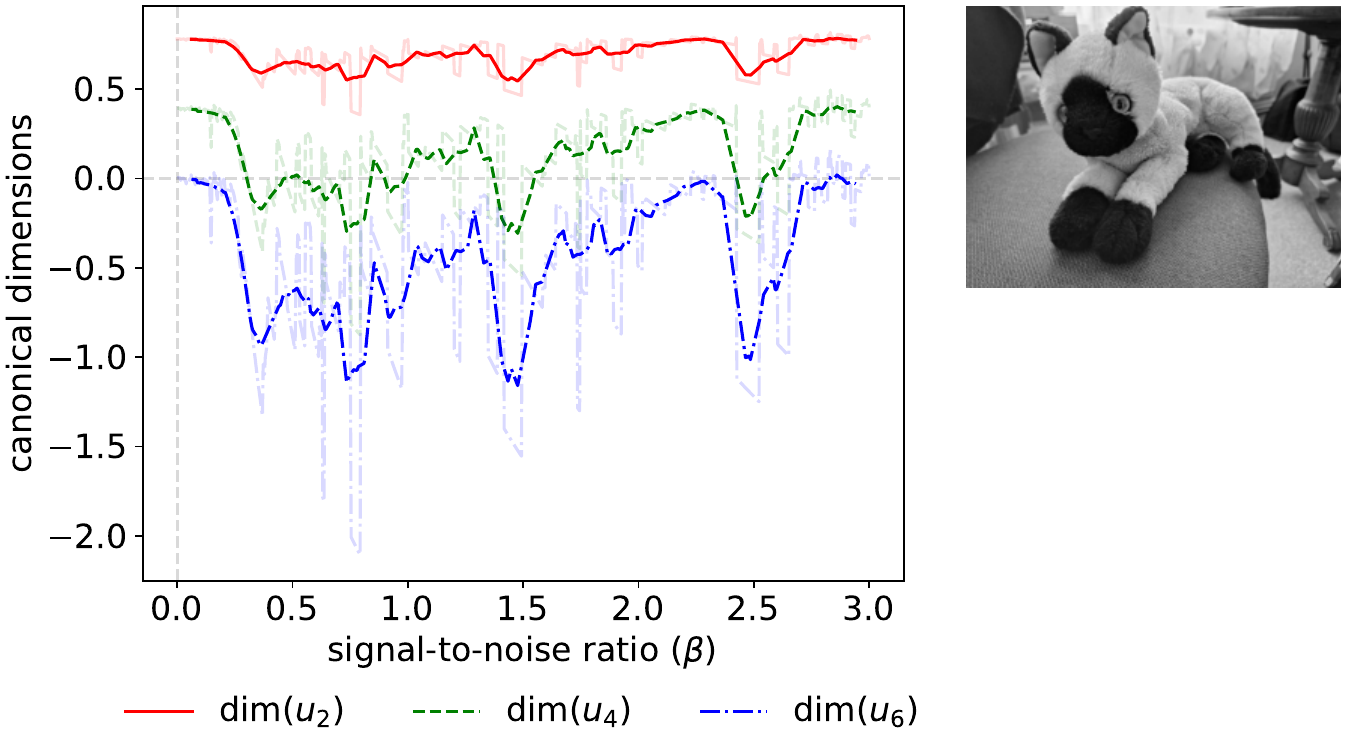}\\[0.25em]
    \includegraphics[width=0.49\textwidth]{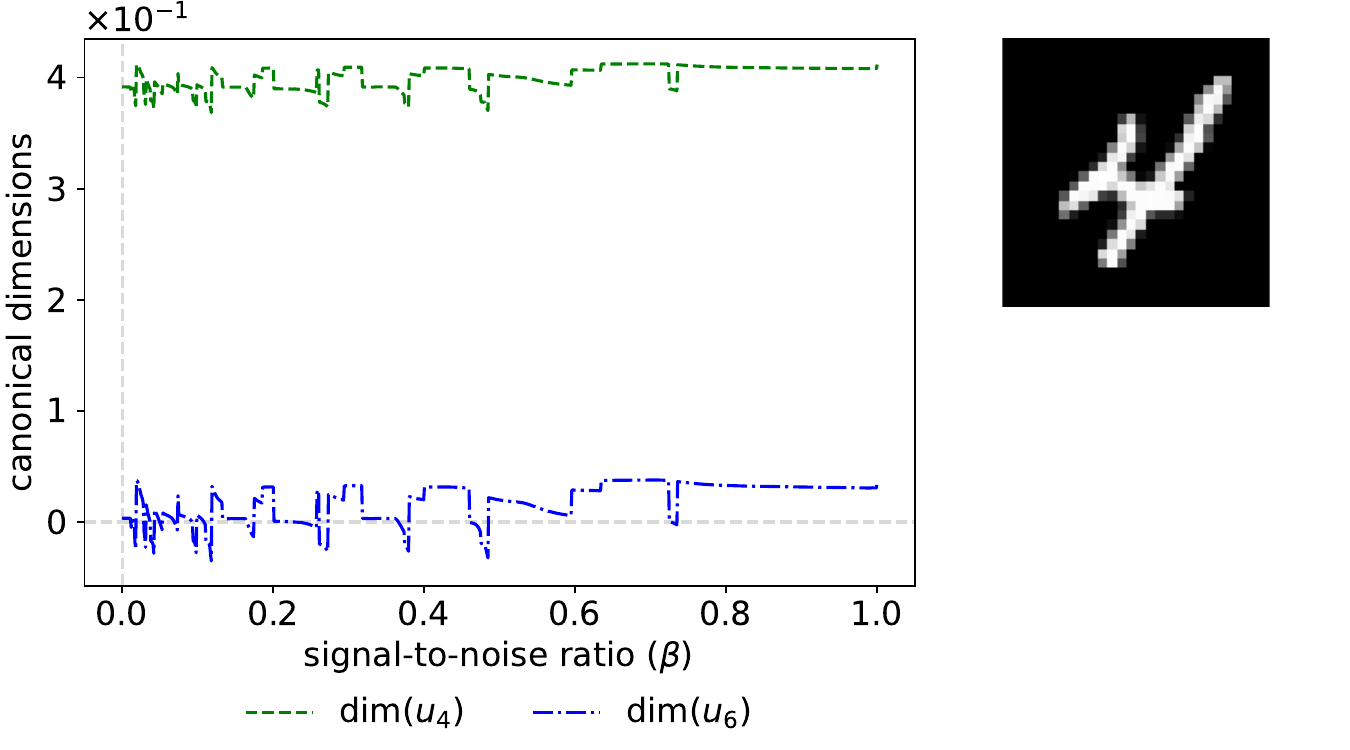}
    \hfil
    \includegraphics[width=0.49\textwidth]{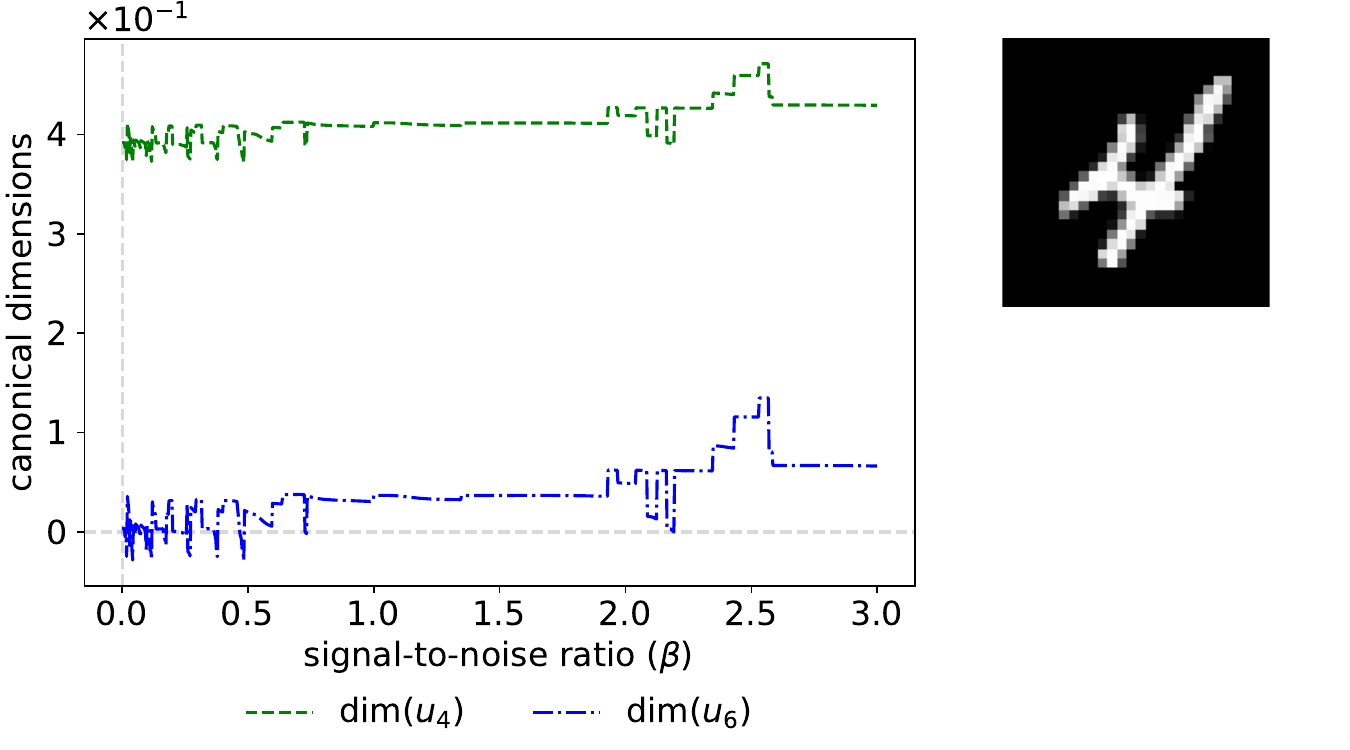}
    \caption{%
    Canonical dimensions at the \ir scale $k^2_{\text{\ir}}$ for a realistic image (top row) and a handwritten digit (bottom row).
    The left column covers $\beta \in [0, 1]$ and the right column $\beta \in [0, 3]$, illustrating the gradual decoupling of successive noise components with increasing \snr for the realistic image, and the absence of a comparable effect for the digit.
    }\label{fig:can_dim_mnist}
\end{figure}

These two regimes admit a common interpretation in terms of the noise distribution intrinsic to the data.
Within the additive model~\eqref{eq:additive_model}, we decompose the signal $S$ as
\begin{equation}
    S = S_0 + \sum\limits_{i = 1}^M \tilde{S}_i\qty(\omega_i).
    \label{eq:signal_decomp}
\end{equation}
The decomposition organises the constituents of the signal hierarchically: $S_0$ denotes the primary signal, which may combine discrete spectral spikes and a continuous part of the eigenvalue distribution, while the terms $\{\tilde{S}_i\}_{i=1, 2, \dots, M}$ describe systematic effects such as the sensor response, instrumental uncertainties, and background gradients.
The parameters $\omega_i$ act as confounders, quantifying the influence of these extraneous sources on the signal.
Although the components $\{ \tilde{S}_i \}_{i=1, 2, \dots, M}$ are usually classified as \enquote{noise}, they may follow arbitrary distributions and exhibit non-trivial interactions that are not captured by the simple additive form~\eqref{eq:additive_model}.
They are detectable by the \rg framework, since they represent a measurable departure from the reference \mpdistr background, and we identify them as sources of a \enquote{different type of signal}.
Throughout this section, we use the term \enquote{noise} loosely to include both the background distribution and the confounding sources.

Substituting the decomposition~\eqref{eq:signal_decomp} into the additive model~\eqref{eq:additive_model} yields a refined extensive-rank model
\begin{equation}
    Y
    =
    \beta S_0 + \qty(Z + \beta \sum\limits_{i = 1}^M \tilde{S}_i\qty(\omega_i))
    =
    \beta S_0 + \tilde{Z}_M\qty(\beta),
\end{equation}
in which the effective background distribution depends on both $\beta$ and the number of confounding components that have not yet decoupled.
As $\beta$ increases, the small eigenvalue distributions associated with the components $\{ \tilde{S}_i \}_{i=1, 2, \dots, M}$ begin to separate from the bulk, signalling the progressive decoupling of groups of \dof.
In the \rg language of \Cref{sec:dimensional_phase_transition}, each decoupling corresponds to a partial restoration of the \mpdistr universality class within a finite \snr window, and the sequence of such restorations reproduces the multimodal phase structure identified in~\cite{Landau2023}.
We therefore propose to estimate the number of independent noise components, $M$, by counting the distinct cycles, $M_0$, observed in the canonical dimension flow as a function of $\beta$.
Each cycle marks a subset of eigenvalues that exits the bulk, and hence the sequential decoupling of one \enquote{confounder} layer from the noise background.

\begin{figure}[t]
    \centering
    \begin{tikzpicture}
    \begin{axis}[
            width=10cm,
            height=7cm,
            axis lines=middle,
            axis line style={-Latex},
            xlabel={$\lambda$},
            ylabel={$\mu$},
            xlabel style={right, xshift=2pt},
            ylabel style={above, yshift=2pt},
            xmin=-0.25, xmax=3.6,
            ymin=-0.25, ymax=1.0,
            ticks=none,
            legend pos=north east,
            legend cell align=left,
            legend style={
                    at={(1.0,0.95)},
                    fill=white,
                    fill opacity=0.85,
                    draw=gray!50,
                    text opacity=1,
                    font=\footnotesize
                },
            smooth,
            thick,
            no marks,
            every axis plot/.append style={line width=1.2pt},
        ]
        \path[name path=axis] (axis cs:0,0) -- (axis cs:5,0);

        \addplot[
            name path=MP,
            domain=0.09:2.92,
            samples=200,
            color=black
        ]
        {sqrt((2.92-x)*(x - 0.09))/(2.0*pi*x*0.5)};
        \addlegendentry{main bulk};

        \addplot[
            name path=subMP1,
            domain=0.3:1.5,
            samples=200,
            color=red
        ]
        {sqrt((1.5-x)*(x - 0.3))/(2.0*pi*x*0.5)};
        \addlegendentry{$\beta \ll \beta_c$};

        \addplot[
            name path=subMP2,
            domain=0.3:2.88,
            samples=200,
            color=black!85!green,
            dashed
        ]
        {1.25*sqrt((2.88-x)*(x - 0.3))/(2.0*pi*x*0.75)};
        \addlegendentry{$\beta \simeq \beta_c$};

        \addplot[
            name path=subMP3,
            domain=0.3:3.3,
            samples=200,
            color=blue,
            dotted
        ]
        {1.50*sqrt((3.3-x)*(x - 0.3))/(2.0*pi*x*0.87)};
        \addlegendentry{$\beta > \beta_c$};

        \addplot[
            thick,
            color=black,
            fill=black!5
        ]
        fill between [
                of=MP and axis
            ];
    \end{axis}
\end{tikzpicture}
    \caption{%
        Schematic evolution of a sub-bulk eigenvalue distribution as a function of the \snr $\beta$.
        The three curves correspond to increasing values of $\beta$ relative to the critical threshold $\beta_c$ of the \mpdistr law: the sub-distribution widens (its variance grows) as $\beta$ increases, while remaining nested within the main bulk $Z$.
        The figure is meant as a visual guide to illustrate the process: sub-distributions of eigenvectors are, however, delocalised and diluted within the bulk.
    }\label{fig:mp_stretching}
\end{figure}

\Cref{fig:can_dim_mnist} already displays the two contributions identified in the decomposition: the primary signal $S_0$ and the confounding sources $\tilde{S}_i$.
When the analysis is restricted to the bulk eigenvalue distribution and the discrete spikes are excluded, the appearance of plateaus beyond a critical value of $\beta$ (most visible in the \textsc{mnist} image) marks the contribution of these confounding variables.
The dynamics of the canonical dimension corroborate the picture: as $\beta$ increases, the spikes of $S_0$ progressively exit the bulk, simplifying the additive model~\eqref{eq:additive_model} by removing the influence of the confounding sources one by one, in order of decreasing magnitude.
After the $n$-th decoupling, the residual eigenvalue distribution of the bulk is well approximated by an effective \mpdistr law with variance
\begin{equation}
    \sigma_{n+1:M}^2 \simeq \sum_{i=n+1}^M \sigma_i^2,
    \label{eq:residual_variance}
\end{equation}
where we assume, for simplicity, a weak correlation between the different components.\footnote{%
    Real-world examples notably display higher correlations and more pronounced cycles in the canonical dimensions (see, for instance, the realistic image panels in \Cref{fig:can_dim_mnist}).
    For the sake of simplicity, the discussion assumes weakly correlated sources, but is by no means limited by them.
}
The total variance of the original data matrix $X = \beta S + Z$ becomes
\begin{equation}
    \operatorname{Var}(X) = 1 + \beta^2 \sigma_{n+1:M}^2 > 1.
    \label{eq:varX}
\end{equation}
The unit contribution arises from the pure-noise matrix $Z$ of \eqref{eq:additive_model} (with $\sigma^2 = 1$ in our numerical experiments), while the term $\beta^2 \sigma_{n+1:M}^2$ collects the residual contributions of the surviving components $\{ \tilde{S}_i \}_{i=n+1, \dots, M}$.
The increasing $\beta$ also stretches the sub-distributions of the residual components inside the bulk $\tilde{Z}_M\qty(\beta)$, which is therefore no longer independent of the \snr.
The net effect is an increase of the internal variance of these sub-distributions, as illustrated schematically in \Cref{fig:mp_stretching}: the overall variance $\sigma_{n+1:M}^2$ therefore increases with $\beta$, even as the number of surviving components decreases.
As shown in \Cref{fig:can_dim_var}, this stretching produces a retarded descent of the canonical dimensions and accounts for the apparent increase observed when spikes depart the bulk.
The effect is most pronounced for the handwritten digit, which carries only weak signal remnants, whereas the realistic image tends to converge toward intermediate values because the correlations between its various sources are non-trivial.

\begin{figure}[t]
    \centering
    \includegraphics[width=0.45\textwidth]{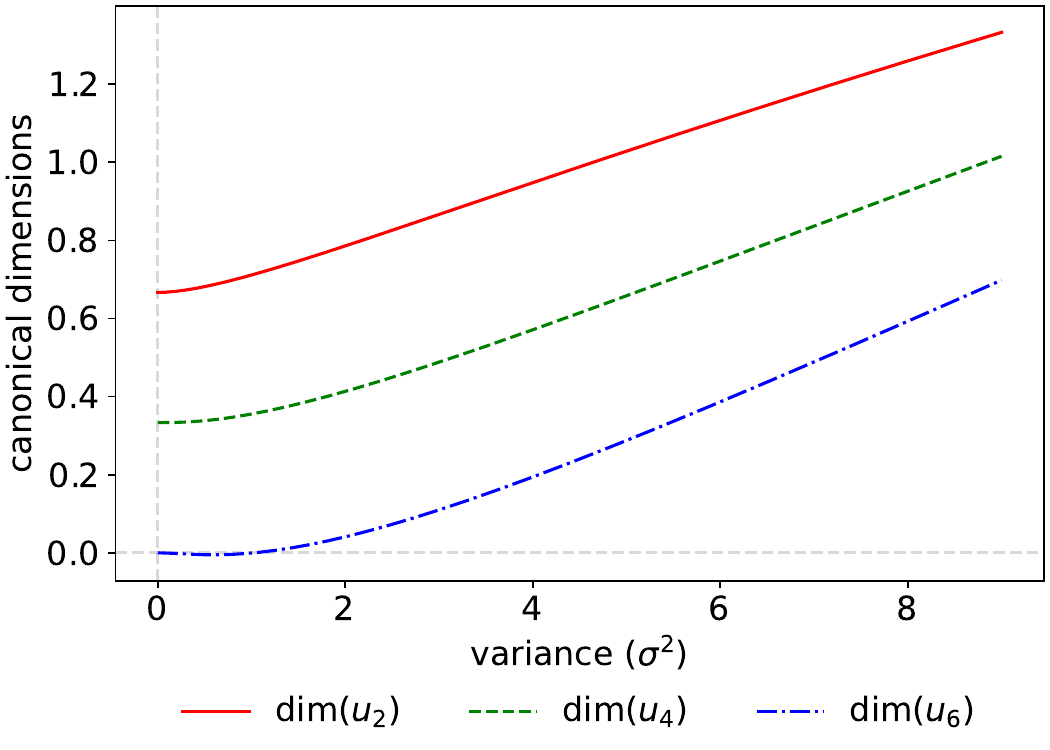}
    \quad
    \includegraphics[width=0.45\textwidth]{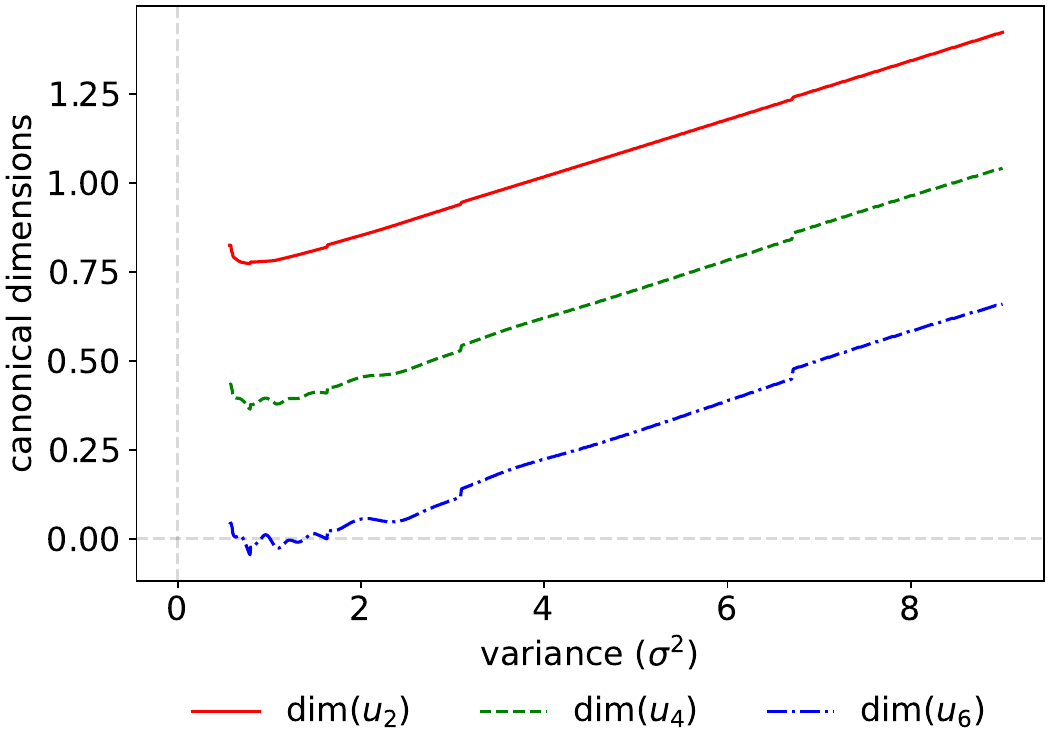}
    \caption{%
    Canonical dimensions at the scale $k^2_{\text{\ir}}$ as a function of the noise standard deviation $\sigma$ for the analytical \mpdistr law (left) and an empirical \mpdistr sample (right).
    The retarded descent of the empirical curves relative to the analytical baseline is a direct consequence of the stretching of the bulk sub-distributions described in the text.
    }\label{fig:can_dim_var}
\end{figure}

The mechanism, parametrised by the \snr $\beta$, can be summarised as follows.
\begin{enumerate}
    \item At low $\beta$, only the primary spikes of $S_0$ emerge from the empirical bulk, which remains closely approximated by the \mpdistr distribution; the signal is therefore only detectable if the bulk distribution itself is perturbed.
    \item At sufficiently high $\beta$, all spikes of $S_0$ exit the bulk, leaving only the eigenvalue distribution associated with the components $\tilde{S}_{i \ge 1}$.
    \item As $\beta$ continues to increase, the leading spikes of the $\tilde{S}_i$ components (typically associated with low-rank structures and hence limited in number) become detectable by \pca, eventually leaving only a weak residual intensity within the distribution of the original matrix $Z$.
    \item Over broad ranges of $\beta$ beyond the threshold $\beta_L$, these low-intensity distributions decouple from the bulk and remain undetectable.
          As a consequence, they influence only the global characteristics of the bulk distribution of $Z$, in particular its variance.
    \item Abrupt transitions in the behaviour of the canonical dimensions correspond to groups of spikes exiting the bulk distribution, as described in points~3 and~4.
\end{enumerate}
The mechanism is therefore linked to the presence of multiple independent noise sources within the data, each defined by a distinct exit threshold from the bulk distribution.
It generalises the notion of the \emph{bimodal connected phase} of~\cite{Landau2023} to a multimodal phase, in which each mode corresponds to an independent noise component.
In this scenario, the integer count $M_0 \le M$ provides a lower bound on the true number $M$ of intrinsic noise sources: $M_0$ is a conservative (non-overestimating) estimator because it counts only those cycles that are resolved by the canonical dimension flow, and any component whose decoupling falls below the numerical resolution is not counted.
Determining $M$ exactly remains challenging, as it would in principle require scanning all possible \snr values, which, at this stage, are not subject to an upper bound.

It is also important to note that the decoupling thresholds of the individual cycles are not a monotonically increasing function of $\beta$: the cycles in \Cref{fig:can_dim_mnist} have irregular widths, so that the position of each successive threshold along the $\beta$ axis oscillates rather than progressing uniformly.
\begin{remark}{Validity of Multi-Component Detection}{multidetection}
    The validity of this multi-component detection rests on a stability criterion.
    For each identified component $i$, the detection is meaningful only if the system remains within the basin of attraction of the reference universality class of \Cref{def:definitionMPclass} throughout the entire dimensional crossover interval $[\beta_t^{(i)}, \beta_O^{(i)}]$.
    Outside this basin, the canonical dimensions can no longer be interpreted as a deformation of the \mpdistr baseline, and the corresponding component must be discarded from the estimate of $M_0$.
\end{remark}

\section{Non-Gaussian Noise Sources}\label{sec:non_gaussian_noise}
In this section, we describe a \enquote{toy model} as a reference signal to illustrate the signal's manifestation.
This signal is subsequently corrupted by various types of noise, modelled after real-world scenarios, which allows for precise control over the \snr and the establishment of rigorous experimental conditions.
The image dimensions are $5000 \times 4000$ pixels.
For simplicity, the image has been converted to greyscale.
As an initial experiment, we consider a perturbation of the original image by convolving it with a Gaussian kernel.
By varying the kernel radius $R$, we induce a progressive blurring effect.
Specifically, we define the Gaussian kernel over a grid of size $(2R+1) \times (2R+1)$, where the value at pixel $(i, j)$ is given by:
\begin{equation}
    G_\sigma[i,j]
    \propto
    \exp\left(-\frac{i^2+j^2} {2\sigma^2}\right) \quad i,j \in \mathds{Z},
    \quad
    |i|,|j| \leq R
\end{equation}
where $\sigma \defeq R / 5$.
The resulting convolution is denoted as $(I * G)$, with the convolution operator defined as follows:
\begin{equation}
    (I * G)[i,j]
    \defeq
    \sum_{u=-R}^{R}\sum_{v=-R}^{R}
    I[i-u,\,j-v]\;
    G[u,v].
    \label{eq:Img_convolution}
\end{equation}
Note that to ensure this formula is well-defined, boundary conditions must be specified for the image.
In this study, we adopt periodic boundary conditions:
\begin{equation}
    I[i,j] \defeq I[i \bmod N_x,\; j \bmod N_y],
\end{equation}
Here, $N_x$ and $N_y$ are the number of sites in the $x$ and $y$ directions, respectively.
Once the boundary conditions are established, the convolution operator $I \mapsto I * G$ represents a linear, local, and translation-invariant mapping.
We now describe the data construction and preprocessing procedures.
The process begins with the transformation of data. First, according to \eqref{eq:Img_convolution}, we convolve the image with Gaussian kernels of size $R=20$, $R=200$, and $R=2000$ respectively.
Then we move on to sample generation, where we introduce additive white Gaussian noise to the transformed data as in \eqref{eq:additive_model}, characterised by a \snr $\beta \ge 0$.
Here, $Z \sim \mathcal{N}(0, 1)$ denotes an $N \times P$ matrix with \iid normally distributed entries, and $S \in \mathds{R}^{N \times P}$ represents the centred signal matrix (with $N=\num{5e3}$ and $P=\num{4e3}$).
The corresponding covariance matrix is subsequently constructed according to~\eqref{cov}.

\begin{figure}[t]
    \centering

    \begin{subfigure}{0.48\textwidth}
        \centering
        \includegraphics[width=\textwidth]{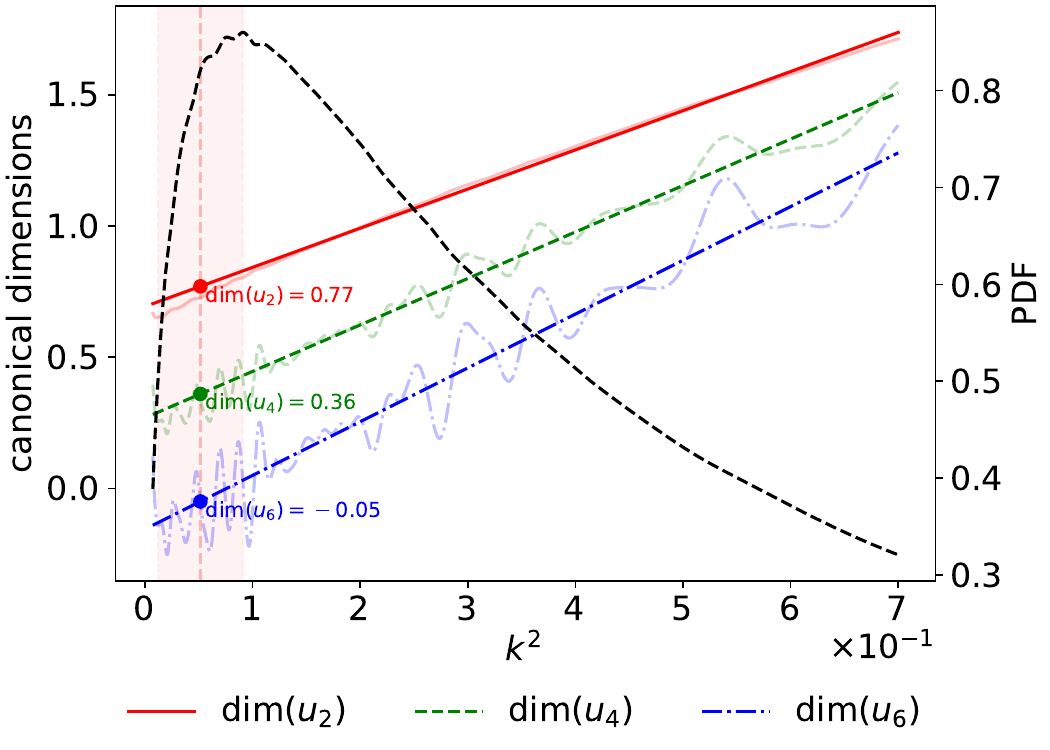}
        \caption{\snr=0.25, kernel size=20}
    \end{subfigure}
    \hfill
    \begin{subfigure}{0.48\textwidth}
        \centering
        \includegraphics[width=\textwidth]{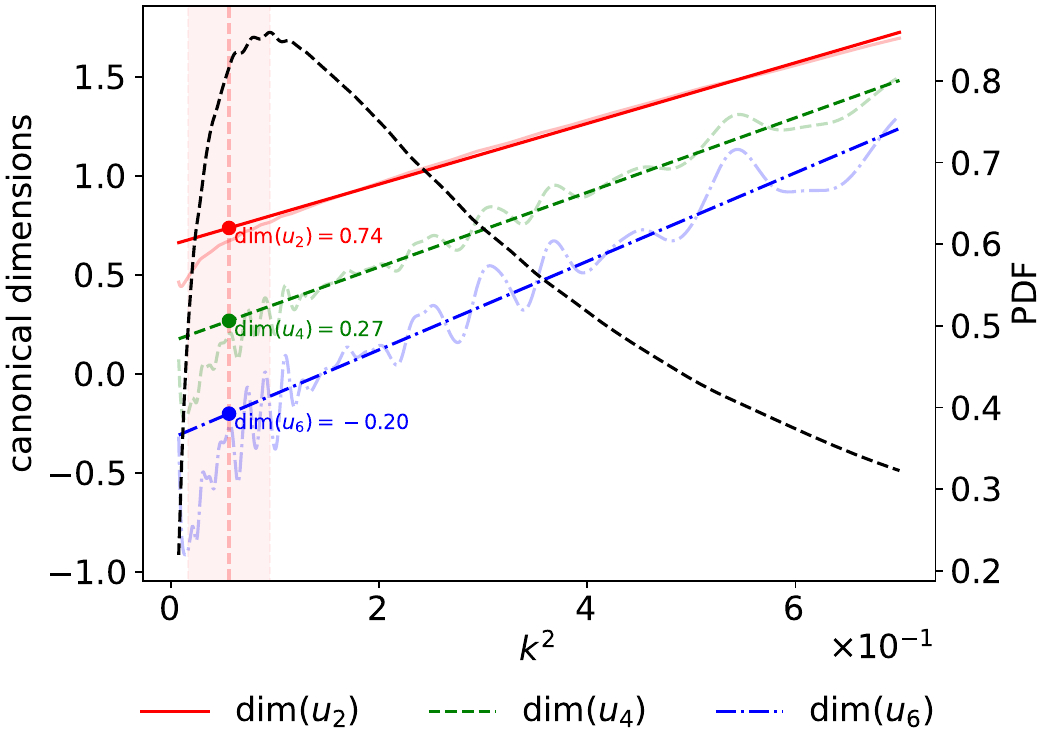}
        \caption{\snr=0.59, kernel size=20}
    \end{subfigure}

    \vspace{0.6cm}

    \begin{subfigure}{0.48\textwidth}
        \centering
        \includegraphics[width=\textwidth]{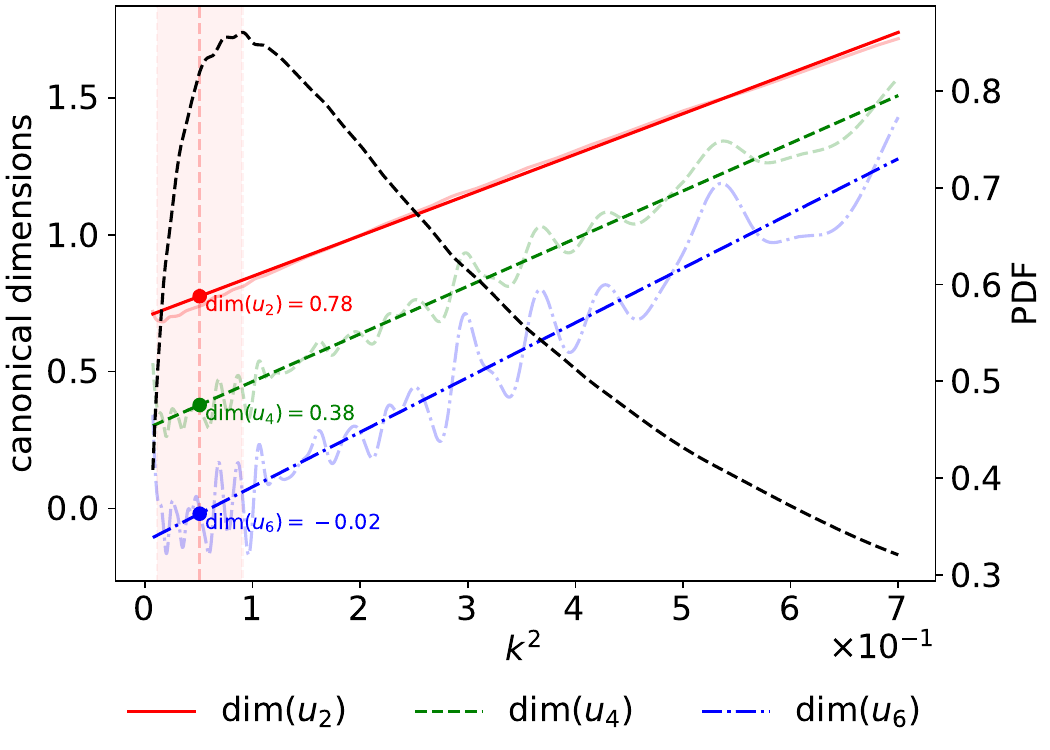}
        \caption{\snr=0.21, kernel size=2000}
    \end{subfigure}
    \hfill
    \begin{subfigure}{0.48\textwidth}
        \centering
        \includegraphics[width=\textwidth]{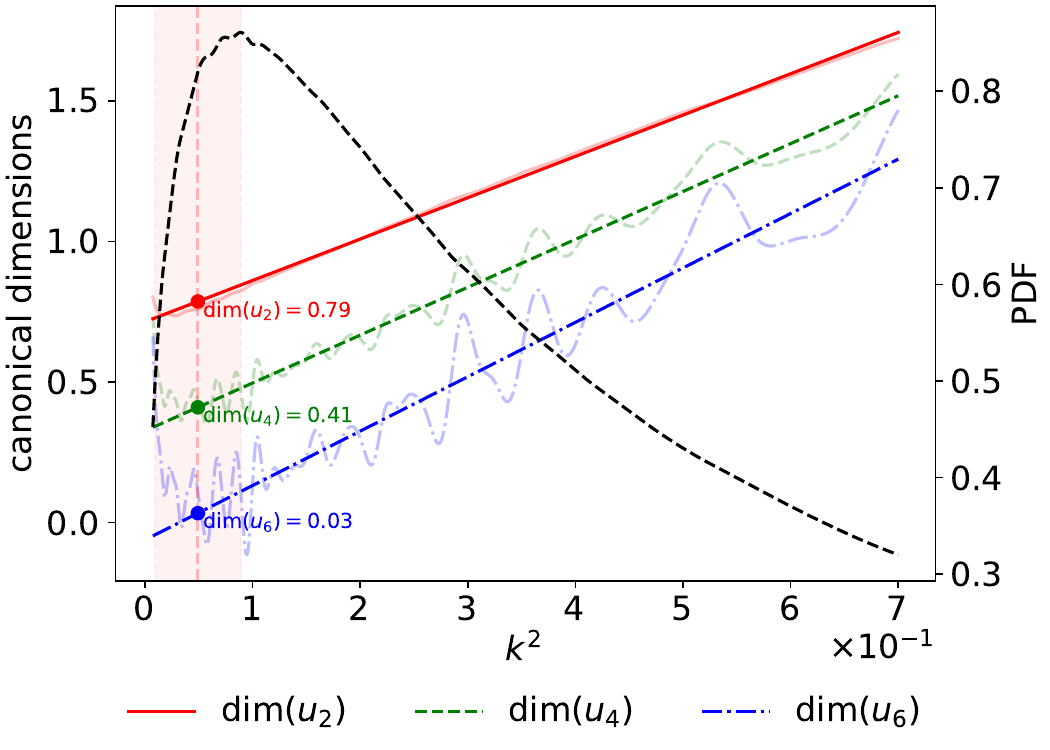}
        \caption{\snr=0.63, kernel size=2000}
    \end{subfigure}

    \caption{
        Canonical dimensions extracted from numerical data after kernel convolution for increasing kernel sizes for fixed \snr.
    }\label{fig:CEA_convolution_0}
\end{figure}

\begin{figure}[t]
    \centering

    \begin{subfigure}{0.48\textwidth}
        \centering
        \includegraphics[width=\textwidth]{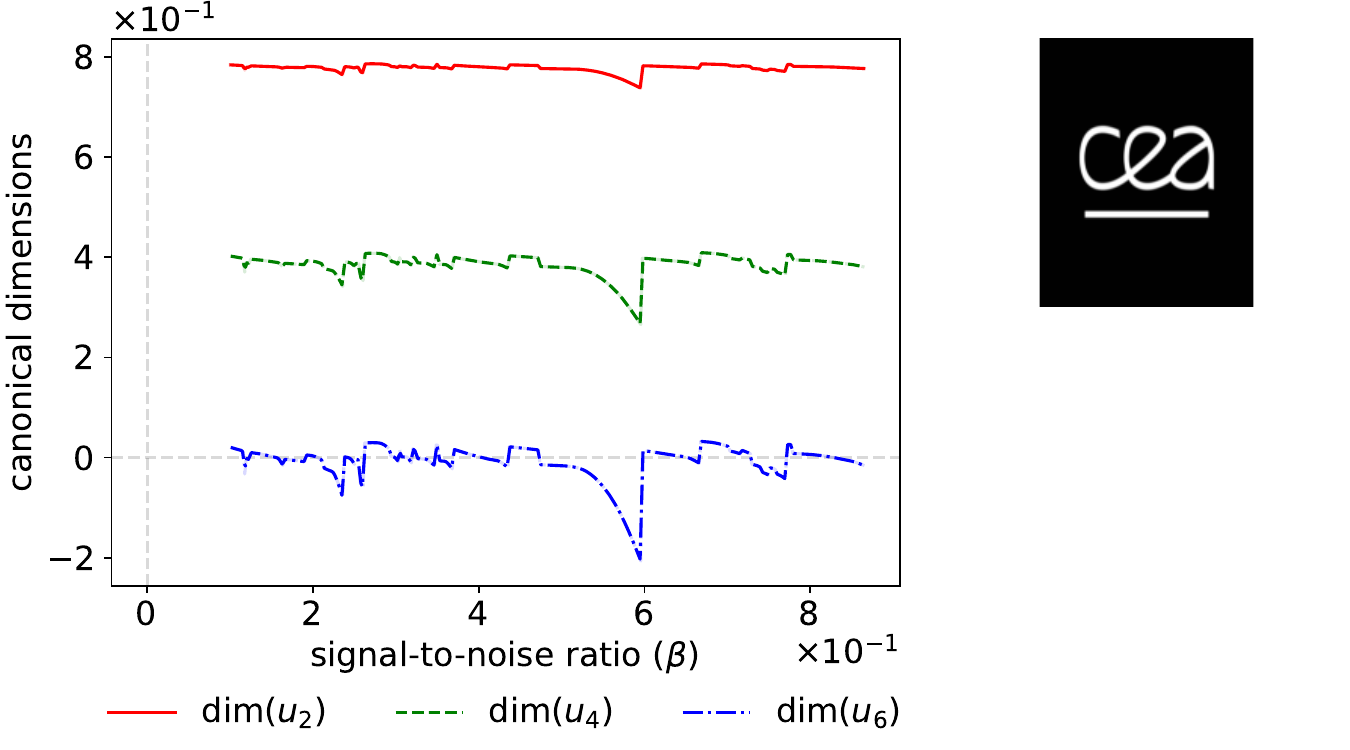}
        \caption{Kernel size $=20$}
    \end{subfigure}
    \hfill
    \begin{subfigure}{0.48\textwidth}
        \centering
        \includegraphics[width=\textwidth]{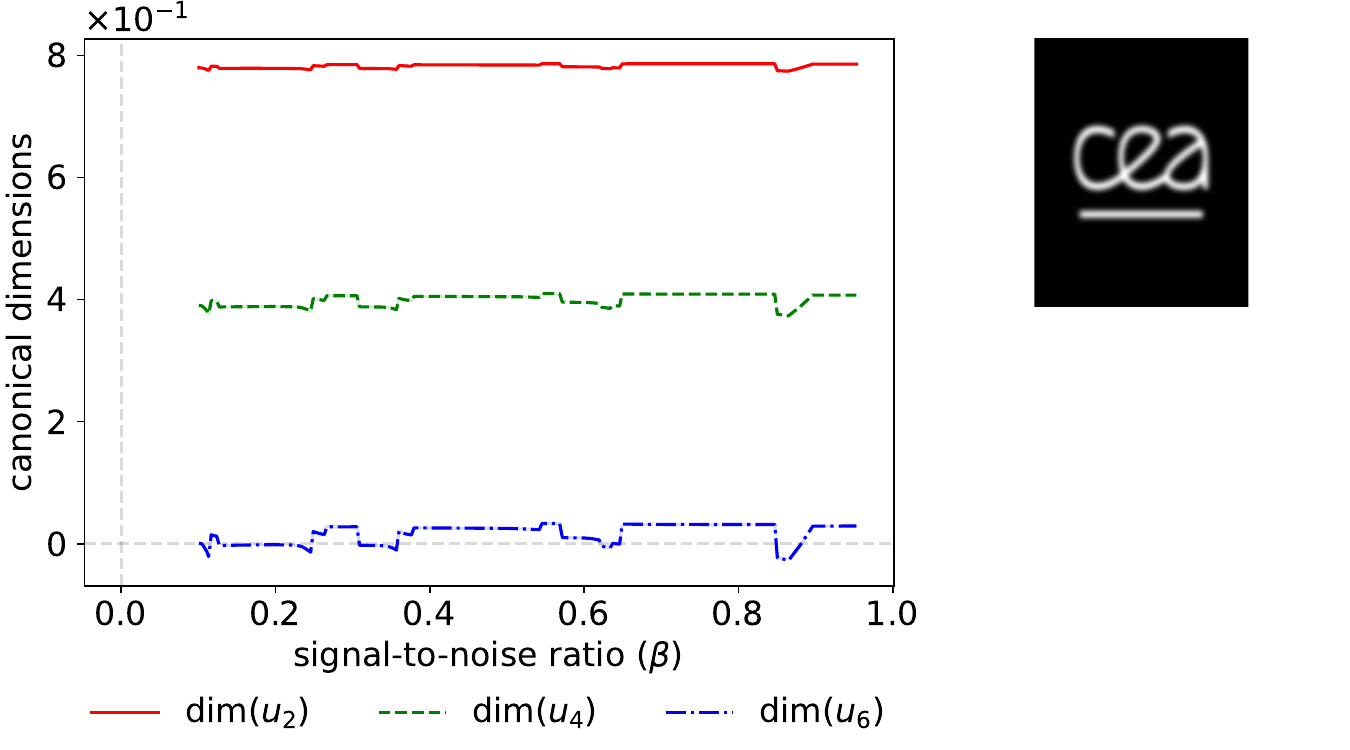}
        \caption{Kernel size $=200$}
    \end{subfigure}

    \vspace{0.6cm}

    \begin{subfigure}{0.6\textwidth}
        \centering
        \includegraphics[width=\textwidth]{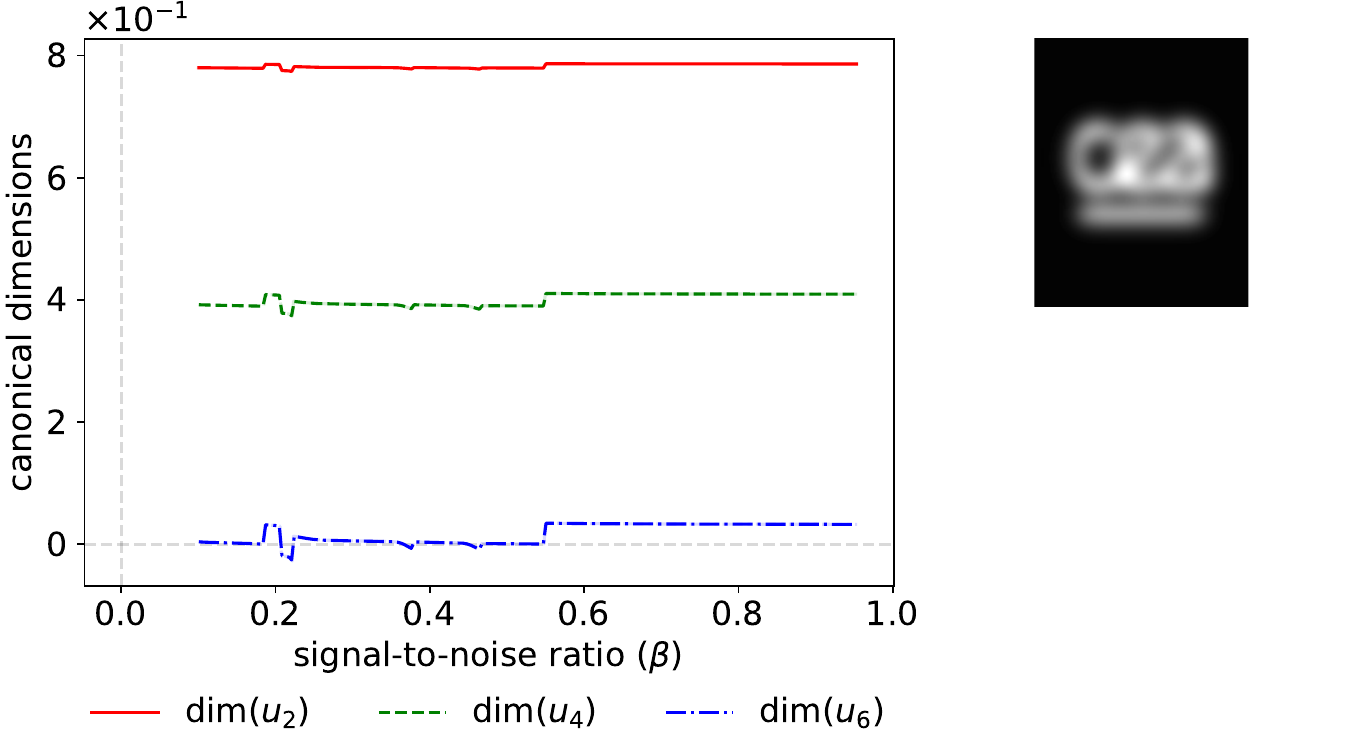}
        \caption{Kernel size $=2000$}
    \end{subfigure}

    \caption{
        Canonical dimensions at a fixed \ir scale (the red dashed line in \Cref{fig:CEA_convolution_0}) extracted from numerical data after kernel convolution, for increasing kernel sizes.
        The \snr is shown on the horizontal axis.
    } \label{fig:CEA_convolution}
\end{figure}

We focus here on the Gaussian behaviour of the \rg, and, specifically, of the canonical dimensions, the properties of which are summarised in \Cref{fig:CEA_convolution_0,fig:CEA_convolution} for varying parameter values.
As these figures demonstrate, a high \snr renders $\dim_{\tau}(u_6)$ negative, indicating that $u_6$ is irrelevant.
Recall that, according to \Cref{tab:thresholds}, the critical detection threshold, $\beta_c$, is defined as the scale at which the asymptotic canonical dimension of $u_4$ transitions to a negative value.

The rationale for this definition is supported by empirical evidence (detailed in the referenced work) that the quartic dimension exhibits lower sensitivity to the intrinsic fluctuations of the data.
Furthermore, subsequent sections of this article provide additional evidence to support this definition as a conservative estimate of significant deviations from the reference noise model induced by the signal.
The results presented in the figures suggest that no finite $\beta_c$ exists within the relevant range of \snr.
Although the sign change for $\dim_{\tau}(u_6)$ appears to surpass the typical scale of fluctuations (see~\cite{RG7}), establishing a precise scale for signal emergence in this context remains challenging.
These findings are consistent with those obtained for the \textsc{mnist} dataset in our previous study.

\begin{figure}[t]
    \centering

    \begin{subfigure}{0.48\textwidth}
        \centering
        \includegraphics[width=\textwidth]{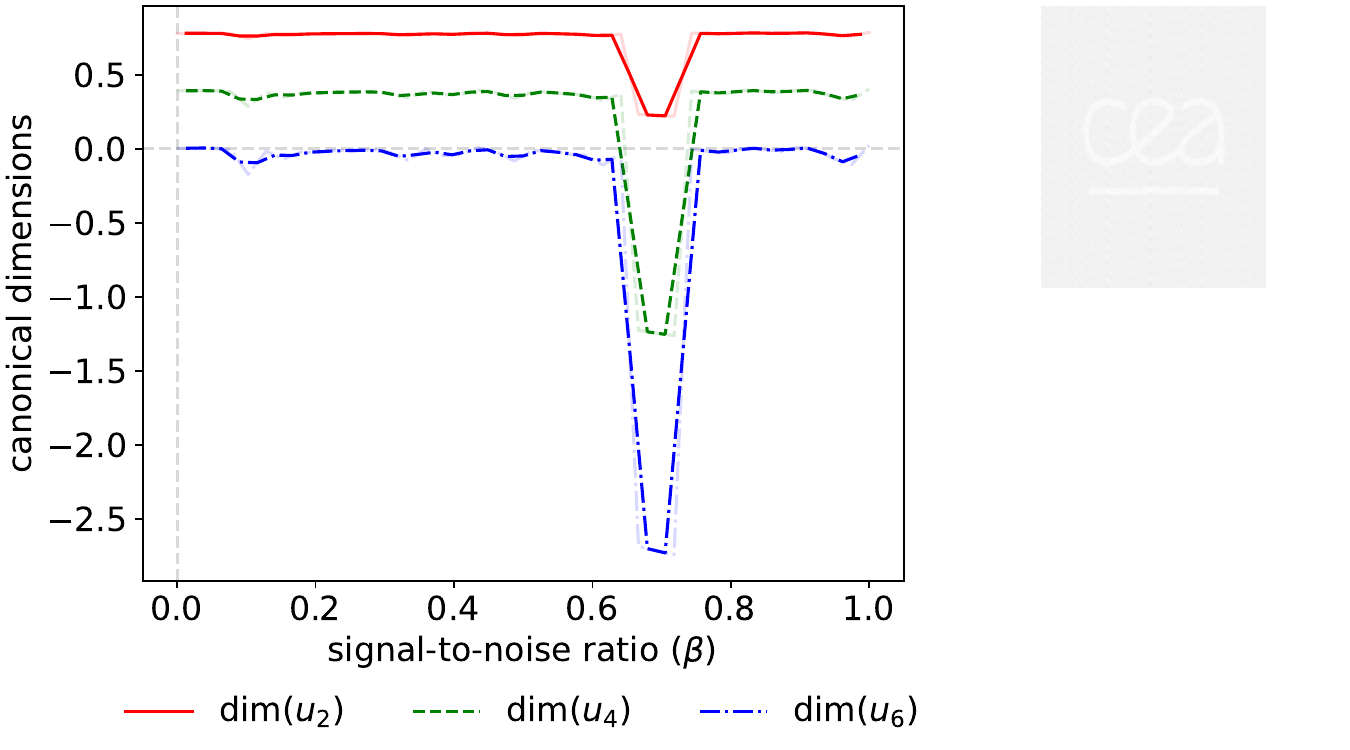}
        \caption{Periodic + Gaussian (log)}
    \end{subfigure}
    \hfill
    \begin{subfigure}{0.48\textwidth}
        \centering
        \includegraphics[width=\textwidth]{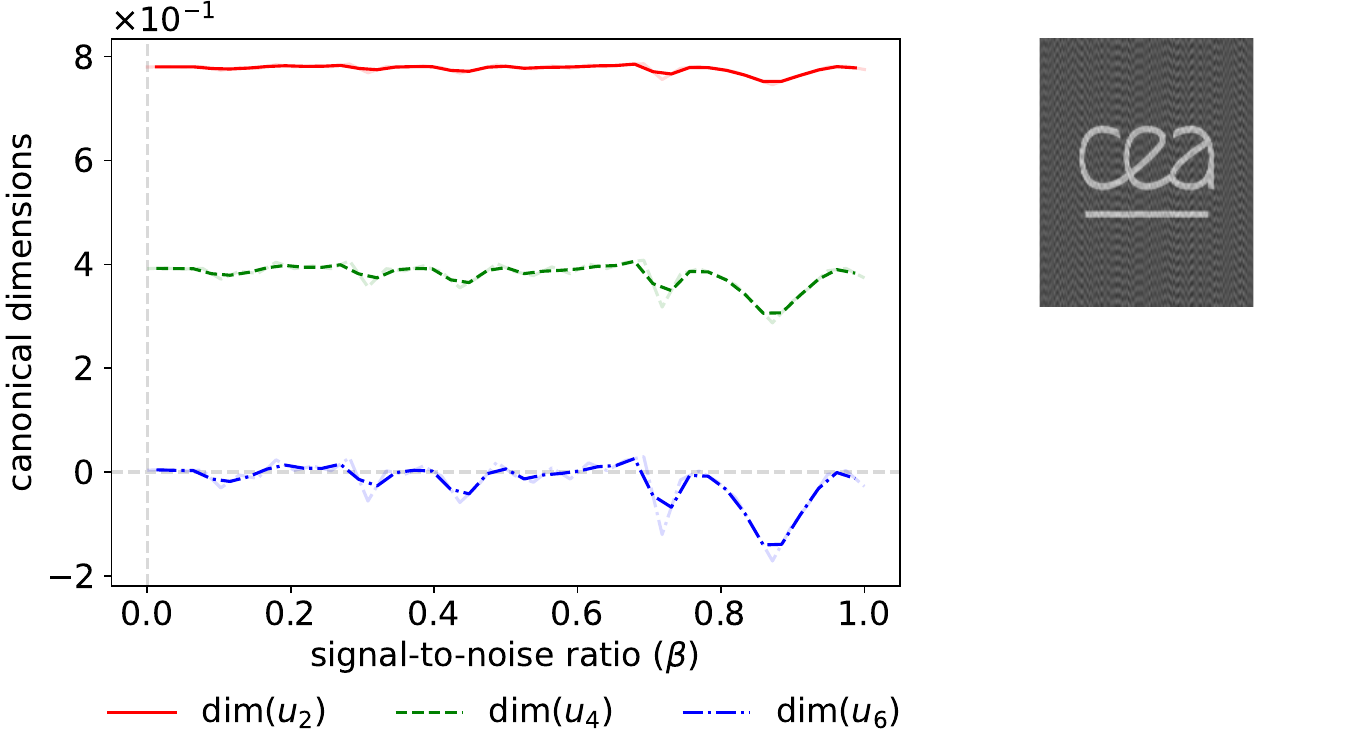}
        \caption{Periodic + Gaussian (additive)}
    \end{subfigure}

    \vspace{0.6cm}

    \begin{subfigure}{0.5\textwidth}
        \centering
        \includegraphics[width=\textwidth]{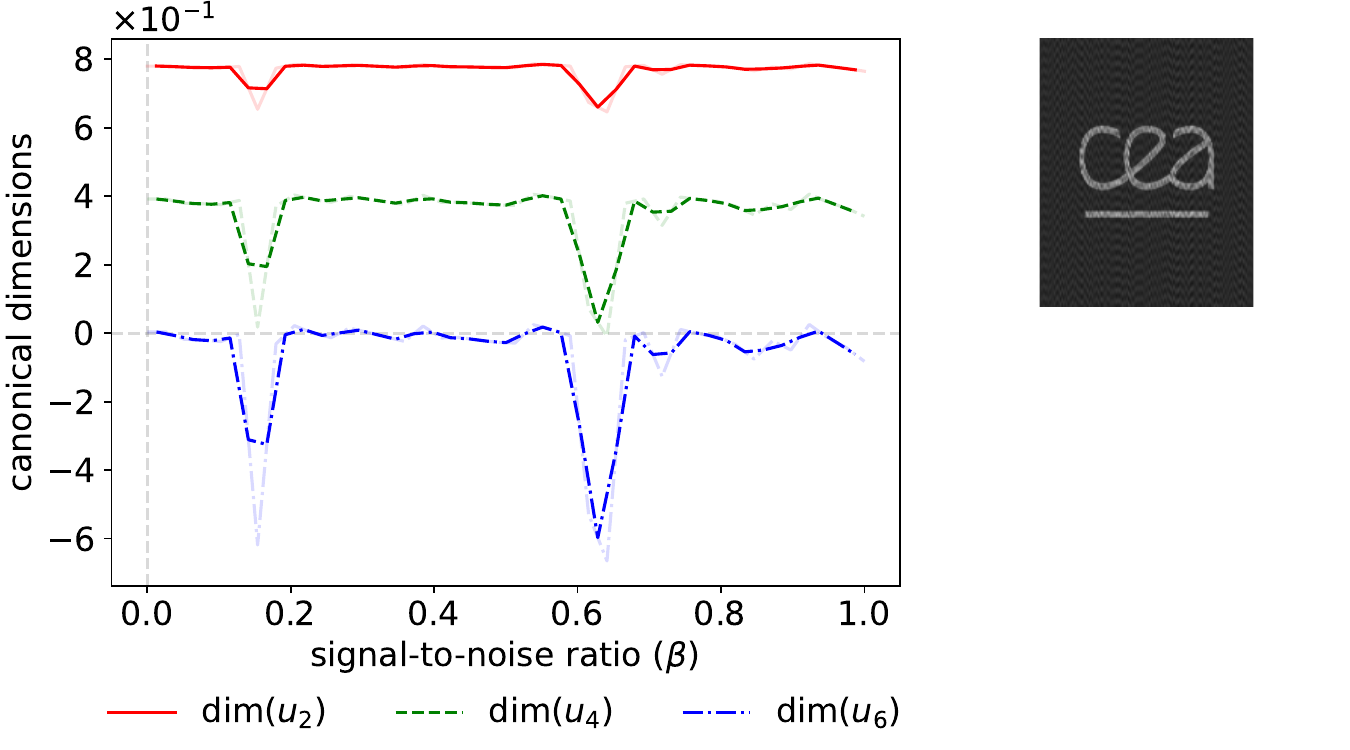}
        \caption{Periodic + Gaussian (multiplicative)}
    \end{subfigure}

    \caption{
        Canonical dimensions extracted after injecting a periodic component on top of Gaussian noise, compared across three combination rules (log, additive, multiplicative) as a function of the \snr.
    }
\end{figure}

As a second example, we consider a structured periodic perturbation in addition to the additive Gaussian background.
The perturbation oscillates along the horizontal direction, so that within each row $i$ its value varies periodically with the column index $j$.
The corresponding greyscale intensity matrix $P$ is thus defined by
\begin{equation}
    P_{ij}
    =
    \frac{255}{2}
    +
    \frac{A_i}{2}
    \sin\left(\frac{j}{5}+\phi_i\right).
\end{equation}
Here, each row has an independently sampled amplitude $A_i$ and phase $\phi_i$, distributed according to
\begin{equation}
    A_i\sim
    \left. \mathcal{N}\left(\frac{255}{2},20^2\right)\right|_{[0,255]},
    \qquad
    \phi_i\sim\mathcal{N}(0,1),
\end{equation}
respectively.
Where $\left. \mathcal{N}\left(\frac{255}{2},20^2\right)\right|_{[0,255]}$
denotes a truncated Gaussian distribution in the interval $[0,255]$.
The offset $255/2$ centres the oscillations at the midpoint of the greyscale range, while the factor $1/2$ ensures that $P_{ij}$ remains in $[0,255]$.
Both these perturbations are described in more detail in \Cref{algo:perturbation}.

\begin{algorithm}[!ht]
    \caption{Construction of samples with Gaussian-kernel or periodic perturbations}
    \label{algo:perturbation}

    \SetKwInOut{Input}{Input}
    \SetKwInOut{Output}{Output}
    \SetKwInOut{Let}{Let}

    \Input{$N>0$ (sample size), $q\in(0,1]$ (aspect ratio)}
    \Let{$P\gets\lfloor qN\rfloor$}

    \Input{$\beta\geq 0$ (signal-to-noise ratio)}
    \Input{$I\in[0,255]^{H\times W}$ (input image)}
    \Input{
        $Z\in\mathds{R}^{N\times P}$ (Gaussian noise),
        s.t.\ $Z_{ij}\sim\mathcal{N}(0,1)$
    }
    \Input{
        $\mathtt{version}\in\{\mathrm{A},\mathrm{B}\}$
    }

    $S\gets\operatorname{resize}(I,N,P)
        \in\mathds{R}^{N\times P}$
    \tcp*{Interpolate the image}

    \If{\emph{\texttt{version}}$~= A$: Gaussian-kernel transformation}{

    \Input{
        $R\in\mathds{N}$ (kernel radius), $\sigma=R/5$ (kernel width)
    }

    $\widehat{G}_{\sigma,R}[u,w]
        \gets
        \exp\!\left(
        -(u^2+w^2)/(2\sigma^2)
        \right)$,
    \quad $-R\leq u,w\leq R$

    $C_{\sigma,R}
        \gets
        \displaystyle
        \sum_{u=-R}^{R}
        \sum_{w=-R}^{R}
        \widehat{G}_{\sigma,R}[u,w]$

    $G_{\sigma,R}[u,w]
        \gets
        \widehat{G}_{\sigma,R}[u,w]/C_{\sigma,R}$
    \tcp*{Normalize the kernel}

    \For{$i\gets0$ \KwTo $N-1$}{
        \For{$j\gets0$ \KwTo $P-1$}{
            $S^{(\mathrm{A})}[i,j]\gets
                \displaystyle
                \sum_{u=-R}^{R}
                \sum_{w=-R}^{R}
                G_{\sigma,R}[u,w]\,
                S\!\left[
                    (i-u)\bmod N,\,
                    (j-w)\bmod P
                    \right]$
        }
    }

    $S^{(\mathrm{A})}\gets
        \left(S^{(\mathrm{A})} - \langle S^{(\mathrm{A})}\rangle \right)
        \left(\operatorname{Var}(S^{(\mathrm{A})}\right)^{-1/2}
    $
    \tcp*{Standardize after convolution}

    $X\gets\beta S^{(\mathrm{A})}+Z$
    \tcp*{Additive Gaussian background}
    }

    \If{\emph{\texttt{version}}$~= B$: periodic structured perturbation}{

        \Input{
            $\omega>0$ (angular frequency),
            $(\sigma_A)$ (amplitude standard deviation),
            $\sigma_\phi$ (phase width),
            $c\in\{\mathrm{add},\mathrm{mult},\mathrm{log}\}$ (combination rule)
        }

        \For{$i\gets0$ \KwTo $N-1$}{
            Draw
            $A_i\sim\mathcal{N}(127.5,\sigma_A^2)$ restricted to $[0,255]$
            and
            $\phi_i\sim\mathcal{N}(0,\sigma_\phi^2)$\;

            \For{$j\gets0$ \KwTo $P-1$}{
                $\Pi[i,j]\gets 127.5 +
                    (A_i / 2) \sin(\omega j+\phi_i)$
            }
        }

        \uIf{$c=\mathrm{add}$}{
            $X_0\gets S+\Pi$
        }
        \uElseIf{$c=\mathrm{mult}$}{
            $X_0\gets S\odot\Pi$
        }
        \ElseIf{$c=\mathrm{log}$}{
            $X_0\gets
                \log\left( S
                \odot
                \Pi \right)$,
        }
        $X_0\gets
            \left(X_0-\langle X_0\rangle\right)
            \left(\operatorname{Var}(X_0)\right)^{-1/2}$
        \tcp*{Standardization (global, over all pixels)}
        $X\gets\beta X_0+Z$
        \tcp*{Image with Gaussian background}
    }
\end{algorithm}

\section{Time Series Analysis: Measuring Critical Exponents}\label{sec:time_series}
As recalled in \Cref{critical}, in the vicinity of the critical point, the relaxation time $\tau$ diverges with the correlation length $\xi$ as $\tau=\xi^z$, where $z$ is a critical exponent.
Note that, unlike most critical exponents, there is no known exact analytical solution in two dimensions for the exponent $z$, and the best estimates, obtained via Monte Carlo simulations, yield $z \approx 2.17$~\cite{nightingale1996dynamic}.
Below the critical temperature, domain growth over time $t$ follows a $\sqrt{t}$ law.
This scaling arises because the characteristic size $L$ of a domain is essentially governed by its local curvature, which is proportional to $1/L$, yielding $\dd L/\dd t \propto 1/L$ (see \Cref{AppOJK}).
The curious reader can also see \Cref{sec:app1,critical} for further details.
However, exactly at the critical point ($T=T_c$), the influence of the anomalous dimension and of fluctuations in general becomes dominant, causing the mean-field approximation to break down.
Consequently, domain growth no longer strictly follows the principle of interfacial minimisation through curvature optimisation~\cite{livi2017nonequilibrium}.
At the critical point, we have:
\begin{equation}
    L(t) \sim t^{1/z}.
\end{equation}
In this section, we explain how the \gsa enables the measurement of $z$ in the vicinity of the critical temperature.
Once again, this provides a twofold validation: it not only validates the \gsa itself, but also establishes it as a reliable method in practice.

The principle of our measurement essentially consists in tracking the time evolution of the cutoff index $\lambda_c(t)$, obtained by minimising the \kl divergence.
This cutoff separates the coherent signal subspace (the spikes) from the noise subspace dominated by thermal fluctuations (the \mpdistr bulk).
It is remarkable that one can directly relate the spectral behaviour of $\lambda_c(t)$ to the spatial dynamics $L(t)$.
As a first approach, each domain (or \enquote{bubble}) within which the field is coherent acts as a macroscopic \dof, generating an eigenvalue that emerges from the noise.
The rank of the signal, measured by $\lambda_c(t)$, thus corresponds to the number of these independent bubbles.
At the time of quenching ($t=0$), the system is completely uncorrelated, the number of independent \dof is maximal, and the spectrum is indistinguishable from the noise.
Over time, the coarsening process causes the characteristic size of the domains to grow.
The physical space becomes structured, which translates spectrally into a decrease in the number of independent components.
Since the number of bubbles in the system scales as $N^2/L(t)^2 \propto t^{-2/z}$, the effective dimension of the signal subspace, $\lambda_c(t)$, decreases according to this same scaling law.\footnote{%
    By convention, the eigenvalues of the covariance matrix are always sorted in descending order (from highest to lowest energy): $\lambda_1 \ge \lambda_2 \ge \dots \ge \lambda_P$.
    Thus, $\lambda_c$ measures the size (or rank) of the signal.
}
Eventually ($t \gg 1$), when a single ordered phase occupies the entire grid, the signal retains only a single macroscopic \dof ($\lambda_c \to 1$), concentrating the entire signal intensity.

However, this approach is too naive.
Notably, it ignores the spectral gap between the outliers (associated with domain emergence during coarsening) and the \mpdistr bulk.
Furthermore, spatio-temporal correlations significantly reduce the effective number of \dof $q$ appearing in the definition of \Cref{thm:thMP}.

To be concrete, let us consider a discretised version of the Model~A on an $N \times N$ grid.
At high temperatures, where the correlation length is essentially limited to the lattice spacing, the correlation matrix spectrum follows a \mpdistr law with $q = N^2/T_w$, where $T_w$ denotes the number of time samples in the observation window (distinct from the thermodynamic temperature $T$).
Below the critical temperature, domain growth reduces the effective number of independent sites to $N_{\text{eff}}^2 \sim N^2 / L(t)^2 \sim t^{-2/z}$.
Simultaneously, temporal correlations are critical: the system \enquote{ages}, with correlations decaying via a power law (see \eqref{timecorr}) rather than exponentially.

To track these correlations, we employ the sliding window method, a standard technique for ageing systems that locally restores time-translation invariance.
We divide the total interval $T_{\text{int}}$ into $n$ overlapping segments $D_i$ of size $\tau$.
Let $t_i$ be the centre of window $i$.
Two regimes arise for the effective parameter $q_{\text{eff}}$:
\begin{itemize}
    \item \textbf{Short-time regime ($t_i \lesssim \tau$):} The window captures the global dynamics, $q_{\text{eff}} \sim N_{\text{eff}}^2/t_i \sim t_i^{-(2/z + 1)}$.
    \item \textbf{Long-time regime ($t_i \gg \tau$):} The window is constrained by its finite size $\tau$, $q_{\text{eff}} \sim N_{\text{eff}}^2/\tau \sim t_i^{-2/z}$.
\end{itemize}
The crossover between these regimes is governed by the correlation time $\tau$.
For $t_i \lesssim \tau$, the window captures the full temporal evolution and all $t_i$ samples contribute independently to the eigenvalue spectrum.
For $t_i \gg \tau$, the domain configuration evolves slowly compared to the window size.
Thus, configurations separated by more than $\tau$ are statistically equivalent because the system has not decorrelated.
The effective number of independent temporal samples therefore saturates at $\tau$, and the denominator in $q_{\text{eff}}$ transitions from $t_i$ to $\tau$.

Consequently, the spectral edge $\lambda_c$ follows a scaling function of the form:
\begin{equation} \label{eq:window_fit}
    \lambda _c = \sigma_0^2(1 + A t^{-\gamma /2})^2
\end{equation}
where $\gamma = 2/z+1$ at short times and $\gamma = 2/z$ at long times, with $\sigma_0$ and $A$ being numerically determined constants.

In order to compute the exponent $z$ we simulated the Model~A (see \Cref{critical}) quenching from high temperature to the critical temperature on a $100 \times 100$ lattice.
To record sufficiently dense snapshots of the evolving configurations, we used a time step $100$ times smaller than that employed in the critical-temperature analysis of \Cref{IsingCritical}.
In Method~I, we divided the resulting time series into overlapping windows of $10000$ time steps, as illustrated in \Cref{fig:sampling_windows}.
Each window therefore yields a tensor of dimensions $100 \times 100 \times 10000$ that is then reshaped into a $10000 \times 10000$ matrix, by aggregating the first two dimensions.

\begin{figure}[t]
    \centering
    \begin{tikzpicture}[
        >=Latex,
        interval/.style={thick, |<->|},
        boundaries/.style={dashed, thin}
    ]
    \draw[->, line width=0.9pt] (0,0) -- (6,0) node[right] {$t$};
    \def\numberofintervals{5}
    \foreach \i in {1,...,\numberofintervals} {
            \pgfmathsetmacro{\xmin}{0.35*(\i-1)}
            \pgfmathsetmacro{\xmax}{1.1 + 0.65*\i}
            \pgfmathsetmacro{\y}{0.05 + 0.5*\i}
            \draw[interval]
            (\xmin,\y) node[left] {$S_{\i}$} -- (\xmax,\y);
            \draw[boundaries] (\xmin, \y) -- (\xmin, 0);
            \draw[boundaries] (\xmax, \y) -- (\xmax, 0);
        }
    \node at (4,3.1) {$\cdots$};
\end{tikzpicture}
    \caption{%
        Illustration of overlapping sampling windows $S_1,\, S_2,\, \ldots$ along the time axis.
    }
    \label{fig:sampling_windows}
\end{figure}

Method~II closely resembles Method~I, except that the window size increases over time, from $\num{4e3}$ time steps at the beginning of the evolution to $\num{1.5e4}$ at the end.
This adaptive windowing scheme better captures the rapid evolution immediately after the quench while providing improved statistics during the slower late-time relaxation.
However, the shorter initial windows yield matrices as small as $\num{4e3} \times \num{1e4}$ corresponding to fewer samples and consequently larger statistical fluctuations.
For both methods, we compute $\lambda_c$ separately within each window.
The resulting values are shown in \Cref{fig:lambda_windows}.

\begin{figure}[t]
    \centering
    \begin{subfigure}{0.49\textwidth}
        \centering
        \includegraphics[width=\linewidth]{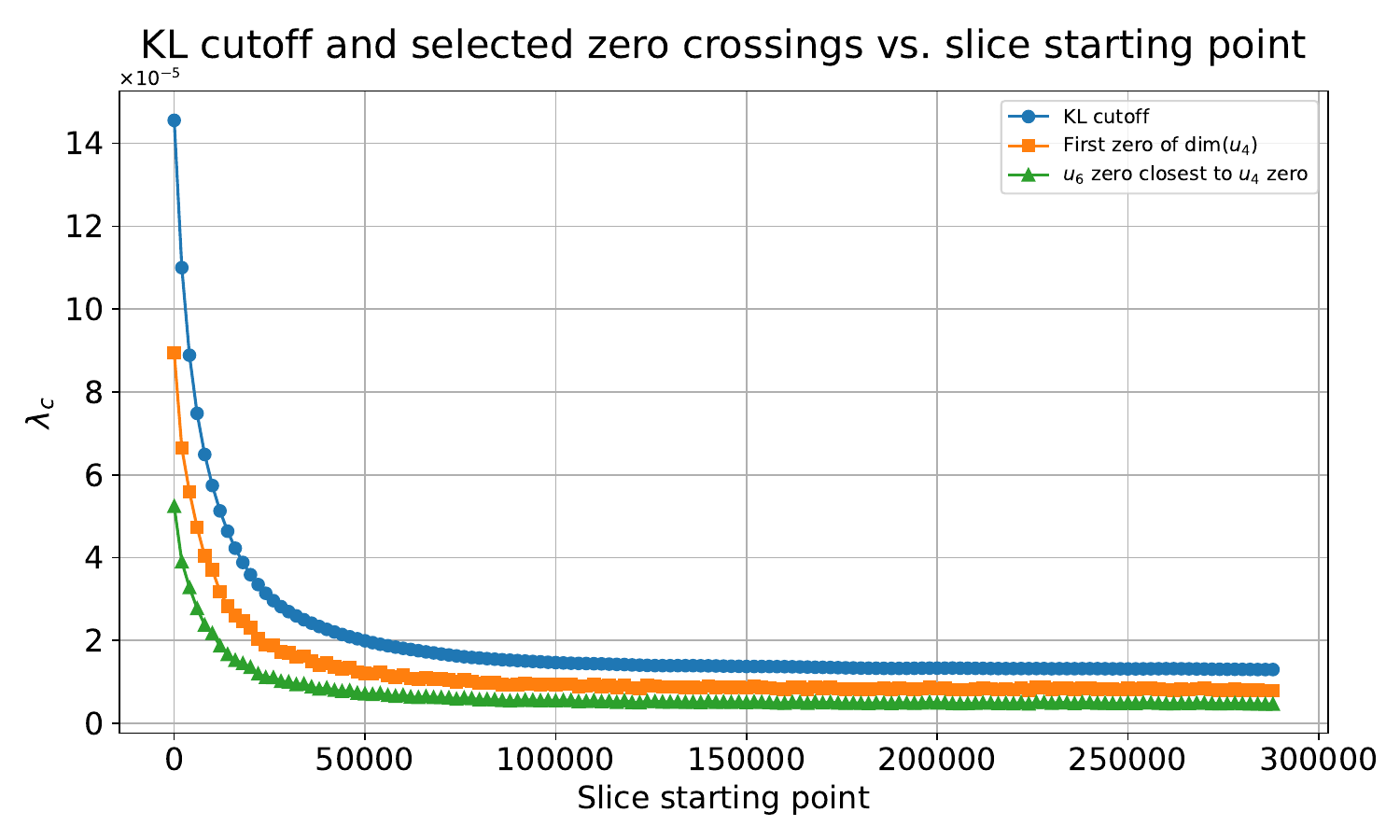}
        \caption{Method I}
        \label{fig:constant_window}
    \end{subfigure}
    \hfill
    \begin{subfigure}{0.49\textwidth}
        \centering
        \includegraphics[width=\linewidth]{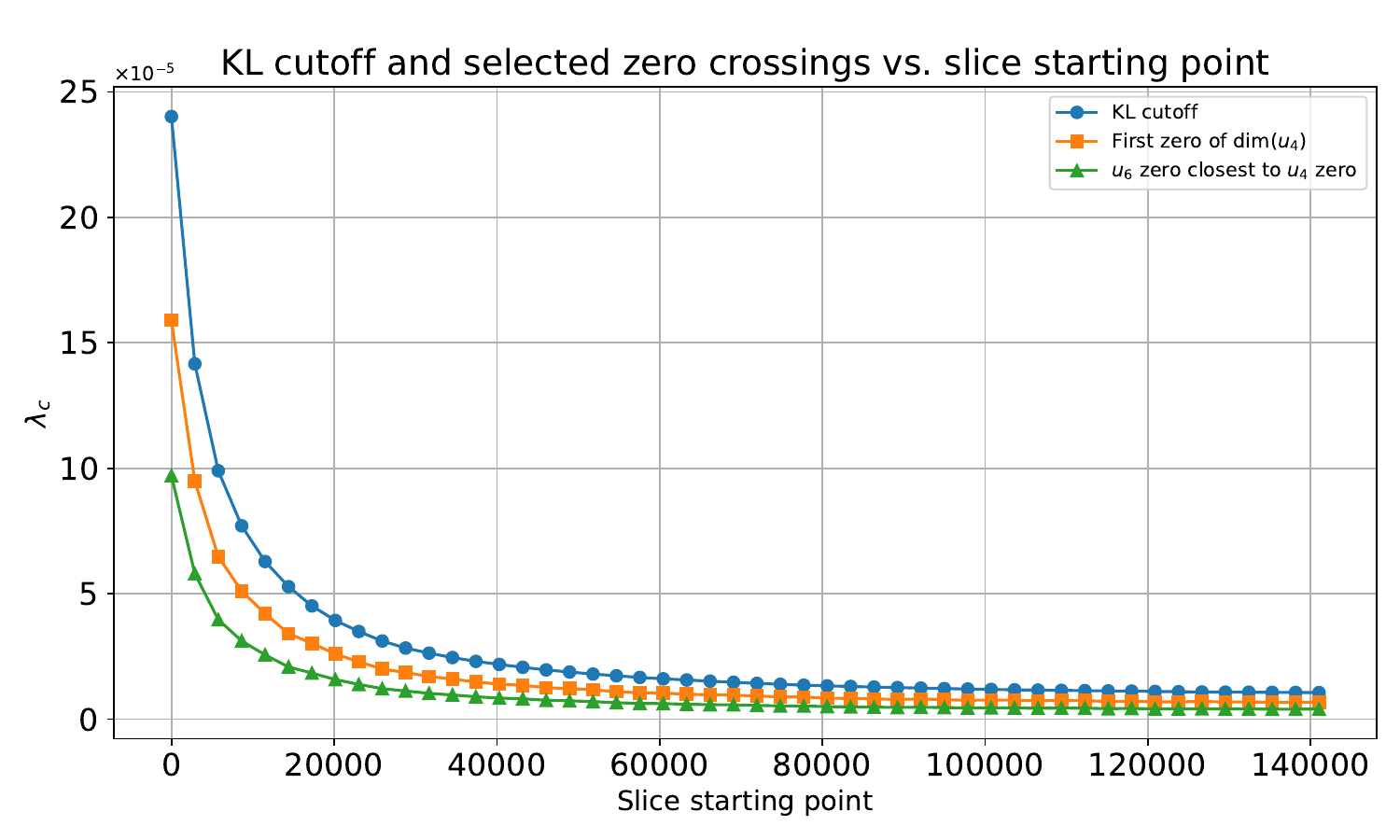}
        \caption{Method II}
        \label{fig:dynamic_window}
    \end{subfigure}
    \caption{$\lambda_c$ calculated for different slicing methods.}
    \label{fig:lambda_windows}
\end{figure}

Next, we fit \Cref{eq:window_fit} to the data shown in \Cref{fig:lambda_windows}.
For both methods, however, only the long-time scaling regime can be fitted, as the early-time evolution remains too rapid to be resolved by our windowing schemes.
For Method~II, the fitting procedure is relatively straightforward because the adaptive window size accounts for the different relaxation rates at early and late times.
We therefore exclude the first few data points, which lie outside the long-time scaling regime, and fit \Cref{eq:window_fit} to the remaining data, as shown in \Cref{fig:dynamic_window_fit}.
This yields $z=2.47 \pm 0.25$, $z=2.63 \pm 0.56$ and $z=2.63 \pm 0.59$ for the \kl-, $u_4$- and $u_6$-based methods respectively.

For Method~I, the fitting procedure requires greater care.
We seek an intermediate time interval that excludes both the early-time regime, which is not adequately resolved by our windowing scheme, and the late-time plateau associated with the approach to equilibrium.
To identify a suitable fitting range, we varied the lower and upper bounds of the interval and computed the mean squared error (\mse) for each fit.
A selection of the resulting \mse and estimates of $z$ is reported in \Cref{tab:z_values}.
The fit over the best interval is shown in \Cref{fig:constant_window_fit}.
Overall, the six estimates obtained using the two methods are consistent with the reference value $z \approx 2.17$~\cite{nightingale1996dynamic}.
Method~II yields systematically higher $z$ values, with a weighted average of $z = 2.51 \pm 0.21$ ($1.6\sigma$ above the reference).
This bias likely arises from the adaptive windowing scheme: shorter initial windows ($\num{4e3}$ time steps) produce matrices with fewer samples, inflating the spectral edge estimate at early times where the exponent is largest.
Method~I, which uses a fixed window size and restricts to the optimal fitting interval, gives $z = 2.30 \pm 0.11$ ($1.1\sigma$ from reference), in better agreement.

\begin{figure}[t]
    \centering
    \begin{subfigure}{0.49\textwidth}
        \centering
        \includegraphics[width=\linewidth]{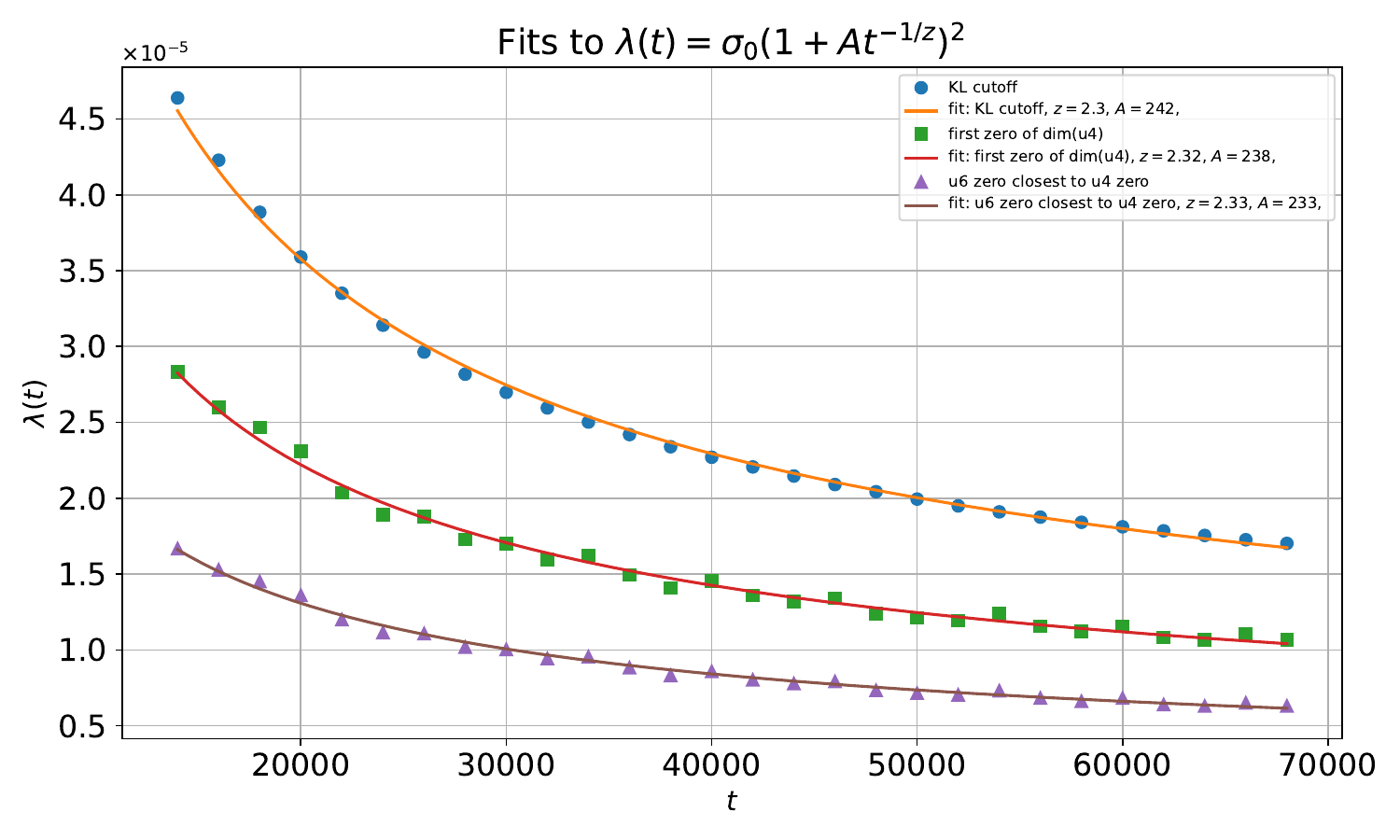}
        \caption{\Cref{eq:window_fit} fitted to the best chosen interval of Method I.}
        \label{fig:constant_window_fit}
    \end{subfigure}
    \hfill
    \begin{subfigure}{0.49\textwidth}
        \centering
        \includegraphics[width=\linewidth]{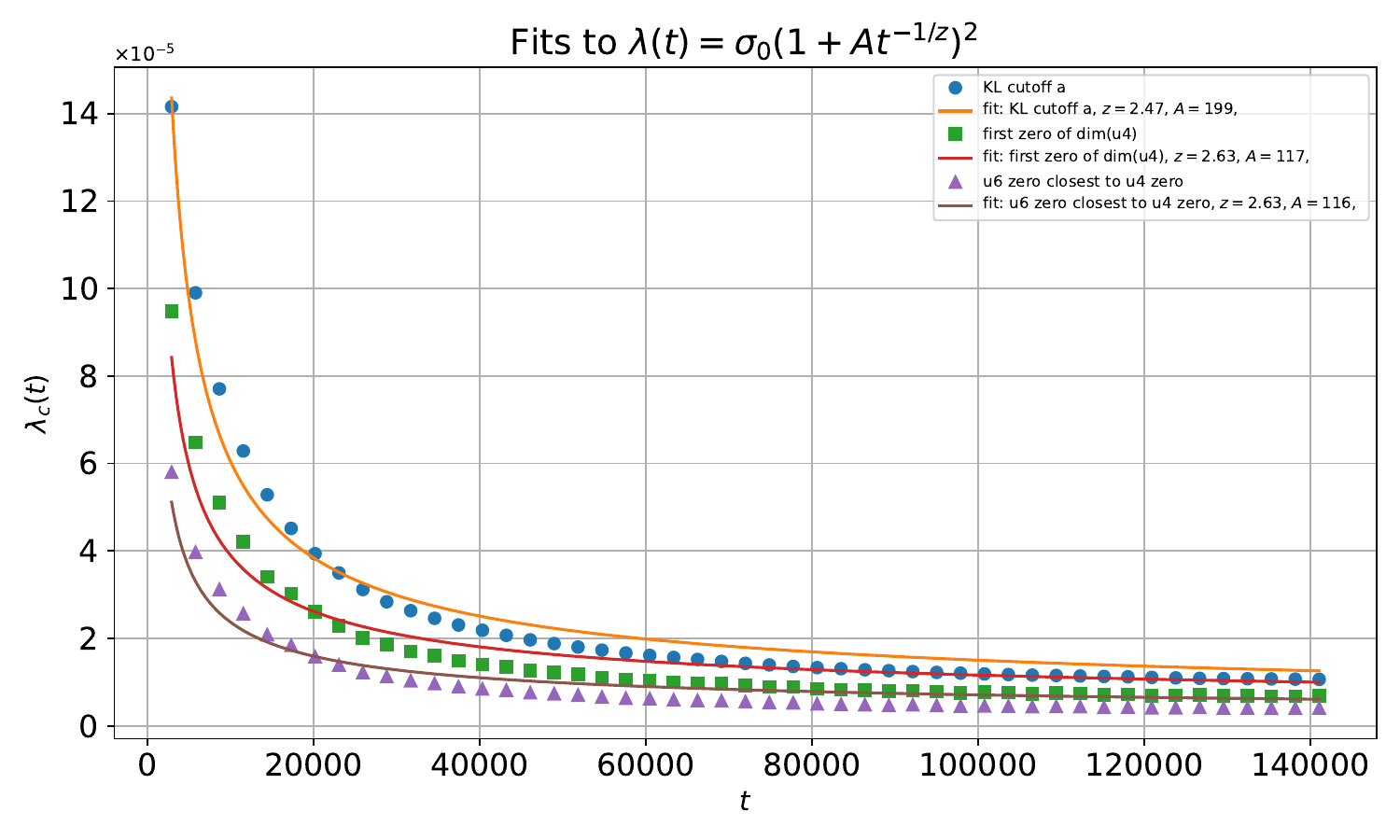}
        \caption{\Cref{eq:window_fit} fitted to the data obtained from  Method II.}
        \label{fig:dynamic_window_fit}
    \end{subfigure}
    \caption{\Cref{eq:window_fit} fitted to data from method I and II.}
    \label{fig:window_fit}
\end{figure}

\begin{table}[t]
    \centering
    \begin{tabular}{|c|c|c|c|}
        \hline
        \textbf{Starting Point} &
        \textbf{End Point}      &
        \textbf{$z$-Values}     &
        \textbf{MSE}                               \\
        \hline

        7000                    & 38000          &
        \begin{tabular}[c]{@{}c@{}}
            $2.56 \pm 0.638$ \\
            $3.09$           \\
            $5.10$
        \end{tabular}
                                &
        \begin{tabular}[c]{@{}c@{}}
            $1.92\times10^{-12}$ \\
            $6.23\times10^{-13}$ \\
            $2.12\times10^{-13}$
        \end{tabular}
        \\
        \hline

        7000                    & 58000          &
        \begin{tabular}[c]{@{}c@{}}
            $3.13$          \\
            $2.51 \pm 0.54$ \\
            $3.10$
        \end{tabular}
                                &
        \begin{tabular}[c]{@{}c@{}}
            $6.62\times10^{-13}$ \\
            $8.29\times10^{-13}$ \\
            $1.45\times10^{-13}$
        \end{tabular}
        \\
        \hline

        \textbf{10000}          & \textbf{58000} &
        \begin{tabular}[c]{@{}c@{}}
            $2.30 \pm 0.135$ \\
            $2.28 \pm 0.303$ \\
            $2.29 \pm 0.31$
        \end{tabular}
                                &
        \begin{tabular}[c]{@{}c@{}}
            $9.72\times10^{-14}$ \\
            $1.99\times10^{-13}$ \\
            $6.93\times10^{-14}$
        \end{tabular}
        \\
        \hline

        14000                   & 58000          &
        \begin{tabular}[c]{@{}c@{}}
            $2.79 \pm 2.55$  \\
            $2.34 \pm 0.596$ \\
            $2.35 \pm 0.557$
        \end{tabular}
                                &
        \begin{tabular}[c]{@{}c@{}}
            $3.13\times10^{-12}$ \\
            $1.86\times10^{-13}$ \\
            $6.16\times10^{-14}$
        \end{tabular}
        \\
        \hline

        14000                   & 78000          &
        \begin{tabular}[c]{@{}c@{}}
            $2.35 \pm 0.212$ \\
            $2.39 \pm 0.416$ \\
            $2.40 \pm 0.418$
        \end{tabular}
                                &
        \begin{tabular}[c]{@{}c@{}}
            $1.53\times10^{-13}$ \\
            $2.05\times10^{-13}$ \\
            $6.97\times10^{-14}$
        \end{tabular}
        \\
        \hline
    \end{tabular}

    \caption{%
        Fitted values of $z$ and corresponding \mse for different fitting intervals.
        The best candidate with the smallest overall \mse is indicated in bold.
        The missing uncertainty values could not be reliably estimated from the covariance matrix and are therefore not reported.}
    \label{tab:z_values}
\end{table}

\section{Hyperspectral Images}\label{sec:hyperspectral}
\begin{figure}[t]
    \centering
    \input{img/hsi/schema.tex}
    \caption{%
        Schematic representation of light and matter interaction and the resulting spectral measurement.
        In the planetary remote-sensing application considered here, sunlight is reflected from the surface and captured by an imaging spectrometer.
        The transmitted component is not available
        to the detector.
    }
    \label{fig:lightmatterinteraction}
\end{figure}

Computer vision deals with the analysis of visual information, often presented as images.
Recent advances in \ai have contributed to powerful new approaches for processing visual data, with large neural networks now capable of analysing complex scenes captured as multi-channel images.
Standard \rgb images condense colourimetric information into three channels that roughly correspond to primary colour bands, and they represent the vast majority of the data processed by modern \ai techniques.

From a scientific perspective, however, reducing spectral information to three arbitrary channels discards a significant portion of the information contained in the original scene.
Matter interacts with light across a broad electromagnetic spectrum, and capturing this interaction in detail requires far more than three channels.
The natural representation of this interaction is a spectrum (see \Cref{fig:lightmatterinteraction}): an illumination source irradiates the surface, and a detector measures the reflected intensity as a function of wavelength.
The result is a \emph{spectral} image $\mathrm{X} \in \mathds{R}^{H \times W \times C}$, where $H$ and $W$ denote the spatial dimensions and $C$ is the number of spectral bands.
Each pixel $\vec{x}_{ij} \in \mathds{R}^{C}$ records the spectrum of the light captured at the corresponding spatial location $(i, j)$, with $i = 1, 2, \dots, H$ and $j = 1, 2, \dots, W$.
Because the interaction between radiation and matter depends on molecular composition, each spectrum serves as a fingerprint of the material present on the surface.
These images are commonly referred to as multispectral or hyperspectral images (\hsi), depending on the number of bands: multispectral systems typically carry $\mathcal{O}(10)$ bands, while hyperspectral systems carry $\mathcal{O}(10^2)$ to $\mathcal{O}(10^3)$ bands.
Thanks to the emergence of high-performance optical sensors, \hsi has become an important tool for scientists, enabling the acquisition of detailed spectral information across a wide range of wavelengths.
\hsi is now used across diverse sectors, including industry, climate science, medicine, and agriculture~\cite{khan2018modern}.

\begin{figure}[t]
    \centering
    \begin{subfigure}{\textwidth}
        \centering
        \includegraphics[width=\textwidth]{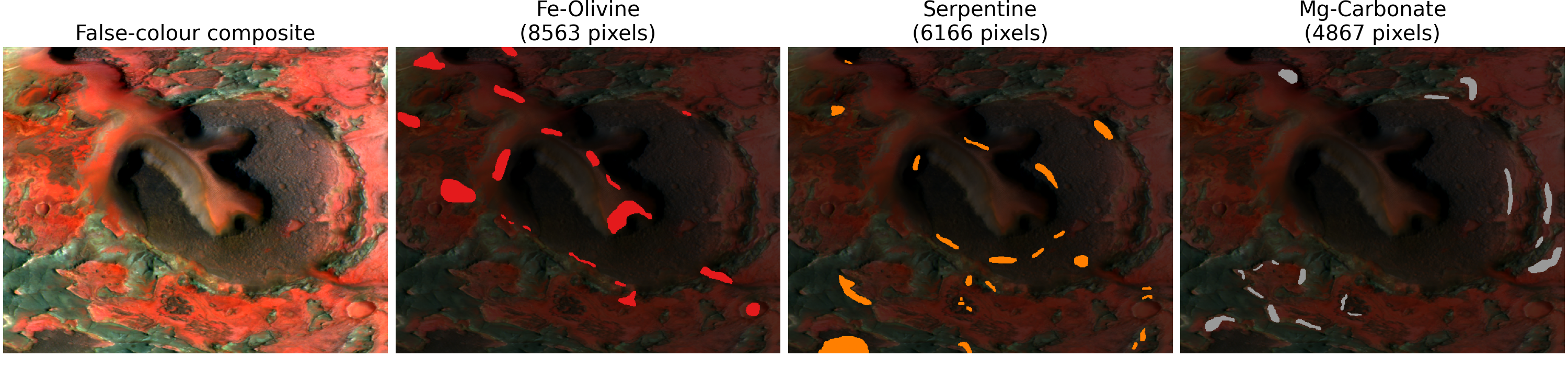}
        \caption{Visualisation of the three most represented mineral classes.}
    \end{subfigure}
    \begin{subfigure}{\textwidth}
        \centering
        \includegraphics[width=0.65\textwidth]{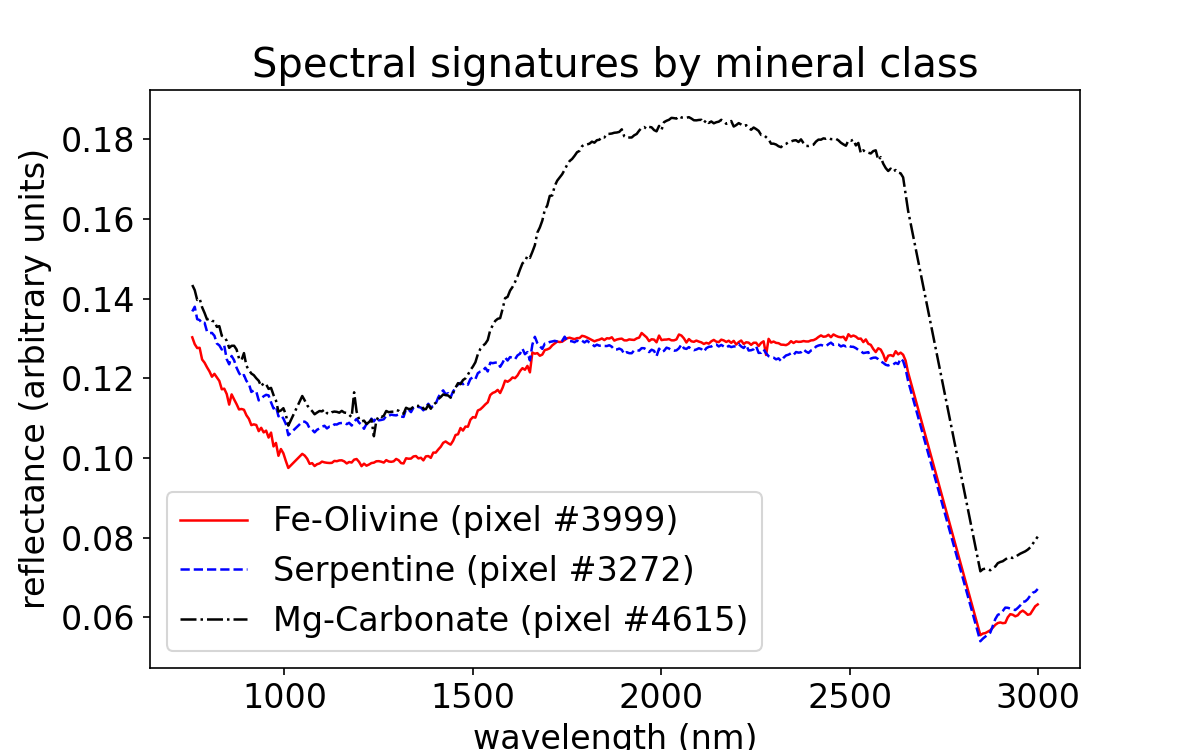}
        \caption{%
            Representative spectrum extracted from the dataset
            (\emph{Fe-Olivine class}).
        }
    \end{subfigure}
    \caption{%
        Visualisation of the hyperspectral cube for the \emph{Nili Fossae}
        formation on the Martian surface.
    }
    \label{fig:nfviz}
\end{figure}

From a data perspective, a hyperspectral image is structured as a high-dimensional data cube, commonly called the \emph{hyperspectral cube}.
Like any experimental acquisition, an \hsi is inherently corrupted by noise from multiple sources: electronic interference (often manifesting as quasi-white noise), photon (shot) noise, which follows Poisson statistics and is signal-dependent, unlike additive Gaussian noise, impulse (\textit{salt-and-pepper}) noise arising from sensor defects or transmission errors, receiver calibration errors, and localised pixel failures~\cite{ilesanmi2021methods}.
A robust denoising procedure is therefore a critical prerequisite for hyperspectral data analysis.
This task is made significantly more challenging by two properties of \hsi data:
\begin{enumerate}
    \item The high dimensionality of the data amplifies noise redundancy, particularly because of strong correlations between adjacent spectral bands.
    \item The multiplicity of noise sources, which can interact in complex ways and further complicates the denoising process.
\end{enumerate}

Denoising methods must be adapted to these specific constraints.
One of the most widely used approaches is the \mnf transform~\cite{green1988MNF}, a linear dimensionality reduction technique extending \pca.
The \mnf transform proceeds in two steps~\cite{green1988MNF}: first, the noise covariance matrix is estimated (typically from spatial differences of neighbouring pixels) and used to decorrelate and whiten the noise.
Second, a standard \pca rotation is applied to the noise-whitened data, ordering the resulting components by decreasing \snr rather than by variance alone.
The data are thereby decomposed into components with progressively lower \snr.
While applying \pca, or subsequent classification algorithms, to the high-\snr components generally yields effective results, standard methods may fail on the low-\snr components, where residual signal is too weak to be separated by variance-based techniques yet may still carry diagnostically valuable spectral information.

Before turning to the \gsa analysis, it is useful to spell out the correspondence between the elements of \hsi and the field-theoretic ingredients developed in the previous parts of this review.
In the language of \Cref{sec:gsa}, the data matrix $X \in \mathds{R}^{N \times P}$ has $N$ rows corresponding to pixels (samples) and $P$ columns corresponding to spectral bands (variables).
The \ecm is therefore a $P \times P$ object, and the spectral density of its eigenvalues is what enters the power counting and the construction of the effective action.
The continuum removal and normalisation procedures described below are designed to keep this spectral density as close as possible to the \mpdistr universality class in the absence of signal, so that any departure detected by \gsa can be attributed to genuine mineral-specific correlations rather than to instrumental artefacts.
The standard classification task (assigning a mineral label to each pixel) is a separate problem.
The original analysis of this dataset relies on a dedicated neural-network pipeline~\cite{xi2025mctgcl} and is outside the scope of the present review.
Here we take the labels as given and focus on what \gsa can say about the correlation structure of each mineral class independently.

This low-\snr subspace is precisely the regime in which the \gsa framework is designed to operate.
Unlike the images studied in \Cref{sec:gsa}, where the signal manifests through geometric patterns in the spatial domain, \hsi data embed signal in the \emph{colourimetric} (spectral) domain: different minerals produce distinct spectral signatures whose correlation structure is probed by \gsa directly on the \ecm, without requiring prior knowledge of absorption band positions or reference spectral libraries.
To illustrate how \gsa performs in this context, we conduct a controlled experiment on mineral spectra extracted from the \emph{} benchmark dataset~\cite{xi2025mctgcl,ximars2025}.

For this study, we use one annotated image from the \emph{Nili Fossae} region of the Martian surface, acquired by the \emph{Compact Reconnaissance Imaging Spectrometer for Mars} (\textsc{crism}) instrument aboard the Mars Reconnaissance Orbiter~\cite{murchie2007crisminstrument}.
As shown in \Cref{fig:nfviz}, the image contains several distinct mineral formations.
For our analysis, we select the three most represented classes: two silicates (\emph{Fe-Olivine} and \emph{Serpentine}) and one carbonate (\emph{Mg-Carbonate}).
Pixels belonging to each class are extracted into separate spectral matrices.
In all three cases the number of spectral bands is $C = P = 343$, so that the correlation matrix is of size $343 \times 343$.
For the three classes considered, the number of pixels and the resulting sample-to-band ratio are: $N_{\text{Fe-Olivine}} = 8563$ and $q = P/N \simeq 0.04$ for Fe-Olivine, $N_{\text{Serpentine}} = 6166$ and $q \simeq 0.06$ for Serpentine, $N_{\text{Mg-Carbonate}} = 4867$ and $q \simeq 0.07$ for Mg-Carbonate.
We note that the dataset is provided with standard atmospheric and photometric corrections already applied~\cite{xi2025mctgcl}.
These corrections have a limited impact on the \gsa analysis, which depends primarily on the correlation structure of the spectra rather than on absolute reflectance values.
Spectra are pre-processed following standard reflectance spectroscopy protocols~\cite{clark1984convexhull}.
First, we apply continuum removal via the upper convex hull method: for each spectrum, the convex envelope spanning the local reflectance maxima is computed, and the spectrum is divided by this envelope.
This normalisation isolates narrow absorption features from the broadband continuum, emphasising the mineral-specific spectral signatures while suppressing variations due to albedo and illumination geometry.
Second, each continuum-removed spectrum is normalised to unit integral, ensuring that the subsequent \ecm construction reflects correlation structure rather than absolute reflectance magnitude.
We then follow the \gsa procedure described in \Cref{sec:gsa}: isolated spikes are removed using exactly the same protocol as in that section, which restricts the analysis to the continuous bulk of the eigenvalue spectrum.
The key point is that this spike removal is applied \emph{by construction}, so that any structure detected by \gsa must originate from extensive-rank correlations within the spectral bulk.
Gaussian noise is then injected at varying \snr $\beta$ to probe the detectability of each mineral class.
The empirical canonical dimensions are evaluated at the same IR reference scale $k^2_{\text{IR}}$ defined in \Cref{sec:gsa}.
By this construction, the spectra studied here contain no isolated eigenvalue spikes.
The analysis is, by design, a \enquote{post-\pca} analysis.
Standard \pca-based methods excel at extracting the isolated spikes that carry the dominant low-rank signal component, and in a real application those spikes would already have been identified and subtracted.
\gsa operates on the residual continuous spectrum that \pca leaves behind, the regime where spike-based benchmarks are no longer applicable.
Consequently, a direct quantitative comparison with \pca-based benchmarks is not meaningful here.
The two approaches address complementary portions of the eigenvalue spectrum.
The relevant question is whether \gsa can detect residual spectral structure \emph{beyond} what \pca already captures, and the results below show that it can.
\begin{figure}[t]
    \centering
    \begin{subfigure}{0.45\textwidth}
        \centering
        \includegraphics[width=\textwidth]{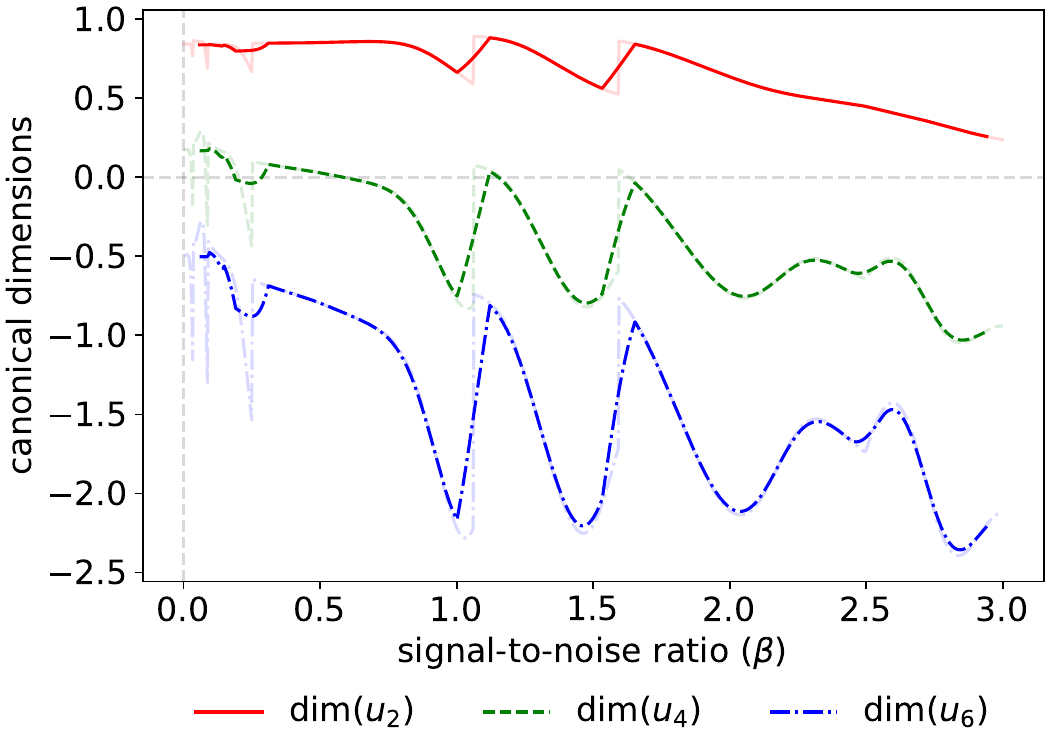}
        \caption{Fe-Olivine}
    \end{subfigure}
    \begin{subfigure}{0.45\textwidth}
        \centering
        \includegraphics[width=\textwidth]{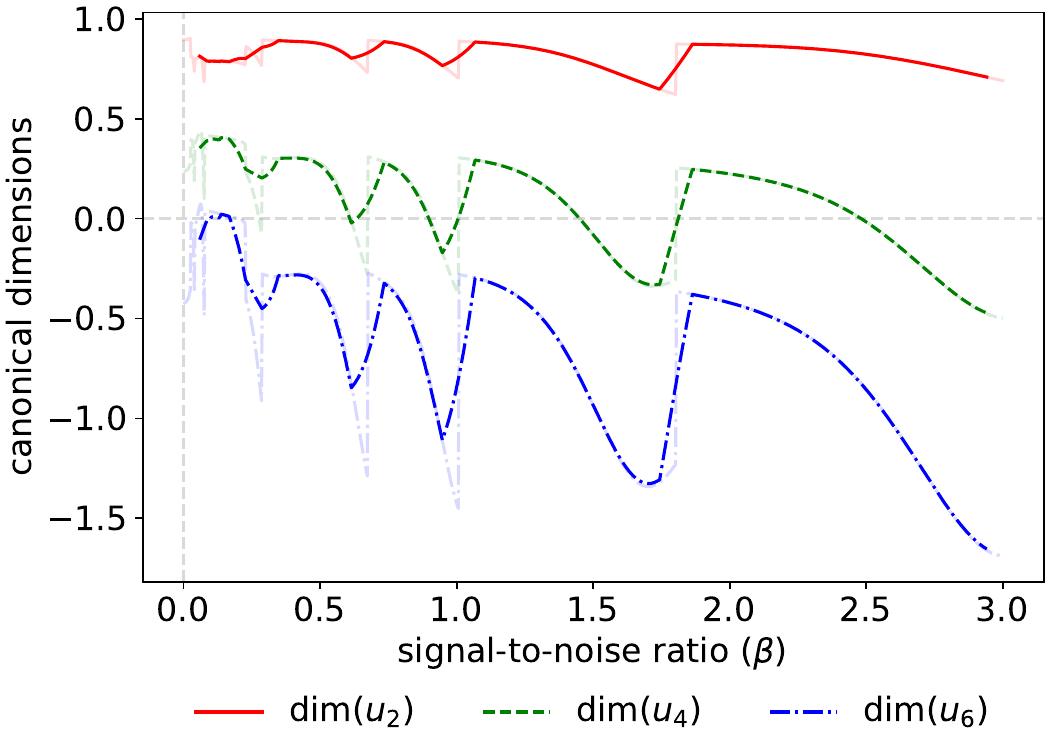}
        \caption{Serpentine}
    \end{subfigure}
    \begin{subfigure}{0.45\textwidth}
        \centering
        \includegraphics[width=\textwidth]{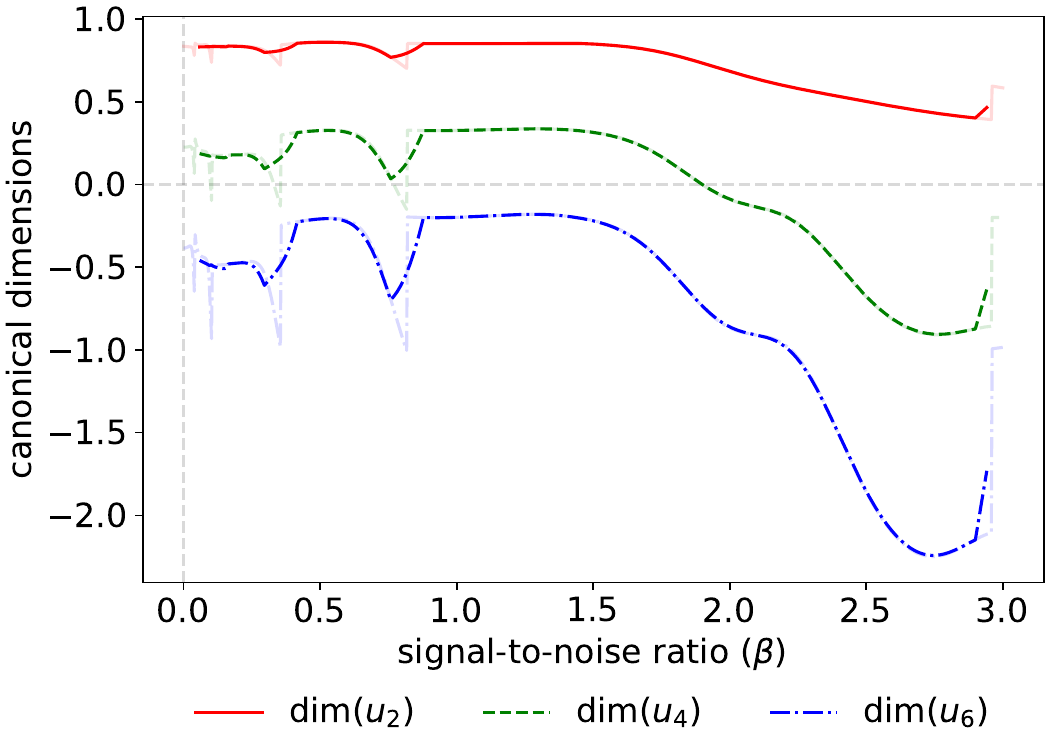}
        \caption{Mg-Carbonate}
    \end{subfigure}
    \caption{%
        Canonical dimensions of the three mineral classes as functions of the
        injected \snr $\beta \in [0, 3]$.
    }
    \label{fig:can_dim_mars}
\end{figure}

\Cref{fig:can_dim_mars} shows the canonical dimensions for each mineral class as a function of $\beta$, and \Cref{tab:hsi_thresholds} summarises the detection thresholds defined in \Cref{detectionTh}.
The two silicate classes exhibit qualitatively similar behaviour, with multiple visible peaks and distinct threshold values.
Fe-Olivine displays a rigidity threshold already at very low \snr ($\beta_t \simeq 0.05$), followed by a critical threshold at $\beta_c \simeq 0.15$ where $\dim_{\tau}(u_4)$ first crosses zero.
Its optimal threshold $\beta_O$, the first pronounced local minimum of $\dim_{\tau}(u_4)$, is located near $\beta_O \simeq 1.0$.
Serpentine shows a somewhat higher rigidity threshold ($\beta_t \simeq 0.1$) and a critical threshold at $\beta_c \simeq 0.6$, with its optimal threshold also in the vicinity of $\beta_O \simeq 1.0$.
Mg-Carbonate, which is notoriously difficult to detect in standard spectral analyses, exhibits a markedly different profile: despite an early rigidity loss at $\beta_t \simeq 0.1$, its critical threshold occurs much later at $\beta_c \simeq 1.9$, and the optimal threshold is reached only at $\beta_O \simeq 2.8$.

\begin{table}[t]
    \caption{Detection thresholds for the three mineral classes studied.}\label{tab:hsi_thresholds}
    \centering
    \begin{tabular}{@{}lccc@{}}
        \toprule
        \textbf{Mineral class} & $\beta_t$ (rigidity) & $\beta_c$ ($\dim_{\tau}(u_4)=0$) & $\beta_O$ (optimal) \\
        \midrule
        Fe-Olivine             & $\simeq 0.05$        & $\simeq 0.15$                    & $\simeq 1.0$        \\
        Serpentine             & $\simeq 0.10$        & $\simeq 0.60$                    & $\simeq 1.0$        \\
        Mg-Carbonate           & $\simeq 0.10$        & $\simeq 1.9$                     & $\simeq 2.8$        \\
        \bottomrule
    \end{tabular}
\end{table}

The differences between the three classes can be understood, at least qualitatively, from the spectral characteristics of each mineral.
Fe-Olivine exhibits deep, narrow absorption bands near \SI{1}{\micro\meter}, which produce sharp spectral features that remain distinguishable even at low \snr.
The corresponding eigenvalue spectrum develops correlated structure already at small $\beta$.
Serpentine displays broader hydration features near \SI{1.4}{\micro\meter} and \SI{2.3}{\micro\meter}, requiring higher \snr for reliable discrimination from the noise background.
Mg-Carbonate is characterised by particularly broad absorption bands near \SI{2.3}{\micro\meter} and \SI{2.5}{\micro\meter}: the larger spectral support of these features distributes the signal across a wider eigenvalue range, so that a greater number of eigenvectors is needed to capture the correlated structure.
In the \rg framework, this would translate to a prolonged dimensional phase transition~\cite{Landau2023}: the effective dimension remains above 4 over a wider interval of $\beta$, producing the long-tailed decay of $\dim_{\tau}(u_4)$ observed in \Cref{fig:can_dim_mars}.
This interpretation is consistent with our numerical results, although a fully quantitative mapping between absorption band width and dimensional phase transition length has yet to be established.

The presence of multiple peaks in the canonical dimension profiles, observed for both silicate classes, is consistent with the eigenvalue phase transition picture developed in~\cite{Landau2023} and in our previous work.
In the absence of isolated spikes, which are systematically removed in our pre-processing, these peaks suggest successive delocalisation transitions of eigenvector subsets as $\beta$ increases.
One natural interpretation is that each peak corresponds to a distinct spectral sub-feature of the mineral under study, such as a different crystalline phase or a different illumination geometry across the region.
The observation that the three classes display distinct peak structures is, on its own, a non-trivial result: the framework can distinguish between minerals on the basis of their \ecm structure alone, without invoking reference spectra or band positions.

\begin{remark}{Multiple peaks and successive noise components}{hsi_rem_cycles}
    The same peak structure is reminiscent of the cyclic behaviour of the canonical dimension analysed in \Cref{Sec4-1} for realistic images.
    In that context, successive peaks in $\dim_{\tau}(u_4)$ were associated with the progressive decoupling of independent noise components as the \snr increases.
    The fact that the \hsi data, which are of entirely different nature, exhibit a qualitatively similar pattern provides additional support for the interpretation that the peaks reflect the sequential emergence of structured sub-populations of eigenvalues.
    In the present case, the most natural candidates for these sub-populations are the distinct absorption features that characterise each mineral.
    The number and position of the peaks, rather than their exact height, may therefore carry the diagnostic information.
\end{remark}

These transitions occur well below the \bbp threshold~\cite{math3} (see \Cref{AppD}), which for the three sample-to-band ratios considered here would lie in the range $\beta_{\text{BBP}} = q^{1/4} \in [0.49, 0.56]$.
The corresponding \gsa thresholds ($\beta_t \simeq 0.05$--$0.10$) are an order of magnitude smaller, confirming that \gsa detects extensive-rank spectral structure while the eigenvalues in question are still fully embedded within the \mpdistr bulk.

The \frg framework, and \gsa in particular, thus reveals diagnostically useful spectral features from a perspective that is complementary to standard classification and spectral-matching techniques.
Whereas on standard images \gsa detects signal through geometric patterns that emerge as correlated pixel domains, here it probes the colourimetric correlation structure directly, distinguishing mineral classes by the spectral organisation of their absorption features rather than by spatial morphology.
The use case presented here focuses on planetary science, but the methodology extends naturally to other fields where spectral analysis is relevant, including remote sensing of the Earth and astrophysical observations~\cite{BioucasDias2012HSIunmixing}.


\section*{Acknowledgments}

The authors acknowledge support from the COMETA COST Action \href{https://www.cost.eu/actions/CA22130/}{CA22130}.
V.L., for his part, extends his warm thanks to the swan of destiny.

The authors also raise the alarm over the fate of our colleague and friend Dine Ousmane Samary, victim of arbitrary detention in their home country of Benin.
Interested readers can, if they so choose, sign the following petition for their release: \url{https://c.org/JBH2cJ6qhd}.
\clearpage

\printbibliography[heading=bibintoc]


\clearpage
\appendix

\section{Maximum Entropy Estimator}\label{App0}
The maximum entropy estimator~\cite{amari2016information,jaynes1982rationale} is the least informative statistical inference compatible with a prescribed set of expectation-value constraints.
Entropy maximisation is a standard reference prescription in statistical inference: among all densities consistent with the prescribed expectation values, it selects the one carrying the smallest amount of additional information beyond the constraints, and is therefore the natural default in the absence of further prior knowledge about the system.

\subsection{Generalities}

In brief, the principle amounts to considering a system described by a continuous variable $q \in \mathcal{X}$, where $\mathcal{X}$ is a given sample space (e.g.\ $\mathcal{X} \equiv \mathds{R}^N$), and governed by an unknown probability density $p(q)$.
Suppose further that we are given $K$ expectation-value constraints: for a collection of real-valued test functions $c_1(q), c_2(q), \dots, c_K(q)$, the corresponding expectations under $p$ are specified as
\begin{equation}
	\int_{\mathcal{X}}\, \dd q\, p(q) c_l(q)
	=
	\mathds{E}_{p}[c_l(q)]
	=
	a_l,
	\qquad 1 \leq l \leq K.
	\label{constMEE}
\end{equation}
The maximum-entropy estimate $P(q)$ is the probability density that:
\begin{enumerate}
	\item maximises the Shannon (differential) entropy
	      \begin{equation}
		      S[p] \defeq -\int \dd q\, p(q) \ln p(q),
		      \label{entropy}
	      \end{equation}
	\item reproduces the constraints~\eqref{constMEE} together with the normalisation $\int_{\mathcal{X}} \dd q\, p(q) = 1$.
\end{enumerate}
The two requirements are combined by extremising the augmented functional
\begin{equation}
	I[p] \defeq S[p] - \sum_{l=0}^{K} h_l \left(\int_{\mathcal{X}} \dd q\, p(q) c_l(q) - a_l\right),
	\label{funIMEE}
\end{equation}
where the sum is extended to $l=0$ with the convention $c_0(q) \equiv 1$ and $a_0 \equiv 1$, so that the corresponding Lagrange multiplier $h_0$ enforces the normalisation of the probability distribution.
At the extremum the functional derivative of $I$ with respect to $p$ vanishes pointwise:
\begin{equation}
	\eval{\frac{\delta I}{\delta p}}_{p=P} = 0.
	\label{extcond}
\end{equation}
The integrand of $I[p]$ is the pointwise Lagrangian
\begin{equation}
	\mathcal{L}[p](q) = -p(q) \ln p(q) - \sum_{l=0}^{K} h_l\, p(q) c_l(q) + \sum_{l=0}^{K} h_l a_l,
	\label{LagMEE}
\end{equation}
in which the test functions $c_l(q)$ and the multipliers $h_l$ are independent of $p$.
The functional derivative reduces to the partial derivative of $\mathcal{L}$ with respect to the field $p(q)$,
\begin{equation}
	\frac{\partial \mathcal{L}}{\partial p}
	= -\ln p(q) - 1 - \sum_{l=0}^{K} h_l\, c_l(q).
	\label{dLdpMEE}
\end{equation}
Setting~\eqref{dLdpMEE} to zero and solving for $p(q)$ yields
\begin{equation}
	p(q) = \exp\!\left(-1 - \sum_{l=0}^{K} h_l c_l(q)\right) = e^{-1-h_0}\, \exp\!\left(-\sum_{l=1}^{K} h_l c_l(q)\right).
	\label{Pexpr}
\end{equation}
The overall prefactor $e^{-1-h_0}$ is fixed by the normalisation $\int_{\mathcal{X}} \dd q\, P(q) = 1$ and defines the partition function
\begin{equation}
	Z \defeq \int_{\mathcal{X}} \dd q\, \exp\!\left(-\sum_{l=1}^{K} h_l c_l(q)\right),
	\label{ZdefMEE}
\end{equation}
so that the maximum-entropy density takes the canonical exponential form
\begin{equation}
	P(q) = \frac{1}{Z} \exp\!\left(-\sum_{l=1}^{K} h_l c_l(q)\right).
	\label{MEEfinal}
\end{equation}
The second variation of $I[p]$ at the critical point is strictly negative,
\begin{equation}
	\eval{\frac{\delta^{2} I}{\delta p^{2}}}_{p = P} = -\frac{1}{P(q)} < 0,
\end{equation}
which confirms that the stationary point is a strict maximum of the entropy under the stated linear constraints.

\subsection{The Boltzmann Distribution}

As a concrete illustration of the maximum-entropy principle, we derive the Boltzmann--Gibbs distribution of equilibrium statistical mechanics.
Consider a large physical system, such as a gas, with many generalised coordinates $\textbf{x} = (x_1,x_2,\dots, x_N) \in \mathds{R}^N$, governed by an unknown probability density $p(\textbf{x})$ with a prescribed mean energy
\begin{equation}
	\int \dd \textbf{x}\, H(\textbf{x})\, p(\textbf{x})
	=
	E.
	\label{constraintE}
\end{equation}
The function $H(\textbf{x})$ is the Hamiltonian of the system.
We look for the density $P_E(\textbf{x})$ that satisfies~\eqref{constraintE} and the normalisation $\int \dd \textbf{x}\, P_E(\textbf{x}) = 1$ while maximising the entropy~\eqref{entropy}.
The augmented functional to be extremised is
\begin{equation}
	I[p]
	=
	S[p]
	-
	a \left(\int \dd \textbf{x}\, H(\textbf{x})\, p(\textbf{x})-E\right)
	-
	b \left(\int \dd \textbf{x}\, p(\textbf{x})-1\right),
	\label{IBoltz}
\end{equation}
where $a$ is the Lagrange multiplier enforcing the mean-energy constraint~\eqref{constraintE} and $b$ enforces the normalisation.
The pointwise Lagrangian is
\begin{equation}
	\mathcal{L}_E[p](\textbf{x}) = -p(\textbf{x}) \ln p(\textbf{x}) - a H(\textbf{x})\, p(\textbf{x}) + a E - b\, p(\textbf{x}) + b,
\end{equation}
and its partial derivative with respect to $p(\textbf{x})$ is
\begin{equation}
	\frac{\partial \mathcal{L}_E}{\partial p}
	=
	-\ln p(\textbf{x}) - 1 - a H(\textbf{x}) - b.
	\label{dLdpBoltz}
\end{equation}
Setting this to zero and solving for $p(\textbf{x})$ gives
\begin{equation}
	p(\textbf{x})
	=
	\exp\!\left(-1 - a H(\textbf{x}) - b\right)
	=
	e^{-b-1}\, e^{-a H(\textbf{x})}.
	\label{presolBoltz}
\end{equation}
The normalisation $\int \dd \textbf{x}\, p(\textbf{x}) = 1$ fixes $e^{-b-1}$ as the inverse of the partition function
\begin{equation}
	Z_E
	\defeq
	\int \dd \textbf{x}\, e^{-a H(\textbf{x})},
	\label{ZdefE}
\end{equation}
so that the maximum-entropy density takes the canonical Boltzmann--Gibbs form
\begin{equation}
	P_E(\textbf{x})
	=
	\frac{e^{-a H(\textbf{x})}}{Z_E}.
	\label{BoltzmannFinal}
\end{equation}
The Lagrange multiplier $a$ is identified with the inverse temperature $\beta \equiv 1 / (k_B T)$, where $T$ is the absolute temperature and $k_B$ is Boltzmann's constant.

\section{Local Theory and Universality Arguments}\label{App1}
This section summarises the arguments from~\cite{RG1,RG2,RG3,RG4,RG5}, used to justify the local field theory as a first-order approximation at the tail of the spectrum.
The core rationale relies on the concept of universality of matrix models.
For instance, the \mpdistr distribution consistently emerges across a wide array of data analysis contexts, ranging from finance and biology to medicine and machine learning.

\subsection{From Data to Effective Local Field Theory}

\begin{figure}[t]
    \centering
    \begin{tikzpicture}[
        >=Latex,
        scale=0.8,
        every node/.style={scale=0.8}
    ]
    \def\s{1.0}

    \definecolor{spinblue}{RGB}{172,215,230}
    \definecolor{darkblue}{RGB}{0,0,139}
    \definecolor{darkred}{RGB}{139,0,0}

    \tikzset{
    box/.style={draw=spinblue, fill=spinblue!40, thick, rounded corners=2pt},
    boxtext/.style={align=center, text width=5cm},
    arrow/.style={->, thin, black!80},
    darrow/.style={->, thin, black!80, dashed, dash pattern=on 3pt off 3pt},
    rarow/.style={-{Latex}, red, very thick},
    axis/.style={->, thin, black},
    }


    \draw[box]
    ({-6.029*\s}, {2.966*\s}) rectangle ({-1.940*\s}, {4.911*\s})
    node[above left] {\Large $B$}
    node[midway, boxtext] {{\large \textbf{Binary data}} \\[0.35em] {Empirical first and second order cumulants}};

    \draw[box]
    ({1.398*\s}, {2.893*\s}) rectangle ({5.863*\s}, {5.063*\s})
    node[above left] {\Large $C$}
    node[midway, boxtext] {{\large \textbf{Statistical inference}} \\[0.35em] {$p(\sigma) \propto e^{-H(\sigma)}$} \\ {$H(\sigma) = -\frac{1}{2} s_i K_{ij} s_j + h_i s_i$}};

    \draw[box]
    ({4.274*\s}, {-1.758*\s}) rectangle ({7.462*\s}, {1.641*\s})
    node[above left] {\Large $D$}
    node[midway, boxtext, yshift=3em] {{\large \textbf{Denoising}}};

    \draw[box]
    ({-7.417*\s}, {-1.610*\s}) rectangle ({-4.168*\s}, {1.395*\s})
    node[above left] {\Large $A$}
    node[midway, boxtext, yshift=3em] {{\large \textbf{ECM spectrum}}};

    \draw[box]
    ({-5.991*\s}, {-5.058*\s}) rectangle ({-1.902*\s}, {-3.114*\s})
    node[above left] {\Large $B^{\prime}$}
    node[midway, boxtext] {{\large \textbf{Other dataset}} \\[0.35em] {(same universality law)}};

    \draw[box]
    ({-2.3*\s}, {-1.096*\s}) rectangle ({1.995*\s}, {1.075*\s})
    node[above left] {\Large $C^{\prime\prime}$}
    node[midway, boxtext] {{\large \textbf{Effective field theory}} \\[0.45em] {\huge \textbf{?}}};

    \draw[box]
    ({1.399*\s}, {-5.124*\s}) rectangle ({5.864*\s}, {-3.091*\s})
    node[above left] {\Large $C^{\prime}$}
    node[midway, boxtext] {{\large \textbf{Statistical inference}}};


    \draw[axis] ({-6.713*\s}, {-1.294*\s}) -- ({-6.713*\s}, {0.301*\s}) node[right] {$\mu(\lambda)$};
    \draw[axis] ({-6.713*\s}, {-1.294*\s}) -- ({-4.513*\s}, {-1.294*\s}) node[above] {$\lambda$};
    \draw[thick, darkblue, smooth]
    (-6.713*\s, -1.294*\s) .. controls (-6.50*\s, 0.42*\s) and (-6.32*\s, 0.10*\s) .. (-4.725*\s, -1.294*\s);

    \draw[axis] ({4.779*\s}, {-1.294*\s}) -- ({4.779*\s}, {0.384*\s}) node[right] {$\mu(\lambda)$};
    \draw[axis] ({4.779*\s}, {-1.294*\s}) -- ({6.979*\s}, {-1.294*\s}) node[above] {$\lambda$};
    \draw[thick, darkblue, smooth]
    (4.779*\s, -1.294*\s) node[below] {\footnotesize \uv} .. controls (4.99*\s, 0.17*\s) and (5.17*\s, 0.18*\s) .. (6.751*\s, -1.294*\s) node[below] {\footnotesize \ir};

    \draw[thin, red, dashed, dash pattern=on 4pt off 2pt]
    ({6.5*\s}, {-1.321*\s}) -- ({6.5*\s}, {-0.804*\s})
    node[above] {\small $\Lambda$};


    \draw[arrow] ({-5.857*\s}, {1.726*\s}) -- ({-4.105*\s}, {2.777*\s});
    \draw[arrow] ({-1.762*\s}, {3.749*\s}) -- ({1.061*\s}, {3.751*\s});
    \draw[arrow] ({4.160*\s}, {2.782*\s}) -- ({5.125*\s}, {1.892*\s});
    \draw[arrow] ({-5.782*\s}, {-1.867*\s}) -- ({-4.057*\s}, {-3.003*\s});
    \draw[arrow] ({-1.725*\s}, {-4.251*\s}) -- ({1.185*\s}, {-4.293*\s});
    \draw[arrow] ({3.396*\s}, {-2.945*\s}) -- ({5.794*\s}, {-1.930*\s});
    \draw[darrow] ({3.278*\s}, {-2.931*\s}) -- ({-0.114*\s}, {-1.227*\s});
    \draw[darrow] ({3.238*\s}, {2.773*\s}) -- ({-0.246*\s}, {1.240*\s});


    \draw[rarow] ({2.051*\s}, {-0.017*\s}) -- ({4.180*\s}, {-0.015*\s});
    \draw[rarow] ({-3.978*\s}, {0.013*\s}) -- ({-2.448*\s}, {0.015*\s});
\end{tikzpicture}
    \caption{%
        Illustration of the general strategy.
        In the vicinity of the universal \mpdistr class, there must exist a universal field theory capable of determining the cutoff $\Lambda$.
    }\label{figplan}
\end{figure}

Near a universality class, the eigenvalue distribution of the correlation matrix becomes entirely agnostic of the actual nature of the microscopic \dof.
This remains true whether we deal with complex dynamics of financial markets or with biological cellular correlations.
Therefore, if an effective theory can be constructed for a particular case, successfully isolating the signal from the noise in the tail of the spectrum, it must be able to perform this exact separation for any distribution belonging to the same universality class.
\Cref{figplan} summarises our overall strategy, taking binary data as a starting point toward a universal field theory.

The set of spectra in the vicinity of the \mpdistr represents a broad family of data types.
In the case of binary data (discussed below), one can infer a maximum entropy effective distribution and aim to identify a cutoff $\Lambda$ separating signal from noise (steps $B$-$C$-$D$).
Although this pathway can be mapped independently for all possible data (steps $B^\prime$-$C^\prime$-$D$), the underlying universality suggests the existence of a more general theory ($C^{\prime\prime}$).
This framework depends exclusively on the spectrum, yielding a common approximation for all specific inference cases in the neighbourhood of the \mpdistr universality class.

\begin{figure}[t]
    \centering
    \begin{tikzpicture}
    \def\s{1.0}
    \definecolor{spinblue}{RGB}{172,215,230}

    \draw[thin, black!60]
    ({-6.32*\s}, {-1.78*\s})
    -- ({-3.10*\s}, {-1.30*\s});
    \draw[thin, black!60]
    ({-3.10*\s}, {-1.30*\s})
    -- ({-3.40*\s}, {0.44*\s});
    \draw[thin, black!60]
    ({-3.10*\s}, {-1.30*\s})
    -- ({0.16*\s}, {-2.22*\s});
    \draw[thin, black!60]
    ({-3.10*\s}, {-1.30*\s})
    -- ({3.40*\s}, {-0.44*\s});
    \draw[thin, black!60]
    ({-1.08*\s}, {-1.04*\s})
    -- ({1.22*\s}, {1.02*\s});
    \draw[thin, black!60]
    ({1.22*\s}, {1.02*\s})
    -- ({6.32*\s}, {1.78*\s});
    \draw[thin, black!60]
    ({1.22*\s}, {1.02*\s})
    -- ({-0.16*\s}, {2.22*\s});
    \draw[thin, black!60]
    ({-6.30*\s}, {-1.79*\s})
    -- ({0.14*\s}, {-2.22*\s});

    \foreach \x/\y/\d in {
    {-6.30}/{-1.79}/1,    
    {-3.42}/{0.43}/1,     
    {1.20}/{1.07}/1,      
    {6.39}/{1.82}/1,      
    {-1.08}/{-0.94}/1,    
    {-3.10}/{-1.26}/-1,   
    {0.14}/{-2.22}/-1,    
    {-0.17}/{2.18}/-1,    
    {3.39}/{-0.43}/-1     
    } {
    \pgfmathsetmacro{\px}{\x*\s}
    \pgfmathsetmacro{\py}{\y*\s}

    \draw[fill=spinblue!50, draw=spinblue] (\px,\py) circle (0.4);

    \ifnum\d=1
        \draw[-{Latex}, red, very thick] (\px,\py-0.3) -- (\px,\py+0.3);
    \else
        \draw[-{Latex}, red, very thick] (\px,\py+0.3) -- (\px,\py-0.3);
    \fi
    }
\end{tikzpicture}
    \caption{%
        A typical neural network can be modelled as a spin system, where the activity of each neuron corresponds to the \enquote{up} or \enquote{down} orientation of a spin (indicated by the red arrows), conventionally assigned the values $S=\pm 1$.}\label{figreseau}
\end{figure}

A rudimentary model of biological or artificial neural activity (\Cref{figreseau}) provides a first example of a system where one can infer an effective distribution capable of isolating a signal within the vicinity of the \mpdistr.
Furthermore, this class of problems frequently maps onto a lattice Ising-like model, whose macroscopic behaviour is closely related to the dynamics detailed in \Cref{fromisingTofield}.

Typically, the empirical information regarding such systems is encoded in the \ecm~\eqref{corr}.
The stochastic nature of the data, such as neural activity, is fundamentally governed by an unknown underlying probability distribution, which can be approximated via statistical inference.
The approach requiring the fewest a priori assumptions is the maximum entropy principle (see \Cref{App0}), which yields the least biased distribution consistent with the available information.
For a given spin configuration $\mathcal{S} \defeq \{S_1,S_2,\cdots,S_N\}$, this parameterized probability distribution $p(\mathcal{S}, \theta)$ must satisfy the following empirical constraints (assuming a zero mean):
\begin{equation}
    \langle S_i S_j \rangle \defeq \sum_{\mathcal{S}} \, S_i S_j p(\mathcal{S},\theta) = C_{ij}.
\end{equation}
The maximum entropy principle allows us to explicitly write the probability distribution $p(\mathcal{S},\theta)$:
\begin{equation}
    p(\mathcal{S},\theta)= \frac{1}{Z} \exp \left(\sum_{i,j}\, S_i K_{ij} S_j\right),
\end{equation}
where the partition function, which normalises the probability distribution, is defined as the sum over all possible spin configurations:
\begin{equation}
    Z
    \defeq
    \sum_{\mathcal{S}}\, \exp \left(\sum_{i,j}\, S_i K_{ij} S_j\right).
\end{equation}

The maximum entropy model has been considered in a series of papers \cite{Bial1,Bial2,Bial3} to describe the electrical activity of neurons in the brain.
Experimentally, the pairwise correlation function is estimated by constructing discrete-time samples $\delta x_i(t)$ for each neuron $i$ (with zero mean).
To extract the relevant macroscopic features of the distribution, the authors employ a specific coarse-graining procedure, analogous to a block-spin partial integration.
More precisely, they construct a coarse-grained distribution $\bar{p}(\tilde{\sigma})$ by replacing the original microscopic variables $S_i$ with:
\begin{equation}
    S_i \to \tilde{S}_i
    \propto
    \sum_{j=1}^N\left(\sum_{\mu=1}^\Lambda u_{i}^{(\mu)} u_j^{(\mu)} \right)S_j,
\end{equation}
where $u_{i}^{(\mu)}$ is the $i$-th component of the normalised eigenvector associated with the eigenvalue $\lambda_\mu$ of $K$.
The effective distribution $\bar{p}(\tilde{\sigma})$ is then obtained by averaging over the spin configurations $\mathcal{S}$ while holding $\tilde{\sigma}=\{\tilde{S}_i \}$ fixed.

However, for our purposes, this probability distribution remains overly sensitive to the specific problem at hand.
We can achieve greater generality by introducing a continuous variable and applying the standard heuristic argument used to derive Landau theory from the Ising model in the vicinity of a phase transition.
This procedure is standard and extensively documented in textbooks, such as \cite{de2006random}.
It relies on the following identity:
\begin{equation}
    \int \dd x \,e^{k x} e^{-\frac{1}{2 \sigma^2} x^2} \propto e^{\frac{1}{2} \sigma^2 k^2}.
\end{equation}
Hence, introducing $N$ real variables $\phi_i \in \mathds{R}$, we have:
\begin{equation}
    \exp \left(\sum_{i,j}\, S_i K_{ij} S_j\right) \propto \int_{-\infty}^{+\infty} \left[\prod_{i=1}^N \dd \phi_i\right] \, e^{-\frac{1}{2}\sum_{i,j}^N\phi_i K_{ij}^{-1} \phi_j} \prod_{j=1}^N e^{\phi_j S_j}.
\end{equation}
Summing over the remaining spins, we get:
\begin{equation}
    Z=\int_{-\infty}^{+\infty} \left[\prod_{i=1}^N \dd \phi_i\right] \, \exp\left(-\frac{1}{2}\sum_{i,j}^N\phi_i K_{ij}^{-1} \phi_j+\sum_{i=1}^N \ln (2\cosh(\phi_i))\right),
    \label{partitooninf1}
\end{equation}
and in terms of these new variables $\phi$, the correlation functions between spins read as:
\begin{equation}
    C_{ij}=\langle \tanh (\phi_i) \tanh(\phi_j) \rangle.
\end{equation}

\begin{remark}{Inverse Matrix Justification}{inverseK}
    It should be noted that this formal construction \cite{de2006random,itzykson1991statistical1,itzykson1991statistical2}, which justifies the transition to a field-theoretical description from the discrete Ising model, rests upon the existence of the inverse matrix $K^{-1}$.
    For the Ising model, this procedure is rigorously justified only on large scales and in the vicinity of the phase transition, where the Fourier transform
    \begin{equation}
        K(\vec{p}) \propto 2 \sum_{\ell=1}^{D} \cos p_\ell,
    \end{equation}
    admits the expansion $K(\vec{p}) \propto 2D - (\vec{p})^2 + \mathcal{O}(p_\ell^{4})$ for $(\vec{p})^2 \to 0$.
    This quadratic form is clearly invertible (see \Cref{part1}).
    We assume that this manipulation remains valid at least in the \ir tail of the spectrum.
    While this derivation does not require strict mathematical rigour, it is motivated by the principle of \rg universality: the essential features of the effective description are governed primarily by the form and structure of the interactions.
\end{remark}

If we suppose that fluctuations around the mean field are sufficiently small, $\cosh(\phi_i)$ can be expanded as a power series in $\phi_i$ within the symmetric phase, provided the \ir mass is large enough.
Indeed, close to the Gaussian fixed point, the expectation values of field products can be evaluated using Wick's theorem (see \Cref{thm:theoremWick}).
However, the resulting expression establishes a convoluted relationship between the empirical correlation matrix $C_{ij}$ and the effective two-point function of the model, $G_{ij}$, defined as:
\begin{equation}
    G_{ij}
    \defeq
    \frac{1}{Z}\int_{-\infty}^{+\infty} \left[\prod_{k=1}^N \dd\phi_k\right] \phi_i\phi_j \exp\left(-\frac{1}{2}\sum_{l,m}^N\phi_l K_{lm}^{-1} \phi_m+\sum_{l=1}^N \ln (2\cosh(\phi_l))\right).
\end{equation}
While assuming a sufficiently large mass simplifies this relation to $G_{ij} \approx C_{ij}$, dealing with this formulation remains analytically cumbersome.
A more elegant and direct alternative is to perform the change of variable
\begin{equation}
    \Phi_i \defeq \tanh \phi_i.
\end{equation}
The Jacobian of this transformation yields:
\begin{equation}
    \dd \Phi_i = (1-\Phi_i^2)\,\dd \phi_i \quad \implies \quad \dd \phi_i = \frac{\dd \Phi_i}{1-\Phi_i^2} = \dd \Phi_i \exp\left(-\ln (1-\Phi_i^2)\right).
\end{equation}
Using the identity $\cosh \phi_i= \cosh(\arctanh\,\Phi_i)=(1-\Phi_i^2)^{-1/2}$, the partition function \eqref{partitooninf1} can be rewritten as:
\begin{equation}
    Z=\int_{-1}^{+1} \left[\prod_{k=1}^N \dd\Phi_k\right] \exp\left(-\frac{1}{2}\sum_{l,m}^N \arctanh(\Phi_l) K_{lm}^{-1} \arctanh(\Phi_m)-\frac{3}{2}\sum_{l=1}^N \ln (1-\Phi_l^2)\right).
    \label{partitooninf2}
\end{equation}
Once again, the logarithm can be expanded in terms of local interactions $\sum_i \Phi_i^{2n}$.
Expanding the $\arctanh$ contributions, we recover the standard Gaussian kinetic term $\frac{1}{2}\sum_{i,j}^N \Phi_i K_{ij}^{-1} \Phi_j$, supplemented by interactions involving higher even powers of the fields that couple directly to the kernel $K^{-1}$.
In standard field theory, these non-local couplings correspond to derivative interactions (proportional to powers of the Laplacian $\Delta$ or generalised momenta $p^2$) that become irrelevant in the deep \ir limit.
In this macroscopic limit, $K^{-1}$ is dominated by the mass term, meaning these momentum-dependent couplings simply shift the bare couplings generated by the expansion of the logarithm.
In the symmetric phase, the fields $\Phi_i$ fluctuate closely around zero.

\begin{remark}{Positivity of the Quartic Coupling}{positiveCondition}
    It might be surprising that in the integral \eqref{partitooninf2}, the expansion of $\ln (1-\Phi_l^2)$ (i.e.\ $\ln (1-\Phi_l^2) = -\Phi_l^2 - \Phi_l^4/2 - \mathcal{O}(\Phi_l^6)$) appears to have the wrong sign with respect to stability.
    From this perspective, the corrections to the quartic (and higher-order) couplings arising from the mass term are essential.
    For the sake of simplicity, let us assume that $K^{-1}_{ij} = M_0 \delta_{ij}$ (neglecting derivative terms, which thus yields the effective local potential).
    Expanding the $\arctanh(\Phi_i)$ terms gives the following local potential $V(\Phi)$ (see \Cref{equationAnsatz}):
    \begin{equation*}
        V(\Phi)= \left( \frac{1}{2} M_0 - \frac{3}{2} \right) \Phi^2 + \left( \frac{1}{3} M_0 - \frac{3}{4} \right) \Phi^4 + \mathcal{O}(\Phi^6).
    \end{equation*}
    The effective mass is therefore $m^2 \defeq \frac{1}{2} M_0 - \frac{3}{2}$, which vanishes at $M_0 = 3$ (the critical temperature).
    As long as $M_0 > 3$ (above the critical point), the mass is positive.
    The effective quartic coupling $u_4(M_0) \defeq \frac{1}{3} M_0 - \frac{3}{4}$ evaluates to $u_4(3) = 1/4$ at the critical point, proving it is, indeed, positive.
\end{remark}

\subsection{The Renormalisation Group and Landau Theory}

When we construct the Landau theory, we assume that a sufficient number of \rg steps have been performed, and that the coupling constants have evolved from their microscopic definitions.
The universal properties of the theory are dictated not by the specific coupling values, but by the form of the interactions and the spatial dimension.
This follows the locality prescription \cite{RG5}.
This motivates us to define the theory space for the most general field theory at the spectral tail as follows:

\begin{definition}{Theory Space at the Spectral Tail}{theorySpaceSpectralTail}
    We assume the existence of a real variable $\phi_i$ that mimics the effective behaviour of the true microscopic \dof, the statistics of which are described by the following partition function:
    \begin{equation}
        Z(J)
        \defeq
        \int_{-\infty}^{+\infty} \left[\prod_{i=1}^N \dd \phi_i\right] \, \exp\left(-\frac{1}{2}\sum_{i,j}^N\phi_i K_{ij}^{-1} \phi_j-V(\Phi) \right),
        \label{equationAnsatz}
    \end{equation}
    where $\Phi \defeq \{\phi_i\}$ denotes the field configuration, and the relevant part of the potential $V(\Phi)$ is assumed to be a sum of powers of the field evaluated at the same point (i.e.\ a local potential), namely:
    \begin{equation}
        V(\Phi)\defeq \sum_{i=1}^N \left(\sum_{k=1}^\infty \, \frac{u_{2k}}{(2k)!} \, \phi_i^{2k}\, \right),
    \end{equation}
    for a set of coupling constants $\{u_{2k}\}$.
    We finally impose the requirement that the two-point correlation function matches the \ecm:
    \begin{equation}
        \frac{\delta^2}{\delta J_k \delta J_l} \ln Z= C_{ij}.
    \end{equation}
\end{definition}

The theory space defined above remains viable only if the dominant interactions remain stable during the flow, at least within the physical region of interest.
More specifically, we will see that even if other interactions are generated, the local theory forms a stable and relevant subspace in the \rg sense near the Gaussian fixed point.

\begin{figure}[t]
    \centering
    \begin{tikzpicture}[thick]
    \draw
    (1.0, 1.0) -- (-1.0, -1.0);
    \draw
    (-1.0, 1.0) -- (1.0, -1.0);
    \fill[fill=black, draw=black] (0.0, 0.0) node[below, yshift=-1.5cm] {$\tilde{\phi}_i^4$} circle (2pt);

    \begin{scope}[xshift=3cm]
        \draw[dashed]
        (1.0, 1.0) -- (-1.0, -1.0);
        \draw[dashed]
        (-1.0, 1.0) -- (0.0, 0.0);
        \draw
        (0.0, 0.0) -- (1.0, -1.0);
        \fill[fill=black, draw=black] (0.0, 0.0) node[below, yshift=-1.5cm] {$\tilde{\phi}_i\, \psi_i^3$} circle (2pt);
    \end{scope}

    \begin{scope}[xshift=6cm]
        \draw[dashed]
        (1.0, 1.0) -- (0.0, 0.0) -- (-1.0, 1.0);
        \draw
        (-1.0, -1.0) -- (0.0, 0.0) -- (1.0, -1.0);
        \fill[fill=black, draw=black] (0.0, 0.0) node[below, yshift=-1.5cm] {$\tilde{\phi}_i^2\, \psi_i^2$} circle (2pt);
    \end{scope}

    \begin{scope}[xshift=9cm]
        \draw
        (1.0, 1.0) -- (0.0, 0.0);
        \draw
        (-1.0, 1.0) -- (1.0, -1.0);
        \draw[dashed]
        (-1.0, -1.0) -- (0.0, 0.0);
        \fill[fill=black, draw=black] (0.0, 0.0) node[below, yshift=-1.5cm] {$\tilde{\phi}_i^3\, \psi_i$} circle (2pt);
    \end{scope}
\end{tikzpicture}
    \caption{
        Interactions generated by the decomposition into fast and slow modes.
        Dashed and solid lines represent the fields $\psi$ and $\tilde{\phi}$, respectively.
    }\label{figintstep}
\end{figure}

Let us now investigate how the interactions evolve by considering a single step of the \rg flow, focusing on the spectral tail beyond a certain scale $\Lambda_0$.
Assuming that we are in the vicinity of the Gaussian fixed point,\footnote{%
    Or in a regime analogous to the approximation discussed in \Cref{LocalFieldTheory} (see \Cref{eqrho}).
}
we restrict our attention to the quartic regime, where
\begin{equation}
    V(\Phi) \defeq \frac{u_4}{4!}\sum_i \phi_i^4.
\end{equation}
To a first approximation, the field can be decomposed in the eigenbasis of the \ecm as
\begin{equation}
    \phi_i
    \defeq
    \sum_{\lambda} u_i^{(\lambda)} \varphi_\lambda.
\end{equation}
We then integrate out the modes contained within the interval $I_\Lambda \defeq [\Lambda_0,\Lambda]$, decomposing the field as $\phi_i = \tilde{\phi}_i + \psi_i$, where $\psi_i$ represents the \dof within this window $I_\Lambda$.
The expansion of $(\tilde{\phi}_i+\psi_i)^4$ generates terms that can be classified into three distinct categories, shown in\Cref{figintstep}.

We can compute the effective Hamiltonian $\tilde{H}$ for the fields $\{\tilde{\phi}_i\}$ by integrating out the \uv modes within the window $I_\Lambda$.
We obtain:
\begin{equation}
    H[\Phi] \defeq H[\tilde{\Phi}]+H[\Psi]+U[\tilde{\Phi},\Psi],
\end{equation}
where the last term corresponds to the interaction between the fields $\Psi$ and $\tilde{\Phi}$, as illustrated in \Cref{figintstep}.
Notice that $U[\tilde{\Phi},\Psi]$ contains no kinetic contributions, since:
\begin{equation}
    \sum_{i,j} \psi_i C_{ij} \tilde{\phi}_j= \sum_\lambda \lambda \,\psi_\lambda \tilde{\phi}_\lambda =0.
\end{equation}
Hence, we get:
\begin{equation}
    \tilde{H} \defeq - \ln \left( \int \dd \psi \, e^{-H[\tilde{\Phi}]-H[\Psi]-U[\tilde{\Phi},\Psi]}\right)= H[\tilde{\Phi}] + (\text{fluctuations}),
\end{equation}
where fluctuations can be computed using perturbation theory, with the leading-order contributions depicted in \Cref{figinteff}.
These correspond to Feynman diagrams, where the internal lines (edges involved in loops) represent the propagators associated with the field $\Psi$, namely $C^{(\psi)}_{ij} \defeq \sum_{\lambda\in I_\Lambda} \lambda u_i^{(\lambda)} u_j^{(\lambda)}$.

\begin{figure}[t]
    \centering
    \begin{tikzpicture}[thick]
    \draw (-1.0, 1.0) -- (0.0, 0.0) -- (-1.0, -1.0);
    \draw (4.0, 1.0) -- (3.0, 0.0) -- (4.0, -1.0);
    \draw[dashed] (0.0, 0.0) .. controls (1.0, 1.0) and (2.0, 1.0) .. (3.0, 0.0);
    \draw[dashed] (0.0, 0.0) .. controls (1.0, -1.0) and (2.0, -1.0) .. (3.0, 0.0);
    \fill (0.0, 0.0) circle (2pt);
    \fill (3.0, 0.0) circle (2pt);

    \begin{scope}[xshift=7cm]
        \draw (-1.0, -1.0) -- (0.0, 0.0);
        \draw[dashed] (0.0, 0.5) circle (0.5);
        \draw[dashed] (0.0, 0.0) -- (2.5, 0.0);
        \draw (2.5, 0.0) -- (3.5, 1.0);
        \draw (2.5, 0.0) -- (3.75, 0.0);
        \draw (2.5, 0.0) -- (3.5, -1.0);
        \fill (0.0, 0.0) circle (2pt);
        \fill (2.5, 0.0) circle (2pt);
    \end{scope}
\end{tikzpicture}
    \caption{The two typical connected configurations up to order $u_4^2$.}\label{figinteff}
\end{figure}

Explicitly, the two diagrams in \Cref{figinteff} can be evaluated, yielding contributions which, up to irrelevant numerical constants, read as follows:
\begin{align}
    V^{(1)} & \propto \sum_{i,j} \, \tilde{\phi}_i^2 \tilde{\phi}_j^2 (C^{(\psi)})^2_{ij}
    \\
    V^{(2)} & \propto \sum_{i,j} \, \tilde{\phi}_i \tilde{\phi}_j^3 (C^{(\psi)})_{jj}(C^{(\psi)})_{ij} + i \leftrightarrow j.
\end{align}

Assuming that the fields vary smoothly with the site indices \cite{RG0}, which is consistent with the standard Landau framework for ferromagnetism, we can expand $\tilde{\phi}_j = \tilde{\phi}_i + \mathcal{O}(\vert i-j \vert)$.
This yields:
\begin{align}
    V^{(1)} & \approx \sum_{i,j} \, \tilde{\phi}_i^4 (C^{(\psi)})^2_{ij}
    \\
    V^{(2)} & \approx \sum_{i,j} \, \tilde{\phi}_i^4  (C^{(\psi)})_{ii}(C^{(\psi)})_{ji} + i \leftrightarrow j.
\end{align}
Decomposing $C^{(\psi)}$ in its eigenbasis
\begin{equation}
    (C^{(\psi)})_{ji} \defeq \sum_{\lambda\in I_\Lambda} \lambda u_i^{(\lambda)} u_j^{(\lambda)},
\end{equation}
and using the fact that
\begin{equation}
    \sum_i u_i^{(\lambda)} u_i^{(\lambda^{\prime})}=\delta_{\lambda\lambda^{\prime}},
\end{equation}
we find for $V^{(1)}$:
\begin{equation}
    V^{(1)} \approx \sum_{i} \sum_{\lambda\in I_\Lambda} \, \tilde{\phi}_i^4 \, \lambda^2 u_i^{(\lambda)} u_i^{(\lambda)},
\end{equation}
Let us assume that the spectrum in this regime is dominated by noise.
Consequently, due to the delocalised nature of the eigenvectors, the expectation value of $(u_i^{(\lambda)})^2$ must scale as
\begin{equation}
    \mathds{E}[(u_i^{(\lambda)})^2] = \frac{1}{N}
\end{equation}
(see \eqref{PTdist}) for sufficiently large $N$.
This yields:
\begin{equation}
    V^{(1)} \approx \sum_{i} \, \tilde{\phi}_i^4  \frac{1}{N}\sum_{\lambda\in I_\Lambda} \, \lambda^2 \to \sum_{i} \, \tilde{\phi}_i^4 \int_{I_\Lambda} \, \dd \lambda\, \mu(\lambda)\lambda^2.
\end{equation}

\begin{figure}[t]
    \centering
    \begin{tikzpicture}[thick]
    \draw (-3., -1.25) -- (3., -1.25);
    \draw[dashed] (0.00, 0.00) circle (1.25);
\end{tikzpicture}
    \caption{One-loop contribution to the kinetic action correction.}\label{masstadpole}
\end{figure}

Similarly, the second interaction term $V^{(2)}$ involves the sum $\sum_j u_j^{(\lambda)} = N \langle u^{(\lambda)} \rangle \to 0$, so that $V^{(2)}$ vanishes.
It is worth noting that, regardless of this delocalisation argument, this term is not \onepi and therefore does not contribute to the effective \onepi vertex.
Consequently, $V^{(1)}$ provides the leading (one-loop) correction to the effective local quartic coupling constant in $\tilde{H}$.
Note, however, that in the localisation limit of the eigenvectors, we recover a $\tilde{\phi}_i^4$-type interaction, albeit confined to a more restricted index domain.
This shows that the class of local interactions is stable under the flow.
The correction to the kinetic term can be evaluated in an analogous manner.
This one-loop correction (at order $u_4$) is represented by the tadpole diagram in \Cref{masstadpole} and reads explicitly as follows:
\begin{equation}
    \delta K \propto \sum_i \, \tilde{\phi}_i^2 \, (C^{(\psi)})_{ii}.
\end{equation}
Using the same arguments as before, we get:
\begin{equation}
    \delta K \propto \sum_i \, \tilde{\phi}_i^2 \, \int_{I_\Lambda}\, \dd \lambda\,\mu(\lambda) \lambda,
\end{equation}
which is diagonal.
As a consequence, only the mass is shifted at leading order in perturbation theory, mirroring the behaviour of standard $\phi^4$ field theory.

The transition from the Ising model to the Landau theory, fully justified by the \rg framework \cite{Wilson1}, demonstrates that, within a quasi-local approximation of the propagator, an interaction of the form \eqref{ultralocalint} or \eqref{ultralocalint2}, which doubles the number of states owing to definition \eqref{GaussianModelBeyond}, generates precisely the same flow as a local $\phi_i^4$ interaction, and we recover the exact same one-loop corrections as those computed herein.
Since the non-perturbative formalism maps onto an effective one-loop calculation, we expect that, within the \lpa framework where the anomalous dimension is neglected, we should recover identical flow equations.
In standard Landau theory, the equivalence between the real-space interaction $\phi^4(\vec{x})$ and its momentum-space counterpart $\delta(\vec{p}_1+\vec{p}_2+\vec{p}_3+\vec{p}_4) \prod_{i=1}^4 \varphi(\vec{p}_i)$ is mediated by the ordinary Fourier transform.
Although such a transformation is lacking in the present context, the equivalence of the two models is guaranteed by the identity of their respective flows: the two forms generate the same Feynman diagrams, and then the same large $N$ flow equations by construction.
The choice of model \eqref{ultralocalint} is therefore justified by its simplicity and tractability, serving as a minimal representation of a more comprehensive universal theory.

\begin{remark}{Neglecting Derivative Interactions}{derivative}
    This approximation systematically neglects derivative interactions.
    While these couplings can play a role, their physical relevance must ultimately be assessed by the renormalisation group flow itself.
    As a first-order approximation, we shall therefore restrict our analysis to the local theory, where fields interact at the same point.
    To estimate the signal-to-noise boundary and validate this framework, we assume this local truncation to be sufficient.
    A more thorough investigation incorporating derivative terms lies beyond the scope of the \lpa considered in this review and is deferred to future work.
\end{remark}

\begin{remark}{Quartic Interaction and Eigenvector Delocalization}{quartic}
    Due to the delocalised nature of the eigenvectors in the bulk, we expect the overlap $\sum_i u_i^{(\lambda_1)}u_i^{(\lambda_2)}u_i^{(\lambda_3)}u_i^{(\lambda_4)}$ to be sharply peaked around configurations where the eigenvalues $\lambda_a$ are equal in pairs.
    In the vicinity of the Gaussian fixed point, the quartic term behaves as $\sum_i\phi_i^4 \sim \left(\sum_\lambda\varphi_\lambda^2\right)^2$, which structurally resembles an $O(N)$-invariant interaction.
    The local approximation naturally includes this limiting case, which directly mirrors, for instance, the pairing configurations $p_1=-p_2$ and $p_3=-p_4$ in the quartic sector of~\eqref{ultralocalint}.
\end{remark}

A heuristic justification for the local theory~\eqref{ultralocalint} can also be provided, based once again on a universality argument.
Consider, for instance, a binary spin system near the 3D Ising critical point.
In the equilibrium regime, by the \clt, the Gaussian theory provides an excellent approximation for the coarse-grained variables.
Consequently, at sufficiently large scales relative to the correlation length, the Landau framework must hold.
This corresponds to the Hamiltonian:
\begin{equation}
    H_{\text{L}}
    \defeq
    \frac{1}{2} \int \, \dd^D\vec{p}\, M(-\vec{p}\,)(\vec{p}\,^2+m^2)M(\vec{p})+ \frac{g}{4!} \int\, \prod_i \dd^D\vec{p}_i  \,\delta(\vec{p}_1+\vec{p}_2+\vec{p}_3+\vec{p}_4) \prod_{i=1}^4 M(\vec{p}_i).
\end{equation}
The momentum distribution $\rho_{\mathds{R}^D}((\vec{p})^{2}) \propto ((\vec{p})^{2})^{\frac{D-2}{2}}$, together with the locality principle adopted for the interactions, fixes the scaling dimension of the coupling constant $g$ to $[g]=4-D$, thereby determining its relevance or irrelevance.
In the present framework, we assume $D=3$, which implies that $g$ is a relevant coupling (for which the dimension is positive).

Since the distribution is essentially Gaussian for the \clt argument, the spectrum of the correlation matrix is expected to fall into the \mpdistr universality class.
This allows us to construct a theory of correlations based on the maximum entropy principle, as established above.
In the spectral tail, the \mpdistr distribution scales as $\rho(p^2) \sim (p^2)^{1/2}$, which matches precisely the spatial distribution $\rho_{\mathds{R}^3}((\vec{p})^{2})$ in Landau theory.
The primary objective of Landau theory is to capture these long-range correlations.
Notably, the physical scales are inverted within the spectrum: the spectral tail, which represents the \ir regime from the perspective of information relevance, corresponds to the \uv regime of the network, meaning that the most Gaussian region dictates the large-scale behaviour.
Although the two theories differ in general, they encode the same information.
In this specific asymptotic tail, the identity of the two momentum distributions implies that the interactions must share the same functional form.
At this scale, their \rg flows are identical.
Consequently, the interaction \eqref{ultralocalint} reproduces the exact same correlations as Landau theory in the three-dimensional spectral tail.

\section{Large-N Solution and Hartree Approximation}\label{sec:app1}
This appendix is directed to the interested data scientist who wants to better understand the physical phenomena behind the computation of the critical temperature in thermal statistical systems.
Here, we derive the logarithmic shift of the critical temperature in two dimensions within the Hartree approximation~\cite{mulet2007langevin,livi2017nonequilibrium}.
While this Gaussian reference captures the role of fluctuations in suppressing long-range order, it remains insufficient to reach the exact thermodynamic limit where the system size $L$ goes to infinity, for the reasons detailed in \Cref{part1}.

\subsection{Hartree Approximation and Linearisation}

Let $x \in \mathds{R}^d$ and consider a family of $N$ fields $\phi_i(x,t)$ whose evolution is governed by a Langevin-type equation:
\begin{equation}
    \frac{\dd}{\dd t} \phi_i(x,t)
    =
    \Delta \phi_i(x,t)
    -
    2 N \phi_i(x,t) V^\prime(\phi^2(x,t))
    +
    \eta_i(x,t),
\end{equation}
where $\Delta$ denotes the standard Laplacian on $\mathds{R}^d$, and $\eta_i(x,t)$ is a Gaussian white noise with zero mean and a correlation function given by
\begin{equation}
    \langle \eta_i(x,t) \eta_j(x^\prime,t^\prime) \rangle = 2 T \delta_{ij} \delta(x-x^\prime) \delta (t-t^\prime),
\end{equation}
satisfying the fluctuation-dissipation theorem with the convention that temperature enters through the noise amplitude.\footnote{%
    This convention differs from the simulation setup of \Cref{IsingCritical}, where temperature appears in the drift term, with unit-variance noise. Both choices are physically equivalent and yield the same equilibrium distribution $P(\phi) \propto e^{-H/T}$.%
}

Using $\phi^2 \defeq N^{-1} \sum_i \phi_i^2(x,t)$, we shall adopt a quartic potential of the form:
\begin{equation}
    V(\phi^2) \defeq \frac{1}{2}a \phi^2 + \frac{1}{4} b (\phi^2)^2,
\end{equation}
with $a \defeq (T-T_0)$ and $b \geq 0$.
When $b = 0$, the system becomes unstable at $T = T_0$, leading to an exponential divergence of the field for $T < T_0$.
The introduction of a non-zero interaction ($b > 0$) stabilises the dynamics, ensuring convergence toward the equilibrium states defined by $V^\prime = 0$.
However, this interaction also competes with the development of long-range order, where the system features a diverging spatial correlation length.
The physical critical temperature is thus suppressed to $T_c < T_0$.
Furthermore, when initialised at infinite temperature and suddenly quenched to $T < T_c$, the system exhibits an equilibration time that scales with $N$.

In the large-$N$ limit, the quantity $\phi^2$ is assumed to be sharply peaked around its expectation value $R^2$.
Moreover, for sufficiently large $t$, $R$ can be treated as time-independent.
The dynamic equation thus linearises, yielding in Fourier space:
\begin{equation}
    \frac{\dd}{\dd t} \phi_i(k,t) = -k^2\phi_i(k,t) - (a + b R^2) \phi_i(k,t) + \eta_i(k,t).
\end{equation}
The critical temperature $T_c$ is now $T_c = T_0 - b R^2 \le T_0$.
The solution to the previous equation becomes:
\begin{equation}
    \phi_i(k,t)
    =
    \phi_i(k,0)e^{-(k^2+a_{\text{eff}}) t}
    +
    \int_0^t \dd t^\prime \eta_i(k,t^\prime) e^{-(k^2+a_{\text{eff}}) (t-t^\prime)}.
\end{equation}

\subsection{Correlation Functions and Critical Temperature}

If we assume purely random initial conditions ($\phi_i(k,0) \sim \mathcal{N}(0,1)$), which represent a high-temperature phase ($T \gg 1$), and let the system go through an instantaneous quench at $t=0$ to a temperature $T < T_c$, the correlation function of the Fourier modes can be evaluated by performing a statistical average over the noise fluctuations:
\begin{equation}
    \langle \phi(k, t) \phi(k', t) \rangle
    =
    (2\pi)^d \delta(k+k')\, C(k, t),
\end{equation}
where
\begin{equation}
    C(k, t)
    =
    e^{-2(k^2+a_{\text{eff}})t} + T\, \frac{1 - e^{-2(k^2+a_{\text{eff}})t}}{k^2+a_{\text{eff}}}.
    \label{eq:correlationskk}
\end{equation}
For sufficiently large $t$, $C(k, t) \approx \frac{T}{k^2+a_{\text{eff}}}$, leading to the self-consistent equation:\footnote{%
    The memory of the initial conditions decays exponentially.
}

\begin{equation}
    a_{\text{eff}}
    =
    a + b T \int \frac{\dd^d k}{(2\pi)^d} \frac{1}{k^2+a_{\text{eff}}}.
\end{equation}
The critical temperature is precisely defined by $a_{\text{eff}} = 0$:
\begin{equation}
    0
    =
    a + b T_c \int \frac{\dd^d k}{(2\pi)^d} \frac{1}{k^2}
    =
    a + \frac{b T_c}{2\pi} \ln \left(\frac{M}{\mu}\right),
\end{equation}
where $M$ and $\mu$ are the \uv and \ir cut-offs, respectively, restricting the momentum scale to $\mu \leq \vert k \vert \leq M$.

The Hartree approximation, which extends standard mean-field theory, consists of applying this self-consistent relation to finite $N$, and specifically to $N = 1$.
For a discrete grid of size $L \times L$ ($L \in \mathds{N}$) with lattice spacing $\ell$, the first Brillouin zone is given by $[-\pi/\ell, \pi/\ell]^2$.
Consequently, the \uv cut-off is set to $M = \pi/\ell$, while the \ir cut-off is $\mu = 2\pi/(L \ell)$, yielding:
\begin{equation}
    T_c = \frac{T_0}{1 + \frac{b}{2\pi} \ln \left(\frac{L}{2}\right)}.
\end{equation}
Within the Hartree approximation, the critical temperature vanishes in the thermodynamic limit, reflecting the breakdown of this mean-field description for $d \leq 2$ (the exact two-dimensional Ising model, with $N = 1$, retains a finite critical temperature, as shown by Onsager).
However, the underlying physical mechanism remains valid: fluctuations hinder the development of long-range order, suppressing $T_c$ below its bare value $T_0$.
Taking the Fourier transform of the transient contribution to~\eqref{eq:correlationskk}, which dominates the early-time coarsening regime, we find that the real-space correlation function exhibits a spatial decay governed by $\exp \left(-\frac{\vert x-x^\prime \vert^2}{8 t} \right)$, with the correlation length growing as $t^{1/2}$.
Consequently, the system reaches equilibrium only on timescales proportional to the total number of grid sites, while the characteristic size of the spin domains grows as $t^{1/2}$.

To gain further insight, it is instructive to investigate the two-time correlation functions.
A straightforward calculation yields:
\begin{equation}
    \langle \phi(0,0) \phi(0,t) \rangle \sim e^{-a_{\text{eff}}t}\,.
\end{equation}
The relaxation time
\begin{equation}
    \tau \defeq \frac{1}{a_{\text{eff}}} \sim \frac{1}{T-T_c}
\end{equation}
diverges at the transition point, a phenomenon known as \emph{critical slowing down}.

\section{The Ohta-Jasnow-Kawasaki Approximation}\label{AppOJK}
An elementary approach to temporal correlations (the ageing effect) in the low-temperature phase reached by a quench below the critical temperature relies on the method proposed by Ohta, Jasnow, and Kawasaki (\ojk) \cite{Bray_1994}.
The idea rests on the observation that, in the coarsening regime at low temperature, the field $\phi(x,t)$ (with $x\in \mathds{R}^{D}$) takes essentially only two values, $\phi(x,t) = \pm M_0$, away from the domain walls.
Here, $M_0$ denotes the spontaneous magnetisation at the minimum of the potential well.
The \ojk approximation introduces a smooth auxiliary field $m(x,t)$ whose zero set marks the domain walls, such that\footnote{%
    At long times, in one dimension, the local profile of a wall is that of a soliton (a kink).
}
\begin{equation}
    \phi(x,t) = M_0\,\frac{m(x,t)}{|m(x,t)|} = M_0\,\operatorname{sign}\!\left(m(x,t)\right).
\end{equation}

An evolution equation for $m(x,t)$ follows from the Allen-Cahn equation satisfied by $\phi$.
In the sharp-interface limit, the Allen-Cahn dynamics reduces to mean-curvature flow, that is, the normal velocity $v_n$ of a domain wall is proportional to the local mean curvature $\kappa$.
Since domain walls are the zero set of $m$ (i.e. $\{ x \mid m(x,t) = 0 \}$), the unit outward normal is $\hat{n} = \nabla m / \|\nabla m\|$, and the mean curvature is $\kappa = \nabla \cdot \hat{n}$.
Using the level-set identity
\begin{equation}
    \kappa\,|\nabla m| = |\nabla m|\,\nabla\cdot\!\left(\frac{\nabla m}{|\nabla m|}\right) = \Delta m - \frac{\nabla m \cdot \nabla|\nabla m|}{|\nabla m|},
\end{equation}
the level-set evolution $\partial_t m = \kappa\,|\nabla m|$ separates into a linear Laplacian term and a non-linear term.
A second key ingredient of the \ojk approximation is to average the non-linear term over a region large compared with the interface thickness but small on the scale of the curvature radius, assuming that the local orientations of the domain walls are statistically independent.
Averaging over the orientations of the normal vectors reduces the non-linear term to a second Laplacian contribution, with a dimension-dependent coefficient $a(D)$.

A rescaling of the time variable by the factor $1 + a(D)$ absorbs this coefficient, leaving
\begin{equation}
    \frac{\partial m(x,t)}{\partial t} = \Delta m(x,t),
\end{equation}
where $\Delta$ denotes the standard Laplacian on $\mathds{R}^{D}$.
The linearity of such diffusion equation preserves the Gaussianity of $m$ at all later times.
The joint distribution of $m(x,t)$ and $m(x,t_w)$ is thus bivariate Gaussian and its heat-kernel form gives a closed expression for the two-point function.
Together these properties allow us to compute the temporal correlation that exhibits the ageing effect:
\begin{equation}
    \langle m(x,t)\,m(x,t_w)\rangle = \int \frac{\dd^{D}k}{(2\pi)^{D}}\,\langle m(k,0)\,m(-k,0)\rangle\,e^{-k^{2}(t+t_w)} = K \int \frac{\dd^{D}k}{(2\pi)^{D}}\,e^{-k^{2}(t+t_w)},
\end{equation}
where the first equality uses the Fourier representation of the diffusion propagator $m(k,t) = m(k,0)\,e^{-k^{2}t}$ and the second invokes a white-noise initial condition, namely that $m(x,0)$ is a centred Gaussian random field with covariance
\begin{equation}
    \langle m(x,0)\,m(x^{\prime},0)\rangle = K\,\delta(x-x^{\prime}).
\end{equation}
Performing the Gaussian integral in $D$ dimensions gives the position-space form
\begin{equation}
    \langle m(x,t)\,m(x,t_w)\rangle \propto (t+t_w)^{-D/2}.
\end{equation}

The normalised correlation coefficient $\gamma(t,t_w)$ between the Gaussian variables $m_1 = m(x,t)$ and $m_2 = m(x,t_w)$ is therefore
\begin{equation}
    \gamma(t,t_w) \equiv \frac{\langle m(t)\,m(t_w)\rangle}{\sqrt{\langle m^{2}(t)\rangle\,\langle m^{2}(t_w)\rangle}} = \left(\frac{4\,t\,t_w}{(t+t_w)^{2}}\right)^{D/4}.
\end{equation}

The temporal correlation of the original order parameter $\phi$ follows directly from the construction $\phi = M_0\,\operatorname{sign}(m)$:
\begin{equation}
    C(t,t_w) = \langle \phi(x,t)\,\phi(x,t_w)\rangle = M_0^{2}\,\bigl\langle \operatorname{sign}\!\left(m(x,t)\right)\,\operatorname{sign}\!\left(m(x,t_w)\right)\bigr\rangle.
\end{equation}
Let $m_1$ and $m_2$ be two centred Gaussian random variables with unit variance and correlation coefficient $\gamma$.
Their joint density is
\begin{equation}
    P(m_1,m_2;\gamma) = \frac{1}{2\pi\sqrt{1-\gamma^{2}}}\,\exp\!\left(-\frac{m_1^{2} + m_2^{2} - 2\gamma\,m_1 m_2}{2(1-\gamma^{2})}\right),
\end{equation}
and the expectation of interest is
\begin{equation}
    I(\gamma) = \int_{-\infty}^{+\infty} \dd m_1 \dd m_2\, \int_{-\infty}^{+\infty} \operatorname{sign}(m_1)\,\operatorname{sign}(m_2)\,P(m_1,m_2;\gamma).
\end{equation}
Applying Price's theorem to the bivariate Gaussian gives
\begin{equation}
    \langle \operatorname{sign}(m_1)\,\operatorname{sign}(m_2)\rangle = \frac{2}{\pi}\,\arcsin(\gamma),
\end{equation}
which yields
\begin{equation}
    C(t,t_w) = \frac{2M_0^{2}}{\pi}\,\arcsin\!\left[\left(\frac{4\,t\,t_w}{(t+t_w)^{2}}\right)^{D/4}\right],
\end{equation}
and, in the ageing limit $t \gg t_w$,
\begin{equation}
    C(t,t_w) \approx \frac{2M_0^{2}}{\pi}\left(\frac{4\,t_w}{t}\right)^{D/4}.
\end{equation}
The system retains a non-trivial memory of the waiting time $t_w$ (its age), with a power-law decay replacing the exponential relaxation characteristic of an equilibrium correlation function.

\section{Recovering the Ising Model from Model~A}\label{sec:app2}
This appendix establishes the correspondence between the continuous Langevin dynamics of Model~A and the equilibrium partition function of the two-dimensional Ising model.
Starting from the discrete gradient formulation of the kinetic action, we obtain the Boltzmann distribution of the discretised theory and show that the large-$\varepsilon$ limit of the on-site potential reduces it to the standard Ising model.

\subsection{Discretisation and Partition Function}

To establish the correspondence with the Ising model, we use the discrete Laplacian~\eqref{DiscreteLaplacian} and introduce the discrete gradient $(\nabla_{\text{dis}} \varphi)_{ij}$ at the site $(i,j)$, with components
\begin{equation}
    (\nabla_{\text{dis}} \varphi)_{ij}
    =
    \begin{pmatrix}
        \varphi_{i+1,j}-\varphi_{ij} \\
        \varphi_{i,j+1}-\varphi_{ij}
    \end{pmatrix}.
\end{equation}
Define the quadratic form
\begin{equation}
    I[\varphi]
    \defeq
    \frac{1}{2}\, \sum_{i,j} [(\varphi_{i+1,j}-\varphi_{ij})^2+(\varphi_{i,j+1}-\varphi_{ij})^2],
    \label{eq:Iphi}
\end{equation}
in which the lattice spacing has been set to $\ell = 1$ (see~\eqref{DiscreteLaplacian}).
A direct computation gives
\begin{equation}
    \frac{\partial I}{\partial \varphi_{ij}}=-(\Delta_{\text{dis}}\varphi)_{ij},
    \label{eq:dIdphi}
\end{equation}
so that $I[\varphi]$ is the discrete action whose Euler--Lagrange equation reproduces the linear part of the Model~A dynamics.

We take the explicit Euler--Maruyama discretisation of the kinetic equation
\begin{equation}
    \varphi_{i,j}^{n+1}
    =
    \varphi_{i,j}^{n}
    +
    \Delta t\left[\frac{J}{T} (\Delta_{\text{dis}}\varphi)^{n}_{i,j} - V^{\prime}(\varphi_{i,j}^{n})\right]+\sqrt{2\Delta t}\,\xi_{i,j}^{n},
    \label{eq:kinetics}
\end{equation}
where $J>0$ is the ferromagnetic coupling and $\xi_{i,j}^{n} \sim \mathcal{N}(0,1)$ is a standard Gaussian noise, independent across sites and time steps.
For the on-site potential $V$ of~\eqref{VIsing}, namely
\begin{equation}
    V(\varphi) = -\frac{1}{2}\,\varepsilon\,\varphi^{2} + \frac{1}{4}\,\varepsilon\,\varphi^{4},
\end{equation}
the drift coefficient in~\eqref{eq:kinetics} is minus the gradient of the Hamiltonian
\begin{equation}
    H[\varphi] = J\,I[\varphi] + T\sum_{i,j} V(\varphi_{ij}),
\end{equation}
so the discrete Langevin equation admits the Boltzmann equilibrium distribution $P_{\text{eq}}(\varphi) \propto e^{-H[\varphi]/T}$.

In the continuous-time limit, the corresponding equilibrium partition function reduces to
\begin{equation}
    Z^{(\text{eq})}_{\Delta t\to 0}
    =
    \int \prod_{i,j}\dd \varphi_{ij}\,
    e^{-\frac{J}{T}I[\varphi]}\, e^{-V[\varphi]},
    \label{eq:Zeq}
\end{equation}
where $V[\varphi] \equiv \sum_{i,j} V(\varphi_{ij})$.

\subsection{Limits and Reduction to the Ising Model}

The cubic $V^{\prime}(\varphi) = \varepsilon\varphi(\varphi^{2}-1)$ has three critical points, $\varphi = 0, \pm 1$, of which $\varphi = 0$ is a local maximum and $\varphi = \pm 1$ are the two symmetric minima.
Expanding $V$ around either minimum gives
\begin{equation}
    V[\varphi] = V[\varphi_{0}] + \frac{1}{2}\,V^{\prime\prime}(\varphi_{0})\,(\varphi - \varphi_{0})^{2} + \mathcal{O}\!\left((\varphi - \varphi_{0})^{3}\right),
    \label{eq:Vexpand}
\end{equation}
with
\begin{equation}
    V(\pm 1) = -\frac{\varepsilon}{4},
    \qquad
    V^{\prime\prime}(\pm 1) = 2\,\varepsilon.
    \label{eq:Vminima}
\end{equation}

In the limit $\varepsilon \to \infty$, the on-site factor $e^{-V(\varphi_{ij})}$ concentrates at the two minima $\varphi_{ij} = \pm 1$ on every site (behaving as a Dirac delta function), so the integral in~\eqref{eq:Zeq} is well approximated by the sum
\begin{equation}
    Z^{(\text{eq})}_{\Delta t\to 0}
    \;\approx\;
    \sum_{\{\varphi_{ij}=\pm 1\}}\, e^{-\frac{J}{T}I[\varphi]}.
    \label{eq:Zsum}
\end{equation}
For any configuration in $\{\varphi_{ij}=\pm 1\}$ we have the algebraic identity $(\varphi_{I} - \varphi_{J})^{2} = 2(1 - \varphi_{I}\varphi_{J})$, and therefore
\begin{equation}
    I[\varphi]
    =
    2 N^{2} - \sum_{\langle I,J \rangle} \varphi_{I}\varphi_{J},
    \label{eq:ItoIsing}
\end{equation}
where the capital letters $I, J$ denote lattice sites with coordinates $I = (i,j)$ on the $N \times N$ grid, and $\langle I,J \rangle$ indicates that the sum runs over nearest-neighbour pairs, each counted once ($2 N^{2}$ pairs in total under periodic boundary conditions).
Substituting~\eqref{eq:ItoIsing} into~\eqref{eq:Zsum} yields
\begin{equation}
    Z^{(\text{eq})}_{\Delta t\to 0}
    =
    e^{-\frac{2 N^{2} J}{T}}\;
    \sum_{\{\varphi_{I}=\pm 1\}}\, e^{\frac{J}{T}\sum_{\langle I,J \rangle}\, \varphi_{I}\varphi_{J}}.
    \label{eq:ZIsing}
\end{equation}
The prefactor $e^{-2 N^{2} J / T}$ depends only on $N$ and $T$ and is independent of the spin configuration, so~\eqref{eq:ZIsing} coincides with the standard 2D Ising partition function up to an irrelevant multiplicative factor.

\section{Short Review of the Spiked Matrix Models}\label{AppD}
In this appendix we briefly review the paradigm of the spiked matrix model.
In particular, we provide only the statements and definitions needed for the understanding of the subject, and the proof of the \bbp phase transition is only sketched.
Readers interested in this topic may consult the given references.

The spiked matrix model~\cite{aubin2019spiked,spike1,Bouchaud3} is used to study the eigenvalue distribution of a random matrix perturbed by a low-rank signal.
The latter is usually defined by isolated eigenvectors planted into a random matrix noise.
Despite its apparent simplicity, this framework provides a highly valuable statistical model for \pca.\footnote{%
    Though often presented as a simple dimensionality-reduction algorithm, \pca is actually an \emph{eigenvalue problem} in mathematics.
    The objective of \pca is, first of all, finding the largest eigenvalue in a random matrix background.
    As a by-product of the technique, it also enables the reduction of dimensionality by ranking the principal components by their variance.
}
In their seminal paper~\cite{BBP}, Baik, Ben Arous, and P\'ech\'e showed that the spiked sample-covariance model (a special case of the Wishart ensemble) exhibits a sharp phase transition when the signal strength, represented by the spike, reaches a critical threshold.
The same phenomenon, with the same universal mechanism, arises for spiked Wigner matrices and was established by F\'eral and Kargin, and by P\'ech\'e~\cite{Feral2007,Peche2006}.
Detection and recovery of the planted signal are only possible beyond this point.
Below this critical value, the largest eigenvalue remains embedded within the bulk of delocalised eigenvectors.

\subsection{Spiked Wigner Model and Phase Transition}

This section starts with a brief review of single-spike matrix models and their underlying phase transition, focusing on the simplified case of a \emph{Gaussian Wigner model} perturbed by a single deterministic vector.
More concretely, we model the data under investigation as an $N \times N$ real random matrix $Q$, whose entries split into a sum of two contributions:
\begin{equation}
    Q_{ij}= \beta u_i u_j + \frac{1}{\sqrt{N}} M_{ij},\label{spikePCA}
\end{equation}
where $u=(u_1,\dots, u_N) \in \mathds{R}^N$ with $\lvert u \rvert = 1$ is a normalised vector, and $M$ is an $N \times N$ real symmetric matrix whose entries are randomly distributed according to the \emph{Gaussian Orthogonal Ensemble} (\goe).
Specifically, the off-diagonal entries follow $\mathcal{N}(0,1)$, while the diagonal entries follow $\mathcal{N}(0,2)$.
The parameter $\beta\in \mathds{R}^+$ quantifies the signal strength.

\begin{figure}[t]
    \centering
    \includegraphics[width=0.7\textwidth]{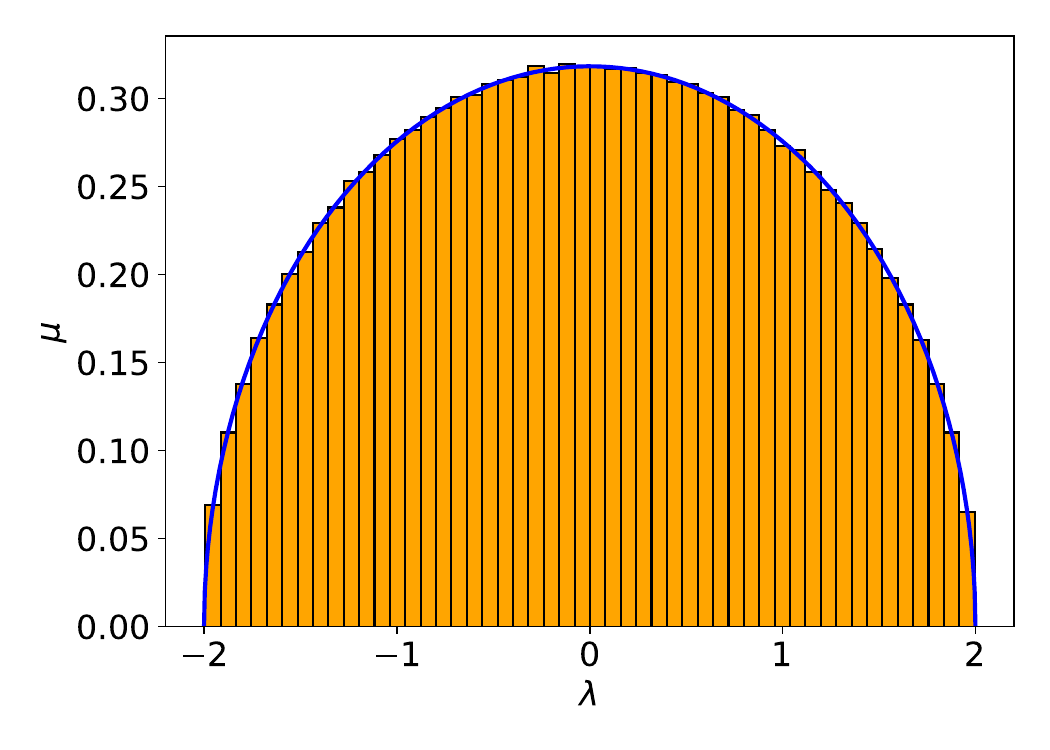}
    \caption{
        Illustration of the convergence toward Wigner's law.
        The histogram shows the eigenvalue distribution for a Wigner matrix of size $10^4$.
        The blue line materialises the limiting Wigner semicircle ($\mu_W$) law.
    }\label{WignrLaw}
\end{figure}

Let $\{\lambda_\mu\}_{\mu=1}^{N}$ denote the set of eigenvalues of $Q$.
The case $\beta=0$ corresponds to pure white noise, and in the limit $N\to \infty$, the empirical eigenvalue distribution $\mu_E$ (with $\lambda\in \mathds{R}$) is given by:
\begin{equation}
    \mu_E(\lambda)=\frac{1}{N} \sum_{\mu=1}^{N} \delta(\lambda-\lambda_\mu).
\end{equation}
The distribution converges weakly toward the Wigner semicircle law $\mu_W(\lambda)$:
\begin{equation}
    \mu_W(\lambda)=\frac{1}{2\pi}\sqrt{4-\lambda^2} \, \mathbf{1}_{[-2,2]}(\lambda),
\end{equation}
where the indicator function is
\begin{equation}
    \mathbf{1}_{[a,b]}(\lambda)
    =
    \begin{cases}
        1 & \text{if} \;\; \lambda \in [a, b], \\
        0 & \text{otherwise}.
    \end{cases}
\end{equation}
\Cref{WignrLaw} shows an example of the convergence of the empirical eigenvalue distribution toward the Wigner semicircle law.

In general, for $\beta \neq 0$, we have the following theorem for the spiked Wigner model~\cite{Feral2007,Peche2006,Peche2006Erratum,spike1}:
\begin{theorem}{Largest Eigenvalue of a Spiked Wigner Matrix}{Th1}
    Let $Q=\beta uu^{\mathsf{T}}+M/\sqrt{N}$ be a spiked Wigner matrix with $u^{\mathsf{T}}u=1$ and $M$ a \goe{} matrix.
    We have:
    \begin{itemize}
        \item For $\beta \leq 1$, the largest eigenvalue of $Q$ converges almost surely to $2$ as $N\to \infty$, and the fluctuations are of order $N^{-2/3}$ with the Tracy--Widom limiting distribution.
        \item For $\beta>1$, the largest eigenvalue converges almost surely to $\beta+1/\beta >2$ as $N\to \infty$, and the fluctuations are of order $N^{-1/2}$ with a Gaussian limiting distribution.
    \end{itemize}
\end{theorem}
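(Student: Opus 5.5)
The strategy is the standard resolvent (Stieltjes transform) argument for a rank-one perturbation. It reduces the location of an outlier to a scalar secular equation, which is then solved using the semicircle law.

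Write $W \defeq M/\sqrt{N}$, so $Q = W + \beta u u^{\mathsf{T}}$. For $z$ outside the spectrum of $W$ we have $\det(z - Q) = \det(z-W)\,\bigl(1 - \beta\, u^{\mathsf{T}}(z-W)^{-1}u\bigr)$. Hence an eigenvalue of $Q$ outside the spectrum of $W$ is exactly a solution of $u^{\mathsf{T}}(z-W)^{-1}u = 1/\beta$.

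The first key step is an isotropic concentration statement. Because the \goe{} is invariant under orthogonal conjugation, $W = O\Lambda O^{\mathsf{T}}$ with $O$ Haar distributed. The vector $O^{\mathsf{T}}u$ is therefore uniform on the sphere, so its components have squared modulus of order $1/N$, consistent with the Porter--Thomas picture \eqref{PTdist}. It follows that $u^{\mathsf{T}}(z-W)^{-1}u = \sum_\mu (O^{\mathsf{T}}u)_\mu^2/(z-\lambda_\mu) \to g(z) \defeq \int \dd\lambda\,\mu_W(\lambda)/(z-\lambda)$ almost surely, uniformly on compacts away from $[-2,2]$. This uses the semicircle convergence together with the fact that the extreme eigenvalues of $W$ converge to $\pm 2$. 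For the semicircle one has $g(z) = (z-\sqrt{z^2-4})/2$. This function decreases on $(2,\infty)$ from $g(2)=1$ to $0$, and it inverts as $z = g + 1/g$.

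The second step is the case analysis. If $\beta>1$, the equation $g(z)=1/\beta$ has the unique solution $z=\beta+1/\beta>2$. A Rouché/Hurwitz argument on the random analytic functions then yields exactly one eigenvalue of $Q$ converging to it. If $\beta\le 1$, no solution exists beyond $2+\epsilon$, since there $g<1\le 1/\beta$. Eigenvalue interlacing for rank-one perturbations gives $\lambda_{\max}(Q)\ge\lambda_{\max}(W)\to 2$, so $\lambda_{\max}(Q)\to 2$.

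The third step is the fluctuations. For $\beta>1$, I would linearise the secular equation around $z_*=\beta+1/\beta$. Then $\lambda_{\max}-z_* \approx -\bigl(u^{\mathsf{T}}(z_*-W)^{-1}u - g(z_*)\bigr)/g'(z_*)$. The numerator is a quadratic form in a uniform sphere vector with deterministic weights, so a central limit theorem for such forms gives Gaussian fluctuations of order $N^{-1/2}$.

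For $\beta<1$, the perturbation does not create an outlier, and I would invoke known edge universality results (the cited works of Péché and Féral--Kargin) to transfer the Tracy--Widom law of order $N^{-2/3}$ from $W$.

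The main obstacle is the fluctuation part, especially the subcritical Tracy--Widom statement and the borderline case $\beta=1$. There the secular equation degenerates at the edge and a local law or determinantal analysis is needed. In keeping with the review's scope, I would cite those references rather than reproduce that analysis.
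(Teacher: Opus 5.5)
Your route to the almost-sure limit is the same as the paper's. Both use the secular equation $u^{\mathsf{T}}(z-W)^{-1}u=1/\beta$, replace the quadratic form by the semicircle Stieltjes transform, and invert via $z=g+1/g$ with $g(2)=1$.

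You make this rigorous where the paper is heuristic:
\begin{itemize}
\item orthogonal invariance of the \goe{} gives almost-sure convergence of the quadratic form, instead of the paper's expectation over Porter--Thomas components;
\item the Weyl bound $\lambda_{\max}(Q)\ge\lambda_{\max}(W)\to 2$ supplies the lower bound for $\beta\le 1$ that the paper leaves implicit;
\item the Hurwitz argument gives uniqueness of the outlier.
\end{itemize}
You also go beyond the paper on fluctuations, which its sketch does not treat at all. Your supercritical linearisation is sound once you note two things. The weights $(z_*-\lambda_\mu)^{-1}$ are random but independent of $O^{\mathsf{T}}u$, so you can condition on the spectrum. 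And $\frac{1}{N}\mathrm{Tr}\,(z_*-W)^{-1}-g(z_*)=O(N^{-1})$ is negligible at scale $N^{-1/2}$. Carried through, it yields $\sqrt{N}\,(\lambda_{\max}-\beta-\beta^{-1})\to\mathcal{N}\bigl(0,2(1-\beta^{-2})\bigr)$.

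The step that would fail is the borderline case: you cannot close $\beta=1$ by citation, because the claim is false there. At exactly $\beta=1$ the top eigenvalue does fluctuate on scale $N^{-2/3}$. However, its limit is a deformed (spiked) Tracy--Widom law, not the Tracy--Widom law itself. In the complex case this is the critical law of Baik--Ben Arous--P\'ech\'e; for rank-one perturbations of the \goe{} it is the real analogue obtained by Bloemendal and Vir\'ag. The Tracy--Widom assertion in the first bullet therefore holds only for $\beta<1$. The almost-sure limit $2$ at $\beta=1$ is correct, and your Weyl-plus-no-outlier argument already proves it.
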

\begin{proof}
    The following is a sketch of the proof, which nonetheless highlights the main ingredients.
    Consider a Wigner matrix $M$ perturbed by a single spike, resulting in a matrix $Q$ defined in equation~\eqref{spikePCA}, which we rewrite as
    \begin{equation}
        Q_{ij}= \beta u_i u_j + X_{ij},
    \end{equation}
    where $X_{ij} \defeq \frac{1}{\sqrt{N}} M_{ij}$.
    We assume $N$ to be sufficiently large.
    Let $\lambda$ be an eigenvalue of $Q$ lying outside the support of the spectrum of $X$, which in the limit $N \to \infty$ is the interval $[-2, 2]$.
    Let $u^{(\lambda)}$ be the associated eigenvector.
    The eigenvalue equation reads
    \begin{equation}
        \lambda u^{(\lambda)}_i
        = \beta u_i \, (u \cdot u^{(\lambda)}) + \sum_{j} X_{ij}\, u^{(\lambda)}_j,
    \end{equation}
    which can be rewritten as
    \begin{equation}
        \sum_{j} \left(\lambda \delta_{ij}-X_{ij}\right) u^{(\lambda)}_j
        = \beta u_i \, (u \cdot u^{(\lambda)}).
    \end{equation}
    Assuming that $\lambda$ lies outside the spectrum of $X$, the matrix with entries $\lambda \delta_{ij}-X_{ij}$ is invertible, so we may introduce the resolvent
    \begin{equation}
        G_X(\lambda) \defeq \left(\lambda\mathds{1} - X\right)^{-1},
    \end{equation}
    and obtain
    \begin{equation}
        \frac{u_i^{(\lambda)}}{\beta}= \sum_{j}(G_X(\lambda))_{ij}\, u_j \, (u \cdot u^{(\lambda)}).
    \end{equation}
    Taking the inner product on both sides with $u$ and simplifying, it follows that
    \begin{equation}
        \frac{1}{\beta}=\sum_{i,j} u_i\, (G_X(\lambda))_{ij}\, u_j\,.
    \end{equation}
    Let $\{v^{(\lambda_\mu)}\}$ denote the eigenvectors of $G_X(\lambda)$ with corresponding eigenvalues $(\lambda-\lambda_\mu)^{-1}$, where $\lambda_\mu$ runs over the eigenvalues of $X$.
    Expanding the resolvent in the eigenbasis of $X$ yields
    \begin{equation}
        \frac{1}{\beta}=\sum_{\mu} \left(\sum_{i} u_i v^{(\lambda_\mu)}_i\right)^2 (\lambda-\lambda_\mu)^{-1}.
    \end{equation}
    In the large-$N$ limit, the resolvent is self-averaging: the random matrix $G_X(\lambda)$ concentrates around its expectation.
    Taking the expectation, and using the fact that the eigenvectors of $X$ are delocalised, with components following the Porter--Thomas law of \eqref{PTdist}, we find
    \begin{equation}
        \mathds{E}\left[\left(\sum_{i} u_i v^{(\lambda_\mu)}_i\right)^2\right]
        = \sum_i u_i^2 \, \mathds{E}\left[\left(v^{(\lambda_\mu)}_i\right)^2\right]
        \simeq \frac{1}{N}\sum_i u_i^2 = \frac{1}{N}\,.
    \end{equation}
    Combined with the convergence of the empirical spectral measure of $X$ toward the Wigner semicircle, this yields the master equation
    \begin{equation}
        \frac{1}{\beta}=\mathfrak{g}(\lambda).\label{masterEq}
    \end{equation}
    The function $\mathfrak{g}$ is the limiting Stieltjes transform of $X$, defined for a complex variable $z$ outside the support of the spectrum of $X$ as
    \begin{equation}
        \mathfrak{g}(z) \defeq \lim_{N\to\infty} \frac{1}{N}\Tr \frac{1}{z\mathds{1}-X}
        = \frac{z-\sqrt{z^2-4}}{2}.
    \end{equation}
    The explicit form follows from the self-consistency relation $g = (z-g)^{-1}$ satisfied by the resolvent of the Gaussian matrix model, namely $g^2 - zg + 1 = 0$, whose branch vanishing as $1/z$ for $\lvert z\rvert\to\infty$ gives the formula above.
    Substituting into \eqref{masterEq} and isolating the square root, we find
    \begin{equation}
        \sqrt{\lambda^2-4} = \lambda - \frac{2}{\beta}.
    \end{equation}
    Squaring both sides (a step valid for $\lambda>2$, outside the support of the spectrum of $X$, and $\beta>0$) and simplifying gives the linear equation
    \begin{equation}
        \lambda^2 - \frac{4\lambda}{\beta} + \frac{4}{\beta^2} = \lambda^2 - 4,
    \end{equation}
    whose unique solution is
    \begin{equation}
        \lambda=\beta+\frac{1}{\beta}.
    \end{equation}
    The condition on $\beta$ can be read directly from the properties of the Stieltjes transform in \eqref{masterEq}.
    On the real axis and for $z > 2$, $\mathfrak{g}$ is strictly decreasing, with $\mathfrak{g}(2)=1$ and $\lim_{z\to\infty}\mathfrak{g}(z)=0$.
    A solution of \eqref{masterEq} therefore exists if and only if $1/\beta \leq 1$, i.e.\ $\beta \geq 1$.
    The threshold is reached at $\beta^* = 1$, where $\lambda = \lambda_+ = 2$, located exactly at the upper edge of the Wigner bulk.
    For $\beta < 1$, the equation has no solution outside the bulk, and the spike is entirely absorbed by the spectrum.
\end{proof}
From the point of view of signal detection, this result means that:
\begin{enumerate}
    \item As soon as $\beta>1$, the signal can be detected and recovered, and the leading eigenvector provides an asymptotically efficient estimator.
          \pca is in this sense optimal in the low-rank regime.
    \item For $\beta<1$, signal detection and recovery are impossible in the large-$N$ limit unless additional structural assumptions on the prior are imposed.
          In this case \pca ceases to be optimal and Bayesian estimators dominate.
    \item For the critical value $\beta=1$, consistency tests exist to distinguish the spike from the bulk, see \cite{mei2018landscape}.
\end{enumerate}

While we focused on the Gaussian ensemble, the previous statement holds for a wide class of random symmetric matrices as the matrix size approaches infinity, provided that the entries are \iid random variables with zero mean and finite variance.
This defines the real symmetric Wigner ensemble.
More precisely, following \cite{spike1,Bouchaud3}:
\begin{definition}{Wigner Ensemble}{Wig}
    The real symmetric Wigner ensemble is a fundamental class of statistical models for real symmetric $N\times N$ random matrices.
    A matrix $M$ belongs to this ensemble if and only if:
    \begin{itemize}
        \item The upper-triangular entries $M_{ij}$ with $1\leq i \leq j \leq N$ are independent and identically distributed with zero mean ($\mathds{E}(M_{ij})=0$ for $i<j$, $\mathds{E}(M_{ii})=0$).
        \item The off-diagonal entries have unit variance ($\mathds{E}(M_{12}^2)=1$).
        \item All moments of the distribution of a single entry are finite ($\mathds{E}(\lvert M_{ij}\rvert^k)<\infty$ for all $k\in\mathds{N}$),
    \end{itemize}
    and $M$ is symmetric, $M^{\mathsf{T}} = M$.
    The lower-triangular entries are then determined by $M_{ji} = M_{ij}$.
\end{definition}
Wigner's law emerges from the universality of high-dimensional random matrices, analogous to the \clt for scalar random variables.
We state the full result below~\cite{spike1,Wigner,Krajewski1}:
\begin{theorem}{Wigner's Law}{Wigner}
    Let $Y=M/\sqrt{N}$ with $M$ a real symmetric $N\times N$ matrix belonging to the Wigner ensemble.
    Then, as $N\to \infty$, the empirical eigenvalue distribution $\mu_E$ of $Y$ converges weakly to the Wigner semicircle law $\mu_W$.
\end{theorem}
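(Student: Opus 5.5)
The plan is to prove Wigner's Law by the moment method, in the combinatorial form that mirrors the Wick-pairing arguments used throughout the paper (see \Cref{thm:theoremWick}). The target is
\begin{equation}
    m_k^{(N)} \defeq \mathds{E}\left[\frac{1}{N}\Tr Y^k\right] \longrightarrow m_k \defeq \int \dd\lambda\, \mu_W(\lambda)\, \lambda^k,
\end{equation}
where $m_{2p+1}=0$ and $m_{2p}=C_p=\frac{1}{p+1}\binom{2p}{p}$ are the Catalan numbers. This identification of the semicircle moments is a routine Beta-integral computation on $[-2,2]$. Together with concentration of $\frac{1}{N}\Tr Y^k$ around its mean, this gives almost-sure convergence of all moments. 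Since $\mu_W$ has compact support, it is determined by its moments, and weak convergence of $\mu_E$ then follows from the standard method-of-moments criterion.

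\textbf{Combinatorial expansion.} I will expand
\begin{equation}
    \frac{1}{N}\Tr Y^k = N^{-1-k/2}\sum_{i_1,\dots,i_k} M_{i_1i_2}M_{i_2i_3}\cdots M_{i_ki_1},
\end{equation}
and read each index sequence as a closed walk of length $k$ on the complete graph with loops. Independence and zero mean kill any walk that traverses some edge exactly once, so every surviving walk uses each of its distinct edges at least twice. If the walk visits $v$ distinct vertices and has $e\le k/2$ distinct edges, connectivity gives $v\le e+1\le k/2+1$. The number of index choices realizing a given walk shape is at most $N^v$, and the expectation of the product is bounded uniformly because all moments are finite. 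The contribution of each shape is therefore $\mathcal{O}(N^{v-1-k/2})$, which vanishes unless $k=2p$ and $v=p+1$. In that extremal case the underlying graph is a tree with $p$ edges, each traversed exactly twice, and each such term contributes $\mathds{E}(M_{12}^2)^p=1$. Diagonal entries cannot occur, since a loop edge would break the tree structure. The surviving walks are in bijection with Dyck paths of length $2p$, which yields $C_p$. These are exactly the planar (non-crossing) pairings, the matrix analogue of the Wick sum.

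\textbf{Main obstacle: upgrading from the mean to almost sure.} I expect the concentration step to be the hardest part. My first attempt is to bound
\begin{equation}
    \operatorname{Var}\left(\frac{1}{N}\Tr Y^k\right) = \mathcal{O}(N^{-2})
\end{equation}
by the same graph counting, now applied to pairs of walks. Correlated pairs must share an edge, and every edge in the union must be covered at least twice. Gluing two walks this way loses at least two powers of $N$ relative to two independent trees, which gives the $N^{-2}$ rate. Borel--Cantelli then gives almost-sure convergence of each moment along the full sequence, and tightness is automatic from the bounded second moment. If the variance count becomes messy, I will fall back on truncating the entries at a level $N^{\delta}$, which changes the spectrum negligibly by a rank and Hoffman--Wielandt estimate, and then apply a bounded-differences concentration inequality to the linear statistic.
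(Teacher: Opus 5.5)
The paper does not prove this theorem; it only cites references for it. The closest argument in the text is the resolvent self-consistency $g=(z-g)^{-1}$, i.e.\ $g^2-zg+1=0$, used in the spiked Wigner sketch. That relation is the core of the Stieltjes-transform proof: a Schur complement on one row, plus concentration of $\frac{1}{N}\Tr(z\mathds{1}-Y)^{-1}$. Your moment method is the other classical route, and it is legitimate here because the paper's Wigner ensemble assumes all moments finite. It gives an explicit combinatorial identification of the limit (Dyck paths, non-crossing pairings, in the paper's Wick language). The resolvent route instead works with second moments only and has a much cheaper concentration step. Your computation of $\mathds{E}\bigl[\frac{1}{N}\Tr Y^k\bigr]$ is complete as written.

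The step that is not yet an argument is the $\mathcal{O}(N^{-2})$ variance bound, which is exactly what Borel--Cantelli needs. The comparison with \enquote{two independent trees} does not apply. In $\mathds{E}[M_{\mathbf{i}}M_{\mathbf{j}}]-\mathds{E}[M_{\mathbf{i}}]\,\mathds{E}[M_{\mathbf{j}}]$ an edge may be crossed once by each walk, so neither walk alone has to cross its edges twice, and you cannot start from $v_1,v_2\le k/2+1$.

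What your counting does give is weaker. The union graph is connected because the walks share an edge. It has $e\le k$ distinct edges, since there are $2k$ crossings and each edge is crossed at least twice. So $v\le k+1$, and the contribution is $\mathcal{O}(N^{v-2-k})=\mathcal{O}(N^{-1})$, which is not summable.

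The missing idea is a parity argument that excludes $v=k+1$. Equality forces the union to be a tree in which every edge is crossed exactly twice. But each walk is closed, and a closed walk on a tree crosses each of its edges an even number of times. The shared edge is therefore crossed at least $2+2=4$ times, a contradiction; loop edges only lower $v$. Hence $v\le k$ and the variance is $\mathcal{O}(N^{-2})$. Chebyshev with Borel--Cantelli, a countable intersection over $k$, and moment-determinacy of the compactly supported $\mu_W$ then finish the proof.

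Your fallback also needs adjusting. With all moments finite, truncation at $N^{\delta}$ is vacuous: $\mathds{P}\bigl(\max_{i\le j}|M_{ij}|>N^{\delta}\bigr)\le N^{2-m\delta}\,\mathds{E}|M_{12}|^m$ is summable for $m>3/\delta$, so no Hoffman--Wielandt estimate is needed. Bounded differences applied to $\frac{1}{N}\Tr Y^k$, however, is awkward. The effect of changing one entry is controlled by $\|Y\|^{k-1}$, which has no useful worst-case bound.

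Apply it instead to $s_N(z)=\frac{1}{N}\Tr(z\mathds{1}-Y)^{-1}$, revealing one row and column at a time. Each step is a perturbation of rank at most two and moves $s_N(z)$ by $\mathcal{O}\bigl(1/(N\operatorname{Im}z)\bigr)$, which gives exponential concentration. Your expectation computation already yields $\mathds{E}\mu_E\to\mu_W$, hence $\mathds{E}s_N(z)\to\mathfrak{g}(z)$. Together these give almost-sure weak convergence with no variance count at all.
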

The Wigner semicircle law represents a universal property of matrix ensembles, as it is virtually independent of the specific probability measure of the entries.

\subsection{Spiked Wishart Ensemble and Universality}

Another important and universal result for random matrices concerns the \mpdistr distribution.
This law describes the limiting spectral density of the white \emph{Wishart} ensemble, defined for an $N\times P$ rectangular matrix $X$ with \iid real entries.
Following the standard nomenclature of \rmt, the corresponding $P\times P$ sample-covariance matrix $Y = X^{\mathsf{T}}X/N$ belongs to the Laguerre Orthogonal Ensemble, i.e.\ the real symmetric white Wishart ensemble:
\begin{definition}{White Real Wishart Ensemble}{Wis}
    The white real Wishart ensemble is the set of statistical models for positive-definite $P\times P$ matrices of the form
    \begin{equation}
        Y=\frac{1}{N} X^{\mathsf{T}} X,
    \end{equation}
    where $X$ is an $N\times P$ matrix with independent and identically distributed real entries of zero mean and finite variance (with all higher moments assumed finite as well).

    Non-white Wishart matrices can be defined more generally as in~\cite{Bouchaud3}:
    \begin{equation}
        \mathds{E}(X_{ij}X_{kl})=C_{ik}\delta_{jl},
    \end{equation}
    for some covariance matrix $C$ with zero mean and finite variance $\sigma^2$, where the Kronecker delta enforces column-wise decorrelation.
\end{definition}
Wishart matrices also possess the property of converging toward a deterministic limiting distribution, namely the \mpdistr law (see \Cref{thm:thMP}).

\Cref{thm:thMP} generalises to other universality classes, and for $N, P \to \infty$ the position of the \emph{outlier} in the density spectrum is given by~\eqref{masterEq}.
For the spiked Wishart ensemble, inverting the closed form of the \mpdistr total Stieltjes transform
\begin{equation}
    \mathfrak{g}_{\mathrm{MP}}(z) = \frac{z - \sigma^2(1-q) - \sqrt{(z - \lambda_+)(z - \lambda_-)}}{2 \sigma^2 q z}
\end{equation}
gives, after isolating the square root and squaring (a step valid for $\lambda > \lambda_+$ where the square root is real and positive),
\begin{equation}
    \lambda = \frac{\beta\left(\beta + \sigma^2(1-q)\right)}{\beta - q\sigma^2}.
\end{equation}
The detection threshold is obtained by setting $\lambda = \lambda_+$ in the master equation~\eqref{masterEq}, giving the threshold $\beta^* = \sigma^2(\sqrt{q} + q)$.
For $\beta \leq \beta^*$ the spike is absorbed by the bulk.\footnote{%
    In the convention of the appendix (sample spike, $C = X^{\mathsf{T}} X/N + \beta v v^{\mathsf{T}}$), this formula differs from the original \bbp (population spike) result~\cite{BBP,Peche2006,Feral2007,Peche2006Erratum} by the substitution $q \mapsto 1/q$ in the last term, reflecting the difference between the population spike model $\Sigma = \mathds{1} + \theta v v^\mathsf{T}$ sampled via $S = Y \Sigma Y^\mathsf{T}/n$ and the sample spike model used here.
}


The existence of universality theorems explains why the mathematical tools of random matrix theory are so powerful in practice for \pca.
Universality implies that the statistical features of a dataset do not depend on the specific details of its underlying distribution, but rather on the finiteness of its moments.
Universal distributions fit empirical noisy spectra closely in high-dimensional settings.
Real-world data are inherently noisy, and the components of eigenvectors associated with universal distributions are \emph{delocalised} and possess maximum entropy density.\footnote{%
    For Gaussian ensembles, completely delocalised eigenvectors satisfy the concentration bound: with high probability, none of their entries significantly exceeds the typical scale of $\sim \sqrt{\log N/N}$.
    Specifically, for any eigenvector $u_k$ in the eigenbasis $\mathcal{B}=\{u_1,\dots,u_N\}$, the infinity norm satisfies the tail estimate
    \begin{equation*}
        \mathds{P}\left(\lVert u_k \rVert_{\infty}\geq \frac{c\sqrt{\log N}}{\sqrt{N}} \right) \leq N^{-D(c)},
    \end{equation*}
    with $D(c)=c^2/2-1$ for any $c>\sqrt{2}$ (see \cite{ErdosSchleinYau2009AnnProb,ErdosSchleinYau2009CMP}).
}

Consequently, they carry no structural information, providing the ideal mathematical representation of pure noise.
In contrast, as is the case for the single-spike matrix model discussed above, structural information is encoded within \emph{localised eigenvectors}, whose weight concentrates on a small subset of components.
While universal bulk distributions model purely noisy data, these localised states capture the underlying signal.
Historically, this universal behaviour was first highlighted in nuclear physics through the seminal works of Wigner, Dyson, Gaudin, Mehta, and others~\cite{Meh2004}.
These pioneering studies demonstrated the universality of the \emph{Wigner surmise} for systems containing many interacting particles governed by partially unknown laws.
Since then, universality has been observed across almost all domains of science, ranging from nuclear physics to biology, chemistry, and economics.

\end{document}